\newcommand{\uem}[1]{\uline{#1}} %{{\em #1}}
\newcommand{\conc}{\cdot}
\newcommand{\comp}{\circ}
\newcommand{\class}[1]{\emph{#1}}
\renewcommand{\qedsymbol}{$\square$}
\NewDocumentCommand\Surface{ss}
{\text{\IfBooleanT{#1}{-}%
\IfBooleanTF{#2}{surface}{Surface}}%
}
\NewDocumentCommand\Target{ss}
{\text{\IfBooleanT{#1}{-}%
\IfBooleanTF{#2}{target}{Target}}%
}
\newcommand{\delbu}[1][\delta]{\widehat{#1}}
\newcommand{\deltd}[1][\delta]{\widetilde{#1}}
\newcommand{\BT}{\emph{BT}}
\newcommand{\X}{X}
\newcommand{\fyield}{\text{yield}}
\newcommand{\fheight}{\text{height}}
\newcommand{\fsubtree}{\text{sub}}
\newcommand{\fpath}{\text{path}}
\newcommand{\td}{\text{td}}
\newcommand{\bu}{\text{bu}}
\newcommand{\rc}{\text{rc}}
\newcommand{\tc}{\text{tc}}
\newcommand{\iterated}{\text{it}}
\newcommand{\dom}{\text{dom}}
\newcommand{\mbar}[1]{\overline{#1}}
\newcommand{\RECOG}{\class{RECOG}}
\newcommand{\CFL}{\class{CFL}}
\newcommand{\AFL}{\class{AFL}}
\newcommand{\B}{\class{B}}
\newcommand{\LB}{\class{LB}}
\newcommand{\DB}{\class{DB}}
\newcommand{\LDB}{\class{LDB}}
\newcommand{\NDB}{\class{NDB}}
\newcommand{\PNB}{\class{PNB}}
\newcommand{\PNLB}{\class{PNLB}}
\newcommand{\NLDB}{\class{NLDB}}
\newcommand{\NLB}{\class{NLB}}
\newcommand{\PDTB}{\class{PD$_t$B}}
\newcommand{\PLDTB}{\class{PLD$_t$B}}
\newcommand{\LDTB}{\class{LD$_t$B}}
\newcommand{\PLT}{\class{PLT}}
\newcommand{\PLB}{\class{PLB}}
\newcommand{\PNT}{\class{PNT}}
\newcommand{\PNLT}{\class{PNLT}}
\newcommand{\PDTT}{\class{PD$_t$T}}
\newcommand{\PLDTT}{\class{PLD$_t$T}}
\newcommand{\REL}{\class{REL}}
\newcommand{\QREL}{\class{QREL}}
\newcommand{\DBQREL}{\class{DBQREL}}
\newcommand{\DTQREL}{\class{DTQREL}}
\newcommand{\HOM}{\class{HOM}}
\newcommand{\LHOM}{\class{LHOM}}
\newcommand{\FTA}{\class{FTA}}
\newcommand{\T}{\class{T}}
\newcommand{\ZT}{\class{ZT}}
\newcommand{\LT}{\class{LT}}
\newcommand{\DT}{\class{DT}}
\newcommand{\LDT}{\class{LDT}}
\newcommand{\NLT}{\class{NLT}}
\newcommand{\TR}{\class{T$\,'$}}
\newcommand{\DTR}{\class{DT$\,'$}}
\newcommand{\LTR}{\class{LT$\,'$}}
\newcommand{\LDTR}{\class{LDT$\,'$}}
\newcommand{\ZTR}{\class{ZT$\,'$}}
\newcommand{\Z}{\class{Z}}
\newcommand{\LIN}{\class{L}}
\newcommand{\D}{\class{D}}
\newcommand{\LD}{\class{LD}}
\newcommand{\IO}{\class{IO}}
\newcommand{\Fclass}{\class{F}}
\newcommand{\Xclass}{\class{F}}
\declaretheorem[style=definition,name=Definition,numberwithin=section,qed=\qedsymbol]{definition}
\declaretheorem[style=definition,name=Example,qed=\qedsymbol,sibling=definition]{example}
\declaretheorem[style=definition,name=Example,sibling=definition]{examplewithoutqed}
\declaretheorem[style=definition,name=Theorem,qed=,sibling=definition]{theorem}
\declaretheorem[style=definition,name=Corollary,sibling=definition]{corollary}
\declaretheorem[style=definition,name=Corollary,qed=\qedsymbol,sibling=definition]{corollarywithqed}
\declaretheorem[style=definition,name=Lemma,qed=\qedsymbol,sibling=definition]{lemma}
\declaretheorem[style=definition,name=Lemma,sibling=definition]{lemmawithoutqed}
\declaretheorem[style=definition,name=Examples,qed=\qedsymbol,sibling=definition]{examples}
\declaretheorem[style=definition,name=Principle,qed=\qedsymbol,sibling=definition]{principle}
\declaretheorem[style=definition,name=Exercise,qed=\qedsymbol,sibling=definition]{exercise}
\declaretheorem[style=definition,name=Remark,qed=\qedsymbol,sibling=definition]{remark}
\declaretheorem[style=definition,name=Remarks,qed=\qedsymbol,sibling=definition]{remarks}
\declaretheorem[style=definition,name=Notation,qed=\qedsymbol,sibling=definition]{notation}
\declaretheorem[style=definition,name=Lemma,numbered=no]{lemma*}
\declaretheorem[style=definition,name=Idea,qed=\qedsymbol,sibling=definition]{idea}
\tikzset{lighttree/.style={level distance=0.6cm, sibling distance=0.7cm,
 minimum size=20pt, inner sep=1pt, outer sep=-10pt, shorten >=6pt, shorten <=6pt, execute at begin node=$, execute at end node=$},
tree/.style={baseline=-3pt, lighttree}}
\newcommand{\xRightarrow}[2][]{\ext@arrow 0359\Rightarrowfill@{#1}{#2}}
\newcommand*{\rom}[1]{\expandafter\@slowromancap\romannumeral #1@}
\newcommand{\tosymbol}[1]{
\ifcase#1
$0$\or
$\dagger$\or
$\dagger\dagger$\or
$\dagger\dagger\dagger$\fi
}
\NewDocumentCommand\listt{sO{a}O{t_i}O{k}}
{#2[\replace{#3}{i}{1}
		\IfBooleanT{#1}{\replace{#3}{i}{2}}
		\cdots \replace{#3}{i}{#4} ] }  
\NewDocumentCommand\li{sO{t_i}O{k}O{,}O{\dots}}
{\replacei{#2}{1}#4
		\IfBooleanT{#1}{\replacei{#2}{2}#4}
		#5#4 \replacei{#2}{#3}}
\newcommand{\replacei}[2]{\replace{#1}{_i}{_{#2}}}
\NewDocumentCommand{\replace}{mmm}
 {
  \marian_replace:nnn {#1} {#2} {#3}
 }
\title{TREE AUTOMATA AND\\ TREE GRAMMARS}
\author{}
\date{}
\begin{document}
	\pagenumbering{gobble}
	\maketitle 
	\vfill
	\vfill
~\\
	~\\
	by\\
	Joost Engelfriet	
	\vfill
	\vfill
	~\\
	DAIMI FN-10\\
	April 1975
	\vfill
	~\\
	Institute of Mathematics, University of Aarhus\\
	DEPARTMENT OF COMPUTER SCIENCE\\
	Ny Munkegade, 8000 Aarhus C, Denmark\\
%   , Phone 06 - 12 83 55
	
\newpage
\section*{Preface}

I wrote these lecture notes during my stay in Aarhus in the academic year 1974/75.
As a young researcher I had a wonderful time at DAIMI,
and I have always been happy to have had that early experience. 

I wish to thank Heiko Vogler for his noble plan to move these notes into the digital world,
and I am grateful to Florian Starke and Markus Napierkowski (and Heiko) for the 
excellent transformation of my hand-written manuscript into \LaTeX. 
 
Apart from the reparation of errors and some cosmetical changes, 
the text of the lecture notes has not been changed. 
Of course, many things have happened in tree language theory since 1975. 
In particular, most of the problems mentioned in these notes have been solved. 
The developments until 1984 are described in the book ``Tree Automata'' 
by Ferenc G\'ecseg and Magnus Steinby, and for recent developments I recommend the Appendix
of the reissue of that book at arXiv.org/abs/1509.06233. 

\bigskip
\hfill Joost Engelfriet, October 2015

\bigskip
\hfill LIACS, Leiden University,  

\hfill The Netherlands

\newpage
\section*{Tree automata and tree grammars}
To appreciate the theory of tree automata and tree grammars one should already be motivated by the goals and results of formal language theory. In particular one should be interested in ``derivation trees". A derivation tree models the grammatical structure of a sentence in a (context-free) language. By considering only the bottom of the tree the sentence may be recovered from the tree.

The first idea in tree language theory is to generalize the notion of a finite automaton working on strings to that of a finite automaton operating on trees. It turns out that a large part of the theory of regular languages can rather easily be generalized to a theory of regular tree languages. Moreover, since a regular tree language is (almost) the same as the set of derivation trees of some context-free language, 
%derivation tree of a language not a grammar?
one obtains results about context-free languages by ``taking the bottom" of results about regular tree languages.

The second idea in tree language theory is to generalize the notion of a generalized sequential machine (that is, finite automaton with output) to that of a finite state tree transducer. Tree transducers are more complicated than string transducers since they are equipped with the basic capabilities of copying, deleting and reordering (of subtrees). The part of (tree) language theory that is concerned with translation of languages is mainly motivated by compiler writing (and, to a lesser extent, by natural linguistics). When considering bottoms of trees, finite state transducers are essentially the same as syntax-directed translation schemes. Results in this part of tree language theory treat the composition and decomposition of tree transformations, and the properties of those tree languages that can be obtained by finite state transformation of regular tree languages (or, taking bottoms, those languages that can be obtained by syntax-directed translation of context-free languages).

Thirdly there are, of course, many other ideas in tree language theory. In the literature one can find, for instance, context-free tree grammars, recognition of subsets of arbitrary algebras, tree walking automata, hierarchies of tree languages (obtained by iterating old ideas), decomposition of tree automata, Lindenmayer tree grammars, etc.

These lectures will be divided in the following five parts: (1) and (2) contain preliminaries, (3), (4) and (5) are the main parts.
\begin{enumerate}[label=(\arabic*)]
	\item Introduction. (p.\,\pageref{sec.1})
	\item Some basic definitions. (p.\,\pageref{sec.2})
	\item Recognizable (= regular) tree languages. (p.\,\pageref{sec.3})
	\item Finite state tree transformations. (p.\,\pageref{sec.4})
	\item Whatever there is more to consider.
\end{enumerate}
Part (5) is not contained in these notes; instead, 
some Notes on the literature are given on p.\,\pageref{sec.5}.
%Notes with capital letter? 

\newpage
\tableofcontents
\newpage
%\newpage
	\binoppenalty=10000
	\relpenalty=10000
	\pagenumbering{arabic}
	\setcounter{page}{1}
	\begin{sloppypar}
	
\section{Introduction}\label{sec.1}
Our basic data type is the kind of tree used to express the grammatical structure of strings in a context-free language.
\begin{example}\label{exa.1.1}
	Consider the context-free grammar $G=(N,\Sigma,R,S)$ with nonterminals $N=\left\lbrace S,A,D \right\rbrace$, terminals $\Sigma=\left\lbrace a,b,d \right\rbrace$, initial nonterminal $S$ and the set of rules $R$, consisting of the rules $S\to AD$, $A\to aAb$, $A\to bAa$, $A\to AA$, $A\to\lambda$, $D\to Ddd$ and $D\to d$ (we use $\lambda$ to denote the empty string).

	The string $baabddd\in\Sigma^*$ can be generated by $G$ and has the following derivation tree (see \cite[\rom{2}.6]{Sal}, \cite[0.5 and 2.4.1]{AU}):
	\begin{figure}[h]
	\centering
	\begin{tikzpicture}[tree, level distance=0.9cm,
	lvl1/.style={sibling distance=4.5cm},
	lvl2/.style={sibling distance=2.3cm},
	lvl3/.style={sibling distance=0.7cm}]
		\node {S} [lvl1]
			child {node {A} [lvl2]
				child {node {A} [lvl3]
					child {node {b}}
					child {node {A}
						child {node {e}}}
					child {node {a}}}
				child {node {A} [lvl3]
					child {node {a}}
					child {node {A}
						child {node {e}}}
					child {node {b}}}
					}
			child {node {D} [lvl3]
				child {node {D}
					child {node {d}}}
				child {node {d}}
				child {node {d}}};
	\end{tikzpicture}
	\end{figure}

	\begin{itemize}
		\item Note that we use $e$ as a symbol standing for the empty string $\lambda$.
		\item The string $baabddd$ is called the ``yield" or ``result" of the derivation tree.\qedhere
	\end{itemize}
\end{example}
Thus, in graph terminology, our trees are finite (finite number of nodes and branches), directed (the branches are ``growing downwards"), rooted (there is a node, the root, with no branches entering it), ordered (the branches leaving a node are ordered from left to right) and labeled (the nodes are labeled with symbols from some alphabet).
						
The following intuitive terminology will be used:
\begin{itemize}
	\item the \uem{rank} (or out-degree) of a node is the number of branches leaving it (note that  the in-degree of a node is always 1, except for the root which has in-degree 0)
	\item a \uem{leaf} is a node with rank 0
	\item the \uem{top} of a tree is its root
	\item the \uem{bottom} (or frontier) of a tree is the set (or sequence) of its leaves
	\item the \uem{yield} (or result, or frontier) of a tree is the string obtained by writing the labels of its leaves (except the label $e$) from left to right
	\item a \uem{path} through a tree is a sequence of nodes connected by branches (``leading downwards"); the \uem{length} of the path is the number of its nodes minus one (that is, the number of its branches)
	\item the \uem{height} (or depth) of a tree is the length of the longest path from the top to the bottom
	\item if there is a path of $\text{length}\geq 1$ (of $\text{length}=1$) from a node $a$ to a node $b$ then $b$ is a \uem{descendant} (\uem{direct descendant}) of $a$ and $a$ is an \uem{ancestor} (\uem{direct ancestor}) of $b$ 
	\item a \uem{subtree} of a tree is a tree determined by a node together with all its descendants; a \uem{direct subtree} is a subtree determined by a direct descendant of the root of the tree; note that each tree is uniquely determined by the label of its root and the (possibly empty) sequence of its direct subtrees
	\item the phrases ``bottom-up", ``bottom-to-top" and ``frontier-to-root" are used to indicate this $\uparrow$ direction, while the phrases ``top-down", ``top-to-bottom" and ``root-to-frontier" are used to indicate that $\downarrow$ direction.
\end{itemize}
						
In derivation trees of context-free grammars each symbol may only label nodes of certain ranks. For instance, in the above example, $a$, $b$, $d$ and $e$ may only label leaves (nodes of rank 0), $A$ labels nodes with ranks 1, 2 and 3, $S$ labels nodes with rank 2, and $D$ nodes of rank 1 and 3 (these numbers being the lengths of the right hand sides of rules).
						
Therefore, given some alphabet, we require the specification of a finite number of ranks for each symbol in the alphabet, and we restrict attention to those trees in which nodes of rank $k$ are labeled by symbols of rank $k$.

\section{Some basic definitions}\label{sec.2}
The mathematical definition of a tree may be given in several, equivalent, ways. We will define a \uwave{tree as a special kind of string} (others call this string a representation of the tree, see \cite[0.5.7]{AU}). Before doing so, let us define ranked alphabets.
\begin{definition}\label{def.2.1}
	An alphabet $\Sigma$ is said to be \uem{ranked} if for each nonnegative integer $k$ a subset $\Sigma_k$ of $\Sigma$ is specified, such that $\Sigma_k$ is nonempty for a finite number of $k$'s only, and such that $\Sigma=\bigcup\limits_{k\geq0}\Sigma_k$. 
%Moreover, we shall require that, for all $k\geq1$, $\Sigma_0\cap\Sigma_k=\emptyset$ 
%(this is not an essential restriction).
	\footnote{To be more precise one should define a ranked alphabet as a pair $(\Sigma,f)$, where $\Sigma$ is an alphabet and $f$ is a mapping from $\mathbb{N}$ into $\mathcal{P}(\Sigma)$ such that $\exists n~\forall k\geq n:f(k)=\emptyset$, and then denote $f(k)$ by $\Sigma_k$ and $(\Sigma,f)$ by $\Sigma$.
Note that $\mathbb{N}=\{0,1,2,\ldots\}$ is the set of natural numbers and that
$\mathcal{P}(\Sigma)$ is the set of subsets of $\Sigma$. }
		
	If $a\in\Sigma_k$, then we say that $a$ has rank $k$ (note that $a$ may have more than one rank).
		
	Usually we define a specific ranked alphabet $\Sigma$ by specifying those $\Sigma_k$ that are nonempty.
\end{definition}
\begin{example}\label{exa.2.2}
	The alphabet $\Sigma=\left\lbrace a,b,+,-\right\rbrace $ is made into a ranked alphabet by specifying $\Sigma_0=\left\lbrace a,b\right\rbrace $, $\Sigma_1=\left\lbrace-\right\rbrace $ and $\Sigma_2=\left\lbrace +,-\right\rbrace $. (Think of negation and subtraction).
\end{example}
\begin{remark}\label{rem.2.3}
	Throughout our discussions we shall use the symbol $e$ as a special symbol, intuitively representing $\lambda$. Whenever $e$ belongs to a ranked alphabet, it is of rank 0.
\end{remark}
Operations on ranked alphabets should be defined as for instance in the following definition.
\begin{definition}\label{def.2.4}
	Let $\Sigma$ and $\Delta$ be ranked alphabets. The \uem{union} of $\Sigma$ and $\Delta$, denoted by $\Sigma\cup\Delta$, is defined by $(\Sigma\cup\Delta)_k=\Sigma_k\cup\Delta_k$, for all $k\geq0$. We say that $\Sigma$ and $\Delta$ are \uem{equal}, denoted by $\Sigma=\Delta$, if, for all $k\geq0$, $\Sigma_k=\Delta_k$.
\end{definition}
We now define the notion of tree. Let ``$\,[\,$" and ``$\,]\,$" be two symbols which are never elements of a ranked alphabet. 
\begin{definition}\label{def.2.5}
	Given a ranked alphabet $\Sigma$, the set of \uem{trees over $\Sigma$}, denoted by $T_\Sigma$, is the language over the alphabet $\Sigma\cup\left\lbrace [\,, ] \right\rbrace $ defined inductively as follows.
	\begin{enumerate}[label=(\roman*)]
		\item If $a\in\Sigma_0$, then $a\in T_\Sigma$.
		\item For $k\geq1$, if $a\in\Sigma_k$ and $\li*\in T_\Sigma$, then $\listt*\in T_\Sigma$. \qedhere
	\end{enumerate}
\end{definition}
Intuitively, $a$ is a tree with one node labeled ``$a$", and $\listt*$ is the tree
\begin{figure}[h]
\centering
\begin{tikzpicture}[execute at begin node=$, execute at end node=$, baseline=-3pt, leaf/.style={isosceles triangle,draw,shape border rotate=90,,opacity=1}]
	\tikzstyle{level 1}=[level distance=0.5cm]
	\tikzstyle{level 2}=[level distance=0.96cm]
	\node[] {a}
	    child {node[scale=0.01] {} child[opacity=0] {node[leaf] {t_1}}}
	    child {node[scale=0.01] {} child[opacity=0] {node[leaf] {t_2}}}
	    child[level distance=1.4cm] {node[isosceles triangle,shape border rotate=90] {.\hspace{0.3cm}.\hspace{0.3cm}.} edge from parent[opacity=0] node[below,opacity=1] {\dots}}
	    child {node[scale=0.01] {} child[level distance=0.98cm,opacity=0] {node[leaf] {t_k}}};
\end{tikzpicture}.
\end{figure}

\begin{example}\label{exa.2.6}
	Consider the ranked alphabet of Example \ref{exa.2.2}. Then $+[-[a-[b]]a]$ is a tree over this alphabet, intuitively ``representing" the tree
	\begin{figure}[h]
	\centering
	\begin{tikzpicture}[tree]
	\node {+}
		child {node {-}
			child {node {a}}
			child {node {-}
				child {node {b}}}}
		child {node {a}};
	\end{tikzpicture}
	\end{figure}\\
	which on its turn ``represents" the expression $(a-(-b))+a$ (note that the ``{}official" tree is the prefix notation of this expression).
\end{example}
\begin{example}\label{exa.2.7}
	Consider the ranked alphabet $\Delta$, where $\Delta_0=\left\lbrace a,b,d,e\right\rbrace$, $\Delta_1=\Delta_3=$ $\left\lbrace A,D\right\rbrace$ and $\Delta_2=\left\lbrace A,S\right\rbrace$.
A picture of the tree 
$$S[A[A[bA[e]a]A[aA[e]b]]D[D[d]dd]]$$ 
in $T_\Delta$ is given in Example \ref{exa.1.1}.
\end{example}
\begin{exercise}\label{exe.2.8}
	Take some ranked alphabet $\Sigma$ and show that $T_\Sigma$ is a context-free language over $\Sigma\cup \left\lbrace [\,,] \right\rbrace $.
\end{exercise}
Our main aim will be to study  several ways of constructively representing sets of trees and relations between trees. The basic terminology is the following.
\begin{definition}\label{def.2.9}
	Let $\Sigma$ be a ranked alphabet. A \uem{tree language} over $\Sigma$ is any subset of~$T_\Sigma$.
\end{definition}
\begin{definition}\label{def.2.10}
	Let $\Sigma$ and $\Delta$ be ranked alphabets. A \uem{tree transformation} from $T_\Sigma$ into $T_\Delta$ is any subset of $T_\Sigma\times T_\Delta$.
\end{definition}
\begin{exercise}\label{exe.2.11}
	Show that the context-free grammar $G=(N,\Sigma,R,S)$ with $N=\left\lbrace S\right\rbrace $, $\Sigma=\left\lbrace a,b,[\,,]\right\rbrace$ and $R=\left\lbrace S\to  b[aS] ,\,S\to  a \right\rbrace $ generates a tree language over $\Delta$, where $\Delta_0=\left\lbrace a\right\rbrace $ and $\Delta_2=\left\lbrace b\right\rbrace $.
\end{exercise}
The above definition of ``tree" (Definition \ref{def.2.5}) gives rise to the following principles of proof by induction and definition by induction for trees. (Note that each tree is, uniquely, either in $\Sigma_0$ or of the form $\listt$).
\begin{principle}\label{pri.2.12}
	Principle of proof by induction (or recursion) on trees. Let $P$ be a property of trees (over $\Sigma$).
	\begingroup
%workaround otherwise (ii) line is too far left
	\setlength{\leftmargini}{14pt}
	\begin{description} 
        \item[If]
		\begin{enumerate}[label=(\roman*)]
			\item all elements of $\Sigma_0$ have property $P$, and
			\item for each $k\geq1$ and each $a\in\Sigma_k$, if $\li$ have property $P$, then $\listt$ has property $P$,
		\end{enumerate}
		\item[then] all trees in $T_\Sigma$ have property $P$.\qedhere
	\end{description}
	\endgroup 
\end{principle}
\begin{principle}\label{pri.2.13}
	Principle of definition by induction (or recursion) on trees. Suppose we want to associate a value $h(t)$ with each tree $t$ in $T_\Sigma$. Then it suffices to define $h(a)$ for all $a\in \Sigma_0$, and to show how to compute the value $h(\listt)$ from the values $\li[h(t_i)]$.
	More formally expressed,
	\begin{description}
		\item[given] a set $O$ of objects, and
		\begin{enumerate}[label=(\roman*)]
			\item for each $a\in\Sigma_0$, an object $o_a\in O$, and
			\item for each $k\geq1$ and each $a\in\Sigma_k$, a mapping $f^k_a:O^k\to O$,
		\end{enumerate}
		\item[there is] exactly one mapping $h:T_\Sigma\to O$ such that
		\begin{enumerate}[label=(\roman*)]
			\item $h(a)=o_a$ for all $a\in\Sigma_0$, and
			\item $h(\listt)=f^k_a(\li[h(t_i)])$ for all $k\geq1$, $a\in\Sigma_k$ and $\li\in T_\Sigma$.\qedhere
		\end{enumerate}
	\end{description}
\end{principle}
\begin{example}\label{exa.2.14}
	Let $\Sigma_0=\{e\}$ and $\Sigma_1=\{\,/\,\}$. The trees in $T_\Sigma$ are in an obvious one-to-one correspondence with the natural numbers. The above principles are the usual induction principles for these numbers.
\end{example}
To illustrate the use of the induction principles we give the following useful definitions.
\begin{definition}\label{def.2.15}
	The mapping \uem{\fyield} from $T_\Sigma$ into $\Sigma^*_0$ is defined inductively as follows.
	\begin{enumerate}[label=(\roman*)]
		\item For $a\in\Sigma_0$, $\fyield(a)=
		      \begin{cases}
		      	a       & \text{if }a\neq e \\
		      	\lambda & \text{if } a=e.
		      \end{cases}$
	    \item\label{def.2.15ii} For $a\in\Sigma_k$ and $\li\in T_\Sigma$,\\ $\fyield(\listt)=\fyield(t_1)\cdot\fyield(t_2)\cdots\fyield(t_k)$.\footnote{That is, the concatenation of $\li[\fyield(t_i)]$.}
	\end{enumerate}
	Moreover, for a tree language $L\subseteq T_\Sigma$, we define 
	\[\fyield(L)=\{\fyield(t)\mid t\in L\}.\]
	We shall sometimes abbreviate ``\fyield" by ``y".
\end{definition}
\begin{definition}\label{def.2.16}
	The mapping \uem{\fheight} from $T_\Sigma$ into $\mathbb{N}$ is defined recursively as follows.
	\begin{enumerate}[label=(\roman*)]
		\item For $a\in\Sigma_0$, $\fheight(a)=0$.
		      \item\label{def.2.16ii} For $a\in\Sigma_k$ and $\li\in T_\Sigma$,\\ $\fheight(\listt)=\max\limits_{1\leq i\leq k}(\fheight(t_i))+1$.\qedhere
	\end{enumerate}
\end{definition}
\begin{examplewithoutqed}\label{exa.2.17}
	As an example of a proof by induction on trees we show that, if $e\notin\Sigma_0$ and $\Sigma_1=\emptyset$, then, for all $t\in T_\Sigma$, $\fheight(t)<| \fyield(t)| $. 
\end{examplewithoutqed}
\begin{proof}
	For $a\in\Sigma_0$, $\fheight(a)=0$ and $| \fyield(a)| =| a| =1$ (since $a\neq e$). Now let $a\in\Sigma_k$ ($k\geq2$) and assume (induction hypothesis) that $\fheight(t_i)<| \fyield(t_i)| $ for $1\leq i\leq k$.\\
	\begin{tabular}{llr}
		Then & $| \fyield(\listt)| =\sum\limits_{i=1}^{k}| \fyield(t_i)| $ & (Def. \ref{def.2.15}\ref{def.2.15ii})    \\
		     & $\geq (\mathop{\sum\limits_{i=1}^{k}}\fheight(t_i))+k$      & (ind. hypothesis)                            \\
		     & $\geq( \max\limits_{1\leq i\leq k} \fheight(t_i))+2$        & ($k\geq2$ and $\fheight(t_i)\geq0$)                \\
		     & $> \fheight(\listt)$                                                   & (Def. \ref{def.2.16}\ref{def.2.16ii}). 
	\end{tabular}

\end{proof}

\begin{exercise}\label{exe.2.18}
Let $\Sigma$ be a ranked alphabet such that $\Sigma_0\cap\Sigma_k=\emptyset$ for all $k\geq1$.  
	Define a (string) homomorphism $h$ from $(\Sigma\cup\{[\,,]\})^*$ into $\Sigma_0^*$ such that, for all $t\in T_\Sigma$, $h(t)=\fyield(t)$.
\end{exercise}
\begin{exercise}\label{exe.2.19}
	Give a recursive definition of the notion of ``subtree", for instance as a mapping $\fsubtree : T_\Sigma\to\mathcal{P}(T_\Sigma)$ such that $\fsubtree(t)$ is the set of all subtrees of $t$. Give also an alternative definition of ``subtree" in a more string-like fashion.
\end{exercise}
\begin{exercise}\label{exe.2.20}
	Let $\fpath(t)$ denote the set of all paths from the top of $t$ to its bottom. Think of a formal definition for ``path".
\end{exercise}
%strange sentence
The generalization of formal language theory to a formal tree language theory will come about by viewing a \uwave{string as a special kind of tree} and taking the obvious generalizations. To be able to view strings as trees we ``turn them 90 degrees" to a vertical position, as follows.
\begin{definition}\label{def.2.21}
	A ranked alphabet $\Sigma$ is \uem{monadic} if
	\begin{enumerate}[label=(\roman*)]
		\item $\Sigma_0=\{e\}$, and
		\item for $k\geq2$, $\Sigma_k=\emptyset$.
	\end{enumerate}
	The elements of $T_\Sigma$ are called monadic trees.
\end{definition}
Thus a monadic ranked alphabet $\Sigma$ is fully determined by the alphabet $\Sigma_1$. Monadic trees obviously can be made to correspond to the strings in $\Sigma_1^*$. There are two ways to do this, depending on whether we read top-down or bottom-up:

$f_{\td}:T_\Sigma\to \Sigma_1^*$ is defined by
\begin{enumerate}[leftmargin=2cm,label=(\roman*)]
	\item $f_{\td}(e)=\lambda$
	\item $f_{\td}(a[t])=a\cdot f_{\td}(t)$ for $a\in\Sigma_1$ and $t\in T_\Sigma$
\end{enumerate}
and $f_{\bu}:T_\Sigma\to \Sigma_1^*$ is defined by
\begin{enumerate}[leftmargin=2cm,label=(\roman*)]
	\item $f_{\bu}(e)=\lambda$
	\item $f_{\bu}(a[t])=f_{\bu}(t)\cdot a$ for $a\in\Sigma_1$ and $t\in T_\Sigma$.
\end{enumerate}
(Obviously both $f_{\td}$ and $f_{\bu}$ are bijections).\\
Accordingly, when generalizing a string-concept to trees, we often have the choice between a top-down and a bottom-up generalization.
\begin{example}\label{exa.2.22}
	The string alphabet $\Delta=\{a,b,c\}$ corresponds to the monadic alphabet $\Sigma$ with $\Sigma_0=\{e\}$ and $\Sigma_1=\Delta$. The tree

	\begin{center}
	\begin{tikzpicture}[tree]
	\node {a}
		child {node {b}
			child {node {c}
				child {node {b}
					child {node {e}}}}};
	\end{tikzpicture}
\end{center}

\noindent in $T_\Sigma$ corresponds either to the string \emph{abcb} in $\Delta^*$ (top-down), or to the string \emph{bcba} in $\Delta^*$ (bottom-up). Note that, due to our ``prefix definition" of trees (Definition \ref{def.2.5}), the above tree looks ``top-down like" in its official form $a[b[c[b[e]]]]$. Obviously this is not essential.
\end{example}
Let us consider some basic operations on trees. A basic operation on strings is right-concatenation with one symbol (that is, for each symbol $a$ in the alphabet there is an operation $\rc_a$ such that, for each string $w$, $\rc_a(w)=wa$). Every string can uniquely be built up from the empty string by these basic operations (consider the way you write and read!). Generalizing bottom-up, the corresponding basic operations on trees, here called ``top concatenation", are the following.
\begin{definition}\label{def.2.23}
	For each $a\in\Sigma_k$ ($k\geq1$) we define the ($k$-ary) operation of \uem{top~concatenation} with $a$, denoted by $\tc_a^k$, to be the mapping from $T_\Sigma^k$ into $T_\Sigma$ such that, for all $\li\in T_\Sigma$,
	\[\tc_a^k(\li)=\listt.\]
	Moreover, for tree languages $\li[L_i]$, we define\\
	\[\tc_a^k(\li[L_i])=\{\listt \mid t_i\in L_i\text{ for all }1\leq i\leq k\}.\]
\end{definition}
Note that every tree can uniquely be built up from the elements of $\Sigma_0$ by repeated top concatenation.

The next basic operation on strings is concatenation. When viewed monadically, concatenation corresponds to substituting one vertical string into the $e$ of the other vertical string. In the general case, we may take one tree and substitute a tree into each leaf of the original tree, such that different trees may be substituted into leaves with different labels. Thus we obtain the following basic operation on trees.
\begin{definition}\label{def.2.24}
	Let $n\geq1$, $\li[a_i][n]\in\Sigma_0$ all different, and $\li[s_i][n]\in T_\Sigma$. For $t\in T_\Sigma$, the \uem{tree concatenation} of $t$ with $\li[s_i][n]$ at $\li[a_i][n]$, denoted by $t\langle\li[a_i\gets s_i][n] \rangle$, is defined recursively as follows.
	\begin{enumerate}[label=(\roman*)]
		\item for $a\in\Sigma_0$,\\
		      $a\langle\li[a_i\gets s_i][n] \rangle=
		      \begin{cases}
		      	s_i & \text{if }a=a_i  \\
		      	a   & \text{otherwise} 
		      \end{cases}$
		\item for $a\in\Sigma_k$ and $\li\in T_\Sigma$,\\[1mm]				
		      $\listt\langle\dots\rangle=\listt[a][t_i\langle\dots\rangle]$,\\[1mm]
		      where $\langle\dots\rangle$ abbreviates $\langle\li[a_i\gets s_i][n]\rangle$.
	\end{enumerate}
	If, in particular, $n=1$, then, for each $a\in\Sigma_0$ and $t,s\in T_\Sigma$, the tree $t\langle a\gets s\rangle$ is also denoted by $t\conc_a s$.
\end{definition}
\begin{example}\label{exa.2.25}
	Let $\Delta_0=\{x,y,c\}$, $\Delta_2=\{b\}$ and $\Delta_3=\{a\}$.
	If $t= a[b[xy]xc] $, then $t\langle x\gets  b[cx] ,y\gets c \rangle= a[b[b[cx]c]b[cx]c] $.
\end{example}
\begin{exercise}\label{exe.2.26}
	Check that in the monadic case tree concatenation corresponds to string concatenation.
\end{exercise}
For tree languages tree concatenation is defined analogously.
\begin{definition}\label{def.2.27}
	Let $n\geq1$, $\li[a_i][n]\in\Sigma_0$ all different, and $\li[L_i][n]\subseteq T_\Sigma$. For $L\subseteq T_\Sigma$ we define the \uem{tree concatenation} of $L$ with $\li[L_i][n]$ at $\li[a_i][n]$, denoted by $L\langle\li[a_i\gets L_i][n]\rangle$, as follows.\footnote{As usual, given a string $w$, we use $w$ also to denote the language $\{w\}$.}
	\begin{enumerate}[label=(\roman*)]
		\item for $a\in\Sigma_0$,\\
		      $a\langle\li[a_i\gets L_i][n]\rangle=
		      \begin{cases}
		      	L_i & \text{if }a=a_i  \\
		      	a   & \text{otherwise} 
		      \end{cases}$
		\item for $a\in\Sigma_k$ and $\li\in T_\Sigma$,\\[1mm]
		      $\listt\langle\dots\rangle=\listt[a][t_i\langle\dots\rangle]$\;\;\footnote{For tree languages $\li[M_i]$ we also write $\listt[a][M_i]$ to denote $\tc_a^k(\li[M_i])$. This notation is fully justified since $\listt[a][M_i]$ is the (string) concatenation of the languages $a$, $[\,$, $\li[M_i]$ and $]\,$!}
		\item for $L\subseteq T_\Sigma$,\\
		      $L\langle\li[a_i\gets L_i][n]\rangle=\bigcup\limits_{t\in L} t\langle\li[a_i\gets L_i][n]\rangle$.
	\end{enumerate}
	If, in particular, $n=1$, then, for each $a\in\Sigma_0$ and each $L_1,L_2\subseteq T_\Sigma$, we denote $L_1\langle a\gets L_2\rangle$ also by $L_1\conc_a L_2$.
\end{definition}
\begin{remarks}\label{rem.2.28}~
	\begin{enumerate}[label=(\arabic*)]
		\item Obviously, if $L,L_1,\dots,L_n$ are singletons, then Definition \ref{def.2.27} is the same as Definition \ref{def.2.24}.
	    \item\label{rem:2.28ii} Note that tree concatenation, as defined above, is ``nondeterministic" in the sense that, for instance, to obtain $t\langle\li[a_i\gets L_i][n]\rangle$ different elements of $L_1$ may be substituted at different occurrences of $a_1$ in $t$.
	    ``Deterministic" tree concatenation of $t$ with $\li[L_i][n]$ at $\li[a_i][n]$ could be defined as
	    $\{t\langle\li[a_i\gets s_i][n]\rangle\mid s_i\in L_i\text{ for all }1\leq i\leq n\}$. In this case different occurrences of $a_1$ in $t$ should be replaced by the same element of $L_1$. It is clear that, in the case that $\li[L_i][n]$ are singletons, this distinction cannot be made.\qedhere
	\end{enumerate}
\end{remarks}
Intuitively, since trees are strings, tree concatenation is nothing else but ordinary string substitution, familiar from formal language theory (see, for instance, \cite[\rom{1}.3]{Sal}).

For completeness we give the definition of substitution of string languages.
\begin{definition}\label{def.2.29}
	Let $\Delta$ be an alphabet. Let $n\geq1$, $\li[a_i][n]\in\Delta$ all different and let $\li[L_i][n]$ be languages over $\Delta$. For any $L\subseteq\Delta^*$, the \uem{substitution} of $\li[L_i][n]$ for $\li[a_i][n]$ in $L$, denoted by $L\langle\li[a_i\gets L_i][n]\rangle$, is the language over $\Delta$ defined as follows:
	\begin{enumerate}[label=(\roman*)]
		\item $\lambda\langle\li[a_i\gets L_i][n]\rangle=\lambda$
		\item for $a\in\Delta$,\\
		      $a\langle\li[a_i\gets L_i][n]\rangle=
		      \begin{cases}
		      	L_i & \text{if }a=a_i  \\
		      	a   & \text{otherwise} 
		      \end{cases}$
		\item for $w\in\Delta^*$ and $a\in\Delta$,\\
		      $wa\langle\dots\rangle=w\langle\dots\rangle\cdot a\langle\dots\rangle$
		\item for $L\subseteq \Delta^*$,\\
		      $L\langle\li[a_i\gets L_i][n]\rangle=\bigcup\limits_{w\in L} w\langle\li[a_i\gets L_i][n]\rangle$.
	\end{enumerate}
	If $n=1$, $L_1\langle a\gets L_2\rangle$ will also be denoted as $L_1\conc_a L_2$.
	If $\li[L_i][n]$ are singletons, then the substitution is called a \uem{homomorphism}.
\end{definition}
\begin{exercise}\label{exe.2.30}
	Let $n\geq1$, $\li[a_i][n]\in\Sigma_0$ all different, $a_i\neq e$ for all $1\leq i\leq n$, and $L,L_1,\dots L_n\subseteq T_\Sigma$. Prove that
	\[\fyield(L\langle\li[a_i\gets L_i][n]\rangle)=\fyield(L)\langle\li[a_i\gets \fyield(L_i)][n]\rangle.\]
	(Thus: ``yield of tree concatenation is string substitution of yields").
\end{exercise}
\begin{exercise}\label{exe.2.31}
	Prove that Definitions \ref{def.2.27} and \ref{def.2.29} give exactly the same result
for $L\langle\li[a_i\gets L_i][n]\rangle$ where $\li[a_i][n]\in\Sigma_0$ and $L,L_1,\dots L_n$ are tree languages over $\Sigma$ (and thus, string languages over $\Sigma\cup\{[\,,]\}$).
\end{exercise}
\begin{exercise}\label{exe.2.32}
	Define the notion of associativity for tree concatenation, and show that tree concatenation is associative. Show that, in general, ``deterministic tree concatenation" is not associative (cf. Remark \ref{rem.2.28}\ref{rem:2.28ii}).
\end{exercise}
We shall need the following special case of tree concatenation.
\begin{definition}\label{def.2.33}
	Let $\Sigma$ be a ranked alphabet and let $S$ be a set of symbols or a tree language. Then the set of \uem{trees indexed by $S$}, denoted by $T_\Sigma(S)$, is defined inductively as follows.
	\begin{enumerate}[label=(\roman*)]
		\item $S\cup\Sigma_0\subseteq T_\Sigma(S)$
		\item If $k\geq1$, $a\in\Sigma_k$ and $\li\in T_\Sigma(S)$,
		      then $\listt\in T_\Sigma(S)$.
	\end{enumerate}
	Note that $T_\Sigma(\emptyset)=T_\Sigma$.
\end{definition}
Thus, if $S$ is a set of symbols, then $T_\Sigma(S)=T_{\Sigma\cup S}$, where the elements of $S$ are assumed to have rank 0. If $S$ is a tree language over a ranked alphabet $\Delta$, then $T_\Sigma(S)$ is a tree language over the ranked alphabet $\Sigma\cup\Delta$.
\begin{exercise}\label{exe.2.34}
	Show that, for any $a\in\Sigma_0$, $T_\Sigma(S)=T_\Sigma\conc_a(S\cup\{a\})$.
\end{exercise}
We close this section with two general remarks.
\begin{remark}\label{rem.2.35}
	Definition \ref{def.2.5} of a tree is of course rather arbitrary. Other, equally useful, ways of defining trees as a special kind of strings are obtained by replacing $\listt$ in Definition \ref{def.2.5} by $[t_1\cdots t_k]a$ or $at_1\cdots t_k]$ or $[at_1\cdots t_k]$ or $at_1\cdots t_k$ (only in the case that each symbol has exactly one rank) or $\underset{a}{[}t_1\cdots t_k]$ (where $\underset{a}{[}$ is a new symbol for each $a$) or $a[t_1\boldsymbol{,} t_2\boldsymbol{,}\dots\boldsymbol{,}t_k]$ (where ``$\,\boldsymbol{,}\,$" is a new symbol).
\end{remark}
\begin{remark}\label{rem.2.36}
	Remark on the general philosophy in tree language theory. The general philosophy looks like this:
	\begin{figure}[H]
	\centering
		\begin{tikzpicture}[thick,scale=2]
		\usetikzlibrary{decorations.pathmorphing}
		%horizontal line
		\draw (0,0) -- (1,0);
		%vertical line
		\draw (2,0) -- (2,1);
		%triangle
		\draw (3,0) -- (4,0);
		\draw (3,0) -- (3.5,1);
		\draw (4,0) -- (3.5,1);
		\draw (3.5,0) edge[dashed] (3.5,1);
		%dashed triangle
		\draw (5,0) -- (6,0);
		\draw (5,0) edge[dashed] (5.5,1);
		\draw (6,0) edge[dashed] (5.5,1);
		%arrows
		\draw (0.5,0) edge[out=90,in=180,shorten >=0.5cm,shorten <=0.5cm,>>->] node[above] {(1)} (2,0.5);
		%second arrow tail
		\draw (2.3,0.5) edge[>>-] (2.4,0.5);
		%second arrow body
		\draw (2.4,0.5) edge[->,decorate,
		decoration={snake,amplitude=.4mm,segment length=2.5mm,post length=0.7mm}] node[above,pos=0.45] {(2)} (3.00,0.5);
		\draw (4,0.5) edge[out=0,in=90,shorten >=0.3cm,>>->] node[above,pos=0.4] {(3)} (5.5,0);
		\end{tikzpicture}
	\end{figure}
	\begin{enumerate}[label=(\arabic*)]
		\item Take vertical string language theory (cf. Definition \ref{def.2.21}),
		\item generalize it to tree language theory, and
		\item map this into horizontal string language theory via the yield operation (Definition~\ref{def.2.15}).
	\end{enumerate}
	The fourth part of the philosophy is
	\begin{enumerate}[label=(\arabic*)]
		\setcounter{enumi}{3}
		\item Tree language theory is a specific part of string language theory, illustrated as follows:
	\end{enumerate}
%\newpage %Joost
%\mbox{} %Joost
%\vspace{3.2cm} %Joost

%\iffalse %Joost
	%\begin{figure}[h]
	\begin{align*}&
		\raisebox{-0.5\height}{
			\begin{tikzpicture}
				\node[scale=0.14] at (-1.8,0.8) {\includegraphics{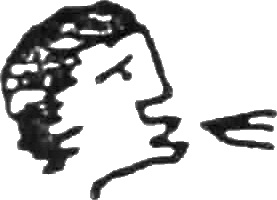}};
				\draw[thick] (-0.9,0.7) -- +(1.8,0);
				\node {$a[b[cd]d]$};
			\end{tikzpicture}}&&\hspace{0.8cm}
		\raisebox{-0.4\height}{
			\begin{tikzpicture}
				\node[scale=0.14] at (0,1.1) {\includegraphics{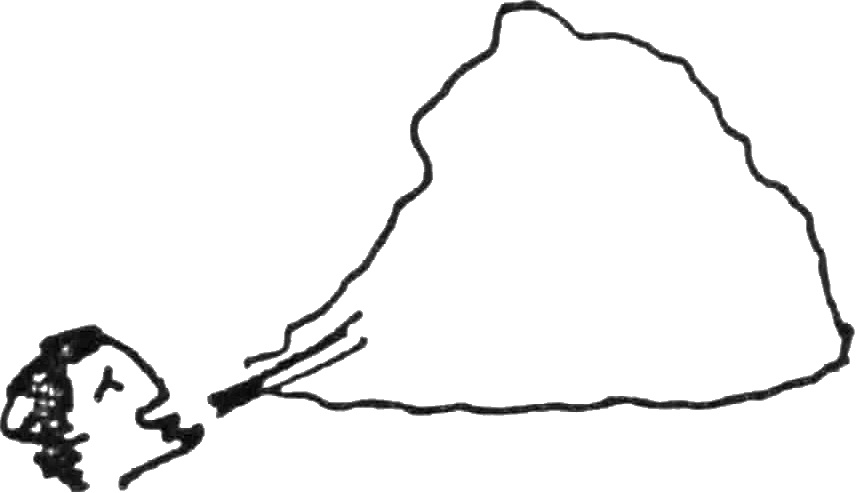}};
				\node(1) {$a$};
				\node(2) [below of=1,node distance=0.2cm,xshift=0.05cm]{$[b~~d]$};
				\node(3) [below of=2,node distance=0.4cm,xshift=-0.06cm]{$[cd]$};
			\end{tikzpicture}}&&\hspace{0.8cm}
		\raisebox{-0.5\height}{
			\begin{tikzpicture}[lighttree]
				\coordinate(t) at (-1.1,0.5);
				\node[scale=0.14] at (-2.1,0.5) {\includegraphics{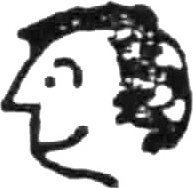}};
				\draw[shorten >=0pt, shorten <=0pt, thick] (t) -- +(1.6,0);
				\draw[shorten >=0pt, shorten <=0pt, thick] (t) -- +(0.8,1.5);
				\draw[shorten >=0pt, shorten <=0pt, thick] (t)+(1.6,0) -- +(0.8,1.5);
				\node at (-0.2,0) {a}
					child {node {b}
						child {node {c}}
						child {node {d}}}
					child {node {d}};
			\end{tikzpicture}}
	\end{align*}
%	\end{figure}
%\fi %Joost
	Example:
	\begin{enumerate}[label=(\arabic*).]
		\item (vertical) string concatenation
		\item tree concatenation
		\item (horizontal) string substitution \hfill{(see Exercise \ref{exe.2.30})}
		\item (2) is a special case of (3) \hfill{(see Exercise \ref{exe.2.31})}
	\end{enumerate}
\end{remark}

\section{Recognizable tree languages}\label{sec.3}
\subsection{Finite tree automata and regular tree grammars}
Let us first consider the usual finite automaton on strings. A deterministic finite automaton is a structure $M=(Q,\Sigma,\delta,q_0,F)$, where $Q$ is the set of states, $\Sigma$ is the input alphabet, $q_0$ is the initial state, $F$ is the set of final states and $\delta$ is a family $\{\delta_a\}_{a\in\Sigma}$, where $\delta_a:Q\to Q$ is the transition function for the input $a$. There are several ways to describe the functioning of $M$ and the language it recognizes. One of them (see for instance \cite[\rom{1}.4]{Sal}), is to describe explicitly the sequence of steps taken by the automaton while processing some input string. This point of view will be considered in Part (\ref{sec.4}). Another way is to give a recursive definition of the effect of an input string on the state of $M$. Since a recursive definition is in particular suitable for generalization to trees, let us consider one in detail. We define a function $\delbu:\Sigma^*\to Q$ such that, for $w\in\Sigma^*$, $\delbu(w)$ is intuitively the state $M$ reaches after processing $w$, starting from the initial state $q_0$:
\begin{enumerate}[label=(\roman*)]
	\item $\delbu(\lambda)=q_0$
	\item for $w\in\Sigma^*$ and $a\in\Sigma$,
	      $\delbu(wa)=\delta_a(\delbu(w))$.
\end{enumerate}
The language recognized by $M$ is $L(M)=\{w\in\Sigma^*\mid \delbu(w)\in F\}$. When considering this definition of $\delbu$ for ``bottom-up" monadic trees (see Definition \ref{def.2.21}), one easily arrives at the following generalization to the tree case:\\
There should be a start state for each element of $\Sigma_0$. The finite tree automaton starts at all leaves (``{}at the same time", ``in parallel") and processes the tree in a bottom-up fashion. The automaton arrives at each node of rank $k$ with a sequence of $k$ states (one state for each direct subtree of the node), and the transition function $\delta_a$ of the label $a$ of that node is a mapping $\delta_a:Q^k\to Q$, which, from that sequence of $k$ states, determines the state at that node. A tree is recognized iff the tree automaton is in a final state at the root of the tree. Formally:
\begin{definition}\label{def.3.1}
	A \uem{deterministic bottom-up finite tree automaton} is a structure \\
 $M=(Q,\Sigma,\delta,s,F)$,\\
	\begin{tabular}{llp{0.78\linewidth}}
		where & $Q$      & is a finite set (of \uem{states}),                                                                                                       \\
		      & $\Sigma$ & is a ranked alphabet (of \uem{input symbols}),                                                                                           \\
		      & $\delta$ & is a family $\{\delta_a^k\}_{k\geq1,a\in\Sigma_k}$ of mappings $\delta_a^k:Q^k\to Q$ (the \uem{transition function} for $a\in\Sigma_k$), \\
		      & $s$      & is a family $\{s_a\}_{a\in\Sigma_0}$ of states $s_a\in Q$ (the \uem{initial state} for $a\in\Sigma_0$), and                                                \\
		      & $F$      & is a subset of $Q$ (the set of \uem{final states}).                                                                                      
	\end{tabular}\\
	The mapping $\delbu:T_\Sigma\to Q$ is defined recursively as follows:
	\begin{enumerate}[label=(\roman*)]
		\item for $a\in\Sigma_0$, $\delbu(a)=s_a$,
		\item for $k\geq1$, $a\in\Sigma_k$ and $\li\in T_\Sigma$,\\[1mm]
		      $\delbu(\listt)=\delta_a^k(\li[\delbu(t_i)])$.
	\end{enumerate}
	The \uem{tree language recognized by $M$} is defined to be $L(M)=\{t\in T_\Sigma\mid \delbu(t)\in F\}$.
\end{definition}
Intuitively, $\delbu(t)$ is the state reached by $M$ after bottom-up processing of $t$.

For convenience, when $k$ is understood, we shall write $\delta_a$ rather than $\delta_a^k$. Note therefore that each symbol $a\in\Sigma$ may have several transition functions $\delta_a$ (one for each of its ranks).

We shall abbreviate ``finite tree automaton" by ``fta", and ``deterministic" by ``det.".

\begin{definition}\label{def.3.2}
	A tree language $L$ is called \uem{recognizable} (or regular) if $L=L(M)$ for some det.\ bottom-up fta $M$.
	
	The class of recognizable tree languages will be denoted by \RECOG.
\end{definition}

\begin{example}\label{exa.3.3}
	Let us consider the det.\ bottom-up fta $M=(Q,\Sigma,\delta,s,F)$, where $Q=\{0,1,2,3\}$, $\Sigma_0=\{0,1,2,\dots,9\}$, $\Sigma_2=\{+,*\}$, $s_a\equiv a\pmod{4}$, $F=\{1\}$, and $\delta_+$ and $\delta_*$ (both mappings $Q^2\to Q$) are addition modulo~4 and multiplication modulo~4 respectively. Then $M$ recognizes the set of all ``expressions" whose value modulo~4 is 1. Consider for instance the expression $+[+[07]*[2*[73]]]$, the prefix form of $(0+7)+(2*(7*3))$. 
%The state of $M$ at each node of the tree is indicated between parentheses in the following picture:
In the following picture, 

%	\begin{figure}[h]
	\centering
	\begin{tikzpicture}[tree, level distance=1.2cm, node distance=0.5cm,
	lvl1/.style={sibling distance=6cm},
	lvl2/.style={sibling distance=2.3cm}]
	\node(e) {+} [lvl1]
		child {node(0) {+} edge from parent[shorten >=8pt] [lvl2]
			child {node(00) {\vphantom{()}0}}
			child {node(01) {\vphantom{()}7}}}
		child {node(1) {*} [lvl2]
			child {node(10) {\vphantom{()}2}}
			child {node(11) {*}
				child {node(110) {\vphantom{()}7}}
				child {node(111) {\vphantom{()}3}}}};
	\node[right=0.3cm,right of=e] {(1)};
	\node[right=0.3cm,right of=0] {(3)};
	\node[right of=00] {(0)};
	\node[right of=01] {(3)};
	\node[right of=1] {(2)};
	\node[right of=10] {(2)};
	\node[right of=11] {(1)};
	\node[right of=110] {(3)};
	\node[right of=111] {(3)};
	\end{tikzpicture} 
%	\end{figure} 

the state of $M$ at each node of the tree is indicated between parentheses.
\end{example}
\begin{example}\label{exa.3.4}
	Let $\Sigma_0=\{a\}$ and $\Sigma_2=\{b\}$. Consider the language of all trees in $T_\Sigma$ which have a ``right comb-like" structure like for instance the tree $b[ab[ab[ab[aa]]]]$. This tree language is recognized by the det.\ bottom-up fta $M=(Q,\Sigma,\delta,s,F)$, where $Q=\{A,C,W\}$, $s_a=A$, $F=\{C\}$ and $\delta_b$ is defined by $\delta_b(A,A)=\delta_b(A,C)=C$ and $\delta_b(q_1,q_2)=W$ for all other pairs of states $(q_1,q_2)$.
\end{example}

\begin{exercise}\label{exe.3.5}
	Let $\Sigma_0=\{a,b\}$, $\Sigma_1=\{p\}$ and $\Sigma_2=\{p,q\}$. Construct det.\ bottom-up finite tree automata recognizing the following tree languages:
	\begin{enumerate}[label=(\roman*)]
		\item\label{exe.3.5i} the language of all trees $t$, such that if a node of $t$ is labeled $q$, then its descendants are labeled $q$ or $a$;
		\item\label{exe.3.5ii} the set of all trees $t$ such that $\fyield(t)\in a^+b^+$;
		\item\label{exe.3.5iii} the set of all trees $t$ such that the total number of $p$'s occurring in $t$ is odd.\qedhere
	\end{enumerate}
\end{exercise}

A (theoretically) convenient extension of the deterministic finite automaton is to make it nondeterministic. A nondeterministic finite automaton (on strings) is a structure $M=(Q,\Sigma,\delta,S,F)$, where $Q$, $\Sigma$ and $F$ are the same as in the deterministic case, $S$ is a set of initial states, and, for each $a\in\Sigma$, $\delta_a$ is a mapping $Q\to\mathcal{P}(Q)$ (intuitively, $\delta_a(q)$ is the set of states which $M$ can possibly, nondeterministically, enter when reading $a$ in state $q$). Again a mapping $\delbu$, now from $\Sigma^*$ into $\mathcal{P}(Q)$, can be defined, such that for every $w\in\Sigma^*$, $\delbu(w)$ is the set of states $M$ can possibly reach after processing $w$, having started from one of the initial states in $S$:
\begin{enumerate}[label=(\roman*)]
	\item $\delbu(\lambda)=S$,
	\item for $w\in\Sigma^*$ and $a\in\Sigma$,
	      $\delbu(wa)=\bigcup\{\delta_a(q)\mid q\in\delbu(w)\}$.
\end{enumerate}
The language recognized by $M$ is 
$L(M)=\{w\in\Sigma^*\mid \delbu(w)\cap F\neq\emptyset\}$.\\
Generalizing to trees we obtain the following definition.

\begin{definition}\label{def.3.6}
%	A \uem{nondeterministic bottom-up finite tree automaton} is a 5\babelhyphen{nobreak}tuple 
    A \uem{nondeterministic bottom-up finite tree automaton} is a 5-tuple \\
$M=(Q,\Sigma,\delta,S,F)$, where $Q$, $\Sigma$ and $F$ are as in the deterministic case, $S$ is a family $\{S_a\}_{a\in\Sigma_0}$ such that $S_a\subseteq Q$ for each $a\in\Sigma_0$, and $\delta$ is a family $\{\delta_a^k\}_{k\geq1,a\in\Sigma_k}$ of mappings $\delta_a^k:Q^k\to\mathcal{P}(Q)$.\\[1mm]
	The mapping $\delbu:T_\Sigma\to\mathcal{P}(Q)$ is defined recursively by
	\begin{enumerate}[label=(\roman*)]
		\item for $a\in\Sigma_0$, $\delbu(a)=S_a$,
		\item for $k\geq1$, $a\in\Sigma_k$ and $\li\in T_\Sigma$,\\[1mm]
		      $\delbu(\listt)=\bigcup\{\delta_a(\li[q_i])\mid q_i\in\delbu(t_i)\text{ for }1\leq i\leq k\}$.
	\end{enumerate}
	The \uem{tree language recognized by $M$} is $L(M)=\{t\in T_\Sigma\mid \delbu(t)\cap F\neq\emptyset\}$.
\end{definition}

Note that, for $\mbar{q}\in Q^k$, $\delta_a^k(\mbar{q})$ may be empty.

\begin{example}\label{exa.3.7}
	Let $\Sigma_0=\{p\}$ and $\Sigma_2=\{a,b\}$. Consider the following tree language over~$\Sigma$:
%	\begin{align*}
%		L= \{ u_1a[a[s_1s_2]a[t_1t_2]]u_2 \in T_\Sigma&\mid u_1,u_2\in(\Sigma\cup\{[,]\})^*,s_1,s_2,t_1,t_2\in T_\Sigma\}      \\
%		    \cup \{u_1b[b[s_1s_2]b[t_1t_2]]u_2 \in T_\Sigma &\mid\hphantom{u_1,u_2\in(}\makebox[0cm]{$''$}\hphantom{\Sigma\cup\{[,]\})^*}\makebox[0cm]{$''$}\hphantom{,s_1,s_2,t_1}\makebox[0cm]{$''$}\hphantom{,t_2\in T_\Sigma}\} .
%	\end{align*}
\[
	\begin{array}{lll}
		L & = & \{ u_1a[a[s_1s_2]a[t_1t_2]]u_2 \in T_\Sigma\mid \cdots\} \;\cup   \\
		  & &   \{u_1b\,[b[s_1s_2]\,b[t_1t_2]]u_2 \in T_\Sigma\mid \cdots\} 
	\end{array}
\]
where ``$\cdots$'' stands for $u_1,u_2\in(\Sigma\cup\{[\,,]\})^*,s_1,s_2,t_1,t_2\in T_\Sigma$.
In other words, $L$~is the set of all trees containing a configuration
	\raisebox{-0.5\height}{\begin{tikzpicture}[lighttree]
	\tikzstyle{level 2}=[sibling distance=0.5cm]
	\node {a}
		child {node {a}
			child{}
			child{}}
		child {node {a}
			child{}
			child{}};
	\end{tikzpicture}}
	or a configuration
	\raisebox{-0.5\height}{\begin{tikzpicture}[lighttree]
	\tikzstyle{level 2}=[sibling distance=0.5cm]
	\node {b}
		child {node {b}
			child{}
			child{}}
		child {node {b}
			child{}
			child{}};
	\end{tikzpicture}}
	(or both).
	$L$ is recognized by the nondet.\ bottom-up fta $M=(Q,\Sigma,\delta,S,F)$, where $Q=\{q_s,q_a,q_b,r\}$, $S_p=\{q_s\}$, $F=\{r\}$ and $\delta_a(q_s,q_s)=\{q_s,q_a\}$, $\delta_b(q_s,q_s)=\{q_s,q_b\}$, $\delta_a(q_a,q_a)=\delta_b(q_b,q_b)=\{r\}$,
	for all $q\in Q:\delta_a(q,r)=\delta_a(r,q)=\delta_b(q,r)=\delta_b(r,q)=\{r\}$,
	and $\delta_x(q_1,q_2)=\emptyset$ for all other possibilities.
\end{example}
It is rather obvious in the last example that we can find a deterministic bottom-up fta recognizing the same language (find it!). We now show that this is possible in general (as in the case of strings).

\begin{theorem}\label{the.3.8}
	For each nondeterministic bottom-up fta we can find a deterministic one recognizing the same language.
\end{theorem}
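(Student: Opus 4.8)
The plan is to carry out the tree version of the usual subset construction. Given a nondeterministic bottom-up fta $M=(Q,\Sigma,\delta,S,F)$, I would define a deterministic one $M'=(Q',\Sigma,\delta',s',F')$ by taking $Q'=\mathcal{P}(Q)$, $s'_a=S_a$ for each $a\in\Sigma_0$, $F'=\{P\subseteq Q\mid P\cap F\neq\emptyset\}$, and, for $k\geq1$, $a\in\Sigma_k$ and $P_1,\dots,P_k\subseteq Q$,
\[\delta'_a(P_1,\dots,P_k)=\bigcup\{\delta_a(q_1,\dots,q_k)\mid q_i\in P_i\text{ for }1\leq i\leq k\}.\]
Since $Q$ is finite, $Q'$ is finite; and each $\delta'_a$ is an honest total map $(Q')^k\to Q'$ (if some $P_i=\emptyset$ the union on the right is just $\emptyset$, which is a bona fide element of $Q'$), so $M'$ is a deterministic bottom-up fta in the sense of Definition \ref{def.3.1}.

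The heart of the matter is the claim that $\delbu_{M'}(t)=\delbu_M(t)$ for every $t\in T_\Sigma$, where $\delbu_M$ and $\delbu_{M'}$ are the maps attached to $M$ and $M'$ by Definitions \ref{def.3.6} and \ref{def.3.1}. I would prove this by induction on trees (Principle \ref{pri.2.12}). For $a\in\Sigma_0$ both sides equal $S_a$ by definition. For $t=\listt$ with $a\in\Sigma_k$, the induction hypothesis gives $\delbu_{M'}(t_i)=\delbu_M(t_i)$ for $1\leq i\leq k$, so
\[\delbu_{M'}(\listt)=\delta'_a(\delbu_{M'}(t_1),\dots,\delbu_{M'}(t_k))=\delta'_a(\delbu_M(t_1),\dots,\delbu_M(t_k)),\]
and by the definition of $\delta'$ this last expression is exactly $\bigcup\{\delta_a(q_1,\dots,q_k)\mid q_i\in\delbu_M(t_i)\text{ for }1\leq i\leq k\}$, which equals $\delbu_M(\listt)$ by clause (ii) of Definition \ref{def.3.6}. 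From the claim it follows that $t\in L(M')$ iff $\delbu_{M'}(t)\in F'$ iff $\delbu_M(t)\cap F\neq\emptyset$ iff $t\in L(M)$; hence $L(M')=L(M)$.

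I do not anticipate a genuine obstacle: this is the tree analogue of a standard fact about string automata, and the induction is routine. The only points deserving a little care are keeping the two hatted transition maps apart, and noticing that $\emptyset$ is a legitimate state of $M'$ — this is precisely what makes $\delta'$ total and $M'$ genuinely deterministic even when $M$ can ``get stuck'' (i.e.\ reach a subtree $t_i$ with $\delbu_M(t_i)=\emptyset$). If a smaller automaton were wanted one could discard the states in $Q'$ not arising as values of $\delbu_{M'}$, but that refinement is not needed for the statement.
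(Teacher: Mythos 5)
Your proposal is exactly the paper's proof: the same subset construction with $\mathcal{P}(Q)$ as state set, the same definitions of $s'$, $F'$ and $\delta'$, and the same induction showing $\delbu_{M'}=\delbu_M$ (which the paper leaves as ``straightforward'' and you carry out in full). Correct, and no essential difference in approach.
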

\begin{proof}
	The proof uses the ``subset-construction", well known from the string-case.
	Let $M=(Q,\Sigma,\delta,S,F)$ be a nondeterministic bottom-up fta. Construct the deterministic bottom-up fta $M_1=(\mathcal{P}(Q),\Sigma,\delta_1,s_1,F_1)$ such that $(s_1)_a=S_a$ for all $a\in\Sigma_0$,
	$F_1=\{Q_1\in\mathcal{P}(Q)\mid Q_1\cap F\neq\emptyset\}$, and, for $a\in\Sigma_k$ and $\li[Q_i]\subseteq Q$,\[(\delta_1)_a(\li[Q_i])=\bigcup\{\delta_a(\li[q_i])\mid q_i\in Q_i\text{ for all }1\leq i\leq k\}.\]
	It is straightforward to show, using Definitions \ref{def.3.1} and \ref{def.3.6}, that
$\delbu_1(t)=\delbu(t)$ for all $t\in T_\Sigma$
(proof by induction on $t$). From this it follows that $L(M_1)=\{t\mid \delbu_1(t)\in F_1\}=$ $\{t\mid \delbu(t)\cap F\neq\emptyset\}=L(M)$.
\end{proof}

\begin{exercise}\label{exe.3.9}
	Check the proof of Theorem \ref{the.3.8}. Construct the det.\ bottom-up fta corresponding to the fta $M$ of Example \ref{exa.3.7} according to that proof, and compare this det.\ fta with the one you found before.
\end{exercise}
Let us now consider the top-down generalization of the finite automaton. Let $M=(Q,\Sigma,\delta,q_0,F)$ be a det. finite automaton. Another way to define $L(M)$ is by giving a recursive definition of a mapping $\widetilde{\delta}:\Sigma^*\to\mathcal{P}(Q)$ such that intuitively, for each $w\in\Sigma^*$, $\deltd(w)$ is the set of states $q$ such that the machine $M$, when started in state $q$, enters a final state after processing $w$. The definition of $\deltd$ is as follows:
\begin{enumerate}[label=(\roman*)]
	\item $\deltd(\lambda)=F$
	\item for $w\in\Sigma^*$ and $a\in\Sigma$,\\
	      $\deltd(aw)=\{q\mid \delta_a(q)\in\deltd(w)\}$
\end{enumerate}
(the last line may be read as: to check whether, starting in $q$, $M$ recognizes $aw$, compute $q_1=\delta_a(q)$ and check whether $M$ recognizes $w$ starting in $q_1$). The language recognized by $M$ is $L(M)=\{w\in\Sigma^*\mid q_0\in\deltd(w)\}$.
This definition, applied to ``top-down" monadic trees, leads to the following generalization to arbitrary trees. The finite tree automaton starts at the root of the tree in the initial state, and processes the tree in a top-down fashion. The automaton arrives at each node in one state, and the transition function $\delta_a$ of the label $a$ of that node is a mapping $\delta_a:Q\to Q^k$ (where $k$ is the rank of the node), which, from that state, determines the state in which to continue for each direct descendant of the node (the automaton ``{}splits up" into $k$ independent copies, one for each direct subtree of the node). Finally the automaton arrives at all leaves of the tree. There should be a set of final states for each element of $\Sigma_0$. The tree is recognized if the fta arrives at each leaf in a state which is final for the label of that leaf. Formally:
\begin{definition}\label{def.3.10}
	A \uem{deterministic top-down finite tree automaton} is a 5-tuple \\
$M=(Q,\Sigma,\delta,q_0,F)$,\\
	\begin{tabular}{llp{0.78\linewidth}}
		where & $Q$      & is a finite set (of \uem{states}),                                                                                                       \\
		      & $\Sigma$ & is a ranked alphabet (of \uem{input symbols}),                                                                                           \\
		      & $\delta$ & is a family $\{\delta_a^k\}_{k\geq1,a\in\Sigma_k}$ of mappings $\delta_a^k:Q\to Q^k$ (the \uem{transition function} for $a\in\Sigma_k$), \\
		      & $q_0$    & is in $Q$ (the \uem{initial state}), and                                                                                                 \\
		      & $F$      & is a family $\{F_a\}_{a\in\Sigma_0}$ of sets $F_a\subseteq Q$ (the set of \uem{final states} for $a\in\Sigma_0$).                        
	\end{tabular}\\
	The mapping $\deltd:T_\Sigma\to\mathcal{P}(Q)$ is defined recursively by
	\begin{enumerate}[label=(\roman*)]
		\item for $a\in\Sigma_0$, $\deltd(a)=F_a$
		\item for $k\geq1$, $a\in\Sigma_k$ and $\li\in T_\Sigma$,\\[1mm]
		      $\deltd(\listt)=\{q\mid \delta_a(q)\in\li[\deltd(t_i)][k][\times]\}$.
	\end{enumerate}
	The \uem{tree language recognized by $M$} is defined to be
	$L(M)=\{t\in T_\Sigma\mid q_0\in\deltd(t)\}$.
\end{definition}
Intuitively, $\deltd(t)$ is the set of states $q$ such that $M$, when starting at the root of $t$ in state $q$, arrives at the leaves of $t$ in final states.
\begin{example}\label{exa.3.11}
	Consider the tree language of Exercise \ref{exe.3.5}\ref{exe.3.5i}. A det. top-down fta recognizing this language is $M=(Q,\Sigma,\delta,q_0,F)$ where $Q=\{A,R,W\}$, $q_0=A$, $F_a=\{A,R\}$, $F_b=\{A\}$ and
	\begin{align*}
		&\delta_p^1(A)=A,     && \delta_p^1(R)=\delta_p^1(W)=W,         \\
		&\delta_p^2(A)=(A,A), && \delta_p^2(R)=\delta_p^2(W)=(W,W),     \\
		&\delta_q(A)=(R,R),   && \delta_q(R)=(R,R),\;\;\delta_q(W)=(W,W). \qedhere
	\end{align*}
	
\end{example}
\begin{exercise}\label{exe.3.12}
	Let $\Sigma$ be a ranked alphabet, and $p\in\Sigma_2$. Let $L$ be the tree language defined recursively by
	\begin{enumerate}[label=(\roman*)]
		\item for all $t_1,t_2\in T_\Sigma$, $ p[t_1t_2] \in L$
		\item for all $a\in\Sigma_k$, if $\li\in L$, then $\listt\in L$ ($k\geq1$).
	\end{enumerate}
	Construct a deterministic top-down fta recognizing $L$. Give a nonrecursive description of~$L$.
\end{exercise}
\begin{exercise}\label{exe.3.13}
	Construct a det.\ top-down fta $M$ such that $\fyield(L(M))=a^+b^+$.
\end{exercise}
We now show that the det.\ top-down fta recognizes less languages than its bottom-up counterpart.
\begin{theorem}\label{the.3.14}
	There are recognizable tree languages which cannot be recognized by a deterministic top-down fta.
\end{theorem}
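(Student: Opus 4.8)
The plan is to exhibit one very small finite tree language that is plainly recognizable but is provably not the language of any deterministic top-down fta, so that the statement follows from a single worked example. The obstruction to top-down determinism is structural: by clause (ii) of Definition~\ref{def.3.10}, a tree $a[t_1\cdots t_k]$ is accepted from state $q_0$ exactly when $\delta_a(q_0)$ lies in the \emph{product} $\deltd(t_1)\times\cdots\times\deltd(t_k)$, so whether $t_i$ is ``correct'' in the $i$-th slot is decided \emph{independently} of what sits in the other slots. Consequently $L(M)$ must be closed under slotwise recombination: if $a[t_1\cdots t_k]$ and $a[t_1'\cdots t_k']$ are both accepted, then so is every tree obtained by choosing, for each $i$, either $t_i$ or $t_i'$ in the $i$-th position. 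A finite language that violates this closure cannot be top-down recognizable, and that is all we need.

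Concretely I would take $\Sigma_0=\{a,b\}$, $\Sigma_2=\{f\}$ (all other $\Sigma_k$ empty) and set
\[L=\{\,f[ab],\ f[ba]\,\}.\]
First one checks $L\in\RECOG$: it is finite, and one can write down a deterministic bottom-up fta directly — e.g.\ states ``$a$'', ``$b$'', ``good'', ``dead'', with $s_a=$``$a$'', $s_b=$``$b$'', $\delta_f(``a",``b")=\delta_f(``b",``a")=$``good'' and $\delta_f$ equal to ``dead'' on every other pair, and $F=\{\text{``good''}\}$ — which recognizes exactly $L$.

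Now suppose, toward a contradiction, that $M=(Q,\Sigma,\delta,q_0,F)$ is a deterministic top-down fta with $L(M)=L$. Since $\deltd(a)=F_a$ and $\deltd(b)=F_b$, Definition~\ref{def.3.10}(ii) gives
\[f[ab]\in L(M)\iff \delta_f(q_0)\in F_a\times F_b,\qquad f[ba]\in L(M)\iff \delta_f(q_0)\in F_b\times F_a.\]
Both left-hand sides hold, so writing $\delta_f(q_0)=(p_1,p_2)$ we obtain $p_1\in F_a\cap F_b$ and $p_2\in F_a\cap F_b$. In particular $\delta_f(q_0)=(p_1,p_2)\in F_a\times F_a=\deltd(a)\times\deltd(a)$, hence $q_0\in\deltd(f[aa])$, i.e.\ $f[aa]\in L(M)$. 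But $f[aa]\notin L$, contradicting $L(M)=L$. (Symmetrically $f[bb]\in L(M)$ would follow as well.) Hence no deterministic top-down fta recognizes $L$, which proves the theorem.

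The only real content here is choosing the right closure property and a minimal witness for its failure; once $L=\{f[ab],f[ba]\}$ is on the table, the argument is a two-line computation with the recursive definition of $\deltd$. The one thing to be careful about, since the theorem is a separation statement, is to also justify cleanly that $L$ \emph{is} recognizable, so that both halves of the claim are actually verified.
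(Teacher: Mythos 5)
Your proposal is correct and is essentially the paper's own proof: the same witness language $\{\,\sigma[ab],\ \sigma[ba]\,\}$ for a binary symbol $\sigma$, and the same recombination argument showing $\sigma[aa]$ (and $\sigma[bb]$) would have to be accepted. The only difference is cosmetic — the paper names the binary symbol $S$ and disposes of recognizability by noting the language is finite, whereas you also write out an explicit deterministic bottom-up fta, which is a fine (if optional) extra.
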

\begin{proof}
	Let $\Sigma_0=\{a,b\}$ and $\Sigma_2=\{S\}$. Consider the (finite!) tree language $L=\{ S[ab] , S[ba] \}$.
	Suppose that the det. top-down fta $M=(Q,\Sigma,\delta,q_0,F)$ recognizes $L$. Let $\delta_S(q_0)=(q_1,q_2)$. Since $ S[ab] \in L(M)$, $q_1\in F_a$ and $q_2\in F_b$. But, since $ S[ba] \in L(M)$, $q_1\in F_b$ and $q_2\in F_a$. Hence both $S[aa]$ and $S[bb]$ are in $L(M)$. Contradiction.
\end{proof}
\begin{exercise}\label{exe.3.15}
	Show that the tree languages of Exercise \ref{exe.3.5}(ii,iii) are not recognizable by a det. top-down fta.
\end{exercise}
It will be clear that the nondeterministic top-down fta is able to recognize all recognizable languages. We give the definition without comment.
\begin{definition}\label{def.3.16}
	A \uem{nondeterministic top-down finite tree automaton} is a structure 
$M=(Q,\Sigma,\delta,S,F)$, where $Q$, $\Sigma$ and $F$ are as in the deterministic case, $S$ is a subset of $Q$ and $\delta$ is a family $\{\delta_a^k\}_{k\geq1,a\in\Sigma_k}$ of mappings $\delta_a^k:Q\to\mathcal{P}(Q^k)$.\\
	The mapping $\deltd:T_\Sigma\to\mathcal{P}(Q)$ is defined recursively as follows
	\begin{enumerate}[label=(\roman*)]
		\item for $a\in\Sigma_0$, $\deltd(a)=F_a$,
		\item for $k\geq1$, $a\in\Sigma_k$ and $\li\in T_\Sigma$,\\[1mm]
		      $\deltd(\listt)=\{q\mid \exists(\li[q_i])\in\delta_a(q):q_i\in\deltd(t_i)\text{ for all }1\leq i\leq k\}$.
	\end{enumerate}
	The \uem{tree language recognized by $M$} is $L(M)=\{t\in T_\Sigma\mid \deltd(t)\cap S\neq\emptyset\}$.
\end{definition}
We now show that, nondeterministically, there is no difference between bottom-up or top-down recognition.
\begin{theorem} \label{the.3.17}
	A tree language is recognizable by a nondet.\ bottom-up fta iff it is recognizable by a nondet.\ top-down fta.
\end{theorem}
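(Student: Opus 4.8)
The plan is to exhibit a single, self-inverse ``arrow-reversal'' construction that converts a nondeterministic bottom-up fta into a nondeterministic top-down fta recognizing the same language; because the construction is symmetric, one argument will settle both implications. First I would set up the construction. Given a nondeterministic bottom-up fta $M=(Q,\Sigma,\delta,S,F)$ (so $S=\{S_a\}_{a\in\Sigma_0}$ is a family of state sets and $F\subseteq Q$), define the nondeterministic top-down fta $M'=(Q,\Sigma,\delta',F,S')$ over the same state set, using $F$ (the old final states) as the new \emph{set} of initial states, using the family $S'=\{S'_a\}_{a\in\Sigma_0}$ with $S'_a=S_a$ as the new family of final states, and, for each $a\in\Sigma_k$ with $k\geq1$, putting $\delta'_a\colon Q\to\mathcal{P}(Q^k)$ equal to the ``transpose'' of $\delta_a$, i.e. $(q_1,\dots,q_k)\in\delta'_a(q)$ iff $q\in\delta_a(q_1,\dots,q_k)$. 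Read in the other direction, the very same recipe turns a nondeterministic top-down fta into a nondeterministic bottom-up one, and applying it twice returns the original automaton; hence it suffices to prove $L(M')=L(M)$ for this construction.

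The key step is the claim that $\deltd(t)=\delbu(t)$ for every $t\in T_\Sigma$, where $\delbu$ is the bottom-up map of $M$ (Definition \ref{def.3.6}) and $\deltd$ is the top-down map of $M'$ (Definition \ref{def.3.16}). I would prove this by induction on $t$ (Principle \ref{pri.2.12}). For $a\in\Sigma_0$, both sides equal $S_a$, since $S'_a=S_a$. For $t=\listt$, the induction hypothesis gives $\deltd(t_i)=\delbu(t_i)$ for each $i$; then I unwind the definition of $\deltd(\listt)$, substitute $\delbu(t_i)$ for each $\deltd(t_i)$, and replace the clause ``$(q_1,\dots,q_k)\in\delta'_a(q)$'' by the equivalent clause ``$q\in\delta_a(q_1,\dots,q_k)$''; what remains is exactly $\bigcup\{\delta_a(q_1,\dots,q_k)\mid q_i\in\delbu(t_i)\text{ for }1\leq i\leq k\}$, which is $\delbu(\listt)$ by Definition \ref{def.3.6}. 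The possibility that some value $\delta_a(q_1,\dots,q_k)$, or some set $\delta'_a(q)$, is empty causes no difficulty: it is absorbed respectively into the union and into the existential quantifier.

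Finally I would read off the recognized languages. By definition $L(M')=\{t\in T_\Sigma\mid \deltd(t)\cap F\neq\emptyset\}$, since $F$ is precisely the initial-state set of $M'$; and by the claim this equals $\{t\in T_\Sigma\mid \delbu(t)\cap F\neq\emptyset\}=L(M)$. The converse implication is the same computation run backwards, applied to the transpose construction. I do not expect a real obstacle here; the one thing to watch is bookkeeping, since $\delbu$ and $\deltd$ both take values in $\mathcal{P}(Q)$ but carry quite different intuitive meanings, so they must be kept apart until the induction is complete, and one must track carefully that ``initial'' and ``final'' swap roles under the construction — a plain subset on one side, a $\Sigma_0$-indexed family on the other.
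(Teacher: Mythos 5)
Your proposal is correct and is essentially the paper's own proof: you construct exactly the ``associated'' top-down automaton (same states, transposed transition relation, final states of the bottom-up fta becoming the initial states of the top-down fta and the $\Sigma_0$-indexed initial family becoming the final family), and prove $\deltd=\delbu$ by induction on trees. The paper merely asserts this induction as ``easily proved''; your unwinding of the two definitions in the inductive step is the intended argument, and your observation that the construction is an involution is the paper's ``and vice versa''.
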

\begin{proof}
	Let us say that a nondet.\ bottom-up fta $M=(Q,\Sigma,\delta,S,F)$ and a nondet.\ top-down fta $N=(P,\Delta,\mu,R,G)$ are ``{}associated" if the following requirements are satisfied:
	\begin{enumerate}[label=(\roman*)]
		\item $Q=P$, $\Sigma=\Delta$, $F=R$ and, for all $a\in\Sigma_0$, $S_a=G_a$;
		\item for all $k\geq1$, $a\in\Sigma_k$ and $\li[q_i],q\in Q$,\\
		      $q\in\delta_a(\li[q_i])$ iff $(\li[q_i])\in\mu_a(q)$.
	\end{enumerate}
	In that case, one can easily prove by induction that $\delbu=\deltd[\mu]$, and so $L(M)=L(N)$. Since obviously for each nondet.\ bottom-up fta there is an associated nondet.\ top-down fta, and vice versa, the theorem holds.
\end{proof}
Thus the classes of tree languages recognized by the nondet.\ bottom-up, det.\ bottom-up and nondet.\ top-down fta are all equal (and are called \RECOG), whereas the class of tree languages recognized by the det.\ top-down fta is a proper subclass of \RECOG.

The next victim of generalization is the regular grammar (right-linear, type-3 grammar). In this case it seems appropriate to take the top-down point of view only. Consider an ordinary regular grammar $G=(N,\Sigma,R,S)$. All rules have either the form $A\to wB$ or the form $A\to w$, where $A,B\in N$ and $w\in\Sigma^*$. Monadically, the string $wB$ may be considered as the result of treeconcatenating the tree $we$ with $B$ at $e$, where $B$ is of rank~0. Thus we can take the generalization of strings of the form $wB$ or $w$ to be trees in $T_\Delta(N)$, where $\Delta$ is a ranked alphabet (for the definition of $T_\Delta(N)$, see Definition \ref{def.2.33}). Thus, let us consider a ``tree grammar" with rules of the form $A\to t$, where $A\in N$ and $t\in T_\Delta(N)$. Obviously, the application of a rule $A\to t$ to a tree $s\in T_\Delta(N)$ should intuitively consist of replacing one occurrence of $A$ in $s$ by the tree $t$. Starting with the initial nonterminal, nonterminals at the frontier of the tree are then repeatedly replaced by right hand sides of rules, until the tree does not contain nonterminals any more. Now, since trees are defined as strings, it turns out that this process is precisely the way a context-free grammar works. Thus we arrive at the following formal definition.
\begin{definition}\label{def.3.18}
	A \uem{regular tree grammar} is a tuple $G=(N,\Sigma,R,S)$ where $N$ is a finite set (of \uem{nonterminals}), $\Sigma$ is a ranked alphabet (of \uem{terminals}), such that $\Sigma\cap N=\emptyset$, $S\in N$ is the \uem{initial nonterminal}, and $R$ is a finite set of \uem{rules} of the form $A\to t$ with $A\in N$ and $t\in T_\Sigma(N)$. The \uem{tree language generated by $G$}, denoted by $L(G)$, is defined to be $L(H)$, where $H$ is the context-free grammar $(N,\Sigma\cup\{[\,,]\},R,S)$.
	We shall use $\xRightarrow[G]{}$ and $\xRightarrow[G]{*}$ (or $\Rightarrow$ and $\xRightarrow{*}$ when $G$ is understood) to denote the restrictions of $\xRightarrow[H]{}$ and $\xRightarrow[H]{*}$ to $T_\Sigma(N)$.
\end{definition}
\begin{example}\label{exa.3.19}
	Let $\Sigma_0=\{a,b,c,d,e\}$, $\Sigma_2=\{p\}$ and $\Sigma_3=\{p,q\}$. Consider the regular tree grammar $G=(N,\Sigma,R,S)$, where $N=\{S,T\}$ and $R$ consists of the rules $S\to p[aTa] $, $T\to q[cp[dT]b] $ and $T\to  e $. Then $G$ generates the tree $p[aq[cp[de]b]a]$ as follows:
	\[S\Rightarrow p[aTa] \Rightarrow p[aq[cp[dT]b]a] \Rightarrow p[aq[cp[de]b]a] \]
	or, pictorially,
	\begin{figure}[H]
	\centering
	\begin{tikzpicture}[tree]
		\node {S};
	\end{tikzpicture}
	$\Rightarrow$
	\begin{tikzpicture}[tree]
		\node {p} 
			child {node {a}}
			child {node {T}}
			child {node {a}};
	\end{tikzpicture}
	$\Rightarrow$
	\begin{tikzpicture}[tree]
		\node {p}
			child {node {a}}
			child {node {q} 
				child {node {c}}
				child {node {p}
					child {node {d}}
					child {node {T}}}
				child {node {b}}}
			child {node {a}};
	\end{tikzpicture}
	$\Rightarrow$
	\begin{tikzpicture}[tree]
		\node {p} 
			child {node {a}}
			child {node {q}
				child {node {c}}
				child {node {p}
					child {node {d}}
					child {node {e}}}
				child {node {b}}}
			child {node {a}};
	\end{tikzpicture}.
	\end{figure}
	The tree language generated by $G$ is $\{p[a(q[cp[d)^ne(]b])^na]\mid n\geq 0\}$.
\end{example}
\begin{exercise}\label{exe.3.20}
	Write regular tree grammars generating the tree languages of Exercise~\ref{exe.3.5}.
\end{exercise}
As in the case of strings, each regular tree grammar is equivalent to one that has the property that at each step in the derivation exactly one terminal symbol is produced.
\begin{definition}\label{def.3.21}
	A regular tree grammar $G=(N,\Sigma,R,S)$ is \uem{in normal form}, if each of its rules is either of the form $A\to\listt[a][B_i]$ or of the form $A\to b$, where $k\geq1$, $a\in\Sigma_k$, $A,\li[B_i]\in N$ and $b\in\Sigma_0$.
\end{definition}
\begin{theorem}\label{the.3.22}
	Each regular tree grammar has an equivalent regular tree grammar in normal form.
\end{theorem}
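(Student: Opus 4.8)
The plan is to carry over to trees the two familiar normal-form steps for right-linear string grammars: first shorten the right-hand sides of the rules by introducing auxiliary nonterminals, then remove the resulting chain rules $A\to B$. Let $G=(N,\Sigma,R,S)$ be a regular tree grammar. Call a rule \emph{short} if it has one of the forms $A\to a[B_1\cdots B_k]$ (with $a\in\Sigma_k$, $k\ge1$, $A,B_1,\dots,B_k\in N$), or $A\to b$ (with $b\in\Sigma_0$), or $A\to B$ (a \emph{chain rule}, $B\in N$); otherwise call it \emph{long}.

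\emph{Shortening.} Process the long rules one at a time. If $A\to t$ is long then $t\notin N$ and $t\notin\Sigma_0$, so $t=a[t_1\cdots t_k]$ with $a\in\Sigma_k$, $k\ge1$, and $t_i\in T_\Sigma(N)$. For each $i$ with $t_i\notin N$ pick a fresh nonterminal $B_i$; then delete $A\to t$, add the short rule $A\to a[B_1'\cdots B_k']$ (where $B_i'=t_i$ if $t_i\in N$, and $B_i'=B_i$ otherwise), and add $B_i\to t_i$ for every $i$ with $t_i\notin N$. Each fresh $B_i$ is non-recursive, distinct from $S$, and carries a single rule, whose right-hand side $t_i$ is a proper subtree of $t$ and hence of strictly smaller height; consequently the procedure terminates, producing a grammar $G_1=(N_1,\Sigma,R_1,S)$ all of whose rules are short. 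Moreover the grammar before each single step is obtained from the grammar after that step by ``inlining'' the freshly introduced nonterminals (substituting their unique right-hand sides throughout and deleting them), and such inlining preserves the generated context-free language; therefore $L(G_1)=L(G)$. (Since trees are strings and $\Rightarrow$ is context-free derivation by Definition~\ref{def.3.18}, this is literally the corresponding string argument.)

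\emph{Removing chain rules.} In $G_1$ the only rules not of normal form are chain rules. For $A\in N_1$ put $U_A=\{B\in N_1\mid A\xRightarrow[G_1]{*}B\text{ by chain rules only}\}$; this is a finite, effectively computable set containing $A$. Let $G_2=(N_1,\Sigma,R_2,S)$ with $R_2=\{A\to t\mid t\notin N_1\text{ and }(B\to t)\in R_1\text{ for some }B\in U_A\}$. Every rule of $G_2$ then has the form $A\to a[B_1\cdots B_k]$ or $A\to b$, so $G_2$ is in normal form; and the standard chain-elimination argument gives $L(G_2)=L(G_1)$: a stretch $A\xRightarrow[G_1]{*}B\Rightarrow_{G_1}t$ with $t\notin N_1$ collapses to a single $G_2$-step $A\Rightarrow_{G_2}t$, and conversely each $G_2$-step unfolds into a block of $G_1$ chain steps followed by one non-chain $G_1$-step, so induction on the length of derivations (carried out in the context-free grammar $H$ of Definition~\ref{def.3.18}) yields $L(G_2)=L(G_1)=L(G)$. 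Thus $G_2$ is the desired equivalent grammar in normal form.

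\emph{Main obstacle.} There is no genuine difficulty here: the work reduces to (i) seeing that the shortening phase terminates, for which ``a proper subtree has strictly smaller height'' is the right measure, and (ii) verifying the two equivalences, where the only delicate point is that the auxiliary nonterminals each carry exactly one (non-recursive) rule, which is precisely what keeps them harmless. Both parts amount to rerunning the standard string-grammar arguments, this time reading the derivation relation as context-free derivation on the string encodings of trees.
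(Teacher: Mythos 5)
Your proof is correct and takes essentially the same route as the paper: the two standard phases of eliminating chain rules $A\to B$ and of shortening long right-hand sides via fresh auxiliary nonterminals, merely performed in the opposite order (the paper removes chain rules first and then shortens, whereas you shorten first and then remove chain rules). Both orders work, since neither phase reintroduces the violations handled by the other, and your termination and equivalence arguments are the standard ones the paper leaves implicit.
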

\begin{proof}
	Consider an arbitrary regular tree grammar $G=(N,\Sigma,R,S)$. Let $G_1=(N,\Sigma,R_1,S)$ be the regular tree grammar such that $(A\to t)\in R_1$ if and only if $t\notin N$ and there is a $B$ in $N$ such that $A\xRightarrow[G]{*}B$ and $(B\to t)\in R_1$. Then $L(G_1)=L(G)$, and $R_1$ does not contain rules of the form $A\to B$ with $A,B\in N$. (This is the well-known procedure of removing rules $A\to B$ from a context-free grammar). Suppose that $G_1$ is not yet in normal form. Than there is a rule of the form $A\to a[t_1\cdots t_i\cdots t_k]$ such that $t_i\notin N$. Construct a new regular tree grammar $G_2$ by adding a new nonterminal $B$ to $N$ and replacing the rule $A\to a[t_1\cdots t_i\cdots t_k]$ by the two rules $A\to a[t_1\cdots B\cdots t_k]$ and $B\to t_i$ in $R_1$. It should be clear that $L(G_2)=L(G_1)$, and that, by repeating the latter process a finite number of times, one ends up with an equivalent grammar in normal form.
\end{proof}
\begin{exercise}\label{exe.3.23}
	Put the regular tree grammar of Example \ref{exa.3.19} into normal form.
\end{exercise}
\begin{exercise}\label{exe.3.24}
	What does Theorem \ref{the.3.22} actually say in the case of strings (the monadic case)?
\end{exercise}
In the next theorem we show that the regular tree grammars generate exactly the class of recognizable tree languages.
\begin{theorem}\label{the.3.25}
	A tree language can be generated by a regular tree grammar iff it is an element of \RECOG.
\end{theorem}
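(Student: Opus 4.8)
The plan is to prove both implications by translating between the two formalisms, using the equivalences already established in Theorems \ref{the.3.17} and \ref{the.3.22}. For the implication ``if $L$ is recognizable then $L$ is generated by a regular tree grammar'', I would start from a \emph{nondeterministic bottom-up} fta (legitimate by Theorem \ref{the.3.17} and the remark after it), because its transitions $q\in\delta_a^k(q_1,\dots,q_k)$ can be read almost verbatim as rewrite rules $q\to a[q_1\cdots q_k]$. For the converse implication I would first apply Theorem \ref{the.3.22} to bring the grammar into normal form, whose rules $A\to a[B_1\cdots B_k]$ and $A\to b$ are precisely the ``reversed'' transitions of a nondeterministic \emph{top-down} fta; since by Theorem \ref{the.3.17} a top-down fta recognizes only recognizable languages, this closes the argument.

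\textbf{From automata to grammars.} Let $M=(Q,\Sigma,\delta,S,F)$ be a nondeterministic bottom-up fta (renaming states if necessary so that $Q\cap\Sigma=\emptyset$). Define the regular tree grammar $G=(Q\cup\{Z\},\Sigma,R,Z)$, where $Z$ is a fresh symbol, and $R$ consists of: the rule $Z\to q$ for each $q\in F$; the rule $q\to b$ for each $b\in\Sigma_0$ and each $q\in S_b$; and the rule $q\to a[q_1\cdots q_k]$ for each $k\ge1$, $a\in\Sigma_k$, $q_1,\dots,q_k\in Q$ and $q\in\delta_a^k(q_1,\dots,q_k)$. The crux is the claim that, for every $t\in T_\Sigma$ and $q\in Q$, one has $q\xRightarrow[G]{*}t$ if and only if $q\in\delbu(t)$. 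This is proved by induction on $t$: the base case is immediate from the rules $q\to b$, and in the inductive step a derivation from $q$ to a terminal tree with root $a$ must begin with a rule $q\to a[q_1\cdots q_k]$, after which the remaining derivation decomposes into independent subderivations on the direct subtrees; comparing this with the recursion of Definition \ref{def.3.6} gives the equivalence. Since $Z$ never reappears on a right-hand side, $L(G)=\{t\mid Z\xRightarrow[G]{*}t\}=\{t\mid\exists q\in F:\ q\in\delbu(t)\}=\{t\mid\delbu(t)\cap F\neq\emptyset\}=L(M)$.

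\textbf{From grammars to automata.} Let $G=(N,\Sigma,R,S)$ be a regular tree grammar, which by Theorem \ref{the.3.22} we may assume to be in normal form. Define the nondeterministic top-down fta $M'=(N,\Sigma,\mu,\{S\},F)$ by setting $(B_1,\dots,B_k)\in\mu_a^k(A)$ iff $(A\to a[B_1\cdots B_k])\in R$, and $F_b=\{A\in N\mid(A\to b)\in R\}$ for $b\in\Sigma_0$. The analogous claim is that, for all $t\in T_\Sigma$ and $A\in N$, $A\xRightarrow[G]{*}t$ iff $A\in\deltd[\mu](t)$, proved by the same induction on $t$ as above (now reading a derivation step against Definition \ref{def.3.16}). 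Then $L(M')=\{t\mid\deltd[\mu](t)\cap\{S\}\neq\emptyset\}=\{t\mid S\xRightarrow[G]{*}t\}=L(G)$, and $L(M')$ is recognizable by Theorem \ref{the.3.17}. Together with the previous paragraph this proves the theorem.

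\textbf{Main obstacle.} The only non-routine ingredient is the inductive lemma matching grammar derivations $A\xRightarrow[G]{*}t$ with the recursive automaton conditions. In a context-free derivation the nonterminals $B_1,\dots,B_k$ produced by the first applied rule are rewritten in some interleaved order, so one must argue that a terminal tree $t=a[t_1\cdots t_k]$ is derivable from $A$ exactly when the first rule has right-hand side $a[B_1\cdots B_k]$ for some $B_1,\dots,B_k$ with $B_i\xRightarrow[G]{*}t_i$ for every $i$. This is the familiar ``derivations of disjoint factors are independent'' property of context-free grammars: each $B_i$ generates a disjoint factor of every sentential form, and by associativity of tree concatenation (Exercise \ref{exe.2.32}) the subderivations may be carried out in any order and recombined. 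Working in normal form keeps the bookkeeping minimal, but it is a matter of convenience rather than necessity.
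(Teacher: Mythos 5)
The paper gives no proof of Theorem \ref{the.3.25} (it is left as an exercise), so there is nothing to diverge from; your argument is correct and is the intended standard one. Both constructions check out: reading nondeterministic bottom-up transitions $q\in\delta_a^k(q_1,\dots,q_k)$ as rules $q\to a[q_1\cdots q_k]$ (with chain rules $Z\to q$ for $q\in F$), and reading normal-form rules as top-down transitions, with the key inductive lemma resting on unique readability of trees and the independence of derivations of disjoint factors, exactly as you note.
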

\begin{proof}
	Exercise.
\end{proof}
Note therefore that each recognizable tree language is a special kind of context-free language.
\begin{exercise}\label{exe.3.26}
	Show that all finite tree languages are in \RECOG.
\end{exercise}
\begin{exercise}\label{exe.3.27}
	Show that each recognizable tree language can be generated by a ``backwards deterministic" regular tree grammar. A regular tree grammar is called ``backwards deterministic" if (1) it may have more than one initial nonterminal, (2) it is in normal form, and (3) rules with the same right hand side are equal.
\end{exercise}
It is now easy to show the connection between recognizable tree languages and context-free languages.
Let \CFL{} denote the class of context-free languages. 
\begin{theorem}\label{the.3.28}
	$\fyield(\RECOG)=\CFL$ (in words, the yield of each recognizable tree language is context-free, and each context-free language is the yield of some recognizable tree language).
\end{theorem}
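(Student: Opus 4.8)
The plan is to prove the two inclusions $\fyield(\RECOG)\subseteq\CFL$ and $\CFL\subseteq\fyield(\RECOG)$ separately, in each case going through the grammar characterization of \RECOG{} from Theorem~\ref{the.3.25}.

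For $\fyield(\RECOG)\subseteq\CFL$: let $L\in\RECOG$. By Theorems~\ref{the.3.25} and~\ref{the.3.22} we may write $L=L(G)$ for a regular tree grammar $G=(N,\Sigma,R,S)$ in normal form. From $G$ I would build an ordinary context-free grammar $G'=(N,\Sigma_0\setminus\{e\},R',S)$ by putting into $R'$: the rule $A\to B_1\cdots B_k$ for each rule $A\to a[B_1\cdots B_k]$ of $R$; the rule $A\to b$ for each rule $A\to b$ of $R$ with $b\in\Sigma_0$, $b\neq e$; and the rule $A\to\lambda$ for each rule $A\to e$ of $R$. One then checks by induction on derivation length, in both directions, that for all $A\in N$ and $w\in(\Sigma_0\setminus\{e\})^*$ one has $A\xRightarrow[G']{*}w$ if and only if there is $t\in T_\Sigma$ with $A\xRightarrow[G]{*}t$ and $\fyield(t)=w$; the induction step uses that when a tree in $T_\Sigma(N)$ is obtained from another by rewriting its nonterminals, the yield of the result is the concatenation of the yields of the pieces substituted in (Exercise~\ref{exe.2.30}). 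Taking $A=S$ gives $L(G')=\fyield(L(G))=\fyield(L)$, so $\fyield(L)\in\CFL$. Conceptually this is just ``$\fyield$ is a homomorphism and \CFL{} is closed under homomorphism'' (cf.\ Remark~\ref{rem.2.36}); the passage through normal form is merely a device to sidestep the annoyance of a symbol having several ranks.

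For $\CFL\subseteq\fyield(\RECOG)$: let $L\in\CFL$, say $L=L(G)$ with $G=(N,\Delta,P,S)$ a context-free grammar, the symbols of $\Delta$ having rank $0$. I would turn the derivation trees of $G$ into a regular tree language: put $\Sigma_0=\Delta\cup\{e\}$; for every rule $\pi\colon A\to X_1\cdots X_m$ of $P$ with $m\geq1$ introduce a fresh symbol $f_\pi$ of rank $m$ and the tree-grammar rule $A\to f_\pi[X_1\cdots X_m]$ (reading each $X_i\in\Delta$ as a rank-$0$ leaf and each $X_i\in N$ as a nonterminal of the new grammar); and for every rule $A\to\lambda$ of $P$ add the rule $A\to e$. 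Call the resulting regular tree grammar $G'=(N,\Sigma,R',S)$. As before, an induction on derivation length shows that for all $A\in N$ and $w\in\Delta^*$, $A\xRightarrow[G]{*}w$ iff there is $t\in T_\Sigma$ with $A\xRightarrow[G']{*}t$ and $\fyield(t)=w$; with $A=S$ this gives $\fyield(L(G'))=L(G)=L$, and $L(G')\in\RECOG$ by Theorem~\ref{the.3.25}.

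The only real work is the two inductive grammar simulations, and these are routine; the points that need care — and where I would slow down — are the degenerate cases. In direction $(\subseteq)$, unit rules $A\to B$ are already absent thanks to normal form, but one must keep track that the rules $A\to e$ translate to $\lambda$-productions so that the yields match. In direction $(\supseteq)$, one must make sure that $\lambda$-rules of $G$ and single-symbol right-hand sides $A\to\delta$ are handled so that the trees produced have exactly the intended yield (here $\fyield(e)=\lambda$ and $\fyield(f_{A\to\delta}[\delta])=\delta$). None of this is deep, but it is where a miscounted induction, or a slip in how yield interacts with substitution, could hide.
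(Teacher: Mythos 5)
Your proof is correct and follows essentially the same route as the paper: in one direction replace each tree-grammar rule's right-hand side by its yield to get a context-free grammar, and in the other lift each string rule $A\to w$ to a tree rule $A\to f[w]$ (resp.\ $A\to e$ for $\lambda$-rules) and check by induction that yields are preserved. The only differences are cosmetic --- the paper skips the normal-form preprocessing by applying $\fyield$ directly to arbitrary right-hand sides in $T_\Sigma(N)$, and uses a single new symbol $*$ of the appropriate ranks where you use one fresh symbol per rule.
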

\begin{proof}
	Let $G=(N,\Sigma,R,S)$ be a regular tree grammar. Consider the context-free grammar $\mbar{G}=(N,\Sigma_0,\mbar{R},S)$, where $\mbar{R}=\{A\to\fyield(t)\mid A\to t\in R\}$. Then $L(\mbar{G})=\fyield(L(G))$.
	
	Now let $G=(N,\Sigma,R,S)$ be a context-free grammar. Let $*$ be a new symbol, and let $\Delta=\Sigma\cup\{e,*\}$ be the ranked alphabet such that $\Delta_0=\Sigma\cup\{e\}$, and, for $k\geq1$, $\Delta_k=\{*\}$ if and only if there is a rule in $R$ with a right hand side of length $k$. Consider the regular tree grammar $\mbar{G}=(N,\Delta,\mbar{R},S)$ such that
	\begin{enumerate}[label=(\roman*)]
		\item if $A\to w$ is in $R$, $w\neq\lambda$, then $A\to  *[w] $ is in $\mbar{R}$,
		\item if $A\to\lambda$ is in $R$, then $A\to e$ is in $\mbar{R}$.
	\end{enumerate}
	Then $\fyield(L(\mbar{G}))=L(G)$.
\end{proof}
In the next section we shall give the connection between regular tree languages and derivation trees of context-free languages.
\begin{exercise}\label{exe.3.29}
	A context-free grammar is ``invertible" if rules with the same right hand side are equal. Show that each context-free language can be generated by an invertible context-free grammar.
\end{exercise}
For regular string languages a useful stronger version of Theorem \ref{the.3.28} can be proved.
\begin{theorem}\label{the.3.30}
	Let $\Sigma$ be a ranked alphabet. If $R$ is a regular string language over $\Sigma_0$, then the tree language $\{t\in T_\Sigma\mid \fyield(t)\in R\}$ is recognizable.
\end{theorem}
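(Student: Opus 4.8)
The plan is to construct a deterministic bottom-up fta $M$ that, as it processes a subtree $t'$ bottom-up, records the \emph{state transformation that the string $\fyield(t')$ induces on a fixed deterministic finite automaton recognizing $R$}. A finite automaton admits only finitely many state transformations (maps on its state set), so this data fits into the finite state set of a tree automaton; and at a node of rank $k$ the transformation belonging to the yield of the whole subtree is the composition, taken in left-to-right order, of the transformations contributed by the $k$ direct subtrees. This composition is exactly what a bottom-up transition function can compute.

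Concretely, fix a deterministic finite automaton $A=(P,\Sigma_0,\tau,p_0,F_0)$ with $L(A)=R$, where $\tau=\{\tau_a\}_{a\in\Sigma_0}$ and $\tau_a\colon P\to P$. For $w=a_1\cdots a_n\in\Sigma_0^*$ put $f_w=\tau_{a_n}\comp\cdots\comp\tau_{a_1}\colon P\to P$, with $f_\lambda=\mathrm{id}_P$. The only facts needed are $f_{vw}=f_w\comp f_v$ for all $v,w\in\Sigma_0^*$, which is immediate from the formula, and $L(A)=\{\,w\in\Sigma_0^*\mid f_w(p_0)\in F_0\,\}$, which follows by a routine induction on $w$ from the definition of acceptance. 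Now define the deterministic bottom-up fta $M=(P^P,\Sigma,\delta,s,F)$, whose states are all maps $P\to P$, by
\[
 s_a = f_{\fyield(a)}\quad(a\in\Sigma_0), \qquad
 \delta_a^k(g_1,\dots,g_k) = g_k\comp\cdots\comp g_1\quad(k\ge1,\ a\in\Sigma_k),
\]
and $F=\{\,g\in P^P\mid g(p_0)\in F_0\,\}$. (Thus $s_e=\mathrm{id}_P$ when $e\in\Sigma_0$, and $s_a=\tau_a$ for $a\in\Sigma_0$, $a\ne e$.)

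The heart of the argument is the claim $\delbu(t)=f_{\fyield(t)}$ for every $t\in T_\Sigma$, proved by induction on $t$ (Principle~\ref{pri.2.12}). The base case $t=a\in\Sigma_0$ is the definition of $s_a$. For $t=\listt$ (with $a\in\Sigma_k$ and $\li\in T_\Sigma$), Definition~\ref{def.3.1}, the induction hypothesis, repeated application of $f_{vw}=f_w\comp f_v$, and Definition~\ref{def.2.15}\ref{def.2.15ii} give
\[
 \delbu(t) = \delta_a^k(\li[\delbu(t_i)])
 = f_{\fyield(t_k)}\comp\cdots\comp f_{\fyield(t_1)}
 = f_{\fyield(t_1)\cdots\fyield(t_k)} = f_{\fyield(t)}.
\]
Granting the claim, $L(M)=\{\,t\mid\delbu(t)\in F\,\}=\{\,t\mid f_{\fyield(t)}(p_0)\in F_0\,\}=\{\,t\mid\fyield(t)\in R\,\}$, so $\{t\in T_\Sigma\mid\fyield(t)\in R\}$ lies in \RECOG. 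I do not anticipate a genuine obstacle; the two points requiring a little care are getting the order of composition in $\delta_a^k$ right (leftmost subtree's transformation applied first, matching left-to-right reading of the yield) and handling the special symbol $e$, and both are dealt with uniformly by defining $s_a$ through $f_{\fyield(a)}$ and by the convention $f_{vw}=f_w\comp f_v$.
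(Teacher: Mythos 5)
Your proof is correct, and it differs from the paper's in an instructive way. The paper builds a \emph{nondeterministic} bottom-up fta whose states are pairs $(q_1,q_2)$ of states of the string automaton, with the intended meaning that the string automaton can go from $q_1$ to $q_2$ while reading $\fyield(t)$; at each internal node it checks that consecutive pairs match ($q_{2i}=q_{2i+1}$), i.e.\ it nondeterministically guesses one ``sample'' of the state transformation induced by each yield and verifies consistency. You instead record the \emph{entire} transformation $f_{\fyield(t)}\in P^P$ as a single state and compose these functions left-to-right at each node. The underlying idea --- a bottom-up simulation of the string automaton on the yield, keyed to the before/after behaviour on each subtree --- is the same, and your automaton is essentially what the subset construction of Theorem~\ref{the.3.8} produces from the paper's (the reachable subsets of $Q\times Q$ are exactly graphs of functions, since the string automaton is deterministic). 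What your version buys is a directly deterministic fta, at the cost of $|P|^{|P|}$ states instead of $|P|^2$; it also handles the special symbol $e$ explicitly via $s_e=f_\lambda=\mathrm{id}_P$, a point the paper's construction passes over in silence. Both the monoid identity $f_{vw}=f_w\comp f_v$ and the induction $\delbu(t)=f_{\fyield(t)}$ are stated and used correctly, so there is no gap.
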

\begin{proof}
	Let $M=(Q,\Sigma,\delta,q_0,F)$ be a deterministic finite automaton recognizing $R$. We construct a nondeterministic bottom-up fta $N=(Q\times Q,\Sigma,\mu,S,G)$, which, for each tree $t$, checks whether a successful computation of $M$ on $\fyield(t)$ is possible. The states of $N$ are pairs of states of $M$. Intuitively we want that $(q_1,q_2)\in\delbu[\mu](t)$ if and only if $M$ arrives in state $q_2$ after processing $\fyield(t)$, starting from state $q_1$. Thus we define
	\begin{enumerate}[label=(\roman*)]
		\item for all $a\in\Sigma_0$, $S_a=\{(q_1,q_2)\mid \delta_a(q_1)=q_2\}$,
		\item for all $k\geq1$, $a\in\Sigma_k$ and states $\li*[q_i][2k]\in Q$,\\
		      $\mu_a((q_1,q_2),(q_3,q_4),\dots,(q_{2k-1},q_{2k}))=
		      \begin{cases}
		      	\{(q_1,q_{2k})\} & \text{if } q_{2i}=q_{2i+1}\text{ for}\\
		      	&\hspace{10pt}\text{all } 1\leq i\leq k-1 \\
		      	\emptyset        & \text{otherwise .}                                           
		      \end{cases}$
	\end{enumerate}
	Then $L(N)=\{t\in T_\Sigma\mid \fyield(t)\in R\}$.
\end{proof}
\begin{exercise}\label{exe.3.31}
	Show that, if $\Sigma_2\neq\emptyset$, then Theorem \ref{the.3.30} holds conversely: if $L$ is a string language such that $\{t\in T_\Sigma\mid \fyield(t)\in L\}$ is recognizable, then $L$ is regular. What can you say in case $\Sigma_2=\emptyset$?
\end{exercise}

\subsection{Closure properties of recognizable tree languages}
We first consider set-theoretic operations.
\begin{theorem}\label{the.3.32}
	\RECOG{} is closed under union, intersection and complementation.
\end{theorem}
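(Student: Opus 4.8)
The plan is to reduce all three closure properties to standard automaton constructions on \emph{deterministic} bottom-up fta's, obtaining such an automaton for any recognizable language via Theorem~\ref{the.3.8} and Definition~\ref{def.3.2}. Throughout I assume $L$, $L_1$, $L_2$ are tree languages over one fixed ranked alphabet $\Sigma$.

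\emph{Complementation.} Let $M=(Q,\Sigma,\delta,s,F)$ be a deterministic bottom-up fta with $L(M)=L$, and set $M'=(Q,\Sigma,\delta,s,Q\setminus F)$. Since $M$ and $M'$ share the same states, initial states and transition function, the mapping $\delbu\colon T_\Sigma\to Q$ of Definition~\ref{def.3.1} is literally the same for both. Hence for every $t\in T_\Sigma$ we have $t\in L(M')\iff\delbu(t)\notin F\iff t\notin L(M)$, so $L(M')=T_\Sigma\setminus L$ and $T_\Sigma\setminus L\in\RECOG$.

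\emph{Intersection.} Take deterministic bottom-up fta's $M_i=(Q_i,\Sigma,\delta_i,s_i,F_i)$ with $L(M_i)=L_i$ ($i=1,2$) and form the product automaton $M=(Q_1\times Q_2,\Sigma,\delta,s,F)$ with $s_a=((s_1)_a,(s_2)_a)$ for $a\in\Sigma_0$, with $\delta_a^k\bigl((p_1,q_1),\dots,(p_k,q_k)\bigr)=\bigl((\delta_1)_a^k(p_1,\dots,p_k),\,(\delta_2)_a^k(q_1,\dots,q_k)\bigr)$ for $k\geq1$ and $a\in\Sigma_k$, and with $F=F_1\times F_2$. A straightforward induction on $t$ (Principle~\ref{pri.2.12}), using the recursion clauses of Definition~\ref{def.3.1}, gives $\delbu(t)=\bigl(\delbu[\delta_1](t),\delbu[\delta_2](t)\bigr)$ for all $t\in T_\Sigma$. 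Consequently $\delbu(t)\in F$ iff $\delbu[\delta_1](t)\in F_1$ and $\delbu[\delta_2](t)\in F_2$, i.e.\ $L(M)=L_1\cap L_2$.

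\emph{Union.} The quickest route is De Morgan: $L_1\cup L_2=T_\Sigma\setminus\bigl((T_\Sigma\setminus L_1)\cap(T_\Sigma\setminus L_2)\bigr)$, which lies in \RECOG{} by the two cases already done; alternatively one may reuse the product automaton with $F=(F_1\times Q_2)\cup(Q_1\times F_2)$, or take the state-disjoint componentwise union of nondeterministic bottom-up fta's for $L_1$ and $L_2$ and invoke Theorem~\ref{the.3.8}. I expect no genuine obstacle in any of this: the only points deserving care are the observation that the recursively defined $\delbu$ is unchanged when only the final-state set is altered (this is what justifies the complementation step), and the fact that the product construction must supply a separate component map $\delta_a^k$ for each rank $k$ a symbol may carry --- both immediate from Definition~\ref{def.3.1}.
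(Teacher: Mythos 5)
Your proposal is correct. It differs from the paper's proof in how the three operations are distributed between direct constructions and De Morgan reductions. You and the paper handle complementation identically (take a deterministic bottom-up fta, observe that $\delbu$ does not depend on $F$, and replace $F$ by $Q\setminus F$). For the other two operations the roles are swapped: the paper proves \emph{union} directly, by combining two regular tree grammars with disjoint nonterminal sets under a fresh initial nonterminal $S$ with rules $S\to S_1$ and $S\to S_2$, and leaves intersection to follow by De Morgan; you prove \emph{intersection} directly via the product automaton (with the induction $\delbu(t)=(\delbu[\delta_1](t),\delbu[\delta_2](t))$, which is exactly the right invariant) and obtain union by De Morgan or by changing the product's final-state set. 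The paper's grammar-based union is attractive because it needs neither determinization nor complementation and extends painlessly to operands over different ranked alphabets (it builds the grammar over $\Sigma_1\cup\Sigma_2$); your product construction keeps the whole proof inside one formalism, yields intersection effectively in a single pass rather than through two complementations, and is the construction one actually wants later (e.g.\ for Corollary~\ref{cor.3.33} and Theorem~\ref{the.3.76}). Your explicit caveat that complementation and union via De Morgan are relative to a fixed $T_\Sigma$ is the one point where the paper's grammar route is genuinely more flexible, but since one can always pass to the union alphabet first, this is cosmetic.
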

\begin{proof}
	To show closure under complementation, consider a deterministic bottom-up fta $M=(Q,\Sigma,\delta,s,F)$. Let $N$ be the det.\ bottom-up fta $(Q,\Sigma,\delta,s,Q-F)$. Then, obviously, $L(N)=T_\Sigma-L(M)$.
	
	To show closure under union, consider two regular tree grammars $G_i=(N_i,\Sigma_i,R_i,S_i)$, $i=1,2$ (with $N_1\cap N_2=\emptyset$).
	Then $G=(N_1\cup N_2\cup\{S\},\Sigma_1\cup\Sigma_2,R_1\cup R_2\cup\{S\to S_1,$ $S\to S_2\},S)$
	is a regular tree grammar such that $L(G)=L(G_1)\cup L(G_2)$.
\end{proof}
As a corollary we obtain the following closure property of context-free languages.
\begin{corollary}\label{cor.3.33}
	\CFL{} is closed under intersection with regular languages.
\end{corollary}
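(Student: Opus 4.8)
The plan is to assemble the result from three facts already in hand: Theorem~\ref{the.3.28} (that $\fyield(\RECOG)=\CFL$), Theorem~\ref{the.3.30} (that pulling back a regular string language through $\fyield$ gives a recognizable tree language), and Theorem~\ref{the.3.32} (closure of $\RECOG$ under intersection). So let $L$ be a context-free language and $R$ a regular language; the goal is $L\cap R\in\CFL$.

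First I would invoke Theorem~\ref{the.3.28} to obtain a ranked alphabet $\Sigma$ and a recognizable tree language $L'\subseteq T_\Sigma$ with $\fyield(L')=L$. Since $\fyield(t)\in\Sigma_0^*$ for every $t\in T_\Sigma$, we have $L\subseteq\Sigma_0^*$, hence $L\cap R=L\cap(R\cap\Sigma_0^*)$; and $R\cap\Sigma_0^*$ is a regular language over the alphabet $\Sigma_0$, so I may replace $R$ by it and assume $R\subseteq\Sigma_0^*$. Now Theorem~\ref{the.3.30} makes $T_R:=\{t\in T_\Sigma\mid\fyield(t)\in R\}$ recognizable, and Theorem~\ref{the.3.32} then makes $L'':=L'\cap T_R$ recognizable.

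It remains to verify $\fyield(L'')=L\cap R$, after which Theorem~\ref{the.3.28} yields $L\cap R=\fyield(L'')\in\CFL$. For $\subseteq$: if $w=\fyield(t)$ with $t\in L'\cap T_R$, then $w\in\fyield(L')=L$ and, since $t\in T_R$, also $w\in R$. For $\supseteq$: if $w\in L\cap R$, then $w\in L=\fyield(L')$ gives some $t\in L'$ with $\fyield(t)=w$; since $\fyield(t)=w\in R$ we get $t\in T_R$, hence $t\in L''$ and $w\in\fyield(L'')$.

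I expect no genuine obstacle here — the argument is a routine composition of the established theorems. The only point worth stating carefully is that $\fyield$ is not injective, so a priori one only gets $\fyield(L'')\subseteq\fyield(L')\cap R$; equality holds precisely because $T_R$ is defined to contain \emph{every} tree whose yield lies in $R$, which is exactly the content of the $\supseteq$ direction above. The small reduction to $R\subseteq\Sigma_0^*$ (needed so that Theorem~\ref{the.3.30} applies with the alphabet $\Sigma_0$) is the other detail not to skip.
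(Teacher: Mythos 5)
Your proof is correct and follows exactly the paper's own route: pull $L$ back to a recognizable tree language via Theorem~\ref{the.3.28}, intersect with $\{t\mid\fyield(t)\in R\}$ using Theorems~\ref{the.3.30} and~\ref{the.3.32}, and take yields again. The extra care you take with the reduction to $R\subseteq\Sigma_0^*$ and with the non-injectivity of $\fyield$ fills in details the paper dismisses with ``obviously,'' but the argument is the same.
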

\begin{proof}
	Let $L$ and $R$ be a context-free and regular language respectively. According to Theorem \ref{the.3.28}, there is a recognizable tree language $U$ such that $\fyield(U)=L$. Consequently, by Theorems \ref{the.3.30} and \ref{the.3.32}, the tree language $V=U\cap\{t\mid \fyield(t)\in R\}$ is recognizable. Obviously $L\cap R=\fyield(V)$ and so, again by Theorem \ref{the.3.28}, $L\cap R$ is context-free.
\end{proof}
We now turn to the closure of \RECOG{} under concatenation operations 
(see Definitions~\ref{def.2.23} and~\ref{def.2.27}).
\begin{theorem}\label{the.3.34}
	For every $k\geq1$ and $a\in\Sigma_k$, \RECOG{} is closed under $\tc_a^k$.
\end{theorem}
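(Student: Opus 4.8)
The plan is to argue via regular tree grammars, using Theorem~\ref{the.3.25} to move back and forth between \RECOG{} and regular tree grammars; this is the tree analogue of the standard ``$S\to S_1\cdots S_k$'' construction for concatenation of context-free languages, and it parallels the closure-under-union argument in the proof of Theorem~\ref{the.3.32}.

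First I would take recognizable tree languages $\li[L_i]\subseteq T_\Sigma$ and, by Theorem~\ref{the.3.25}, fix regular tree grammars $G_i=(N_i,\Sigma,R_i,S_i)$ with $L(G_i)=L_i$; renaming nonterminals if necessary, I may assume the sets $\li[N_i]$ are pairwise disjoint. Picking a fresh symbol $S$ (occurring neither in $\Sigma$ nor in any $N_i$), and noting that $\listt[a][S_i]\in T_\Sigma(\{S\}\cup N_1\cup\cdots\cup N_k)$ because $a\in\Sigma_k$ and the $S_i$ count as symbols of rank $0$ (Definition~\ref{def.2.33}), I would then form the regular tree grammar
\[
G=\Bigl(\,\{S\}\cup N_1\cup\cdots\cup N_k,\ \Sigma,\ \{S\to\listt[a][S_i]\}\cup R_1\cup\cdots\cup R_k,\ S\,\Bigr)
\]
and claim $L(G)=\tc_a^k(\li[L_i])$; together with Theorem~\ref{the.3.25} this gives the result.

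The inclusion $\supseteq$ is immediate: if $t_i\in L_i$ then $S_i\xRightarrow[G_i]{*}t_i$, hence $S_i\xRightarrow[G]{*}t_i$, so $S\xRightarrow[G]{}\listt[a][S_i]\xRightarrow[G]{*}\listt\in T_\Sigma$. The real point is $\subseteq$. A derivation in $G$ of a tree in $T_\Sigma$ must begin with the unique $S$-rule, giving $S\xRightarrow[G]{}\listt[a][S_i]$; from then on it is a derivation of a terminal tree from $\listt[a][S_i]$. Since the $\li[N_i]$ are pairwise disjoint and every rule of $R_i$ has its left side in $N_i$ and its right side in $T_\Sigma(N_i)$, no step can introduce an $N_j$-nonterminal into the part of the tree derived from $S_i$, nor rewrite an $N_i$-nonterminal by a rule outside $R_i$; hence the derivation decomposes into $k$ independent subderivations $S_i\xRightarrow[G_i]{*}t_i$, and the generated tree is $\listt$ with each $t_i\in L(G_i)=L_i$, i.e. an element of $\tc_a^k(\li[L_i])$. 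This derivation-decomposition argument is the only place where care is needed---it is the step I would write out carefully---but disjointness of the nonterminal sets makes it bookkeeping rather than a genuine obstacle. (The argument is unaffected if $a$ also carries ranks other than $k$: the new rule uses $a$ only at arity $k$, and any occurrences of $a$ at other arities inside the $R_i$ are handled entirely within the $G_i$.)

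Alternatively, one could argue directly with automata: take deterministic bottom-up fta $M_i=(Q_i,\Sigma,\delta^i,s^i,F_i)$ for $L_i$ with pairwise disjoint state sets, add a fresh state $q_f$, and build a nondeterministic bottom-up fta on $Q_1\cup\cdots\cup Q_k\cup\{q_f\}$ that at each leaf $b\in\Sigma_0$ may start in any of $(s^i)_b$, that on a tuple all of whose components lie in $Q_i$ acts as $\delta^i$, and that in addition puts $q_f$ into the value of its arity-$k$ transition for $a$ whenever the $k$ argument states lie in $F_1,\dots,F_k$ respectively, with $\{q_f\}$ as final states. Since no transition reads $q_f$, that state can only be assigned at a root, and an easy induction (states in $Q_i$ track the run of $M_i$) shows $q_f$ is reached at the root of $t$ exactly when $t=\listt$ with all $t_i\in L_i$.
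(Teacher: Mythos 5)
Your grammar construction is correct and is exactly the intended argument: the paper leaves this theorem as an exercise, and your rule $S\to\listt[a][S_i]$ over disjointly renamed grammars mirrors the union construction in Theorem~\ref{the.3.32} and the substitution construction in Theorem~\ref{the.3.35}, with the derivation-decomposition step justified precisely by the disjointness of the nonterminal sets. The alternative automaton construction is also sound, but unnecessary.
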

\begin{proof}
	Exercise.
\end{proof}
\begin{theorem}\label{the.3.35}
	\RECOG{} is closed under tree concatenation.
\end{theorem}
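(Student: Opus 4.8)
The plan is to use the characterisation of $\RECOG$ by regular tree grammars (Theorem~\ref{the.3.25}) and to mimic, at the level of such grammars, the classical construction showing that context-free languages are closed under substitution. Let $a_1,\dots,a_n\in\Sigma_0$ be distinct and let $L=L(G)$ and $L_i=L(G_i)$ be recognizable, witnessed by regular tree grammars $G=(N,\Sigma,R,S)$ and $G_i=(N_i,\Sigma,R_i,S_i)$ ($1\le i\le n$) whose nonterminal sets $N,N_1,\dots,N_n$ are pairwise disjoint. For each rule $A\to t$ of $R$ let $t'$ be obtained from $t$ by replacing every \emph{leaf} labelled $a_i$ by the nonterminal $S_i$ (for every $i$); since each $a_i$ has rank~$0$ this is well defined and $t'\in T_\Sigma(N\cup N_1\cup\dots\cup N_n)$. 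Put $R'=\{A\to t'\mid (A\to t)\in R\}$ and let $G'=(N\cup N_1\cup\dots\cup N_n,\Sigma,R'\cup R_1\cup\dots\cup R_n,S)$. The claim is that $L(G')=L\langle a_1\gets L_1,\dots,a_n\gets L_n\rangle$, which by Theorem~\ref{the.3.25} yields the theorem. (No normal-form assumption on $G$ is needed, though one could also first invoke Theorem~\ref{the.3.22}.)

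To justify the claim I would use two observations: (a) since the nonterminal sets are disjoint and every $R_i$-rule has its right-hand side in $T_\Sigma(N_i)$, a derivation from an occurrence of $S_i$ in $G'$ can only use rules of $R_i$ and produces exactly a tree of $L_i$; in particular no $R_i$-step creates an $N$-nonterminal, while the left-hand sides of $R'$-rules are precisely the $N$-nonterminals; (b) $R'$-rules never produce a leaf labelled $a_i$. By (a), in any derivation $S\xRightarrow[G']{*}s$ with $s\in T_\Sigma$ the $R'$-steps and the $R_1\cup\dots\cup R_n$-steps act on positions labelled in $N$ and in $N_1\cup\dots\cup N_n$ respectively, hence on distinct positions, and they commute; so the derivation can be rearranged as $S\xRightarrow[G']{*}u\xRightarrow[G']{*}s$ with the first phase using only $R'$-rules and the second only $R_1\cup\dots\cup R_n$-rules. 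Moreover $u$ has no $N$-nonterminal (the remaining steps cannot remove such nonterminals while $s$ has none), so by (b) $u\in T_\Sigma(\{S_1,\dots,S_n\})$ with no leaf labelled $a_i$. Relabelling every $S_i$-leaf of $u$ back to $a_i$ gives a terminal tree $t$ with $u=t\langle a_1\gets S_1,\dots,a_n\gets S_n\rangle$, and the $R'$-phase mirrors step by step a derivation $S\xRightarrow[G]{*}t$ in $G$ (rewriting an $N$-nonterminal commutes with the substitution of the $S_i$ for the $a_i$, since $A\ne a_i$), whence $t\in L$. The second phase independently expands each $S_i$-occurrence of $u$ into an arbitrary element of $L_i$, so by Definition~\ref{def.2.27} $s\in t\langle a_1\gets L_1,\dots,a_n\gets L_n\rangle\subseteq L\langle a_1\gets L_1,\dots,a_n\gets L_n\rangle$. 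The reverse inclusion runs the same correspondence backwards: given $t\in L$ with $S\xRightarrow[G]{*}t$ and a prescribed replacement of its $a_i$-leaves by elements of $L_i$, mirror the derivation in $G'$ to reach $u=t\langle a_1\gets S_1,\dots,a_n\gets S_n\rangle$ and then derive the prescribed element of $L_i$ from each $S_i$-occurrence.

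The main obstacle is making this rearrangement-and-mirroring precise: one has to show that every $G'$-derivation of a terminal tree splits into an $R'$-phase followed by an $R_i$-phase, and that the $R'$-phase is in bijective correspondence with a $G$-derivation. This is exactly the lemma underlying closure of context-free languages under substitution, so one may cite it or instead verify directly that rewriting steps at disjoint positions of a context-free derivation commute (pictorially: the occurrences of the $S_i$ sit on a frontier antichain of the parse tree, everything strictly above is a $G$-parse tree with $a_i$ read as $S_i$, and each subtree hanging below is a $G_i$-parse tree). Two minor points deserve a remark. First, the ``nondeterministic'' nature of tree concatenation (Remark~\ref{rem.2.28}\ref{rem:2.28ii}) is exactly matched here, because distinct occurrences of $S_i$ in $u$ are expanded by independent subderivations; thus $G'$ realises the nondeterministic version of Definition~\ref{def.2.27}. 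Second, if some $a_i$ also carries a rank~$\ge1$, then by clause~(ii) of Definition~\ref{def.2.27} only its rank-$0$ occurrences are to be replaced, which is faithfully reflected by the construction replacing only \emph{leaves} labelled $a_i$. (An alternative proof builds a nondeterministic bottom-up fta directly, with state set the disjoint union $Q\cup Q_1\cup\dots\cup Q_n$ and ``mode switches'' that, once a subtree is accepted by the automaton for $L_i$, jump to the state $s_{a_i}$ of a bottom-up automaton for $L$; the grammar argument above is the more transparent one.)
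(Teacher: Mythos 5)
Your proof is correct and takes essentially the same route as the paper's: combine the grammars and redirect each (rank-$0$) occurrence of $a_i$ in the grammar for $L$ to the initial nonterminal $S_i$ of the grammar for $L_i$. The paper merely simplifies the bookkeeping by first putting the grammar for $L$ into normal form (Theorem~\ref{the.3.22}), so that only rules $A\to a_i$ need to be replaced by $A\to S_i$; your version replaces $a_i$-leaves inside arbitrary right-hand sides and spells out the derivation-rearrangement argument that the paper leaves implicit.
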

\begin{proof}
	The proof is obtained by generalizing that for regular string languages. Let $n\geq1$, $\li[a_i][n]\in\Sigma_0$ all different and $L_0,\li[L_i][n]$ recognizable tree languages (we may assume that all languages are over the same ranked alphabet $\Sigma$). Let $G_i=(N_i,\Sigma,R_i,S_i)$ be a regular tree grammar in normal form for $L_i$ ($i=0,1,\dots,n$). A regular tree grammar generating $L_0\langle\li[a_i\gets L_i][n]\rangle$ is $G=(\bigcup\limits_{i=0}^nN_i,\Sigma,R,S_0)$, where $R=\mbar{R}_0\cup\bigcup\limits_{i=1}^n R_i$, and $\mbar{R}_0$ is $R_0$ with each rule of the form $A\to a_i$ replaced by the rule $A\to S_i$ ($1\leq i\leq n$).
\end{proof}
\begin{corollary}\label{cor.3.36}
	\CFL{} is closed under substitution.
\end{corollary}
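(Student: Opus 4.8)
The plan is to lift the problem to trees, apply the closure of $\RECOG$ under tree concatenation (Theorem~\ref{the.3.35}), and descend again via the yield operation — exactly the pattern already used for Corollary~\ref{cor.3.33}, but with Theorem~\ref{the.3.35} in place of Theorems~\ref{the.3.30} and~\ref{the.3.32}.

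First I would fix the data: an alphabet $\Delta$, an integer $n\geq 1$, distinct symbols $a_1,\dots,a_n\in\Delta$, and context-free languages $L,L_1,\dots,L_n\subseteq\Delta^*$; the goal is that $L\langle a_1\gets L_1,\dots,a_n\gets L_n\rangle$ (substitution in the sense of Definition~\ref{def.2.29}) is context-free. By Theorem~\ref{the.3.28} there are recognizable tree languages $U,U_1,\dots,U_n$ with $\fyield(U)=L$ and $\fyield(U_i)=L_i$. Inspecting the construction in the proof of Theorem~\ref{the.3.28}, one may choose each of these over a ranked alphabet whose rank-$0$ part is $\Delta\cup\{e\}$ and whose only positive-rank symbol is $*$; taking the union of these finitely many ranked alphabets yields a single ranked alphabet $\Sigma$ with $U,U_1,\dots,U_n\subseteq T_\Sigma$, all still recognizable (a recognizable tree language over a sub-alphabet stays recognizable over a larger one — just route the new symbols to a dead state). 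In particular $a_1,\dots,a_n\in\Sigma_0$ and $a_i\neq e$, so the hypotheses of Exercise~\ref{exe.2.30} are met.

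Next I would apply the closure results. By Theorem~\ref{the.3.35} the tree concatenation $V:=U\langle a_1\gets U_1,\dots,a_n\gets U_n\rangle$ is recognizable, and by Exercise~\ref{exe.2.30},
\begin{align*}
\fyield(V) &= \fyield(U)\langle a_1\gets\fyield(U_1),\dots,a_n\gets\fyield(U_n)\rangle \\
           &= L\langle a_1\gets L_1,\dots,a_n\gets L_n\rangle,
\end{align*}
the second equality holding because no string in $L$ contains $e$, so substitution over $\Sigma_0$ and over $\Delta$ give the same result. Finally, by Theorem~\ref{the.3.28} again, $\fyield(V)$ is context-free, which is the claim.

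The only mildly fiddly point is the bookkeeping in the second paragraph — lining up the ranked alphabets of $U$ and the $U_i$ into a common $\Sigma$ and checking that the extra symbol $e$ is harmless in the yield computation. The substantive content is carried entirely by Theorems~\ref{the.3.28} and~\ref{the.3.35} together with Exercise~\ref{exe.2.30}, so I expect no genuine obstacle.
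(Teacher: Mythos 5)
Your proposal is correct and follows exactly the route the paper intends: the paper's one-line proof (``Use Theorem \ref{the.3.28} and Exercise \ref{exe.2.30}'') is precisely your argument, with Theorem \ref{the.3.35} supplying the tree-level closure since the corollary is stated immediately after it. The alphabet-alignment bookkeeping you flag is handled correctly and is indeed the only point requiring care.
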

\begin{proof}
	Use Theorem \ref{the.3.28} and Exercise \ref{exe.2.30}.
\end{proof}
Note also that Theorem \ref{the.3.35} is essentially a special case of Corollary \ref{cor.3.36}.

Next we generalize the notion of (concatenation) closure of string languages to trees, and show that \RECOG{} is closed under this closure operation. We shall, for convenience, restrict ourselves to the case that tree concatenation happens at one element of $\Sigma_0$.
\begin{definition}\label{def.3.37}
	Let $a\in\Sigma_0$ and let $L$ be a tree language over $\Sigma$. Then the \uem{tree concatenation closure} of $L$ at $a$, denoted by $L^{*a}$, is defined to be $\bigcup\limits_{n=0}^\infty X_n$, where $X_0=\{a\}$ and, for $n\geq0$, $X_{n+1}=X_n\conc_a(L\cup\{a\})$.\footnote{Recall the notation $L_1\conc_a L_2$ from Definition \ref{def.2.27}.}
\end{definition}
\begin{example}\label{exa.3.38}
	Let $G=(N,\Sigma,R,S)$ be the regular tree grammar with $N=\{S\}$, $\Sigma_0=\{a\}$, $\Sigma_2=\{b\}$ and $R=\{S\to b[aS],\,S\to a \}$. Then $L(G)=\{ b[aS] \}^{*S}\conc_Sa$.
\end{example}
The ``corresponding" operation on strings has several names in the literature. Let us call it ``{}substitution closure".
\begin{definition}\label{def.3.39}
	Let $\Delta$ be an alphabet and $a\in\Delta$. For a language $L$ over $\Delta$, the \uem{substitution closure} of $L$ at $a$, denoted by $L^{*a}$, is defined to be $\bigcup\limits_{n=0}^\infty X_n$, where $X_0=\{a\}$ and, for $n\geq0$, $X_{n+1}=X_n\conc_a(L\cup\{a\})$.
\end{definition}
\begin{exercise}\label{exe.3.40}
	Let $a\in\Sigma_0$, $a\neq e$, and let $L\subseteq T_\Sigma$. Prove that $\fyield(L^{*a})=(\fyield(L))^{*a}$.
\end{exercise}
\begin{theorem}\label{the.3.41}
	\RECOG{} is closed under tree concatenation closure.
\end{theorem}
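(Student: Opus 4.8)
The plan is to reduce to grammars. By Theorems~\ref{the.3.25} and~\ref{the.3.22} I may fix a regular tree grammar $G=(N,\Sigma,R,S)$ \emph{in normal form} with $L(G)=L$, and then exhibit a regular tree grammar $G'$ with $L(G')=L^{*a}$; recognizability of $L^{*a}$ then follows from Theorem~\ref{the.3.25}. The construction is the obvious generalization of the string case: whenever $G$ would finish a branch of a tree by producing the distinguished leaf $a$, $G'$ is allowed instead to re-enter the axiom and grow a fresh copy of an $L$-tree there. Concretely, pick a new symbol $S'\notin N$ and put $G'=(N\cup\{S'\},\Sigma,R',S')$ where
\[ R'=R\;\cup\;\{\,S'\to S,\ S'\to a\,\}\;\cup\;\{\,A\to S\mid (A\to a)\in R\,\}. \]
Normal form is used exactly here: it guarantees that in any tree derived by $G$ every occurrence of $a$ \emph{as a leaf} is produced by a rule of the shape $A\to a$ (whereas internal $a$-nodes, if the symbol $a$ also has some rank $\geq1$, come from rules $A\to a[B_1\cdots B_k]$ and are, correctly, not restart points). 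The rule $S'\to a$ supplies the base case $a$.

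The correctness claim $L(G')=L^{*a}$ would be proved in two halves, using $L^{*a}=\bigcup_{n\geq0}X_n$ with $X_0=\{a\}$ and $X_{n+1}=X_n\conc_a(L\cup\{a\})$ (Definition~\ref{def.3.37}). First note two facts about $Y:=L(G')$: it contains $a$, and since $R\subseteq R'$ and $S'\to S$ is a rule, it contains all of $L=L(G)$; moreover $Y$ is \emph{closed} under $\conc_a(L\cup\{a\})$. For the last point, take $t\in Y$ together with an element of $t\conc_a(L\cup\{a\})$ — that is, $t$ with some of its $a$-leaves replaced by chosen trees from $L$ — and modify a $G'$-derivation of $t$: at each of those leaves replace the rule application $A\to a$ by $A\to S$ and append a $G$-derivation of the chosen $L$-tree. (The degenerate case $t=a$ is covered by $\{a\}\cup L\subseteq Y$.) Since $X_{n+1}=X_n\conc_a(L\cup\{a\})$, an easy induction using monotonicity of $\conc_a$ shows that $L^{*a}$ is the least set containing $a$ and closed under this operation; hence $L^{*a}\subseteq Y=L(G')$.

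For the reverse inclusion I would induct on the number $m$ of ``restart'' steps (applications of a rule $A\to S$ from the last block of $R'$) in a derivation $S'\Rightarrow^{*}t$. If $m=0$, the derivation is $S'\to a$ (so $t=a\in X_0$) or $S'\to S$ followed by a pure $G$-derivation (so $t\in L\subseteq X_1$). If $m\geq1$, choose an \emph{innermost} restart step — one with no further restart step inside the subderivation it heads; then the subtree $s$ generated there comes from a pure $G$-derivation from $S$, so $s\in L$. Replacing that whole subderivation ``$A\to S\Rightarrow^{*}s$'' by the single rule ``$A\to a$'' gives a $G'$-derivation with $m-1$ restart steps of a tree $t''$ such that $t$ is $t''$ with exactly one $a$-leaf replaced by $s$; thus $t\in t''\conc_a(L\cup\{a\})$. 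By the induction hypothesis $t''\in X_n$ for some $n$, so $t\in X_{n+1}\subseteq L^{*a}$.

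I expect the only real care to be in bookkeeping: getting normal form to ensure all $a$-leaves stem from rules $A\to a$, and, in the second inclusion, insisting on an \emph{innermost} restart step so that the excised subtree truly lies in $L$ and not merely in $L(G')$ — without this the induction would not close. Everything else is routine manipulation of derivations, entirely parallel to the proof that regular string languages are closed under substitution closure.
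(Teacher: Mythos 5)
Your construction is exactly the one in the paper: starting from a normal-form grammar $G$, add a fresh axiom $S'$ with rules $S'\to S$ and $S'\to a$, and add $A\to S$ for every rule $A\to a$. The paper simply asserts $L(\mbar{G})=(L(G))^{*a}$ and omits the verification, which you carry out correctly (including the two genuinely necessary points: normal form forces every $a$-leaf to arise from a rule $A\to a$, and choosing an innermost restart step makes the excised subtree lie in $L$ rather than merely in $L(G')$).
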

\begin{proof}
	Again the proof is a straightforward generalization of the string case. Let $G=(N,\Sigma,R,S)$ be a regular tree grammar in normal form, and let $a\in\Sigma_0$. Construct the regular tree grammar $\mbar{G}=(N\cup\{S_0\},\Sigma,\mbar{R},S_0)$, where $\mbar{R}=R\cup\{A\to S\mid A\to a\text{ is in }R\}\cup\{S_0\to S,S_0\to a\}$. Then $L(\mbar{G})=(L(G))^{*a}$.
\end{proof}
\begin{corollary}\label{cor.3.42}
	\CFL{} is closed under substitution closure.
\end{corollary}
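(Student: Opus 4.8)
The plan is to mimic the proofs of Corollaries~\ref{cor.3.33} and~\ref{cor.3.36}: lift $L$ to a recognizable tree language, apply the tree-level closure result, and project back down via the yield. Concretely, let $L$ be a context-free language over an alphabet $\Delta$ and let $a\in\Delta$. By Theorem~\ref{the.3.28} there is a recognizable tree language $U$ with $\fyield(U)=L$; inspecting the construction behind that theorem, $U$ lives over a ranked alphabet whose rank-$0$ symbols are exactly $\Delta\cup\{e\}$, so in particular $a\in\Sigma_0$ and $a\neq e$.

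Next I would form the tree concatenation closure $U^{*a}$ (Definition~\ref{def.3.37}). By Theorem~\ref{the.3.41}, $U^{*a}$ is again recognizable. By Exercise~\ref{exe.3.40}, which applies precisely because $a\in\Sigma_0$ and $a\neq e$, we get $\fyield(U^{*a})=(\fyield(U))^{*a}=L^{*a}$. Applying Theorem~\ref{the.3.28} once more, $\fyield(U^{*a})$ is context-free, and hence $L^{*a}$ is context-free.

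The only point that needs a little care is the side condition $a\neq e$: if one carelessly chose a recognizable $U$ in which $a$ were identified with the empty-string symbol, Exercise~\ref{exe.3.40} would no longer apply and the yield of the tree closure need not coincide with the substitution closure of the yield. This is not a genuine obstacle, since the standard $U$ produced by Theorem~\ref{the.3.28} already has the required shape (and if necessary one can rename so that $a$ and $e$ are distinct before invoking the lemma). Beyond assembling Theorems~\ref{the.3.28} and~\ref{the.3.41} with Exercise~\ref{exe.3.40}, no new argument is required; in particular, as with Theorem~\ref{the.3.35} versus Corollary~\ref{cor.3.36}, Theorem~\ref{the.3.41} is essentially the tree-level special case of this corollary.
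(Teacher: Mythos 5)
Your proof is correct and follows exactly the route the paper intends: lift $L$ to a recognizable $U$ via Theorem~\ref{the.3.28}, apply Theorem~\ref{the.3.41} to get $U^{*a}\in\RECOG$, and use Exercise~\ref{exe.3.40} to identify $\fyield(U^{*a})$ with $L^{*a}$. Your extra care about the side condition $a\neq e$ is a reasonable (and harmless) addition to what the paper leaves implicit.
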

\begin{proof}
	Use Theorem \ref{the.3.28} and Exercise \ref{exe.3.40}.
\end{proof}
It is well known that the class of regular string languages is the smallest class containing the finite languages and closed under union, concatenation and closure. A similar result holds for recognizable tree languages.

\begin{theorem}\label{the.3.43}
	\RECOG{} is the smallest class of tree languages containing the finite tree languages and closed under union, tree concatenation and tree concatenation closure.
\end{theorem}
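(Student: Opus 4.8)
The plan is to prove the two inclusions separately. Write $\mathcal{C}$ for the smallest class of tree languages containing the finite tree languages and closed under union, tree concatenation and tree concatenation closure. The inclusion $\mathcal{C}\subseteq\RECOG$ is the easy direction: by Exercise~\ref{exe.3.26} every finite tree language is recognizable, and Theorems~\ref{the.3.32}, \ref{the.3.35} and~\ref{the.3.41} say that $\RECOG$ is closed under union, tree concatenation and tree concatenation closure respectively. Since $\RECOG$ is a class with all the required properties and $\mathcal{C}$ is by definition the smallest such class, $\mathcal{C}\subseteq\RECOG$ follows immediately.

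For the reverse inclusion $\RECOG\subseteq\mathcal{C}$, I would imitate the Kleene-style argument for regular string languages, working with a regular tree grammar rather than an automaton. By Theorem~\ref{the.3.25} take a regular tree grammar $G=(N,\Sigma,R,S)$ with $L(G)=L$, and by Theorem~\ref{the.3.22} assume $G$ is in normal form, so every rule is $A\to\listt[a][B_i]$ or $A\to b$ with $b\in\Sigma_0$. Enlarge $\Sigma_0$ by a fresh rank-$0$ symbol $x_A$ for each nonterminal $A\in N$ (these act as ``holes''), and for a subset $P\subseteq N$ define $L_P(A)$ to be the set of trees derivable from $A$ using only intermediate nonterminals from $P$ — more precisely, the set of trees $t\in T_{\Sigma\cup\{x_B : B\in N\}}$ such that $A\xRightarrow[G]{*}t$ where every sentential form along the way has all its nonterminals lying in $P\cup(\text{the nonterminals still occurring in }t)$, with the nonterminals occurring in the final $t$ recorded as the holes $x_B$. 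The goal is to show by induction on $|P|$ that each $L_P(A)$ lies in $\mathcal{C}$; taking $P=N$ and then tree-concatenating the holes $x_B$ with the appropriate languages (or more simply, taking the version of $L_N(A)$ with no holes permitted) yields $L(G)=L_N(S)\in\mathcal{C}$.

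The induction on $|P|$ runs as follows. For the base case $P=\emptyset$, a derivation from $A$ using no intermediate nonterminals is just a single rule application or the trivial derivation $A\Rightarrow^{0}$, so $L_\emptyset(A)$ is finite, hence in $\mathcal{C}$. For the inductive step, pick some $C\in P$ and set $P'=P\setminus\{C\}$. Any derivation witnessing a tree in $L_P(A)$ can be decomposed at the occurrences of $C$: it looks like a $P'$-derivation from $A$ with some holes labelled $x_C$, into each of which we plug a tree obtained by first doing a $P'$-derivation from $C$ back to $C$ (possibly many times — a loop), and finally a $P'$-derivation from $C$ that does not return to $C$. In language terms this gives a finite union of expressions of the form
\[
L_{P'}(A)\bigl\langle x_C\gets \bigl(L_{P'}(C)\bigr)^{*x_C}\cdot_{x_C} L_{P'}(C)\bigr\rangle,
\]
using tree concatenation and tree concatenation closure, together with analogous simpler terms when $A$ or the final endpoint already avoids $C$. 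Here one must be a little careful because $A$ and $C$ may have several holes, and because — as Remark~\ref{rem.2.28}\ref{rem:2.28ii} notes — tree concatenation is nondeterministic, substituting possibly different trees at different occurrences of $x_C$; but that nondeterminism is exactly what matches the grammar, since distinct occurrences of $C$ in a sentential form are rewritten independently. Since $L_{P'}(A)$ and $L_{P'}(C)$ lie in $\mathcal{C}$ by the induction hypothesis and $\mathcal{C}$ is closed under the operations used, $L_P(A)\in\mathcal{C}$.

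The main obstacle is bookkeeping rather than mathematical depth: formulating the ``languages with holes'' $L_P(A)$ cleanly enough that the decomposition-of-derivations step becomes a genuine identity, and handling nonterminals of arbitrary rank-$k$ right-hand sides and multiple simultaneous holes, so that the inductive expression really is built from the $L_{P'}(\cdot)$ using only union, tree concatenation and tree concatenation closure. Once the right inductive invariant is in place, each step is a routine verification by induction on derivation length, entirely parallel to the classical string proof.
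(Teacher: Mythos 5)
Your proposal is correct and follows essentially the same route as the paper's proof: the easy inclusion from Exercise~\ref{exe.3.26} and Theorems~\ref{the.3.32}, \ref{the.3.35} and \ref{the.3.41}, and the converse by a Kleene-style induction on the set of nonterminals allowed to be rewritten, using the same loop decomposition $L_{A,P}^{Q\cup\{B\}}=L_{A,P\cup\{B\}}^Q\conc_B(L_{B,P\cup\{B\}}^Q)^{*B}\conc_B L_{B,P}^Q$. The only difference is cosmetic: the paper makes your ``holes'' bookkeeping precise by carrying a second index $P$ recording which nonterminals may survive as rank-$0$ leaves (so that $L(G)=L_{S,\emptyset}^{N}$ has no holes by construction), which is exactly the refinement you flag as necessary.
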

\begin{proof}
	We have shown that \RECOG{} satisfies the above conditions in Exercise \ref{exe.3.26} and Theorems \ref{the.3.32}, \ref{the.3.35} and \ref{the.3.41}. It remains to show that every recognizable tree language can be built up from the finite tree languages using the operations $\cup$, $\conc_a$ and $^{*a}$. Let $G=(N,\Sigma,R,S)$ be a regular tree grammar (it is easy to think of it as being in normal form). We shall use the elements of $N$ to do tree concatenation at. 
For $A\in N$ and $P,Q\subseteq N$ with $P\cap Q=\emptyset$, 
let us denote by $L_{A,P}^Q$ the set of all trees $t\in T_\Sigma(P)$ for which there is a derivation 
$A\Rightarrow t_1 \Rightarrow t_2 \Rightarrow \cdots \Rightarrow t_n \Rightarrow t_{n+1}=t$ ($n \ge 0$)  
such that, for $1\leq i\leq n$, $t_i\in T_\Sigma(Q\cup P)$ and a rule with left hand side in $Q$ is applied to $t_i$ to obtain $t_{i+1}$. We shall show, by induction on the cardinality of $Q$, that all sets $L_{A,P}^Q$ can be built up from the finite tree languages by the operations $\cup$, $\conc_B$ and $^{*B}$ (for all $B\in N$). For $Q=\emptyset$, $L_{A,P}^\emptyset$ is the set of all those right hand sides of rules with left hand side $A$, that are in $T_\Sigma(P)$. Thus $L_{A,P}^\emptyset$ is a finite tree language for all $A$ and $P$. Assuming now that, for $Q\subseteq N$, all sets $L_{A,P}^Q$ can be built up from the finite tree languages, the same holds for all sets $L_{A,P}^{Q\cup\{B\}}$, where $B\in N-Q$, since 
% and $(Q\cup\{B\})\cap P = \emptyset$, 
\[L_{A,P}^{Q\cup\{B\}}=L_{A,P\cup\{B\}}^Q\conc_B(L_{B,P\cup\{B\}}^Q)^{*B}\conc_BL_{B,P}^Q\] 
(a formal proof of this equation is left to the reader).
Thus, since $L(G)=L_{S,\emptyset}^N$, the theorem is proved.
\end{proof}
In other words, each recognizable tree language can be denoted by a ``regular expression" with trees as constants and $\cup$, $\conc_A$ and $^{*A}$ as operators.
\begin{exercise}\label{exe.3.44}
	Try to find a regular expression for the language generated by the regular tree grammar $G=(N,\Sigma,R,S)$ with $N=\{S,T\}$, $\Sigma_0=\{a\}$, $\Sigma_2=\{p\}$ and $R=\{S\to p[TS],\,S\to a,\,T\to p[TT],\,T\to a \}$. Use the algorithm in the proof of Theorem \ref{the.3.43}.
\end{exercise}
As a corollary we obtain the result that all context-free languages can be denoted by ``context-free expressions".
\begin{corollary}\label{cor.3.45}
	\CFL{} is the smallest class of languages containing the finite languages and closed under union, substitution and substitution closure.
\end{corollary}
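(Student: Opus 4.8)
The plan is to transfer Theorem \ref{the.3.43} to the string level via the yield operation, exactly as Corollaries \ref{cor.3.36} and \ref{cor.3.42} transferred Theorems \ref{the.3.35} and \ref{the.3.41}. By Theorem \ref{the.3.28} we have $\fyield(\RECOG)=\CFL$, so applying $\fyield$ to the closure description of $\RECOG$ should yield a closure description of $\CFL$. The three ingredients I would assemble are: (1) the yield of a finite tree language is a finite (string) language, and conversely every finite string language is the yield of a finite tree language; (2) yield commutes with union, trivially, since $\fyield(L\cup L')=\fyield(L)\cup\fyield(L')$; (3) yield turns tree concatenation into string substitution (Exercise \ref{exe.2.30}) and tree concatenation closure into substitution closure (Exercise \ref{exe.3.40}).

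First I would prove the ``containment'' half: every context-free language is in the smallest class $\mathcal{C}$ of string languages containing the finite languages and closed under $\cup$, substitution, and substitution closure. Given $L\in\CFL$, by Theorem \ref{the.3.28} pick a recognizable $U$ with $\fyield(U)=L$. By Theorem \ref{the.3.43}, $U$ is obtained from finite tree languages by finitely many applications of $\cup$, $\conc_A$, and $^{*A}$. I would then induct on the structure of this ``regular expression'' for $U$: the yield of a constant (finite tree language) is a finite string language, hence in $\mathcal{C}$; and by Exercises \ref{exe.2.30} and \ref{exe.3.40} together with the trivial fact about union, $\fyield$ of each combination is the corresponding string operation applied to the yields, which stays in $\mathcal{C}$ by the induction hypothesis. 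One technical wrinkle: Exercise \ref{exe.2.30} requires the substitution symbols $a_i$ to differ from $e$; since $e$ labels only leaves and the symbols we concatenate at in the proof of Theorem \ref{the.3.43} are nonterminals (treated as rank-$0$ symbols distinct from $e$), this hypothesis is met, but I would remark on it. Thus $L=\fyield(U)\in\mathcal{C}$, giving $\CFL\subseteq\mathcal{C}$.

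For the reverse inclusion, $\mathcal{C}\subseteq\CFL$, I would argue that $\CFL$ itself satisfies the three defining properties, so the smallest such class is contained in it: finite languages are context-free; $\CFL$ is closed under union (standard, or via Theorem \ref{the.3.32} and Theorem \ref{the.3.28}); $\CFL$ is closed under substitution by Corollary \ref{cor.3.36}; and $\CFL$ is closed under substitution closure by Corollary \ref{cor.3.42}. Hence $\mathcal{C}\subseteq\CFL$, and combining the two inclusions gives $\mathcal{C}=\CFL$.

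The main obstacle is bookkeeping rather than conceptual: one must be careful that the alphabet over which the intermediate tree languages live is handled consistently (all languages over a common ranked alphabet $\Sigma$, with the nonterminals adjoined at rank $0$), and that every application of Exercise \ref{exe.2.30} respects its hypothesis $a_i\neq e$. I would also note explicitly, as the paper does after Theorem \ref{the.3.43}, that this says precisely that context-free languages are exactly those denotable by ``context-free expressions'' built from finite constants using $\cup$, substitution $\conc_A$, and substitution closure $^{*A}$.
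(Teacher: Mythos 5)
Your proof is correct and follows exactly the route the paper intends (the paper leaves this as an exercise, but it is stated as a corollary of Theorem~\ref{the.3.43}, to be transferred to strings via Theorem~\ref{the.3.28} and Exercises~\ref{exe.2.30} and~\ref{exe.3.40}, just as Corollaries~\ref{cor.3.36} and~\ref{cor.3.42} are obtained from Theorems~\ref{the.3.35} and~\ref{the.3.41}). Your attention to the hypothesis $a_i\neq e$ and to the fact that concatenation happens at nonterminals is exactly the right bookkeeping.
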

\begin{proof}
	Exercise.
\end{proof}
\begin{exercise}\label{exe.3.46}
	Define the operation of ``iterated concatenation at $a$" (for tree languages) and ``iterated substitution at $a$" (for string languages) by $\iterated_a(L)=L^{*a}\conc_a\emptyset$. Prove (using Theorem \ref{the.3.43}) that \RECOG{} is the smallest class of tree languages containing the finite tree languages and closed under the operations of union, top concatenation and iterated concatenation. Show that this implies that \CFL{} is the smallest class of languages containing the finite languages and closed under the operations of union, concatenation and iterated substitution (cf. \cite[\rom{6}.11]{Sal}).
\end{exercise}
Let us now turn to another operation on trees: that of relabeling the nodes of a tree.
\begin{definition}\label{def.3.47}
	Let $\Sigma$ and $\Delta$ be ranked alphabets. A \uem{relabeling} $r$ is a family $\{r_k\}_{k\geq0}$ of mappings $r_k:\Sigma_k\to\mathcal{P}(\Delta_k)$. A relabeling determines a mapping $r:T_\Sigma\to\mathcal{P}(T_\Delta)$ by the requirements
	\begin{enumerate}[label=(\roman*)]
		\item for $a\in\Sigma_0$, $r(a)=r_0(a)$,
		\item for $k\geq1$, $a\in\Sigma_k$ and $\li\in T_\Sigma$,\\
		      $r(\listt)=\{\listt[b][s_i]\mid b\in r_k(a)\text{ and }s_i\in r(t_i)\}$.
	\end{enumerate}
	If, for each $k\geq0$ and each $a\in\Sigma_k$, $r_k(a)$ consists of one element only, then $r$ is called a \uem{projection}.
\end{definition}
Obviously, \RECOG{} is closed under relabelings.
\begin{theorem}\label{the.3.48}
	\RECOG{} is closed under relabelings.
\end{theorem}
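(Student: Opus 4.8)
The plan is to argue via regular tree grammars, using Theorem~\ref{the.3.25} to translate between \RECOG{} and regular tree grammars and Theorem~\ref{the.3.22} to put the grammar in normal form. So fix $L\in\RECOG$ with $L\subseteq T_\Sigma$ and a relabeling $r=\{r_k\}_{k\geq0}$, $r_k:\Sigma_k\to\mathcal{P}(\Delta_k)$; we must show that $r(L)=\bigcup_{t\in L}r(t)$ is recognizable.

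First I would take a regular tree grammar $G=(N,\Sigma,R,S)$ in normal form with $L(G)=L$ (Theorems~\ref{the.3.25} and~\ref{the.3.22}). Then I would construct $\mbar{G}=(N,\Delta,\mbar{R},S)$, where $\mbar{R}$ is obtained from $R$ by replacing each rule $A\to a[B_1\cdots B_k]$ with $a\in\Sigma_k$ by all rules $A\to b[B_1\cdots B_k]$ with $b\in r_k(a)$, and each rule $A\to c$ with $c\in\Sigma_0$ by all rules $A\to d$ with $d\in r_0(c)$. Since each $r_k(a)$ is a finite set, $\mbar{R}$ is finite, so $\mbar{G}$ is a genuine regular tree grammar over $\Delta$; by Theorem~\ref{the.3.25} it then suffices to prove $L(\mbar{G})=r(L)$.

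The core of the proof is a statement relating derivations in $G$ and $\mbar{G}$, proved by induction (Principle~\ref{pri.2.12}), either on the target tree or on the length of the derivation: for every $A\in N$,
$\{s\in T_\Delta\mid A\xRightarrow[\mbar{G}]{*}s\}=\bigcup\{r(t)\mid t\in T_\Sigma,\ A\xRightarrow[G]{*}t\}$.
The inclusion ``$\supseteq$'' follows by mimicking a $G$-derivation rule by rule, at each step selecting the relabeled rule in $\mbar{R}$ that produces the desired target symbol prescribed by $r(\cdot)$; ``$\subseteq$'' goes the other way, reading off from each applied rule of $\mbar{R}$ the unique rule of $R$ from which it came, which yields a $G$-derivation of some $t$ with $s\in r(t)$ (here one uses clause~(ii) of Definition~\ref{def.3.47}, i.e.\ that distinct nodes are relabeled independently, which matches the fact that distinct rule applications in the two derivations are chosen independently). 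Specializing to $A=S$ gives $L(\mbar{G})=r(L)$.

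I do not expect a genuine obstacle; the only care needed is bookkeeping, namely setting up the induction so that the ``in parallel'' node-by-node relabeling of Definition~\ref{def.3.47} lines up with the step-by-step replacement of nonterminals in a grammar derivation, and observing that if some $r_k(a)=\emptyset$ the corresponding rule simply vanishes, which does no harm. As an alternative one could give the automaton-theoretic version: from a nondet.\ bottom-up fta $M=(Q,\Sigma,\delta,S,F)$ with $L(M)=L$, build the nondet.\ bottom-up fta $N=(Q,\Delta,\mu,S',F)$ with $\mu_b^k(\mbar{q})=\bigcup\{\delta_a^k(\mbar{q})\mid a\in\Sigma_k,\ b\in r_k(a)\}$ and $S'_d=\bigcup\{S_c\mid c\in\Sigma_0,\ d\in r_0(c)\}$, and prove $\delbu[\mu](s)=\bigcup\{\delbu(t)\mid t\in T_\Sigma,\ s\in r(t)\}$ by induction on $s$ (Principle~\ref{pri.2.12}); this is equally routine and yields $L(N)=r(L)$ directly.
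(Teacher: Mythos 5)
Your proof is correct and takes essentially the same route as the paper: the paper simply replaces each rule $A\to t$ of an arbitrary regular tree grammar by all rules $A\to s$ with $s\in r(t)$ (extending $r$ by $r(B)=\{B\}$ on nonterminals), whereas you first pass to normal form so that each right-hand side carries a single terminal symbol — a harmless simplification of the same construction.
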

\begin{proof}
	Let $r$ be a relabeling, and consider some regular tree grammar $G$. By replacing each rule $A\to t$ of $G$ by all rules $A\to s$, $\;s\in r(t)$, one obtains a regular tree grammar for $r(L(G))$. (In order that ``$r(t)$" makes sense, we define $r(B)=\{B\}$ for each nonterminal $B$ of $G$).
\end{proof}
We are now in a position to study the connection between recognizable tree languages and sets of derivation trees of context-free grammars. We shall consider two kinds of derivation trees. First we define the ``{}ordinary" kind of derivation tree (cf. Example \ref{exa.1.1}).
\begin{definition}\label{def.3.49}
	Let $G=(N,\Sigma,R,S)$ be a context-free grammar. Let $\Delta$ be the ranked alphabet such that $\Delta_0=\Sigma\cup\{e\}$ and, for $k\geq1$, $\Delta_k$ is the set of nonterminals $A\in N$ for which there is a rule $A\to w$ with $|w| =k$ (in case $k=1:|w| =1\text{ or }|w| =0$).
	For each $\alpha\in N\cup\Sigma$, the \uem{set of derivation trees} with top $\alpha$, denoted by $D_G^\alpha$, is the tree language over $\Delta$ defined recursively as follows
	\begin{enumerate}[label=(\roman*)]
		\item for each $a$ in $\Sigma$, $a\in D_G^a$;
		\item for each rule $A\to\alpha_1\cdots\alpha_n$ in $R$ ($n\geq1$, $A\in N$, $\alpha_i\in\Sigma\cup N$),\\
		      if $t_i\in D_G^{\alpha_i}$ for $1\leq i\leq n$, then $\listt[A][t_i][n]\in D_G^A$;
		\item for each rule $A\to\lambda$ in $R$, $A[e]\in D_G^A$.\qedhere
	\end{enumerate}
\end{definition}
\begin{definition}\label{def.3.50}
	A tree language $L$ is said to be \uem{local} if, for some context-free grammar $G=(N,\Sigma,R,S)$ and some set of symbols $V\subseteq N\cup\Sigma$, $L=\bigcup\limits_{\alpha\in V}D_G^\alpha$.
\end{definition}
\begin{exercise}\label{exe.3.51}
	Show that each local tree language is recognizable.
\end{exercise}
Note that a local tree language is the set of all derivation trees of a context-free grammar which has a set of initial symbols (instead of one initial nonterminal).

The reason for the name ``local" is that such a tree language $L$ is determined by (1)~a~finite set of trees of height one, (2) a finite set of ``initial symbols", (3) a finite set of ``final symbols", and the requirement that $L$ consists of all trees $t$ such that each node of $t$ together with its direct descendants belongs to (1), the top label of $t$ belongs to (2), and the leaf labels of $t$ to (3).

We now show that the class of local tree languages is properly included in \RECOG.
\begin{theorem}\label{the.3.52}
	There are recognizable tree languages which are not local.
\end{theorem}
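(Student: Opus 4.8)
The plan is to produce one recognizable tree language that violates a closure property common to \emph{all} local tree languages. The relevant property is closure under \textbf{subtree exchange}: if $t,s\in L$, $v$ is a node of $t$ and $w$ a node of $s$ carrying the same label, then the tree obtained from $t$ by replacing the subtree rooted at $v$ with the subtree of $s$ rooted at $w$ again lies in $L$. This follows directly from Definitions \ref{def.3.49} and \ref{def.3.50} — it is exactly the content of the informal ``$(1)$--$(3)$'' description following Exercise \ref{exe.3.51}. Indeed, writing $L=\bigcup_{\alpha\in V}D_G^\alpha$ for a context-free grammar $G=(N,\Sigma,R,S)$, a tree belongs to $L$ iff its root label lies in $V$ and every $N$-labelled node together with its direct descendants matches a rule of $R$ (the leaves automatically lie in $\Sigma\cup\{e\}$); after a subtree exchange as above, every such ``fork'' still comes from $t$ or from $s$, the root label is unchanged, and the leaves are still labelled from $\Sigma\cup\{e\}$, so the new tree is again in $L$.

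For the witness I would take the ranked alphabet $\Sigma$ with $\Sigma_0=\{a,b\}$ and $\Sigma_2=\{f\}$, and set
\[
L=\{\,t\in T_\Sigma\mid \fyield(t)\text{ contains an even number of occurrences of }b\,\}.
\]
Since the string language $R\subseteq\{a,b\}^*$ of words containing an even number of $b$'s is regular and $L=\{t\in T_\Sigma\mid \fyield(t)\in R\}$, Theorem \ref{the.3.30} gives $L\in\RECOG$ immediately. (Alternatively one may simply exhibit the two-state deterministic bottom-up fta that counts $b$'s modulo $2$.)

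Now suppose, for contradiction, that $L$ is local. Consider $t=f[f[bb]a]$ and $s=f[f[ab]b]$; each has a yield with exactly two occurrences of $b$, so $t,s\in L$. Let $v$ be the left child of the root of $t$ and $w$ the left child of the root of $s$: both are labelled $f$, and the subtree of $s$ rooted at $w$ is $f[ab]$. Replacing the subtree of $t$ rooted at $v$ by $f[ab]$ yields $r=f[f[ab]a]$, whose yield contains a single $b$, so $r\notin L$; this contradicts the subtree-exchange property. Hence $L$ is recognizable but not local. I expect the only genuine work to be the careful verification of the subtree-exchange property from Definition \ref{def.3.49}: one must run through the (routine but slightly fussy) cases according to whether the exchanged node is the root, whether its label is a terminal or a nonterminal, and the role of $e$. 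Everything else is immediate. If one prefers not to formalise subtree exchange in full generality, the same contradiction is reached by observing that the two forks of $r$ occur in $t$ and in $s$, that its leaf labels $a,b$ and its root label $f$ occur in trees of $L$, and hence $r$ would belong to the local language determined by the forks, root labels and leaf labels of $L$ — which, if $L$ were local, would equal $L$.
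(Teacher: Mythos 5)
Your proof is correct, but it reaches the theorem by a different route than the paper. The paper's witness is the \emph{singleton} language $L=\{S[S[ba]S[ab]]\}$ over $\Sigma_0=\{a,b\}$, $\Sigma_2=\{S\}$: recognizability is immediate (finite languages are in \RECOG{}, Exercise \ref{exe.3.26}), and non-locality is a three-line argument — if $D_G^S=L$ then $G$ must contain the rules $S\to SS$, $S\to ba$ and $S\to ab$, whence $S[S[ab]S[ba]]\in D_G^S$, a contradiction. You instead isolate the general \emph{subtree-exchange} closure property of local languages (which is indeed a correct consequence of Definitions \ref{def.3.49} and \ref{def.3.50}, though the paper only states the underlying ``forks + root labels + leaf labels'' characterization informally and never proves it) and apply it to the infinite language of trees whose yield has an even number of $b$'s, whose recognizability you must import from Theorem \ref{the.3.30}. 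Your computations check out: $t=f[f[bb]a]$ and $s=f[f[ab]b]$ lie in $L$, the exchanged tree $f[f[ab]a]$ does not, and the rank discipline of Definition \ref{def.3.49} (terminals have rank $0$ only, nonterminals rank $\ge 1$ only) disposes of the case distinctions you flag as fussy. What each approach buys: the paper's is maximally economical and self-contained; yours costs a lemma but the lemma is reusable — it is exactly the tool one wants for Exercises \ref{exe.3.53} and \ref{exe.3.84}(i), and it makes transparent \emph{why} any recognizable language sensitive to global information (parity, in your case) must fail to be local. Note also that your parity example simultaneously fails to be recognizable by a deterministic top-down fta, so it already answers Exercise \ref{exe.3.53}, which the paper's witness does not (the paper remarks that its $L$ \emph{is} deterministically top-down recognizable).
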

\begin{proof}
	Let $\Sigma_0=\{a,b\}$ and $\Sigma_2=\{S\}$. Consider the tree language $L=\{ S[S[ba]S[ab]]\} $. Obviously $L$ is recognizable. Suppose that $L$ is local. Then there is a context-free grammar $G$ such that $D_G^S=L$. Thus $S\to SS$, $S\to ba$ and $S\to ab$ are rules of $G$. But then $ S[S[ab]S[ba]] \in L$. Contradiction.\footnote{Other examples are for instance $\{ S[T[a]T[b]] \}$ and $\{ S[S[a]] \}$.}
\end{proof}
Note that the recognizable tree language $L$ in the above proof can be recognized by a deterministic top-down fta. Note also that the tree language given in the proof of Theorem \ref{the.3.14} is local. Hence the local tree languages and the tree languages recognized by det.\ top-down fta are incomparable.
\begin{exercise}\label{exe.3.53}
	Find a recognizable tree language which is neither local nor recognizable by a det.\ top-down fta.
\end{exercise}
It is clear that, if $\Sigma_0=\{a,b\}$ and $\Sigma_2=\{S_1,S_2,S_3\}$, then $L'=\{ S_1[S_2[ba]S_3[ab]] \}$ is a local language. Hence the language $L$ in Theorem \ref{the.3.52} is the projection of the local language $L'$ (project $S_1$, $S_2$ and $S_3$ on $S$). We will show that this is true in general: each recognizable tree language is the projection of a local tree language. In fact we shall show a slightly stronger fact. To do this we define the second type of derivation tree of a context-free grammar, called ``rule tree".
\begin{definition}\label{def.3.54}
	Let $G=(N,\Sigma,R,S)$ be a context-free grammar. Let $\mbar{R}$ be any set of symbols in one-to-one correspondence with $R$, $\mbar{R}=\{\mbar{r}\mid r\in R\}$. Each element of $\mbar{R}$ is given a rank such that, if $r$ in $R$ is of the form $A\to w_0A_1w_1A_2w_2\cdots A_kw_k$ (for some $k\geq0$, $\li[A_i]\in N$ and $w_0,\li[w_i]\in\Sigma^*$), then $\mbar{r}\in\mbar{R}_k$. The \uem{set of rule trees} of $G$, denoted by $\mathit{RT}(G)$, is defined to be the tree language generated by the regular tree grammar $\mbar{G}=(N,\mbar{R},P,S)$, where $P$ is defined by
	\begin{enumerate}[label=(\roman*)]
		\item if $r=(A\to w_0A_1\cdots w_{k-1}A_kw_k)$, $k\geq1$, is in $R$,\\
		      then $A\to\listt[\mbar{r}][A_i]$ is in $P$;
		\item if $r=(A\to w_0)$ is in $R$,\\
		      then $A\to\mbar{r}$ is in $P$.\qedhere
	\end{enumerate}
\end{definition}
\begin{definition}\label{def.3.55}
	We shall say that a tree language $L$ is a \uem{rule tree language} if $L=\mathit{RT}(G)$ for some context-free grammar $G$.
\end{definition}
Thus, a rule tree is a derivation tree in which the nodes are labeled by the rules applied during the derivation. It should be obvious, that for each context-free grammar $G=(N,\Sigma,R,S)$ there is a one-to-one correspondence between the tree languages $\mathit{RT}(G)$ and $D_G^S$.
\begin{example}\label{exa.3.56}
	Consider Example \ref{exa.1.1}. For each rule $r$ in that example, let $(r)$ stand for a new symbol. The rule tree ``corresponding" to the derivation tree displayed in Example~\ref{exa.1.1} is
	\begin{figure}[H]
	\centering
	\begin{tikzpicture}[tree,level distance=1.3cm,outer sep=-5pt]
	\tikzstyle{level 1}=[sibling distance=6.5cm]
	\tikzstyle{level 2}=[sibling distance=4cm]
	\node {(S\to AD)}
		child {node {(A\to AA)}
			child {node {(A\to bAa)}
				child {node {(A\to \lambda)}}}
			child {node {(A\to aAb)}
				child {node {(A\to \lambda)}}}}
		child {node {(D\to Ddd)}
			child {node {(D\to d)}}};
	\end{tikzpicture}
	\end{figure}
\noindent
Note that this tree is obtained from the other one by viewing the building blocks (trees of height one) of the local tree as the nodes of the rule tree.
\end{example}
The following theorem shows the relationship of the rule languages to those defined before.
\begin{theorem}\label{the.3.57}
	The class of rule tree languages is properly included in the intersection of the class of local tree languages and the class of tree languages recognizable by a det.\ top-down fta.
\end{theorem}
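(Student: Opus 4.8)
The plan is to prove three things: (a) every rule tree language is local; (b) every rule tree language is recognizable by a deterministic top-down fta; and (c) there is a tree language that is \emph{both} local and recognizable by a deterministic top-down fta but is \emph{not} a rule tree language. Together (a) and (b) give the inclusion, and (c) makes it proper. For (a), fix a context-free grammar $G=(N,\Sigma,R,S)$ with associated rule tree grammar $\mbar{G}=(N,\mbar{R},P,S)$, so $\mathit{RT}(G)=L(\mbar{G})$ (Definitions~\ref{def.3.54} and~\ref{def.3.18}). I would introduce a context-free grammar $G'$ over the ranked alphabet $\mbar{R}$ whose nonterminals are the symbols $\mbar{r}$ of positive rank, whose terminals are the symbols $\mbar{r}$ of rank~$0$, and whose rules are: for every rule $r=(A\to w_0A_1w_1\cdots A_kw_k)$ of $G$ with $k\geq1$ and every choice of rules $r_1,\dots,r_k$ of $G$ with left-hand sides $A_1,\dots,A_k$ respectively, the rule $\mbar{r}\to\mbar{r_1}\cdots\mbar{r_k}$. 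A routine induction on trees then shows that, for each nonterminal $A$ of $G$, the union of the sets $D_{G'}^{\mbar{r}}$ over all rules $r$ of $G$ with left-hand side $A$ equals $\{t\mid A\xRightarrow[\mbar{G}]{*}t\}$. Taking $A=S$ and letting $V$ consist of those $\mbar{r}$ with $r$ of left-hand side $S$, one gets $\mathit{RT}(G)=\bigcup_{\alpha\in V}D_{G'}^{\alpha}$, so $\mathit{RT}(G)$ is local (Definitions~\ref{def.3.49},~\ref{def.3.50}).

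For (b), I would build the deterministic top-down fta $M=(N\cup\{\bot\},\mbar{R},\delta,S,F)$ with a fresh sink state $\bot$ (Definition~\ref{def.3.10}): for each $\mbar{r}$ of rank $k\geq1$, say $r=(A\to w_0A_1w_1\cdots A_kw_k)$, put $\delta_{\mbar{r}}(A)=(A_1,\dots,A_k)$ and $\delta_{\mbar{r}}(X)=(\bot,\dots,\bot)$ for every other state $X$; and for each $\mbar{r}$ of rank $0$, say $r=(A\to w_0)$, put $F_{\mbar{r}}=\{A\}$. An easy induction on $t$ gives $\deltd(t)=\{A\in N\mid A\xRightarrow[\mbar{G}]{*}t\}$ for every $t\in T_{\mbar{R}}$, so in particular $\bot$ never occurs in $\deltd(t)$; hence $L(M)=\{t\mid S\in\deltd(t)\}=L(\mbar{G})=\mathit{RT}(G)$.

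For (c), take the ranked alphabet with $\Sigma_0=\{a\}$, $\Sigma_2=\{\sigma\}$, and put $L=\{a,\ \sigma[aa]\}$. Then $L$ is local: the context-free grammar $G'$ with single nonterminal $\sigma$, terminal $a$, one rule $\sigma\to aa$, and initial nonterminal $\sigma$, has $D_{G'}^{\sigma}=\{\sigma[aa]\}$ and $D_{G'}^{a}=\{a\}$, so $L=\bigcup_{\alpha\in\{\sigma,a\}}D_{G'}^{\alpha}$. And $L$ is recognized by the deterministic top-down fta with states $q_0,q_1,\bot$, initial state $q_0$, $F_a=\{q_0,q_1\}$, and $\delta_{\sigma}(q_0)=(q_1,q_1)$, $\delta_{\sigma}(q_1)=\delta_{\sigma}(\bot)=(\bot,\bot)$: a short computation gives $\deltd(a)=\{q_0,q_1\}$, $\deltd(\sigma[aa])=\{q_0\}$, and $\deltd(t)=\emptyset$ for every other tree $t$ of positive height, so $L(M)=L$. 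It remains to see $L$ is not a rule tree language. Suppose $L=\mathit{RT}(G)$ for some context-free grammar $G=(N,\Sigma,R,S_0)$. Since the brackets produced during a derivation of $\mbar{G}$ are never deleted, the root label of a rule tree is already fixed by the first step of its derivation, and a bracketless tree can only be produced by a single case-(ii) rule of Definition~\ref{def.3.54}. Hence the derivation of $a$ forces a $P$-rule $S_0\to\mbar{r_a}$ with $\mbar{r_a}=a$, coming from a rule $r_a=(S_0\to w_0)$ of $G$; and the derivation of $\sigma[aa]$ forces a $P$-rule $S_0\to\sigma[C_1C_2]$ together with $C_1\xRightarrow[\mbar{G}]{*}a$ and $C_2\xRightarrow[\mbar{G}]{*}a$. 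Because $r\mapsto\mbar{r}$ is a bijection, each of these last two derivations forces $C_1=C_2=S_0$, so $S_0\to\sigma[S_0S_0]$ lies in $P$; but then $\sigma[\sigma[aa]a]\in L(\mbar{G})=\mathit{RT}(G)=L$, which is absurd.

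I expect step (c) — and within it the non-membership claim — to be the main obstacle: one must pin down precisely that the bracket symbols behave as terminals of the underlying context-free grammar $\mbar{G}$ (so they fix the root label and cannot be erased) and then exploit the injectivity of $r\mapsto\mbar{r}$ to collapse $C_1$ and $C_2$ to $S_0$. A minor technicality is that $G$ may carry rules whose symbols $\mbar{r}$ occur in no tree of $L$; such rules are useless and may be deleted without changing $\mathit{RT}(G)$, or the argument may simply be run using only the rules that actually occur. Parts (a) and (b) are routine structural inductions.
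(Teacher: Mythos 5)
Your proof is correct. Parts (a) and (b) coincide with the paper's own argument: the grammar $G'$ whose rules are $\mbar{r}\to\mbar{r}_1\cdots\mbar{r}_k$ for all compatible choices of $r_1,\dots,r_k$ is exactly the paper's $G_1$, and your top-down automaton with state set $N\cup\{\bot\}$, transitions $\delta_{\mbar{r}}(A)=(A_1,\dots,A_k)$ and final sets $F_{\mbar{r}}=\{A\}$ is the paper's $M$ with $\bot$ in place of its sink state $W$. For properness (c) you choose a different witness: the two-element language $\{a,\sigma[aa]\}$, whereas the paper uses the infinite local language $D_H^S$ for $H$ with rules $S\to SS$, $S\to aS$, $S\to Sb$, $S\to ab$ and derives the forbidden rule tree $S[ba]$. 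The underlying mechanism is the same in both cases — the bijection $r\mapsto\mbar{r}$ pins down which grammar rules must exist, and closing under recombination of those rules produces a tree outside the language — but your finite example is arguably cleaner: the verification that it is local and deterministically top-down recognizable is a two-line computation, and the extra tree $\sigma[\sigma[aa]a]$ falls out immediately once $C_1=C_2=S_0$ is forced. Your observation that a bracketless symbol can only arise from a single case-(ii) rule of Definition~\ref{def.3.54}, because the brackets are terminals of $\mbar{G}$ and are never erased, is exactly the point that needs to be made explicit, and you make it.
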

\begin{proof}
	We first show inclusion in the class of local tree languages. Let $G=(N,\Sigma,R,S)$ be a context-free grammar, and $\mbar{R}=\{\mbar{r}\mid r\in R\}$. Consider the context-free grammar $G_1=(\mbar{R}-\mbar{R}_0,\mbar{R}_0,P,-)$ where $P$ is defined as follows: 
if $r=(A\to w_0\li[A_iw_i][k][][\cdots])$, $k\geq1$, is in $R$, 
then $\mbar{r}\to\mbar{r}_1\cdots\mbar{r}_k$ is in $P$ for all rules $\li[r_i]\in R$ such that the left hand side of $r_i$ is $A_i$ ($1\leq i\leq k$). Let $\mbar{V}=\{\mbar{r}\mid r\in R\text{ has left hand side }S\}$. Then $\mathit{RT}(G)=\bigcup\limits_{\alpha\in\mbar{V}}D_{G_1}^\alpha$, and hence $\mathit{RT}(G)$ is local.
	
	To show that $\mathit{RT}(G)$ can be recognized by a deterministic top-down fta, consider $M=(Q,\mbar{R},\delta,q_0,F)$, where $Q=N\cup \{W\}$, $q_0=S$, for $\mbar{r}\in\mbar{R}_0$, $F_{\mbar{r}}$ consists of the left hand side of $r$ only, and for $\mbar{r}\in\mbar{R}_k$, $r$ of the form $A\to w_0A_1w_1A_2w_2\cdots A_kw_k$, $\delta_{\mbar{r}}(A)=(\li[A_i])$ and $\delta_{\mbar{r}}(B)=(\li[W])$ for all other $B\in Q$. Then $L(M)=\mathit{RT}(G)$.
	
	To show proper inclusion, let $H$ be the context-free grammar with rules $S\to SS$, $S\to aS$, $S\to Sb$ and $S\to ab$. Then $D_H^S$ is a local tree language. It is easy to see that $D_H^S$ can be recognized by a det.\ top-down fta. Now suppose that $D_H^S=\mathit{RT}(G)$ for some context-free grammar $G$. Since $S$ has rank 2 and since the configuration
	\raisebox{-0.5\height}{\begin{tikzpicture}[lighttree]
	\node {S}
		child {node {S}}
		child {node {S}};
	\end{tikzpicture}}
	occurs in $D_H^S$, $S$ is the name of a rule of $G$ of the form $A\to w_0Aw_1Aw_2$. Now, since $a$ and $b$ are of rank 0 and since
	\raisebox{-0.5\height}{\begin{tikzpicture}[lighttree]
	\node {S}
		child {node {a}}
		child {node {b}};
	\end{tikzpicture}}
	is in $D_H^S$, $a$ and $b$ are names of rules $A\to w_3$ and $A\to w_4$. Hence $S[ba]$ is a rule tree of $G$. Contradiction.
\end{proof}
We now characterize the recognizable tree languages in terms of rule tree languages.

\begin{theorem}\label{the.3.58}
	Every recognizable tree language is the projection of a rule tree language.
\end{theorem}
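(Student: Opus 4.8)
The plan is to produce $L$ from a regular tree grammar in normal form, read off a context-free grammar whose rule trees have the same shape as the trees of $L$, and then recover $L$ by a projection that forgets which rule a node came from but remembers the terminal that rule outputs.

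First, by Theorems \ref{the.3.25} and \ref{the.3.22} pick a regular tree grammar $G'=(N,\Sigma,R,S)$ in normal form with $L(G')=L$; so every rule of $R$ is either $\rho\colon A\to a[B_1\cdots B_k]$ with $k\geq1$, $a\in\Sigma_k$, $A,B_i\in N$, or $\rho\colon A\to b$ with $b\in\Sigma_0$. Define a context-free grammar $G=(N,\Sigma,R',S)$, where $\Sigma$ is regarded as an ordinary (unranked) terminal alphabet, by turning each rule $\rho$ of $R$ into the rule $r_\rho$ of $R'$ as follows: $A\to a[B_1\cdots B_k]$ becomes $A\to aB_1\cdots B_k$, and $A\to b$ becomes $A\to b$. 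Prefixing the produced terminal is exactly what makes $\rho\mapsto r_\rho$ injective: the left-hand side together with the right-hand side of $r_\rho$ recovers $a,B_1,\dots,B_k$ (respectively $b$), and rules of the two shapes differ in length; hence $\rho\mapsto r_\rho$ is a bijection of $R$ onto $R'$. Moreover $r_\rho$ has exactly $k$ nonterminal occurrences in its right-hand side when $\rho$ outputs a symbol of $\Sigma_k$, and $0$ when it outputs a symbol of $\Sigma_0$, so in the ranked alphabet $\overline{R'}$ of Definition \ref{def.3.54} one has $\overline{r_\rho}\in\overline{R'}_k$ iff the symbol output by $\rho$ lies in $\Sigma_k$. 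Accordingly define the projection $\pi=\{\pi_k\}$ by $\pi_k(\overline{r_\rho})=\{\text{the symbol output by }\rho\}$ and extend it to trees as in Definition \ref{def.3.47}, setting $\pi(B)=\{B\}$ for $B\in N$.

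It remains to verify $\pi(\mathit{RT}(G))=L$. By Definition \ref{def.3.54}, $\mathit{RT}(G)=L(\overline G)$ for the regular tree grammar $\overline G=(N,\overline{R'},P,S)$ in which the rule $\rho=A\to a[B_1\cdots B_k]$ contributes $A\to\overline{r_\rho}[B_1\cdots B_k]$ and the rule $\rho=A\to b$ contributes $A\to\overline{r_\rho}$. Thus $\rho\mapsto(\text{its rule in }P)$ is a bijection $R\to P$, and applying $\pi$ to the right-hand side of the $P$-rule (leaving the $B_i$ fixed) returns the right-hand side of $\rho$. Since a projection is a node relabeling, it commutes with replacing a nonterminal at a leaf by a subtree; hence a single $\overline G$-derivation step maps under $\pi$ to the corresponding $G'$-step, and conversely every $G'$-step lifts, via the rule bijection, to an $\overline G$-step. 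An induction on derivation length then shows that for each $C\in N$ the set of terminal trees $G'$-derivable from $C$ equals the set of $\pi$-images of terminal trees $\overline G$-derivable from $C$; taking $C=S$ gives $\pi(\mathit{RT}(G))=\pi(L(\overline G))=L(G')=L$. The only nonroutine step is this last induction (the grammar correspondence); the one point to watch is the prefixing device, since without it two rules of $G'$ that output different symbols over the same sequence of nonterminals would collapse into a single context-free rule, and the projection could no longer recover both trees.
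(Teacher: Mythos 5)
Your proof is correct and follows essentially the same route as the paper: take a normal-form regular tree grammar, replace each rule's output symbol by a fresh rule-name of the same rank, and recover $L$ by the projection sending that rule-name back to the symbol it outputs. The only (immaterial) difference is that the paper obtains the rule tree language by observing that $G$ itself is already a context-free grammar over $\Sigma\cup\{[\,,]\}$ and that Definition \ref{def.3.54} applied to it yields exactly the constructed $\mbar{G}$, whereas you build an auxiliary bracket-free context-free grammar with rules $A\to aB_1\cdots B_k$ and verify the rule bijection and the derivation correspondence explicitly.
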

\vspace*{-1cm}
\begin{proof}
	Let $G=(N,\Sigma,R,S)$ be a regular tree grammar in normal form. We shall define a regular tree grammar $\mbar{G}$ and a projection $p$ such that $L(G)=p(L(\mbar{G}))$ and $L(\mbar{G})$ is a rule tree language.
	$\mbar{G}$ will simulate $G$, but $\mbar{G}$ will put all information about the rules, applied during the derivation of a tree $t$, into the tree itself. This is a useful technique.
	
	Let $\mbar{R}$ be a set of symbols in one-to-one correspondence with $R$, and let $\mbar{G}=(N,\mbar{R},P,S)$. The ranking of $\mbar{R}$, the set $P$ of rules and the projection $p$ are defined simultaneously as follows:
	\begin{enumerate}[label=(\roman*)]
		\item if $r\in R$ is the rule $A\to\listt[a][B_i]$, then $\mbar{r}$ has rank $k$, $A\to\listt[\mbar{r}][B_i]$ is in $P$ and $p_k(\mbar{r})=a$;
		\item if $r\in R$ is the rule $A\to a$, then $\mbar r$ has rank 0, $A\to\mbar{r}$ is in $P$ and $p_0(\mbar{r})=a$.
	\end{enumerate}
	It is obvious that $p(L(\mbar{G}))=L(G)$. Now note that $G$ may be viewed as a context-free grammar (over $\Sigma\cup\{[\,,]\}$). In fact, $\mbar{G}$ is the same as the one constructed in Definition~\ref{def.3.54}! Thus $L(\mbar{G})$ is a rule tree language.
\end{proof}
Since \RECOG{} is closed under projections (Theorem \ref{the.3.48}), we now easily obtain the following corollary.
\begin{corollary}\label{cor.3.59}
	For each tree language $L$ the following four statements are equivalent:
	\begin{enumerate}[label=(\roman*)]
		\item $L$ is recognizable
		\item $L$ is the projection of a rule tree language
		      \item\label{cor.3.59ii} $L$ is the projection of a local tree language
		\item $L$ is the projection of a tree language recognizable by a det.\ top-down fta.\qed
	\end{enumerate}
\end{corollary}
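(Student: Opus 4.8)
The plan is to prove the four statements equivalent by a short cycle of implications, each of which reduces to a result already in hand. First I would note that (i)~$\Rightarrow$~(ii) is precisely Theorem \ref{the.3.58}: every recognizable tree language is the projection of a rule tree language.

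Next, for (ii)~$\Rightarrow$~(iii) and (ii)~$\Rightarrow$~(iv) I would invoke Theorem \ref{the.3.57}, which states that every rule tree language is both local and recognizable by a deterministic top-down fta. Hence if $L=p(M)$ for a rule tree language $M$ and a projection $p$, then the very same $p$ exhibits $L$ as the projection of a local tree language (this is (iii)) and as the projection of a det.\ top-down recognizable tree language (this is (iv)).

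It then remains to close the cycle with (iii)~$\Rightarrow$~(i) and (iv)~$\Rightarrow$~(i). For this I would observe that a projection is a special case of a relabeling in the sense of Definition \ref{def.3.47} — the case in which each $r_k(a)$ is a singleton — so Theorem \ref{the.3.48} applies and tells us that the projection of any recognizable tree language is again recognizable. Since every local tree language is recognizable by Exercise \ref{exe.3.51}, and every tree language recognized by a det.\ top-down fta lies in \RECOG{} by definition, both (iii) and (iv) entail (i).

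Combining the implications (i)~$\Rightarrow$~(ii)~$\Rightarrow$~(iii)~$\Rightarrow$~(i) together with (ii)~$\Rightarrow$~(iv)~$\Rightarrow$~(i) yields the equivalence of all four statements. I do not expect a genuine obstacle here; the only points needing a moment's attention are recognizing that projections satisfy the hypothesis of Theorem \ref{the.3.48}, and remembering to cite Exercise \ref{exe.3.51} for the recognizability of local tree languages — everything substantive was already done in Theorems \ref{the.3.57} and \ref{the.3.58}.
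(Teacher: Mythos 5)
Your proposal is correct and follows exactly the route the paper intends: Theorem \ref{the.3.58} for (i)$\Rightarrow$(ii), Theorem \ref{the.3.57} for (ii)$\Rightarrow$(iii) and (ii)$\Rightarrow$(iv), and closure of \RECOG{} under projections (Theorem \ref{the.3.48}) together with Exercise \ref{exe.3.51} to return to (i). No gaps.
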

\begin{exercise}\label{exe.3.60}
	Show that, in the case of local tree languages, the projection involved in the above corollary \ref{cor.3.59ii} can be taken as the identity on symbols of rank 0 (thus the yields are preserved).
\end{exercise}
As a final operation on trees we consider the notion of tree homomorphism. For strings, a homomorphism $h$ associates a string $h(a)$ with each symbol $a$ of the alphabet, and transforms a string $\li*[a_i][n][][\cdots]$ into the string $h(a_1)\cdot h(a_2)\cdots h(a_n)$.
Generalizing this to trees, a tree homomorphism $h$ associates a tree $h(a)$ with each symbol $a$ of the ranked alphabet (actually, one tree for each rank). The application of $h$ to a tree $t$ consists in replacing each symbol $a$ of $t$ by the tree $h(a)$, and tree concatenating all the resulting trees. Note that, if $a$ is of rank $k$, then $h(a)$ should be tree concatenated with $k$ other trees; therefore, since tree concatenation happens \uem{at} symbols of rank 0, the tree $h(a)$ should contain at least $k$ different symbols of rank 0. Since, in general, the number of symbols of rank 0 in some alphabet may be less than the rank of some other symbol, we allow for the use of an arbitrary number of auxiliary symbols of rank 0, called ``variables" (recall the use of nonterminals as auxiliary symbols of rank 0 in Theorem \ref{the.3.43}).
\begin{definition}\label{def.3.61}
	Let $x_1,x_2,x_3,\dots$ be an infinite sequence of different symbols, called \uem{variables}. Let $\X=\{x_1,x_2,x_3,\dots\}$, for $k\geq1$, $\X_k=\{\li*[x_i]\}$, and $\X_0=\emptyset$. Elements of $\X$ will also be denoted by $x$, $y$ and $z$.
\end{definition}
\begin{definition}\label{def.3.62}
	Let $\Sigma$ and $\Delta$ be ranked alphabets. A \uem{tree homomorphism} $h$ is a family $\{h_k\}_{k\geq0}$ of mappings $h_k:\Sigma_k\to T_\Delta(\X_k)$.
	A tree homomorphism determines a mapping $h:T_\Sigma\to T_\Delta$ as follows:
	\begin{enumerate}[label=(\roman*)]
		\item for $a\in\Sigma_0$, $h(a)=h_0(a)$;
		\item for $k\geq1$, $a\in\Sigma_k$ and $\li\in T_\Sigma$,\\[1mm]
		      $h(\listt)=h_k(a)\langle\li[x_i\gets h(t_i)]\rangle$.
	\end{enumerate}
	In the particular case that, for each $a\in\Sigma_k$, $h_k(a)$ does not contain two occurrences of the same $x_i$ ($i=1,2,3,\dots$), $h$ is called a \uem{linear} tree homomorphism.
\end{definition}
A general tree homomorphism $h$ has the abilities of \uwave{deleting} ($h_k(a)$ does not contain $x_i$), \uwave{copying} ($h_k(a)$ contains $\geq2$ occurrences of $x_i$) and \uwave{permuting} (if $i<j$, then $x_j$ may occur before $x_i$ in $h_k(a)$) subtrees. Moreover, at each node, it can add pieces of tree (the frontier of $h_k(a)$ need not to be an element of $\X^*$). A linear homomorphism cannot copy. Note that, to obtain the proper generalization of the monadic case, one should also forbid deletion, and require that $h_0(a)=a$ for all $a\in\Sigma_0$ (moreover, no pieces of tree should be added).
\begin{exercise}\label{exe.3.63}
	Let $\Sigma_0=\{a,b\}$ and $\Sigma_2=\{p\}$. Consider the tree homomorphism $h$ such that $h_0(a)=a$, $h_0(b)=b$ and $h_2(p)= p[x_2x_1] $. Show that, for every $t$ in $T_\Sigma$, $\fyield(h(t))$ is the mirror image of $\fyield(t)$.
\end{exercise}
It is easy to see that recognizable tree languages are not closed under arbitrary tree homomorphisms; they are closed under linear tree homomorphisms. This is shown in the following two theorems.
\begin{theorem} \label{the.3.64}
	\RECOG{} is not closed under arbitrary tree homomorphisms.
\end{theorem}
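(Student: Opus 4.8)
The plan is to exhibit one recognizable tree language $L$ together with a (necessarily non-linear) tree homomorphism $h$ such that $h(L)\notin\RECOG$, and to detect the non-recognizability of $h(L)$ by passing to yields via Theorem~\ref{the.3.28}. Concretely, I would take the simplest nontrivial recognizable language: let $\Sigma_0=\{a\}$ and $\Sigma_1=\{g\}$, so that $T_\Sigma$ consists exactly of the trees $g[g[\cdots[a]\cdots]]$ with $n\geq 0$ occurrences of $g$; this is recognizable, e.g.\ it is $L(G)$ for the regular tree grammar with rules $S\to g[S]$ and $S\to a$. For the target, let $\Delta_0=\{a\}$ and $\Delta_2=\{b\}$, and define $h$ by $h_0(a)=a$ and $h_1(g)=b[x_1x_1]$. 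The point is that this $h$ is not linear: it duplicates the unique subtree of $g$, and that copying ability is exactly what breaks closure.

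The key step is an easy induction on $n$, using Definition~\ref{def.3.62}, showing that $h$ maps the tree with $n$ occurrences of $g$ to the perfect (fully balanced) binary tree of height $n$ over $\Delta$, so that its yield is $a^{2^n}$. Hence $\fyield(h(T_\Sigma))=\{a^{2^n}\mid n\geq 0\}$. This language is not context-free, by a standard application of the context-free pumping lemma: given a pumping length $p$, pump some $a^{2^n}$ with $2^n>p$ once upward; since the pumped portion has length at least $1$ and at most $p<2^n$, the result has length strictly between $2^n$ and $2^{n+1}$ and is therefore not a power of $2$.

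To finish, if $h(T_\Sigma)$ were recognizable, then by Theorem~\ref{the.3.28} its yield would be context-free, contradicting the previous paragraph; hence $\RECOG$ is not closed under arbitrary tree homomorphisms, and the essential use of non-linearity is visible, in contrast with the linear case. I do not anticipate a real obstacle: the only places needing a little care are the inductive verification that the image tree has yield $a^{2^n}$ (spelling out $h(g[t])=h_1(g)\langle x_1\gets h(t)\rangle=b[\,h(t)\,h(t)\,]$), and invoking the well-known fact that $\{a^{2^n}\mid n\geq 0\}\notin\CFL$.
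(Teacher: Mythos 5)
Your proposal is correct and is essentially the paper's own proof: the same homomorphism $h_1(\cdot)=b[x_1x_1]$, $h_0(a)=a$ applied to the monadic recognizable language $T_\Sigma$, yielding $\{a^{2^n}\mid n\geq 0\}$ as the yield of the image, and the same appeal to Theorem~\ref{the.3.28} to conclude non-recognizability. The only difference is that you spell out the pumping-lemma argument for $\{a^{2^n}\}\notin\CFL$, which the paper simply asserts.
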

\begin{proof}
	Let $\Sigma_0=\{a\}$ and $\Sigma_1=\{b\}$. Let $h$ be the tree homomorphism defined by $h_0(a)=a$ and $h_1(b)= b[x_1x_1] $. Consider the recognizable tree language $T_\Sigma$. It is easy to prove that $\fyield(h(T_\Sigma))=\{a^{2^n}\mid n\geq0\}$. Since $\{a^{2^n}\mid n\geq0\}$ is not a context-free language, Theorem \ref{the.3.28} implies that $h(T_\Sigma)$ is not recognizable.
\end{proof}
\begin{theorem}\label{the.3.65}
	\RECOG{} is closed under linear tree homomorphisms.
\end{theorem}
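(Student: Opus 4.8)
The plan is to argue with regular tree grammars, in the manner of the proof that regular string languages are closed under homomorphism, and then invoke Theorem~\ref{the.3.25}. Let $L \in \RECOG$ be a tree language over $\Sigma$ and let $h = \{h_k\}_{k \geq 0}$ be a linear tree homomorphism from $T_\Sigma$ into $T_\Delta$. Take a regular tree grammar $G = (N, \Sigma, R, S)$ in normal form (Theorem~\ref{the.3.22}) with $L(G) = L$, and, by the usual removal of useless nonterminals (a standard step from context-free grammar theory), assume in addition that $G$ is \emph{reduced}, i.e.\ $L_G(B) := \{ t \in T_\Sigma \mid B \xRightarrow[G]{*} t \} \neq \emptyset$ for every $B \in N$ (if $S$ is useless, then $L = \emptyset$ and there is nothing to prove). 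Assume also $N \cap \Delta = \emptyset$. Now build $G_1 = (N, \Delta, R_1, S)$, where $R_1$ is obtained from $R$ by replacing each rule $A \to a[B_1 \cdots B_k]$ ($a \in \Sigma_k$) by $A \to h_k(a)\langle x_1 \gets B_1, \dots, x_k \gets B_k \rangle$, and each rule $A \to b$ ($b \in \Sigma_0$) by $A \to h_0(b)$; since $h_k(a)\langle x_1 \gets B_1,\dots,x_k\gets B_k \rangle \in T_\Delta(N)$ and $h_0(b) \in T_\Delta$ (recall $\X_0 = \emptyset$), the grammar $G_1$ is a legitimate regular tree grammar.

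It remains to check that $L(G_1) = h(L)$, for then $h(L) \in \RECOG$ by Theorem~\ref{the.3.25}. I would establish the stronger claim $L_{G_1}(A) = h(L_G(A))$ for all $A \in N$ simultaneously, by induction on derivation length (with $L_{G_1}(A)$ defined like $L_G(A)$ but in $G_1$). The step that makes the two grammars correspond rule for rule is that $h$, being linear, copies no subtree, so each $B_i$ occurs at most once in the right-hand side $h_k(a)\langle x_1 \gets B_1,\dots,x_k \gets B_k \rangle$ of the new rule — and occurs there not at all exactly when $h$ deletes $x_i$. Using this together with the fact that $h$ commutes with tree concatenation (immediate from Definitions~\ref{def.2.24} and~\ref{def.3.62}), the inductive step goes through in both directions: a $G$-derivation beginning $A \Rightarrow a[t_1 \cdots t_k]$ (with $t_i \in L_G(B_i)$) is matched by a $G_1$-derivation beginning $A \Rightarrow h_k(a)\langle x_1 \gets h(t_1),\dots,x_k\gets h(t_k) \rangle = h(a[t_1 \cdots t_k])$, the sub-derivations from the $B_i$ being translated by the induction hypothesis. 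The sole subtlety is in the inclusion $L_{G_1}(A) \subseteq h(L_G(A))$: when $h$ deletes $x_i$, the new rule carries no copy of $B_i$, so to recover a $G$-preimage of a tree $s \in L_{G_1}(A)$ one must refill each such deleted $B_i$ with \emph{some} tree in $L_G(B_i)$, which is available precisely because $G$ has been made reduced.

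The main obstacle is exactly this interaction of deletion with the construction. If a nonterminal $B_i$ deleted by $h$ generated no tree at all, the rule $A \to h_k(a)\langle x_1 \gets B_1,\dots,x_k\gets B_k\rangle$ (in which $B_i$ does not occur) would still fire in $G_1$ although the rule $A \to a[B_1 \cdots B_k]$ is useless in $G$, so $L(G_1)$ could strictly contain $h(L)$; passing to a reduced grammar at the start cancels this mismatch. One must also note that linearity is essential: if some $h_k(a)$ had two occurrences of a variable, the two copies of the associated $B_i$ in $G_1$ could be rewritten to different trees while $h$ forces them to coincide, so $G_1$ would over-generate — in line with Theorem~\ref{the.3.64}, which exhibits a (necessarily non-linear) homomorphism for which the analogous closure fails. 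The remaining points — well-formedness of $G_1$, the base cases of the induction coming from rules $A \to b$, and the verification that $h$ respects tree concatenation — are routine.
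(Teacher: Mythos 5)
Your proposal is correct and follows essentially the same route as the paper: the same construction (applying $h$, extended by $h_0(B)=B$, to the right-hand sides of a normal-form, reduced regular tree grammar), the same two-directional induction, and the same use of linearity and reducedness to handle the delicate converse inclusion in the presence of deletion. Nothing to add.
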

\begin{proof}
	The idea of the proof is obvious. Given some regular tree grammar generating a recognizable tree language, we change the right hand sides of all rules into their homomorphic images. The resulting grammar generates homomorphic images of ``{}sentential forms" of the original grammar (note that this wouldn't work in the nonlinear case). The only thing we should worry about is that the homomorphism may be deleting. In that case superfluous rules in the original grammar might be transformed into useful rules in the new grammar. This is solved by requiring that the original grammar does not contain any superfluous rule. The formal construction is as follows.
	
	Let $G=(N,\Sigma,R,S)$ be a regular tree grammar in normal form, such that for each nonterminal $A$ there is at least one $t\in T_\Sigma$ such that $A\xRightarrow{*}t$ (since $G$ is a context-free grammar, it is well known that each regular tree grammar has an equivalent one satisfying this condition). Let $h$ be a homomorphism from $T_\Sigma$ into $T_\Delta$, for some ranked alphabet $\Delta$. Extend $h$ to trees in $T_\Sigma(N)$ by defining $h_0(A)=A$ for all $A$ in $N$. Thus $h$ is now a homomorphism from $T_\Sigma(N)$ into $T_\Delta(N)$. Construct the regular tree grammar $H=(N,\Delta,\mbar{R},S)$, where $\mbar{R}=\{A\to h(t)\mid A\to t\text{ is in }R\}$.
	
	To show that $L(H)=h(L(G))$ we shall prove that
	\begin{enumerate}[label=(\arabic*)]
		\item if $A\xRightarrow[G]{*}t$, then $A\xRightarrow[H]{*}h(t)$ \hfill{($t\in T_\Sigma$); and}
		\item if $A\xRightarrow[H]{*}s$, then there exists $t$ such that\\
		\hphantom{ if $A\xRightarrow[H]{*}s$,}$h(t)=s$ and $A\xRightarrow[G]{*}t$ \hfill{($s\in T_\Delta$, $t\in T_\Sigma$).}
	\end{enumerate}
	Let us give the straightforward proofs as detailed as possible.
	
	(1) The proof is by induction on the number of steps in the derivation $A\xRightarrow[G]{*}t$. If, in one step, $A\xRightarrow[G]{}t$, then $t\in\Sigma_0$ and $A\to t$ is in $R$. Hence $A\to h(t)$ is in $\mbar{R}$, and so $A\xRightarrow[H]{*}h(t)$. Now suppose that the first step in $A\xRightarrow[G]{*}t$ results from the application of a rule of the form $A\to\listt[a][B_i]$. Then $A\xRightarrow[G]{*}t$ is of the form $A\xRightarrow[G]{}\listt[a][B_i]\xRightarrow[G]{*}t$. It follows that $t$ is of the form $\listt$ such that $B_i\xRightarrow[G]{*}t_i$ for all $1\leq i\leq k$. Hence, by induction, $B_i\xRightarrow[H]{*}h(t_i)$. Now, since the rule $A\to h_k(a)\langle\li[x_i\gets B_i]\rangle$ is in $\mbar{R}$ by definition, we have (prove this!)
	\begin{align*}
		A\xRightarrow[H]{} h_k(a)\langle\li[x_i\gets B_i]\rangle\xRightarrow[H]{*} & \;h_k(a)\langle\li[x_i\gets h(t_i)]\rangle \\
		                                                                  & =h(\listt)=h(t).                               
	\end{align*}
	
	(2) The proof is by induction on the number of steps in $A\xRightarrow[H]{*}s$. For zero steps the statement is trivially true. Suppose that the first step in $A\xRightarrow[H]{*}s$ results from the application of a rule $A\to h_0(a)$ for some $a$ in $\Sigma_0$. Then $h(a)=s$ and $A\xRightarrow[G]{*}a$. Now suppose that the first step results from the application of a rule $A\to h_k(a)\langle\li[x_i\gets B_i]\rangle$, where $A\to\listt[a][B_i]$ is a rule of $G$. Then the derivation is $A\xRightarrow[H]{}h_k(a)\langle\li[x_i\gets B_i]\rangle\xRightarrow[H]{*}s$. At this point we need both linearity of $h$ (to be sure that each $B_i$ in $h_k(a)\langle\li[x_i\gets B_i]\rangle$ produces at most one subtree of $s$) and the condition on $G$ (to deal with deletion: since $h_k(a)\langle\li[x_i\gets B_i]\rangle$ need not contain an occurrence of $B_i$, we need an arbitrary tree generated, in $G$, by $B_i$ to be able to construct the tree $t$ such that $h(t)=s$). There exist trees $\li[s_i]$ in $T_\Delta$ such that $s=h_k(a)\langle\li[x_i\gets s_i]\rangle$ and
	\begin{enumerate}[label=(\roman*)]
		\item if $x_i$ occurs in $h_k(a)$, then $B_i\xRightarrow[H]{*}s_i$;
		\item if $x_i$ does not occur in $h_k(a)$, then $s_i=h(t_i)$ for some arbitrary $t_i$ such that $B_i\xRightarrow[G]{*}t_i$.
	\end{enumerate}
	Hence, by induction and (ii), there are trees $\li$ such that $h(t_i)=s_i$ and $B_i\xRightarrow[G]{*}t_i$ for all $1\leq i\leq k$. Consequently, if $t=\listt$, then $A\xRightarrow[G]{}\listt[a][B_i]\xRightarrow[G]{*}\listt=t$, and $h(t)=h_k(a)\langle\li[x_i\gets h(t_i)]\rangle=s$.
	
	This proves the theorem.
\end{proof}
\begin{exercise}\label{exe.3.66}
	In the string case one can also prove that the regular languages are closed under homomorphisms by using the Kleene characterization theorem. Give an alternative proof of Theorem \ref{the.3.65} by using Theorem \ref{the.3.43} (use the fact, which is implicit in the proof of that theorem, that each regular tree language over the ranked alphabet $\Sigma$ can be built up from finite tree languages using operations $\cup$, $\conc_A$ and $^{*A}$, \dashuline{where $A\notin\Sigma$}).
\end{exercise}
As an indication how one could use theorems like Theorem \ref{the.3.65}, we prove the following theorem, which is (slightly!) stronger than Theorem \ref{the.3.28}.
\begin{theorem}\label{the.3.67}
	Each context-free language over $\Delta$ is the yield of a recognizable tree language over $\Sigma$, where $\Sigma_0=\Delta\cup\{e\}$ and $\Sigma_2=\{*\}$.
\end{theorem}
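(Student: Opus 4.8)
The plan is to derive this from Theorem~\ref{the.3.28} together with closure of $\RECOG$ under linear tree homomorphisms (Theorem~\ref{the.3.65}). The key observation is that the construction in the second half of the proof of Theorem~\ref{the.3.28} already produces, from a context-free grammar $G=(N,\Delta,R,S)$ with $L(G)=L$, a recognizable tree language $U$ over a ranked alphabet $\Gamma$ of a very restricted shape: $\Gamma_0=\Delta\cup\{e\}$ and, for each $k\geq1$, either $\Gamma_k=\emptyset$ or $\Gamma_k=\{*\}$ (the latter for only finitely many $k$), and moreover $\fyield(U)=L$. Thus the only thing separating us from the statement is that $*$ may occur with ranks other than $2$.

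To remove this defect I would define a tree homomorphism $h:T_\Gamma\to T_\Sigma$, where $\Sigma_0=\Delta\cup\{e\}$ and $\Sigma_2=\{*\}$, that rewrites each occurrence of $*$ as a spine of binary $*$'s. Concretely, set $h_0(a)=a$ for every $a\in\Delta\cup\{e\}$, set $h_1(*)=x_1$, and for $k\geq2$ let
$$h_k(*)=*[x_1\,*[x_2\,\cdots\,*[x_{k-1}\,x_k]\cdots]]$$
be the right comb whose yield is $x_1x_2\cdots x_k$. Since each $h_k(*)$ contains each of $x_1,\dots,x_k$ exactly once, $h$ is a linear tree homomorphism, so by Theorem~\ref{the.3.65} the tree language $h(U)$ is recognizable; and by construction $h(U)\subseteq T_\Sigma$.

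It remains to check that $h$ preserves yields, that is, $\fyield(h(t))=\fyield(t)$ for all $t\in T_\Gamma$, which I would prove by induction on $t$ (Principle~\ref{pri.2.12}). The base case $t=a\in\Gamma_0$ is immediate since $h_0(a)=a$. For $t=*[t_1\cdots t_k]$ we have $h(t)=h_k(*)\langle x_i\gets h(t_i)\rangle$ by Definition~\ref{def.3.62}; since the variables $x_1,\dots,x_k$ are rank-$0$ symbols, all different and distinct from $e$, Exercise~\ref{exe.2.30} (yield of tree concatenation is string substitution of yields) together with $\fyield(h_k(*))=x_1\cdots x_k$ gives
$$\fyield(h(t))=\fyield(h_k(*))\langle x_i\gets\fyield(h(t_i))\rangle=\fyield(h(t_1))\cdots\fyield(h(t_k)),$$
which by the induction hypothesis equals $\fyield(t_1)\cdots\fyield(t_k)=\fyield(t)$. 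Hence $\fyield(h(U))=\fyield(U)=L$, and since $h(U)\in\RECOG$ and lives over the required ranked alphabet $\Sigma$, the theorem follows.

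I do not expect a real obstacle; the argument is essentially bookkeeping. The two places needing a little care are that $h$ must be \emph{linear} for Theorem~\ref{the.3.65} to apply (it is, as no variable is repeated in any $h_k(*)$), and the degenerate rank-$1$ case, in which $h$ simply erases a unary $*$; this is harmless, since such a node carries no rank-$0$ label and contributes nothing to the yield.
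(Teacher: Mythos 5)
Your proposal is correct and follows essentially the same route as the paper: Theorem~\ref{the.3.28} supplies a recognizable $U$ with $\fyield(U)=L$ over an alphabet whose rank-$0$ part is $\Delta\cup\{e\}$, and then the linear ``right-comb'' homomorphism (with $h_1$ erasing unary symbols) together with Theorem~\ref{the.3.65} and the yield-preservation induction finishes the argument. The only cosmetic difference is that the paper does not bother to note that the intermediate alphabet already has $*$ as its only higher-rank symbol; it simply defines $h_k(a)$ for every $a\in\Omega_k$, which your argument covers as a special case.
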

\begin{proof}
	Let $L$ be a context-free language over $\Delta$. By Theorem \ref{the.3.28}, there is a recognizable tree language $U$ over some ranked alphabet $\Omega$ with $\Omega_0=\Delta\cup\{e\}$, such that $\fyield(U)=L$. Let $h$ be the linear tree homomorphism from $T_\Omega$ into $T_\Sigma$ such that
	\begin{enumerate}[label=]
		\item $h_0(a)=a$ for all $a$ in $\Delta\cup\{e\}$,
		\item $h_1(a)=x_1$ for all $a$ in $\Omega_1$, and
		\item $h_k(a)=*[x_1*[x_2*[\cdots*[x_{k-1}x_k]\cdots]]]$ for all $a$ in $\Omega_k$, $k\geq2$.
	\end{enumerate}
	By Theorem \ref{the.3.65}, $h(U)$ is a recognizable tree language over $\Sigma$. It is easy to show that, for each $t$ in $T_\Omega$, $\fyield(h(t))=\fyield(t)$. Hence $\fyield(h(U))=\fyield(U)=L$.
\end{proof}
Note that Theorem \ref{the.3.67} is ``equivalent" to the fact that each context-free language can be generated by a context-free grammar in Chomsky normal form.
\begin{exercise}\label{exe.3.68}
	Try to show that \RECOG{} is closed under inverse (not necessarily linear) homomorphisms; that is, if $L\in\RECOG$ and $h$ is a tree homomorphism, then $h^{-1}(L)=\{t\mid h(t)\in L\}$ is recognizable. (Represent $L$ by a det.\ bottom-up fta).
\end{exercise}
We have now discussed all \AFL{} operations (see \cite[\rom{4}]{Sal}) generalized to trees: union, tree concatenation, tree concatenation closure, (linear) tree homomorphism, inverse tree homomorphism and intersection with a recognizable tree language. Thus, according to previous results, \RECOG{} is a ``tree \AFL".
\begin{exercise}\label{exe.3.69}
	Generalize the operation of string substitution (see Definition \ref{def.2.29}) to trees, and show that \RECOG{} is closed under ``linear tree substitution".
\end{exercise}
\begin{exercise}\label{exe.3.70}
	Suppose you don't know about context-free grammars. Consider the notion of regular tree grammar. Give a recursive definition of the relation $\xRightarrow{}$ for such a grammar. Show that, if $\listt\xRightarrow{*}s$, then there are $\li[s_i]$ such that $s=\listt[a][s_i]$ and $t_i\xRightarrow{*}s_i$ for all $1\leq i\leq k$. Which of the two definitions of regular tree grammar do you prefer?
\end{exercise}

\subsection{Decidability}
Obviously the membership problem for recognizable tree languages is solvable: given a tree $t$ and an fta $M$, just feed $t$ into $M$ and see whether $t$ is recognized or not.

We now want to prove that the emptiness and finiteness problems for recognizable tree languages are solvable. To do this we generalize the pumping lemma for regular string languages to recognizable tree languages: for each regular string language $L$ there is an integer $p$ such that for all strings $z$ in $L$, if $|z|\geq p$, then there are strings $u$, $v$ and $w$ such that $z=uvw$, $|vw|\leq p$, $|v|\geq1$ and, for all $n\geq0$, $uv^nw\in L$.
\begin{theorem}\label{the.3.71}
	Let $\Sigma$ be a ranked alphabet, and $x$ a symbol not in $\Sigma$. For each recognizable tree language $L$ over $\Sigma$ we can find an integer $p$ such that for all trees $t$ in~$L$, if $\fheight(t)\geq p$, then there are trees $u,v,w\in T_\Sigma(\{x\})$ such that
	\begin{enumerate}[label=(\roman*)]
		\item\label{the.3.71i} $u$ and $v$ contain exactly one occurrence of $x$, and $w\in T_\Sigma$;
		\item $t=u\conc_xv\conc_xw$;
		\item $\fheight(v\conc_xw)\leq p$;
		\item $\fheight(v)\geq1$, and
		\item for all $n\geq0$, $u\conc_xv^{nx}\conc_xw\in L$, 
                                where $v^{nx}=\li*[v][n][\conc_x][\cdots]$ ($n$ times).
	\end{enumerate}
\end{theorem}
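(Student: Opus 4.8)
The plan is to reason with a deterministic bottom-up fta $M=(Q,\Sigma,\delta,s,F)$ with $L(M)=L$ (Definition~\ref{def.3.2}), and to take $p=|Q|$ (a somewhat larger value would serve equally well). The first ingredient I would establish is a ``context lemma'': for every $c\in T_\Sigma(\{x\})$ there is a mapping $\delbu_c\colon Q\to Q$ such that $\delbu(c\conc_x s)=\delbu_c(\delbu(s))$ for all $s\in T_\Sigma$. This is proved by a routine induction on $c$ following Definition~\ref{def.2.24}: $\delbu_x$ is the identity (because $x\conc_x s=s$); $\delbu_a$ is the constant $s_a$ for $a\in\Sigma_0$ with $a\neq x$; and for $c=\listt[a][c_i]$ one sets $\delbu_c(q)=\delta_a(\delbu_{c_1}(q),\dots,\delbu_{c_k}(q))$, using $c\conc_x s=\listt[a][c_i\conc_x s]$. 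Informally: the bottom-up state reached at the root of $c\conc_x s$ depends on $s$ only through $\delbu(s)$.

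Next, fix $t\in L$ with $\fheight(t)\geq p=|Q|$, and fix a path $v_0,v_1,\dots,v_m$ from the root $v_0$ to a leaf $v_m$ that \emph{realizes the height}, so $m=\fheight(t)\geq|Q|$; write $t/v$ for the subtree of $t$ rooted at a node $v$. Applying the pigeonhole principle to the $|Q|+1$ values $\delbu(t/v_{m-|Q|}),\dots,\delbu(t/v_m)\in Q$, pick indices $m-|Q|\leq i<j\leq m$ with $\delbu(t/v_i)=\delbu(t/v_j)$. Now set $w=t/v_j$; let $v$ be $t/v_i$ with the occurrence of its subtree at $v_j$ replaced by $x$; and let $u$ be $t$ with the occurrence of its subtree at $v_i$ replaced by $x$. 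Then (i) holds ($u$ and $v$ each contain exactly one $x$, and $w\in T_\Sigma$), (ii) holds by construction, and (iv) holds because $v_j$ is a strict descendant of $v_i$, so the path in $v$ from its root to $x$ has length $j-i\geq1$. Property (iii) is the one place the choice of path is used: $v\conc_x w=t/v_i$, and since $v_0,\dots,v_m$ realizes the height of $t$, the portion of it from $v_i$ to $v_m$ is a longest path in $t/v_i$, whence $\fheight(v\conc_x w)=m-i\leq|Q|=p$.

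For (v) I would use the context lemma twice. First, $\delbu(v\conc_x w)=\delbu(t/v_i)=\delbu(t/v_j)=\delbu(w)$. An induction on $n$, using $v^{0x}\conc_x w=w$ and $v^{(n+1)x}\conc_x w=v\conc_x(v^{nx}\conc_x w)$ together with the context lemma applied to $c=v$, then gives $\delbu(v^{nx}\conc_x w)=\delbu(w)$ for all $n\geq0$. Finally, applying the context lemma to $c=u$ and noting $t=u\conc_x(t/v_i)$, so that $\delbu(t)=\delbu_u(\delbu(t/v_i))=\delbu_u(\delbu(w))$, we obtain $\delbu(u\conc_x v^{nx}\conc_x w)=\delbu_u(\delbu(w))=\delbu(t)\in F$, i.e.\ $u\conc_x v^{nx}\conc_x w\in L$ for every $n\geq0$. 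Associativity of tree concatenation (Exercise~\ref{exe.2.32}) is used freely to parenthesise these expressions.

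I expect the only mildly delicate points to be the bookkeeping for the three pieces $u$, $v$, $w$ — checking that reassembling them in the right order really returns $t$, which is property (ii) — and the observation behind (iii), namely that choosing $v_0,\dots,v_m$ to be a path of maximal length forces $\fheight(t/v_i)=m-i$ so that the pumped context stays shallow. Everything else reduces to the context lemma and the two short inductions above.
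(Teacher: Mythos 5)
Your proposal is correct and follows essentially the same route as the paper: a deterministic bottom-up fta with $p=|Q|$, pigeonhole on the states $\delbu(\cdot)$ at the bottom $p+1$ nodes of a maximal-length path (which is what secures (iii)), and the observation that $\delbu(c\conc_x s)$ depends on $s$ only through $\delbu(s)$ to get (v). The paper phrases the cutting as a linearization of $t$ into $n$ one-step contexts along the path and then regroups them into $u$, $v$, $w$, but this is only a cosmetic difference from your direct choice of the two nodes $v_i$, $v_j$.
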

\vspace{-1cm}
\begin{proof}
	Let $M=(Q,\Sigma,\delta,s,F)$ be a deterministic bottom-up fta recognizing $L$. Let $p$ be the number of states of $M$. Consider a tree $t\in L(M)$ such that $\fheight(t)\geq p$. Considering some path of maximal length through $t$, it is clear that there are trees $\li*[t_i][n]\in T_\Sigma(\{x\})$ such that $n\geq p+1$, $t=t_n\conc_xt_{n-1}\conc_x\cdots\conc_xt_2\conc_xt_1$, the trees $t_2,\dots, t_n$ contain exactly one occurrence of $x$ and have $\fheight\geq1$, and $t_1\in T_\Sigma$ (this is a ``linearization" of $t$ according to some path). Now consider the states $q_i=\delbu(t_i\conc_x\cdots\conc_xt_1)$ for $1\leq i\leq n$. Then, among $\li[q_i][p+1]$ there are two equal states: there are $i,j$ such that $q_i=q_j$ and $1\leq i<j\leq p+1$. Let $u=t_n\conc_x\cdots\conc_xt_{j+1}$, $v=t_j\conc_x\cdots\conc_xt_{i+1}$ and $w=t_i\conc_x\cdots\conc_xt_1$. Then requirements (i)-(iv) in the statement of the theorem are obviously satisfied. Furthermore, in general, if $\delbu(s_1)=\delbu(s_2)$, then $\delbu(s\conc_xs_1)=\delbu(s\conc_xs_2)$. Hence, since $\delbu(v\conc_xw)=\delbu(w)$, requirement (v) is also satisfied.
\end{proof}
As a corollary to Theorem \ref{the.3.71} we obtain the pumping lemma for context-free languages.
\begin{corollary}\label{cor.3.72}
	For each context-free language $L$ over $\Delta$ we can find an integer $q$ such that for all strings $z\in L$, if $|z|\geq q$, then there are strings $u_1$, $v_1$, $w_0$, $v_2$ and $u_2$ in $\Delta^*$, such that $z=u_1v_1w_0v_2u_2$, $|v_1w_0v_2|\leq q$, $|v_1v_2|>0$, and, for all $n\geq0$, $u_1v_1^nw_0v_2^nu_2\in L$.
\end{corollary}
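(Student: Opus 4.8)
The plan is to derive the statement from the tree pumping lemma, Theorem~\ref{the.3.71}, by passing to yields. Since we will produce a $q\ge 1$, it suffices to handle strings in $L\setminus\{\lambda\}$, and for this I would first arrange a recognizable tree language of a convenient shape whose yield is $L\setminus\{\lambda\}$. Take a $\lambda$-free context-free grammar $G=(N,\Delta,R',S)$ with $L(G)=L\setminus\{\lambda\}$ (eliminate nullable nonterminals, and delete useless ones), and let $U_0=D_G^{S}$ be its set of derivation trees, which is local, hence recognizable by Exercise~\ref{exe.3.51}, and satisfies $\fyield(U_0)=L\setminus\{\lambda\}$. Now apply the linear tree homomorphism from the proof of Theorem~\ref{the.3.67}, which contracts every node of rank $1$ and expands every node of rank $\ge 2$ into a right comb of binary symbols $*$: the result $U$ is a recognizable tree language (Theorem~\ref{the.3.65}) over a ranked alphabet $\Sigma$ with $\Sigma_0=\Delta\cup\{e\}$, $\Sigma_2=\{*\}$, $\Sigma_k=\emptyset$ otherwise, and $\fyield(U)=\fyield(U_0)=L\setminus\{\lambda\}$. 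Two features of $U$ will be used: every node of a tree in $U$ has rank $0$ or $2$; and, because $G$ is $\lambda$-free (so no nonterminal of $G$ derives $\lambda$), every subtree of every tree in $U$ has nonempty yield.

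Apply Theorem~\ref{the.3.71} to $U$ with a fresh symbol $x$, obtaining an integer $p$; put $q=2^{p}$ (and $q\ge1$). Let $z\in L$ with $|z|\ge q$; then $z\ne\lambda$, so $z=\fyield(t)$ for some $t\in U$. Every tree over $\Sigma$ is binary, so $|\fyield(t')|\le 2^{\fheight(t')}$ for all $t'\in T_\Sigma$ (easy induction), whence $\fheight(t)\ge p$. Theorem~\ref{the.3.71} gives $u,v,w\in T_\Sigma(\{x\})$ with: $u$ and $v$ each containing exactly one $x$ and $w\in T_\Sigma$; $t=u\conc_x v\conc_x w$; $\fheight(v\conc_x w)\le p$; $\fheight(v)\ge1$; and $u\conc_x v^{nx}\conc_x w\in U$ for every $n\ge0$. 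Since $x\ne e$ occurs exactly once in $\fyield(u)$ and in $\fyield(v)$, write $\fyield(u)=u_1xu_2$, $\fyield(v)=v_1xv_2$, $\fyield(w)=w_0$ with $u_1,u_2,v_1,v_2,w_0\in\Delta^{*}$. Applying Exercise~\ref{exe.2.30} (``yield of a tree concatenation is the string substitution of the yields'') repeatedly, together with the easy identity $\fyield(v^{nx})=v_1^{\,n}xv_2^{\,n}$, we get $z=\fyield(t)=u_1v_1w_0v_2u_2$ and, for every $n\ge0$, $u_1v_1^{\,n}w_0v_2^{\,n}u_2=\fyield(u\conc_x v^{nx}\conc_x w)\in\fyield(U)\subseteq L$. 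Also $|v_1w_0v_2|=|\fyield(v\conc_x w)|\le 2^{\fheight(v\conc_x w)}\le 2^{p}=q$.

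The only remaining point is $|v_1v_2|>0$, and this is the step that forces the careful choice of $U$. Because $\fheight(v)\ge1$, the root of $v$ carries a symbol of rank $\ge1$; since $\Sigma_1=\emptyset$, that rank is $2$, so $v=*[s_1s_2]$ where exactly one of $s_1,s_2$ contains the unique occurrence of $x$ and the other, say $s'$, lies in $T_\Sigma$ and is a genuine subtree of $t$. By the second feature of $U$, $\fyield(s')\ne\lambda$; and $\fyield(s')$ is a contiguous factor of $v_1v_2$, so $|v_1v_2|\ge1$. Thus $q$ and the strings $u_1,v_1,w_0,v_2,u_2$ witness the statement. The main obstacle is exactly this last inequality: had we used the plain derivation-tree representation of Theorem~\ref{the.3.28}, the ``pumped'' context $v$ could consist only of rank-one nodes, contributing nothing to $v_1v_2$, or could have an off-path subtree with empty yield; contracting rank-one nodes and forbidding nullable nonterminals is precisely what rules this out.
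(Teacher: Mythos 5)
Your proof is correct and follows essentially the same route as the paper's: pass to a recognizable binary tree language $U$ with $\fyield(U)=L$ (modulo $\lambda$) via Theorem~\ref{the.3.67} and a $\lambda$-free grammar, set $q=2^p$, apply Theorem~\ref{the.3.71}, and read off the yields of $u$, $v$, $w$. You additionally spell out the verification of $|v_1v_2|>0$ (which the paper dismisses as ``easy to see''), correctly identifying that it is exactly the absence of rank-one symbols and of empty-yield subtrees in $U$ that makes it work.
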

\begin{proof}
	By Theorem \ref{the.3.67}, and the fact that each context-free language can be generated by a $\lambda$-free context-free grammar, there is a recognizable tree language $U$ over $\Sigma$ such that $\fyield(U)=L$, where $\Sigma_0=\Delta$ and $\Sigma_2=\{*\}$. Let $p$ be the integer corresponding to $U$ according to Theorem \ref{the.3.71}, and put $q=2^p$. Obviously, if $z\in L$ and $|z|\geq q$, then there is a $t$ in $U$ such that $\fyield(t)=z$ and $\fheight(t)\ge p$. Then, by Theorem \ref{the.3.71} there are trees $u$, $v$ and $w$ such that (i)-(v) in that theorem hold. Thus $t=u\conc_xv\conc_xw$. Let $\fyield(u)=u_1xu_2$, $\fyield(v)=v_1xv_2$ and $\fyield(w)=w_0$ (see \ref{the.3.71i}). Then $z=\fyield(t)=\fyield(u\conc_xv\conc_xw)=\fyield(u)\conc_x\fyield(v)\conc_x\fyield(w)=u_1v_1w_0v_2u_2$. It is easy to see that all other requirements stated in the corollary are also satisfied.
\end{proof}
\begin{exercise}\label{exe.3.73}
	Let $\Sigma$ be a ranked alphabet such that $\Sigma_0=\{a,b\}$. 
Show that the tree language $\{t\in T_\Sigma\mid\fyield(t)\text{ has an equal number of $a$'s and $b$'s}\}$ is not recognizable.
\end{exercise}
From the pumping lemma the decidability of both emptiness and finiteness problem for \RECOG{} follows.
\begin{theorem}\label{the.3.74}
	The emptiness problem for recognizable tree languages is decidable.
\end{theorem}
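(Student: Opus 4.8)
The plan is to use the pumping lemma (Theorem~\ref{the.3.71}) to reduce emptiness to a finite check. Fix a deterministic bottom-up fta $M=(Q,\Sigma,\delta,s,F)$ recognizing $L$, and let $p$ be the constant supplied by Theorem~\ref{the.3.71} (the number of states of $M$). I claim that $L\neq\emptyset$ if and only if $L$ contains some tree $t$ with $\fheight(t)<p$. The ``if'' direction is trivial. For the ``only if'' direction, suppose $L\neq\emptyset$ and pick $t\in L$ having the fewest nodes among all trees in $L$. If $\fheight(t)\geq p$, apply Theorem~\ref{the.3.71} to obtain $u,v,w\in T_\Sigma(\{x\})$ with the stated properties; in particular $t=u\conc_xv\conc_xw$, the trees $u$ and $v$ contain exactly one occurrence of $x$, $w\in T_\Sigma$, and $\fheight(v)\geq1$. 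Taking $n=0$ in clause~(v) gives $u\conc_xw\in L$. Since $v$ has height at least $1$ it has at least two nodes, one of which is its unique $x$; hence $v\conc_xw$ has strictly more nodes than $w$, and therefore $t=u\conc_x(v\conc_xw)$ has strictly more nodes than $u\conc_xw$. This contradicts the minimality of $t$, so $\fheight(t)<p$.

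Given the claim, the decision procedure is immediate. There are only finitely many trees over $\Sigma$ of height less than $p$, and they can be effectively listed, since $\Sigma$ is a ranked alphabet (each $\Sigma_k$ is finite and $\Sigma_k=\emptyset$ for all but finitely many $k$). One enumerates them all and, using the trivially solvable membership problem for $M$, tests whether any of them lies in $L=L(M)$; output ``nonempty'' exactly when one does.

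The one point that needs care is the strict decrease used to derive the contradiction. I would measure trees by number of nodes rather than by height, precisely to sidestep the fact that the statement of Theorem~\ref{the.3.71} does not record that $v$ lies along a longest path of $t$; with the node count the bookkeeping is elementary, because $\conc_x$ substitutes a tree for a single leaf and so never decreases the node count, and strictly increases it when the inserted tree is nontrivial. An alternative proof, not using the pumping lemma at all, would take a regular tree grammar $G=(N,\Sigma,R,S)$ for $L$ (Theorem~\ref{the.3.25}) and compute by a standard fixpoint iteration the set of \emph{productive} nonterminals $A$ (those with $A\xRightarrow{*}t$ for some $t\in T_\Sigma$): start from the nonterminals possessing a rule whose right-hand side lies in $T_\Sigma$, and repeatedly add any nonterminal having a rule all of whose nonterminals are already productive; the iteration halts since $N$ is finite, and $L\neq\emptyset$ iff $S$ ends up marked. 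I would present the pumping-lemma argument as the main proof, since it matches the narrative of this subsection, and mention the grammar-based argument only as a remark.
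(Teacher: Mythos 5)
Your proposal is correct and takes essentially the same route as the paper: the paper's proof also invokes Theorem~\ref{the.3.71} with $n=0$ to conclude that $L$ is nonempty iff it contains a tree of height less than $p$, and then the finite check is immediate. Your choice of a node-count-minimal witness merely makes explicit the descent argument that the paper leaves implicit, and the grammar-based alternative you mention is a standard variant, not the paper's method.
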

\begin{proof}
	Let $L$ be a recognizable tree language, and let $p$ be the integer of Theorem \ref{the.3.71}. Obviously (using $n=0$ in point (v)), $L$ is nonempty if and only if $L$ contains a tree of $\fheight<p$.
\end{proof}
\begin{theorem}\label{the.3.75}
	The finiteness problem for recognizable tree languages is decidable.
\end{theorem}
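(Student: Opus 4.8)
Just as the emptiness problem was settled in Theorem~\ref{the.3.74} via the pumping lemma, the plan is to settle finiteness via Theorem~\ref{the.3.71}. Suppose $L$ is given by a deterministic bottom-up fta $M$ (any fta can be converted to one effectively, by Theorems~\ref{the.3.8} and~\ref{the.3.17}), and let $p$ be the number of states of $M$, so that $p$ is exactly the integer the proof of Theorem~\ref{the.3.71} attaches to $L$. The claim to prove is
\[ L \text{ is infinite} \iff L \text{ contains a tree } t \text{ with } p \le \fheight(t) < 2p . \]
Granting this, the theorem follows at once: over a fixed ranked alphabet there are only finitely many trees of height below $2p$ (since $\Sigma_k=\emptyset$ for all but finitely many $k$), this finite set can be listed effectively, and membership in $L$ is decidable by running $M$; so one simply checks whether some listed tree of height at least $p$ belongs to $L$.

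\textbf{From a tall tree to infinitely many trees.} For the implication ($\Leftarrow$), let $t\in L$ with $\fheight(t)\ge p$. Theorem~\ref{the.3.71} gives a decomposition $t=u\conc_x v\conc_x w$ with $\fheight(v)\ge 1$ and $u\conc_x v^{nx}\conc_x w\in L$ for every $n\ge 0$. Since $\fheight(v)\ge 1$, the tree $v$ has at least two nodes, and a short count shows the number of nodes of $u\conc_x v^{nx}\conc_x w$ strictly increases with $n$; so these trees are pairwise distinct and $L$ is infinite.

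\textbf{From infiniteness to a tree of bounded height.} For ($\Rightarrow$), assume $L$ is infinite. As there are only finitely many trees of any fixed height, $L$ contains trees of height $\ge p$; among all of them choose $t_0$ with the fewest nodes, and set $h_0=\fheight(t_0)$, so $h_0\ge p$. Suppose toward a contradiction that $h_0\ge 2p$, and apply Theorem~\ref{the.3.71} to $t_0$; with $n=0$ in item~(v) this yields $t_0'=u\conc_x w\in L$, where $t_0=u\conc_x v\conc_x w$, $\fheight(v)\ge 1$, and $\fheight(v\conc_x w)\le p$. First, $t_0'$ has strictly fewer nodes than $t_0$, since it is obtained from $t_0$ by deleting the nodes of $v$ other than its $x$-leaf, of which there is at least one because $\fheight(v)\ge 1$. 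Second, writing $d$ for the depth of the unique $x$ in $u$ and using $\fheight(u\conc_x s)=\max(\fheight(u),\,d+\fheight(s))$, we get $h_0=\max(\fheight(u),\,d+\fheight(v\conc_x w))\le\max(\fheight(u),d)+\fheight(v\conc_x w)$, while $\fheight(t_0')=\max(\fheight(u),\,d+\fheight(w))\ge\max(\fheight(u),d)$; hence $\fheight(t_0')\ge h_0-\fheight(v\conc_x w)\ge h_0-p\ge p$. So $t_0'\in L$ has height at least $p$ and strictly fewer nodes than $t_0$, contradicting the choice of $t_0$. Therefore $h_0<2p$, which proves the claim.

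\textbf{Main obstacle.} The one delicate point is the height computation just given: Theorem~\ref{the.3.71} bounds only the ``bottom part'' $v\conc_x w$ of the pumped tree and says nothing about the size of the context $u$ sitting above it, so one must check that deleting $v$ cannot increase the height (it only removes nodes) and cannot decrease it by more than $\fheight(v\conc_x w)\le p$. Minimizing the \emph{number of nodes} of $t_0$, rather than its height, is what makes the contradiction go through, since pumping down always strictly lowers the node count but need not lower the height. Everything else—the finiteness and effective listability of bounded-height tree sets, decidability of membership, and effectivity of $p$—is routine.
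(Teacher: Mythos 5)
Your proof is correct, but it decides finiteness by a different mechanism than the paper. The paper's proof observes only that $L$ is finite iff $L$ contains no tree of height $\ge p$, rewrites this as $L\cap\{t\mid\fheight(t)\geq p\}=\emptyset$, notes that this intersection is recognizable (Exercise \ref{exe.3.26} plus closure under complementation and intersection, Theorem \ref{the.3.32}), and then invokes the emptiness algorithm of Theorem \ref{the.3.74}; no upper bound on the height of a witness is ever needed. You instead prove a bounded-witness characterization --- $L$ is infinite iff it contains a tree of height in $[p,2p)$ --- and decide by exhaustively listing the finitely many trees of height below $2p$ and testing membership. The extra work this costs you is the pumping-down argument (minimizing node count rather than height, and checking that deleting $v$ lowers the height by at most $\fheight(v\conc_x w)\le p$); that argument is sound as written, including the height bookkeeping via the depth $d$ of the $x$ in $u$. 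What you buy is a more self-contained procedure: you bypass the closure properties of \RECOG{} and the emptiness theorem entirely, needing only decidable membership and Theorem \ref{the.3.71}. What the paper's route buys is brevity and a cleaner reduction to an already-solved problem. Both are valid; yours is the analogue of the classical ``a regular language is infinite iff it has a word of length in $[p,2p)$'' argument, which the paper deliberately avoids by exploiting Boolean closure instead.
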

\begin{proof}
	Let $L$ be a recognizable tree language, and let $p$ be the integer of Theorem \ref{the.3.71}. Obviously, $L$ is finite if and only if all trees in $L$ are of $\fheight<p$. Thus, $L$ is finite iff $L\cap\{t\mid\fheight(t)\geq p\}=\emptyset$. Since $L\cap\{t\mid\fheight(t)\geq p\}$ is recognizable (Exercise \ref{exe.3.26} and Theorem \ref{the.3.32}), this is decidable by the previous theorem.
\end{proof}
Note that the decidability of emptiness and finiteness problem for context-free languages follows from these two theorems together with the ``yield-theorem" (with $e\notin\Sigma_0$, $\Sigma_1=\emptyset$).

As in the string case we now obtain the decidability of inclusion of recognizable tree languages (and hence of equality).
\begin{theorem}\label{the.3.76}
	It is decidable, for arbitrary recognizable tree languages $U$ and $V$, whether $U\subseteq V$ (and also, whether $U=V$).
\end{theorem}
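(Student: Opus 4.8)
The plan is to reduce the inclusion problem to the emptiness problem, using the effective closure properties of \RECOG{} together with Theorem \ref{the.3.74}. First I would note that we may assume $U$ and $V$ are tree languages over one common ranked alphabet $\Sigma$: if they are presented over different alphabets, replace $\Sigma$ by the union of the two (Definition \ref{def.2.4}) and observe that $U$ and $V$ are still recognizable over this larger $\Sigma$ (an fta over a smaller alphabet is turned into one over $\Sigma$ by sending any new transition to a dead non-final state). With this in place, the key observation is the familiar identity
\[ U\subseteq V \iff U\cap(T_\Sigma-V)=\emptyset . \]

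Next I would make this computation effective. By Theorem \ref{the.3.32}, \RECOG{} is closed under complementation and under intersection, and both constructions are effective: given a det.\ bottom-up fta $M=(Q,\Sigma,\delta,s,F)$ for $V$, the automaton $(Q,\Sigma,\delta,s,Q-F)$ recognizes $T_\Sigma-V$ (as in the proof of Theorem \ref{the.3.32}), and the closure under intersection is obtained effectively from union and complementation via De Morgan, i.e.\ $U\cap W = T_\Sigma-((T_\Sigma-U)\cup(T_\Sigma-W))$, with the union construction on regular tree grammars also being effective. Hence from (effective) descriptions of $U$ and $V$ we can actually construct a det.\ bottom-up fta recognizing $U\cap(T_\Sigma-V)$.

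Then I would apply Theorem \ref{the.3.74}: the emptiness problem for recognizable tree languages is decidable, so we can decide whether $U\cap(T_\Sigma-V)=\emptyset$, which by the displayed equivalence settles whether $U\subseteq V$. For the equality assertion, $U=V$ holds iff $U\subseteq V$ and $V\subseteq U$, so running the above procedure twice decides $U=V$ as well. There is no real obstacle here beyond the routine verification that the closure constructions of Theorem \ref{the.3.32} are effective (they are, as sketched) and the minor bookkeeping of passing to a common alphabet; the argument is the exact analogue of the string case.
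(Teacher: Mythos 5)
Your proposal is correct and follows exactly the paper's own argument: reduce $U\subseteq V$ to the emptiness of $U\cap(T_\Sigma-V)$ and invoke Theorems \ref{the.3.32} and \ref{the.3.74}. The extra remarks on effectiveness and on passing to a common alphabet are sound but routine.
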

\begin{proof}
	Since $U$ is included in $V$ iff the intersection of $U$ with the complement of $V$ is empty, the theorem follows from Theorems \ref{the.3.32} and \ref{the.3.74}.
\end{proof}
Note again that each regular tree language is a special kind of context-free language. Note also that inclusion of context-free languages is not decidable (neither equality). Therefore it is nice that we have found a subclass of \CFL{} for which inclusion and equality are decidable. Note also that \CFL{} is not closed under intersection, but \RECOG{} is. We shall now relate these facts to some results in the literature concerning ``parenthesis languages" and ``{}structural equivalence" of  context-free grammars (see \cite[\rom{8}.3]{Sal}).
\begin{definition}\label{def.3.77}
	A \uem{parenthesis grammar} is a context-free grammar $G=(N,\Sigma\cup\{[\,,]\},$ $R,S)$ such that each rule in $R$ is of the form $A\to[w]$ with $A\in N$ and $w\in (\Sigma\cup N)^*$. The language generated by $G$ is called a \uem{parenthesis language}.
\end{definition}
To relate parenthesis languages to recognizable tree languages, let us restrict attention to ranked alphabets $\Delta$ such that, for $k\geq1$, if $\Delta_k\neq\emptyset$, then $\Delta_k=\{*\}$, where $*$ is a fixed symbol. Suppose that in our recursive definition of ``tree" we change $\listt$ into $\underset{a}{[}t_1\cdots t_k]$ (see Definition \ref{def.2.5} and Remark \ref{rem.2.35}). Then, obviously, all our results about the class \RECOG{} are still valid. Furthermore, since $*$ is the only symbol of rank $\geq1$, we may as well replace $\underset{*}{[}$ by $[$. In this way, each parenthesis language is in \RECOG{} (in fact, each parenthesis grammar is a regular tree grammar). It is also easy to see that, if $L$ is a recognizable tree language (over a restricted ranked alphabet $\Delta$), then $L-\Delta_0$ is a parenthesis language. From these considerations we obtain the following theorem.
\begin{theorem}\label{the.3.78}
	The class of parenthesis languages is closed under union, intersection and subtraction. The inclusion problem for parenthesis languages is decidable.
\end{theorem}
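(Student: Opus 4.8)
The plan is to reduce the statement entirely to facts already available about \RECOG{} --- closure under the Boolean operations (Theorem~\ref{the.3.32}) and decidability of inclusion (Theorem~\ref{the.3.76}) --- by way of the correspondence between parenthesis languages and recognizable tree languages over ``restricted'' ranked alphabets sketched just before the theorem. First I would make that correspondence precise and effective. Adopt the variant tree notation of Remark~\ref{rem.2.35} in which $a[t_1\cdots t_k]$ is written $\underset{a}{[}\,t_1\cdots t_k]$; since in the restricted setting the only symbol of positive rank is the fixed symbol $*$, one may write $[$ for $\underset{*}{[}$. Then a parenthesis grammar $G=(N,\Sigma\cup\{[\,,]\},R,S)$ of Definition~\ref{def.3.77} is, read this way, literally a regular tree grammar over the restricted ranked alphabet $\Delta$ with $\Delta_0=\Sigma$ and $\Delta_k=\{*\}$ exactly for the (finitely many) $k\geq1$ that occur as a right-hand-side length in $R$; hence $L(G)\in\RECOG$, and this passage is effective. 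Conversely, if $L$ is a recognizable tree language over such a $\Delta$, then $L-\Delta_0$ is a parenthesis language. The one genuine mismatch is that a recognizable tree language over $\Delta$ may contain bare leaves (elements of $\Delta_0$), whereas every string of a parenthesis language has length $\geq2$ and begins with $[$; that is precisely what the ``$-\Delta_0$'' repairs.

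Next I would deduce the three closure properties. Given parenthesis languages $L_1,L_2$ (necessarily over the same bracket symbols, but possibly over different $\Sigma$'s), pass to a common restricted ranked alphabet $\Delta$ with $\Delta_0\supseteq\Sigma_1\cup\Sigma_2$ and with $*$ placed at each rank used by either grammar. Since the associated ranked alphabets are componentwise contained in $\Delta$, each $L_i$, viewed over $\Delta$, is still generated by the same regular tree grammar and still denotes the same set of strings, so $L_i\in\RECOG$ over $\Delta$. By Theorem~\ref{the.3.32}, the languages $L_1\cup L_2$, $L_1\cap L_2$ and $L_1-L_2=L_1\cap(T_\Delta-L_2)$ are recognizable tree languages over $\Delta$. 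Each of them is contained in $L_1\cup L_2$, hence contains no bare leaf, so it equals its own ``$-\Delta_0$'' and is therefore a parenthesis language by the converse direction of the correspondence. For the decidability of inclusion, note that the passage from parenthesis grammars to regular tree grammars (and thence to finite tree automata) is effective, so ``$L_1\subseteq L_2$?'' for parenthesis languages is an instance of the decidable problem ``$U\subseteq V$?'' for recognizable tree languages (Theorem~\ref{the.3.76}); decidability of equality follows in the same way.

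I expect essentially all the care --- but no real difficulty --- to lie in nailing down the correspondence rather than in the closure arguments. Concretely one should check: (i) that the change of bracket notation genuinely carries a parenthesis grammar to a regular tree grammar without altering the generated set, including any degenerate right-hand sides; (ii) that enlarging to a common restricted ranked alphabet $\Delta$ preserves both recognizability and the actual set of strings; and (iii) that no string of a parenthesis language is a single symbol, so that the $\Delta_0$-bookkeeping is harmless under union, intersection and set difference. With these in hand the theorem is immediate from Theorems~\ref{the.3.32} and~\ref{the.3.76}.
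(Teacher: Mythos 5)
Your proposal is correct and follows essentially the same route as the paper: the paper's proof is a two-line appeal to Theorems~\ref{the.3.32} and~\ref{the.3.76} via the correspondence between parenthesis languages and recognizable tree languages over restricted ranked alphabets set out in the paragraph preceding the theorem, which is exactly the correspondence you make precise. Your extra bookkeeping (common alphabet, the $-\Delta_0$ adjustment, and noting that no parenthesis-language string is a single symbol) is sound and just fills in details the paper leaves implicit.
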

\begin{proof}
	The first statement follows directly from Theorem \ref{the.3.32} and the last remark above. The second statement follows directly from Theorem \ref{the.3.76}.
\end{proof}
A paraphrase of this theorem is obtained as follows.
\begin{definition}\label{def.3.79}
	For any ranked alphabet, let $p$ be the projection such that $p(a)=a$ for all symbols of rank 0, and $p(a)=*$ for all symbols of rank $\geq1$. Let $G$ be a context-free grammar. The \uem{bare tree language} of $G$, denoted by $\BT(G)$, is $p(D_G^S)$, where $S$ is the initial nonterminal of $G$. We say that two context-free grammars $G_1$ and $G_2$ are \uem{structurally equivalent} iff they generate the same bare tree language (i.e., $\BT(G_1)=\BT(G_2)$).
\end{definition}
Thus, $G_1$ and $G_2$ are structurally equivalent if their sets of derivation trees are the same after ``erasing" all nonterminals.
\begin{theorem}\label{the.3.80}
	It is decidable for arbitrary context-free grammars whether they are structurally equivalent.
\end{theorem}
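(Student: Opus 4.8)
The plan is to reduce structural equivalence to the decidability of equality of recognizable tree languages (Theorem~\ref{the.3.76}). Concretely, I would show that from a context-free grammar $G$ one can \emph{effectively} construct a regular tree grammar---and hence, via Theorems~\ref{the.3.25} and~\ref{the.3.8}, a deterministic bottom-up fta---recognizing the bare tree language $\BT(G)$.

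For this, first recall that by Definitions~\ref{def.3.49} and~\ref{def.3.50} the set $D_G^S$ of derivation trees of $G=(N,\Sigma,R,S)$ with top $S$ is a local tree language (take $V=\{S\}$), living over the ranked alphabet $\Delta$ with $\Delta_0=\Sigma\cup\{e\}$ and $\Delta_k$ the set of nonterminals having a right-hand side of length $k$. By Exercise~\ref{exe.3.51} this language is recognizable, and the construction is effective: one reads off a regular tree grammar whose rules mirror clauses (i)--(iii) of Definition~\ref{def.3.49}. Next, applying the projection $p$ of Definition~\ref{def.3.79} ($p(a)=a$ for $a$ of rank $0$, $p(a)=*$ for $a$ of rank $\geq1$) and using the effective proof of Theorem~\ref{the.3.48} (replace each rule $A\to t$ by the rules $A\to s$, $s\in p(t)$), we obtain effectively a regular tree grammar for $\BT(G)=p(D_G^S)$.

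Then, given $G_1$ and $G_2$, I would perform this construction for both, but over a \emph{common} ranked alphabet $\Delta$: let $\Delta_0$ be the union of the two terminal alphabets together with $e$, and let $\Delta_k=\{*\}$ for every $k$ with $1\leq k\leq m$, where $m$ is the maximum length of a right-hand side occurring in $G_1$ or $G_2$ (a recognizable tree language over a sub-alphabet remains recognizable over $\Delta$). Since $G_1$ and $G_2$ are structurally equivalent precisely when $\BT(G_1)=\BT(G_2)$, and both are now recognizable tree languages over the same $\Delta$, Theorem~\ref{the.3.76} settles decidability.

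The main (and really only) obstacle is the bookkeeping around ranks: one must fix the rank set of the symbol $*$ uniformly for both grammars, so that $\BT(G_1)$ and $\BT(G_2)$ genuinely sit inside the same $T_\Delta$ before they are compared; everything else is immediate from the earlier closure and decidability results.
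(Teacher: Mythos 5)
Your proof is correct, and every step in it is effective, as the decidability claim requires. It does, however, take a different route from the paper's. The paper converts each $G_i$ into its parenthesis grammar $[G_i]$ (replacing $A\to w$ by $A\to[w]$), observes that $L([G_i])=\BT(G_i)$, and then invokes Theorem~\ref{the.3.78}; that theorem in turn rests on identifying parenthesis languages with recognizable tree languages after switching to the tree notation $\underset{a}{[}t_1\cdots t_k]$ of Remark~\ref{rem.2.35}. You instead stay with trees throughout: $\BT(G)=p(D_G^S)$ is obtained as the image of the effectively recognizable local language $D_G^S$ (Exercise~\ref{exe.3.51}) under the projection $p$, using the effective construction in the proof of Theorem~\ref{the.3.48}, and you then apply Theorem~\ref{the.3.76} directly. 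Both reductions end at the same place --- decidability of equality in \RECOG{} --- but yours bypasses the parenthesis-language detour and the change of tree encoding, which makes it slightly more self-contained. It also makes explicit a point the paper leaves implicit: $\BT(G_1)$ and $\BT(G_2)$ must first be placed over a common ranked alphabet (fixing the ranks of $*$ and the rank-$0$ symbols uniformly) before the complementation and emptiness test underlying Theorem~\ref{the.3.76} can be applied. That is a genuine, if small, piece of bookkeeping, and you handle it correctly.
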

\begin{proof}
	For any context-free grammar $G=(N,\Sigma,R,S)$, let $[G]$ be the parenthesis grammar $(N,\Sigma\cup\{[\,,]\},\mbar{R},S)$, where $\mbar{R}=\{A\to[w]\mid A\to w\text{ is in }R\}$. Obviously, $L([G])=\BT(G)$. Hence, by Theorem \ref{the.3.78}, the theorem holds.
\end{proof}
\begin{exercise}\label{exe.3.81}
	Show that, for any two context-free grammars $G_1$ and $G_2$ there exists a context-free grammar $G_3$ such that $\BT(G_3)=\BT(G_1)\cap\BT(G_2)$.
\end{exercise}
\begin{exercise}\label{exe.3.82}
	Show that each context-free grammar has a structurally equivalent context-free grammar that is invertible (cf. Exercise \ref{exe.3.29}).
\end{exercise}
\begin{exercise}\label{exe.3.83}
	Consider the ``bracketed context-free languages" of Ginsburg and Harrison, and show that some of their results follow easily from results about \RECOG{} (show first that each recognizable tree language is a deterministic context-free language).
\end{exercise}
\begin{exercise}\label{exe.3.84}
	Investigate whether it is decidable for an arbitrary recognizable tree language $R$
	\begin{enumerate}[label=(\roman*)]
		\item whether $R$ is local;
		\item whether $R$ is a rule tree language;
		\item whether $R$ is recognizable by a det.\ top-down fta.\qedhere
	\end{enumerate}
\end{exercise}

\section{Finite state tree transformations}\label{sec.4}
\subsection{Introduction: Tree transducers and semantics} \label{sec.4.1}
In this part we will be concerned with the notion of a tree transducer: a machine that takes a tree as input and produces another tree as output. In all generality we may view a tree transducer as a device that gives meaning to structured objects (i.e., a semantics defining device). Let us try to indicate this aspect of tree transducers. 

Consider a ranked alphabet $\Sigma$. The elements of $\Sigma$ may be viewed as ``operators", i.e., symbols denoting operations (functions of several arguments). The rank of an operator stands for the number of arguments of the operation (note therefore that one operator may denote several operations). The operators of rank $0$ have no arguments: they are (denote) constants. As an example, the ranked alphabet $\Sigma$ with $\Sigma_0 = \{ e,a,b \}$ and $\Sigma_2 = \{ f \}$ may be viewed as consisting of three constants $e$, $a$ and $b$ and one binary operator $f$. From operators we may form ``terms" or ``expressions", like for instance $f(a,f(e,b))$, or perhaps, denoting $f$ by $\ast$, $(a \ast (e \ast b))$. Obviously the terms are in one-to-one correspondence with the set $T_\Sigma$ of trees over $\Sigma$ . Thus the notions tree and term may be identified. Intuitively, terms denote structured objects, obtained by applying the operations to the constants.

Formally, meaning is given to operators and terms by way of an ``interpretation". An interpretation of $\Sigma$ consists of a ``domain" $B$, for each element $a \in \Sigma_0$ an element $h_0(a) $ of $B$, and for each $k\ge1$ and operator $a \in \Sigma_k$ an operation $h_k(a):B^k \to B$. An interpretation of $\Sigma$ is also called a ``$\Sigma$-algebra" or ``{}algebra of type $\Sigma$". An interpretation ($B, \{h_0(a)\}_{a\in\Sigma_0}, \{h_k(a)\}_{a \in \Sigma_k}$) determines a mapping $h: T_\Sigma \to B$ (giving an interpretation to each term as an element of $B$) as follows: 
\begin{enumerate}[label=(\roman*)]
	\item for $a\in\Sigma_0$, $h(a) = h_0(a)$;
	\item for $k\ge1$ and $a \in \Sigma_k$,\\
	$h(\listt) = h_k(a)(\li[h(t_i)])$.
\end{enumerate}
(Such a mapping is also called a ``homomorphism" from $T_\Sigma$ into $B$). Thus the meaning of a tree is uniquely determined by the meaning of its subtrees and the interpretation of the operator applied to these subtrees. In general we can say that the meaning of a structured object is a function of the meanings of its substructures, the function being determined by the way the object is constructed from its substructures.

As an example, an interpretation of the above-mentioned ranked alphabet $\Sigma = \{e,a,b,f\}$ might for instance consist of a group $B$ with unity $h_0(e)$, multiplication $h_2(f)$ and two specific elements $h_0(a)$ and $h_0(b)$. Or it might consist of $B = \{a,b\}^\ast$, $h_0(e) = \lambda$, $h_0(a) = a$, $h_0(b) = b$ and $h_2(f)$ is concatenation. Note that in this case the mapping $h: T_\Sigma \to B$ is the yield!

It is now easy to see that a deterministic bottom-up fta with input alphabet $\Sigma$ is nothing else but a $\Sigma$-algebra with a finite domain (its set of states). Such an automaton may therefore be used as a semantic defining device in case there are only a finite number of possible semantical values. Obviously, in general, one needs an infinite number of semantical values. However, it is not attractive to consider arbitrary infinite domains $B$ since this provides us with no knowledge about the structure of the elements of $B$. We therefore assume that the elements of $B$ are structured objects: trees (or interpretations of them). Thus we consider $\Sigma$-algebras with domain $T_\Delta$ for some ranked alphabet $\Delta$. Our complete semantics of $T_\Sigma$ may then consist of two parts: an interpretation of $T_\Sigma$ into $T_\Delta$ and an interpretation of $T_\Delta$ in some $\Delta$-algebra. The interpretation of $T_\Sigma$ into $T_\Delta$ may be realized by a tree transducer. 

An example of an interpretation of $T_\Sigma$ into $T_\Delta$ is 
the tree homomorphism of Definition~\ref{def.3.62}. In fact each tree $s \in T_\Delta(\X_k)$ may be viewed as an operation $\widetilde{s} : T_\Delta^k \to T_\Delta$, defined by \[\widetilde{s}(\li[s_i]) = s\langle \li[x_i\gets s_i]\rangle.\]
A tree homomorphism is then the same thing as an interpretation of $\Sigma$ with domain $T_\Delta$, where the allowable interpretations of the elements of $\Sigma$ are the mappings $\widetilde{s}$ above. Note that these interpretations are very natural, since the interpretation of a tree is obtained by ``{}applying a finite number of $\Delta$-operators to the interpretations of its subtrees". To show the relevance of tree homomorphisms (and therefore tree transducers in general) to the semantics of context-free languages we consider the following very simple example.

\begin{example}\label{exa.4.1}
	Consider a context-free grammar generating expressions by the rules $E \to E + T$, $T \to T \ast F$, $E \to a$, $T \to a$, $F \to a$ and $F \to (E)$. Suppose we want to translate each expression into the equivalent post-fix expression. To do this we consider the rule tree language corresponding to this grammar and apply to the rule trees in this language the tree homomorphism $h$ defined by
	$h_2(E \to E + T) =  E[x_1x_2+] $, $h_2(T \to T \ast F) =  T[x_1x_2\ast] $, $h_0(E \to a) =  E[a] $, $h_0(T \to a) =  T[a] $, $h_0(F \to a) =  F[a] $ and $h_1(F \to (E)) =  F[x_1] $. Then the rule tree corresponding to an expression is translated into a tree whose yield is the corresponding post-fix expression. For instance,
%	\begin{figure}[h]
		\begin{align*}&
		\raisebox{-0.5\height}{\begin{tikzpicture}[lighttree, level distance=1.1cm, outer sep=-5pt]
			\tikzstyle{level 1}=[sibling distance=2.5cm]
			\tikzstyle{level 2}=[sibling distance=2cm]
			\node {(E\to E + T)}
			child {node {(E\to a)}}
			child {node {(T\to T \ast F)}
				child {node {(T\to a)}}
				child {node {(F\to a)}}};
			\end{tikzpicture}}&&
		\text{goes into~}&&
		\raisebox{-0.5\height}{\begin{tikzpicture}[lighttree, level distance=0.8cm]
			\tikzstyle{level 1}=[sibling distance=1.3cm]
			\tikzstyle{level 2}=[sibling distance=0.6cm]
			\node {E}
			child {node {E}
				child {node {a}}}
			child {node {T}
				child {node {T}
					child {node {a}}}
				child {node {F}
					child {node {a}}}
				child {node {\ast}}}
			child {node {+}};
			\end{tikzpicture}}\text{~,}&
		\end{align*}
%	\end{figure}\\
	so that $a + a \ast a$ is translated into $aaa\ast +$. Note that, moreover, the transformed tree is the derivation tree of the post-fix expression in the context-free grammar with the rules $E \to ET+$, $T \to TF\ast$, $E \to a$, $T \to a$, $F \to a$ and $F \to E$. Instead of interpreting this derivation tree as its yield (the post-fix expression), one might also interpret it as, for instance, a sequence of machine code instructions, like ``load $a$; load $a$; load $a$; multiply; add".
\end{example} 

It is not difficult to see that the syntax-directed translation schemes of \cite[\rom{1}.3]{AU}
correspond in some way to linear, nondeleting homomorphisms working on rule tree languages.

To arrive at our general notion of tree transducer, we combine the finite tree automaton and the tree homomorphism into a ``tree homomorphism with states" or a ``finite tree automaton with output". This tree transducer will not any more be an interpretation of $T_\Sigma$ into $T_\Delta$, but involves a generalization of this concept (although, by replacing $T_\Delta$ by some other set, it can again be formulated as an interpretation of $T_\Sigma$). Two ideas occur in this generalization.

\begin{idea} \label{sta.4.2}
	The translation (meaning) of a tree may depend not only on the translation of its subtrees but also on certain properties of these subtrees. Assuming that these properties are recognizable (that is, the set of all trees having the property is in \RECOG), they may be represented as states of a (deterministic) bottom-up fta. Thus we can combine the deterministic bottom-up fta and the tree homomorphism by associating to each symbol $a \in \Sigma_k$ a mapping $f_a : Q^k \to Q \times T_\Delta(\X_k)$. This $f_a$ may be split up into two mappings $\delta_a : Q^k \to Q$ and $h_a : Q^k \to T_\Delta(\X_k)$. The $\delta$-functions determine a mapping $\delbu : T_\Sigma \to Q$, as for the bottom-up fta, and the $h$-functions determine an output-mapping $\widehat{h} : T_\Sigma \to T_\Delta$ by the formula (cf. the corresponding tree homomorphism formula): 
$$\widehat{h}(\listt) = h_a(\li[\delbu(t_i)]) \langle \li[x_i \gets \widehat{h} (t_i)]\rangle.$$ 
Thus our tree transducer works through the tree in a bottom-up fashion just like the bottom-up fta, but at each step, it produces output by combining the output trees, already obtained from the subtrees, into one new output tree. Note that, if we allow our bottom-up tree transducer to be nondeterministic, then the above formula for $\widehat{h}$ is intuitively wrong (we need ``deterministic substitution").
\end{idea}

\begin{idea}\label{sta.4.3}
	To obtain the translation of the input tree one may need several different translations of each subtree. Suppose that one needs $m$ different kinds of translation of each tree (where one of them is the ``main meaning" and the others are ``{}auxiliary meanings"), then these may be realized by $m$ states of the transducer, say $\li[q_i][m]$. The $i$\textsuperscript{th} translation may then be specified by associating to each $a \in \Sigma_k$ a tree $h_{q_i}(a)\in T_\Delta(Y_{m,k})$, where
	$Y_{m,k} = \{y_{i,j} \mid 1 \le i \le m, 1 \le j \le k\}$. 
The $i$\textsuperscript{th} translation of a tree $\listt$ may then be defined by the formula $$\widehat{h}_{q_i}(\listt) = h_{q_i}(a) \langle y_{r,s} \gets \widehat{h}_{q_r}(t_s)\rangle_{1 \le r \le m, 1 \le s \le k}.$$ 
Thus the $i$\textsuperscript{th} translation of a tree is expressed in terms of all possible translations of its subtrees. Realizing such a translation in a bottom-up fashion would mean that we should compute all $m$ possible translations of each tree in parallel, whereas working in a top-down way we know exactly from $h_{q_i}(a)$ which translations of which subtrees are needed (note that, in general, not all elements of $Y_{m,k}$ appear in $h_{q_i}(a)$). Therefore, such a translation seems to be realized best by a top-down tree transducer. We note that the generalized syntax-directed translation scheme of \cite[\rom{2}.9.3]{AU}
	corresponds to such a top-down tree transducer working on a rule tree language.
\end{idea}

As already indicated in Example \ref{exa.4.1}, tree transducers are of interest to the translation of context-free languages (in particular the context-free part of a programming language). For this reason we often restrict the tree transducer to a rule tree language, a local tree language or a recognizable tree language (the difference being slight: a projection). This restriction is also of interest from a linguistical point of view: a natural language may be described by a context-free set of kernel sentences to which transformations may be applied, working on the derivation trees (as for instance the transformation active~$\rightarrow$~passive). The language then consists of all transformations of kernel sentences. We note that if derivation tree $d_1$ of sentence $s_1$ is transformed into tree $d_2$ with yield $s_2$, then the sentence $s_2$ is said to have ``deep structure" $d_1$ and ``{}surface structure" $d_2$.

\subsection{Top-down and bottom-up finite tree transducers}\label{sec.4.2}
Since tree transducers define tree transformations (recall Definition \ref{def.2.10}), we start by recalling some terminology concerning relations. We note first that, for ranked alphabets $\Sigma$ and $\Delta$, we shall identify any mapping $f : T_\Sigma \to T_\Delta$ with the tree transformation $\{(s,t) \mid f(s) = t \}$, and we shall identify any mapping $f : T_\Sigma \to \mathcal{P}(T_\Delta)$ with the tree transformation $\{(s,t) \mid t \in f(s) \}$.

\begin{definition}\label{def.4.4}
	Let $\Sigma$, $\Delta$ and $\Omega$ be ranked alphabets. If $M_1 \subseteq T_\Sigma \times T_\Delta$ and $M_2 \subseteq T_\Delta \times T_\Omega$, then the \uem{composition} of $M_1$ and $M_2$, denoted by $M_1 \comp M_2$, is the tree transformation $\{(s,t) \in T_\Sigma \times T_\Omega \mid (s,u) \in M_1$ and $(u,t) \in M_2 $ for some $ u \in T_\Delta \}$. If $F$ and $G$ are classes of tree transformations, then $F \comp G$ denotes the class $\{ M_1 \comp M_2 \mid M_1 \in F$ and $M_2 \in G \}$.
\end{definition}

\begin{definition}\label{def.4.5}
	Let $M$ be a tree transformation from $T_\Sigma$ into $T_\Delta$. The \uem{inverse} of $M$, denoted by $M^{-1}$, is the tree transformation $\{(t,s) \in T_\Delta \times T_\Sigma \mid (s,t) \in M \}$.
\end{definition}

\begin{definition}\label{def.4.6}
	Let $M$ be a tree transformation and $L$ a tree language. The \uem{image} of $L$ under $M$, denoted by $M(L)$, is the tree language $M(L) = \{ t \mid (s,t) \in M$ for some $s$ in $L \}$. If $M$ is a tree transformation from $T_\Sigma$ into $T_\Delta$, then the \uem{domain} of $M$, denoted by $\dom(M)$, is $M^{-1}(T_\Delta)$, and the \uem{range} of $M$, denoted by $\text{range}(M)$, is $M(T_\Sigma)$.
\end{definition}

In Part (\ref{sec.3}) we already considered certain simple tree transformations: relabelings and tree homomorphisms.

\begin{notation}\label{not.4.7}
	We shall use \REL~to denote the class of all relabelings, \HOM~to denote the class of all tree homomorphisms, and \LHOM~to denote the class of linear tree homomorphisms. 
\end{notation}

Moreover we want to view each finite tree automaton as a simple ``checking" tree transducer, which, given some input tree, produces the same tree as output if it belongs to the tree language recognized by the fta, and produces no output if not. 

\begin{definition}\label{def.4.8}
	Let $\Sigma$ be a ranked alphabet. A tree transformation $R \subseteq T_\Sigma \times T_\Sigma$ is called a \uem{finite tree automaton restriction} if there is a recognizable tree language $L$ such that $R = \{ (t,t) \mid t \in L\}$. If $M$ is an fta, then we shall denote the finite tree automaton restriction $\{ (t,t) \mid t \in L(M) \}$ by $T(M)$. We shall use \FTA{} to denote the class of all finite tree automaton restrictions. 
\end{definition}

\begin{exercise}\label{exe.4.9}
	Prove that the classes of tree transformations \REL, \HOM~and \FTA~are each closed under composition. Show that \REL~and \FTA~are also closed under inverse. 
\end{exercise}

Before defining tree transducers we first discuss a very general notion of tree rewriting system that can be used to define both tree transducers and tree grammars. The reason to introduce these tree rewriting systems is that recursive definitions like those for the finite tree automata and tree homomorphisms tend to become very cumbersome when used for tree transducers, whereas rewriting systems are more ``machine-like" and therefore easier to visualize. To arrive at the notion of tree rewriting system we first generalize the notion of string rewriting system to allow for the use of rule ``schemes". Recall the set $\X$ of variables from Definition \ref{def.3.61}. 

\begin{definition}\label{def.4.10}
	A \uem{rewriting system with variables} is a pair $G = (\Delta, R)$, where $\Delta$ is an alphabet and $R$ a finite set of ``rule schemes". A \uem{rule scheme} is a triple $(v,w,D)$ such that, for some $k \ge 0$, $v$ and $w$ are strings over $\Delta \cup \X_k$ and $D$ is a mapping from $\X_k$ into $\mathcal{P}(\Delta^\ast)$. Whenever $D$ is understood, $(v,w,D)$ is denoted by $v \to w$. For $1 \le i \le k$, the language $D(x_i)$ is called the \uem{range} (or domain) of the variable $x_i$.
	
	A relation $\xRightarrow[G]{}$ on $\Delta^\ast$ is defined as follows. For strings $s,t \in \Delta^ \ast$, $s \xRightarrow[G]{} t$ if and only if there exists a rule scheme $(v,w,D) $ in $R$, strings $\li[\phi_i]$ in $\li[D(x_i)]$ respectively (where $\X_k$ is the domain of $D$), and strings $\alpha$ and $\beta$ in $\Delta^ \ast$ such that
	\begin{align*}
	s &= \alpha \cdot v \langle\li[x_i \gets \phi_i]\rangle \cdot \beta \hspace{1cm}\text{and}\\ 
	t &= \alpha \cdot w \langle\li[x_i \gets \phi_i]\rangle \cdot \beta\hspace{0.3cm}.
	\end{align*}
	As usual $\xRightarrow[G]{\ast}$ denotes the transitive-reflexive closure of $\xRightarrow[G]{}$.
\end{definition}

For convenience we shall, in what follows, use the word ``rule" rather than ``rule scheme".

Of course, in a rewriting system with variables, the ranges of the variables should be specified in some effective way (note that we would like the relation $\Rightarrow$ to be decidable). In what follows we shall only use the case that the variables range over recognizable tree languages.

\newpage
\begin{examples}\label{exa.4.11}~
	\begin{enumerate}[label=(\arabic*)]
		\item Consider the rewriting system with variables $G = ( \Delta, R)$, where $\Delta = \{a,b,c\}$ and $R$ consists of the one rule $ax_1c \to aax_1bcc$, where $D(x_1) = b^\ast$.
		Then, for instance, $aabbcc \Rightarrow aaabbbccc$ (by application of the ordinary rewriting rule $abbc \to aabbbcc$ obtained by substituting $bb$ for $x_1$ in the rule above). It is easy to see that $\{w \in \Delta^\ast \mid abc \xRightarrow{\ast} w \} = \{a^n b^n c^n \mid n \ge 1\}$.
		\item Consider the rewriting system with variables $G = (\Delta, R)$, where $\Delta = \{ [\,,],\ast,1 \}$ and $R$ consists of the rules $[x_1 \ast x_2 1] \to [x_1 \ast x_2 ] x_1$ and $[x_1 \ast 1] \to x_1$, where in both rules $D(x_1) = D(x_2) = 1^\ast$. It is easy to see that, for arbitrary $u,v,w \in 1^\ast$, $[u \ast v] \xRightarrow{\ast} w$ iff $w$ is the product of $u$ and $v$ (in unary notation).
		\item The two-level grammar used to describe Algol 68 may be viewed as a rewriting system with variables. The variables ( = meta notions) range over context-free languages, specified by the meta grammar.\qedhere
	\end{enumerate}
\end{examples}

By specializing to trees we obtain the notion of tree rewriting system. 

\begin{definition}\label{def.4.12}
	A rewriting system with variables $G = (\Delta,R) $ is called a \uem{tree rewriting system} if 
	\begin{enumerate}[label=(\roman*)]
		\item $\Delta = \Sigma \cup \{ [\,,] \}$ for some ranked alphabet $\Sigma$;
		\item for each rule $(v,w,D)$ in $R$, $v$ and $w$ are trees in $T_\Sigma(\X_k)$ and, for $1 \le i \le k$, $D(x_i) \subseteq T_\Sigma$ (where $\X_k$ is the domain of $D$).\qedhere
	\end{enumerate}
\end{definition}

It should be clear that, for a tree rewriting system $G = (\Sigma \cup \{ [\,,] \}, R)$, if $s \in T_\Sigma$ and $s \xRightarrow[G]{} t$, then $t \in T_\Sigma$. In fact, the application of a rule to a tree consists of replacing some piece in the middle of the tree by some other piece, where the variables indicate how the subtrees of the old piece should be connected to the new one. As an example, if we have a rule $a[b[x_1x_2]b[x_3d]] \to b[x_2a[x_1dx_2]]$, then the application of this rule to a tree $t$ (if possible) consists of replacing a subtree of $t$ of the form 
%\begin{figure}[h]
	\begin{align*}
	&
	\raisebox{-0.5\height}{
		\begin{tikzpicture}[shorten >=3pt, shorten <=2pt, execute at begin node=$,
		execute at end node=$,
		node/.style={inner sep=1},
		leaf/.style={isosceles triangle,draw,shape border rotate=90,inner sep=1, opacity=1},
		bbleaf/.style={shorten >=0pt, level distance=0.75cm},
		bleaf/.style={level distance=0.64cm,opacity=0}]
		\tikzstyle{level 1}=[level distance=0.6cm,sibling distance=2.5cm]
		\tikzstyle{level 2}=[level distance=1.2cm,sibling distance=0.8cm]
		\node[node] {a}
		child {node[node] {b} 
			child[bbleaf] {node[scale=0.01] {} child[bleaf] {node[leaf] {t_1}}}
			child[bbleaf] {node[scale=0.01] {} child[bleaf] {node[leaf] {t_2}}}}
		child {node[node] {b}
			child[bbleaf] {node[scale=0.01] {} child[bleaf] {node[leaf] {t_3}}}
			child[level distance=0.8cm] {node[node] {d}}}
		;
		\end{tikzpicture}}&
	\text{by~}&&
	\raisebox{-0.5\height}{
		\begin{tikzpicture}[shorten >=3pt, shorten <=2pt, execute at begin node=$,
		execute at end node=$,
		node/.style={inner sep=1},
		leaf/.style={isosceles triangle,draw,shape border rotate=90,inner sep=1, opacity=1},
		bbleaf/.style={shorten >=0pt, level distance=0.75cm},
		bleaf/.style={level distance=0.64cm,opacity=0}]
		\tikzstyle{level 1}=[level distance=0.6cm,sibling distance=2.3cm]
		\tikzstyle{level 2}=[level distance=1.2cm,sibling distance=0.8cm]
		\node[node] {b}
		child[bbleaf] {node[scale=0.01] {} child[bleaf] {node[leaf] {t_2}}}
		child {node[node] {a}
			child[bbleaf] {node[scale=0.01] {} child[bleaf] {node[leaf] {t_1}}}
			child[level distance=0.8cm] {node[node] {d}}
			child[bbleaf] {node[scale=0.01] {} child[bleaf] {node[leaf] {t_2}}}}
		;
		\end{tikzpicture}}\text{,}&
	\end{align*}
%\end{figure}\\
where $t_1$, $t_2$ and $t_3$ are in the ranges of $x_1$, $x_2$ and $x_3$. Thus $t$ is of the form $\alpha a [ b [t_1t_2]b[t_3d]] \beta$ and is transformed into $\alpha b [t_2a[t_1dt_2]]\beta$.

\begin{example}\label{exa.4.13}
	Let $\Sigma_0 = \{ a \}$, $\Sigma_1 = \{ b \}$, $\Delta_0 = \{ a \}$, $\Delta_2 = \{ b \}$, $\Omega_0 = \{a\}$, $\Omega_1 = \{\ast,b\}$ and $\Omega_2 = \{b\}$.
	
	\begin{enumerate}[label=(\roman*)]
		\item Consider the tree rewriting system $G = (\Omega \cup \{ [\,,] \},R)$, where $R$ consists of the rules \\ 
		\hspace*{10mm}$ a  \to  \ast[a] $, \\[1mm]
		\hspace*{10mm}$ b[\ast[x_1]]  \to  \ast[b[x_1x_1]] $, \\[1mm]
		and $D(x_1) = T_\Delta$. Then, for instance,

	%	\begin{figure}[H]
			\begin{center}
			\raisebox{-0.5\height}{\begin{tikzpicture}[lighttree]
				\node {b}
				child {node {b}
					child {node {a}}};
				\end{tikzpicture}}
			$\Rightarrow$
			\raisebox{-0.5\height}{\begin{tikzpicture}[lighttree]
				\node {b}
				child {node {b}
					child {node {\ast}
						child {node {a}}}};
				\end{tikzpicture}}
			$\Rightarrow$
			\raisebox{-0.5\height}{\begin{tikzpicture}[lighttree]
				\node {b}
				child {node {\ast}
					child {node {b}
						child {node {a}}
						child {node {a}}}};
				\end{tikzpicture}}
			$\Rightarrow$
			\raisebox{-0.5\height}{\begin{tikzpicture}[lighttree]
				\tikzstyle{level 2}=[sibling distance=1.2cm]
				\tikzstyle{level 3}=[sibling distance=0.7cm]
				\node {\ast}
				child {node {b}
					child {node {b}
						child {node {a}}
						child {node {a}}}
					child {node {b}
						child {node {a}}
						child {node {a}}}};
				\end{tikzpicture}}.
\end{center}
%		\end{figure}

		It is easy to see that, if $h$ is the tree homomorphism from $T_\Sigma$ into $T_\Delta$ defined by $h_0(a) =  a $ and $h_1(b) =  b[x_1x_1] $, then, for $s \in T_\Sigma$ and $t\in T_\Delta$, $h(s) = t$ iff $s\xRightarrow{\ast}  \ast[t] $.
		
		\item Consider the tree rewriting system $G' = (\Omega \cup \{[\,,]\},R')$, where $R'$ consists of the rules \\
		\hspace*{10mm}$ \ast[b[x_1]]  \to  b[\ast[x_1]\ast[x_1]] $, \\[1mm]
		\hspace*{10mm}$ \ast[a]  \to  a $, \\[1mm]
		and $D(x_1) = T_\Sigma$. Then, for instance, 

	%	\begin{figure}[H]
			\begin{center}
			\raisebox{-0.5\height}{\begin{tikzpicture}[lighttree]
				\node {\ast}
				child {node {b}
					child {node {b}
						child {node {a}}}};
				\end{tikzpicture}}
			$\Rightarrow$
			\raisebox{-0.5\height}{\begin{tikzpicture}[lighttree]
				\node {b}
				child {node {\ast}
					child {node {b}
						child {node {a}}}}
				child {node {\ast}
					child {node {b}
						child {node {a}}}};
				\end{tikzpicture}}
			$\Rightarrow$
			\raisebox{-0.5\height}{\begin{tikzpicture}[lighttree]
				\tikzstyle{level 2}=[sibling distance=0.4cm]
				\node {b}
				child {node {b}
					child {node {\ast}
						child {node {a}}}
					child {node {\ast}
						child {node {a}}}}
				child {node {\ast}
					child {node {b}
						child {node {a}}}};
				\end{tikzpicture}}
			$\Rightarrow$
			\raisebox{-0.5\height}{\begin{tikzpicture}[lighttree]
				\tikzstyle{level 2}=[sibling distance=0.4cm]
				\node {b}
				child {node {b}
					child {node {\ast}
						child {node {a}}}
					child {node {a}}}
				child {node {\ast}
					child {node {b}
						child {node {a}}}};
				\end{tikzpicture}}
			$\xRightarrow{\ast}$
			\raisebox{-0.5\height}{\begin{tikzpicture}[lighttree]
				\tikzstyle{level 2}=[sibling distance=0.4cm]
				\node {b}
				child {node {b}
					child {node {a}}
					child {node {a}}}
				child {node {b}
					child {node {a}}
					child {node {a}}};
				\end{tikzpicture}}.
	%	\end{figure}
\end{center}

		It is easy to see that, if $h$ is the homomorphism defined above, then, for $s \in T_\Sigma$ and $t\in T_\Delta$, $h(s) = t$ iff $ \ast[s]  \xRightarrow{*} t$.\qedhere
	\end{enumerate}
\end{example}

The tree transducers to be defined will be a generalization of the generalized sequential machine working on strings, which is essentially a finite automaton with output. 

A (nondeterministic) generalized sequential machine is a 6-tuple $M = (Q,\Sigma,\Delta,\delta,S,F)$, where $Q$ is the set of states, $\Sigma$ is the input alphabet, $\Delta$ the output alphabet, $\delta$ is a mapping $Q \times \Sigma \to \mathcal{P}(Q\times\Delta^\ast)$, $S$ is a set of initial states and $F$ a set of final states. Intuitively, if $\delta(q,a)$ contains $(q',w)$ then, in state $q$ and scanning input symbol $a$, the machine $M$ may go into state $q'$ and add $w$ to the output. Formally we may define the functioning of $M$ in several ways. As already said, the recursive definition (as for the fta) is too cumbersome, although it is the most exact one (and should be used in very formal proofs).
The other way is to describe the sequence of configurations the machine goes through during the translation of the input string. A configuration is usually a triple $(v,q,s)$, where $v$ is the output generated so far, $q$ is the state and $s$ is the rest of the input. If $s = a s_1$, then the next configuration might be $(vw,q',s_1)$. A useful variation of this is to replace $(v,q,s)$ by the string $vqs \in \Delta^\ast Q \Sigma^\ast$. The next configuration can now be obtained by applying the string rewriting rule $qa \to wq'$, thus $vqas_1 \Rightarrow vwq's_1$. Replacing $\delta$ by a corresponding set of rewriting rules, the string translation realized by $M$ can be defined as $\{ (v_1,v_2) \mid q_0v_1 \xRightarrow{\ast} v_2q_f$ for some $q_0 \in S$ and $q_f \in F \}$.

Let us first consider the bottom-up generalization of this machine to trees, which is conceptually easier than the top-down version, although perhaps less interesting. The bottom-up finite tree transducer goes through the input tree in the same way as the bottom-up fta, at each step producing a piece of output to which the already generated output is concatenated. The transducer arrives at a node of rank $k$ with a sequence of $k$ states and a sequence of $k$ output trees (one state and one output tree for each direct subtree of the node). The sequence of states and the label at the node determine (nondeterministically) a new state and a piece of output containing the variables $\li[x_i]$. The transducer processes the node by going into the new state and computing a new output tree by substituting the $k$ output trees for $\li[x_i]$ in the piece of output. There should be start states and output for each node of rank 0. If the transducer arrives at the top of the tree in a final state, then the computed output tree is the transformation of the input tree. (Cf. the story in Idea~\ref{sta.4.2}).

To be able to put the states of the transducer as labels on trees, we make them into symbols of rank 1. The configurations of the bottom-up tree transducer will be elements of $T_\Sigma(Q[T_\Delta])$,\;\footnote{Note that $Q[T_\Delta] = \{q[t] \mid q \in Q$ and $t \in T_\Delta \}$.} and the steps of the tree transducer (including the start steps) are modelled by the application of tree rewriting rules to these configurations. 

We now give the formal definition.

\begin{definition}\label{def.4.14}
	A \uem{bottom-up (finite) tree transducer} is a structure \\
$M = (Q,\Sigma,\Delta, R, Q_d)$, where\\
	\begin{tabular}{cp{0.85\linewidth}}
		$Q$      & is a ranked alphabet (of \uem{states}), such that all elements of $Q$ have rank 1 and no other ranks;\\
		$\Sigma$ & is a ranked alphabet (of \uem{input symbols});\\
		$\Delta$ & is a ranked alphabet (of \uem{output symbols}),\\
		& $Q \cap (\Sigma \cup \Delta) = \emptyset$;\\
		$Q_d$	 & is a subset of $Q$ (the set of \uem{final states}); and\\
		$R$  & is a finite set of \uem{rules} of one of the forms (i) or (ii): \\
	\end{tabular}
	\begin{enumerate}[leftmargin=1.8cm,label=(\roman*)]
		\item $a \to q[t]$, where $a \in \Sigma_0$, $q\in Q$ and $t \in T_\Delta$;
		\item $\listt[a][q_i[x_i]] \to q[t]$, where $k \ge 1$, $a \in \Sigma_k$, $\li[q_i], q \in Q$ and $t\in T_\Delta(\X_k)$.
	\end{enumerate}
	$M$ is viewed as a tree rewriting system over the ranked alphabet $Q \cup \Sigma \cup \Delta$ with $R$ as the set of rules, such that the range of each variable occurring in $R$ is $T_\Delta$. Therefore the relations $\xRightarrow[M]{}$ and $\xRightarrow[M]{\ast}$ are well defined according to Definition \ref{def.4.10}. 
	
	The \uem{tree transformation realized by $M$}, denoted by $T(M)$ or simply $M$, is 
	\[\{ (s,t) \in T_\Sigma \times T_\Delta \mid s \xRightarrow[M]{\ast} q[t]\text{ for some $q$ in }Q_d \}.\qedhere\] 
\end{definition}

We shall abbreviate ``(finite) tree transducer" by ``ftt".

\begin{remark}\label{rem.4.15}
	Note that $T(M)$ is also denoted by $M$. In general, we shall often make no distinction between a tree transducer and the tree transformation it realizes. Hopefully this will not lead to confusion.
\end{remark}

\begin{definition}\label{def.4.16}
	The class of tree transformations realized by bottom-up ftt will be denoted by \B. An element of \B{} will be called a \uem{bottom-up tree transformation}. 
\end{definition}

\begin{example}\label{exa.4.17}
	An example of a bottom-up ftt realizing a homomorphism was given in Example \ref{exa.4.13}(i) (it had one state $\ast$).
\end{example}

\begin{example}\label{exa.4.18}
	Consider the bottom-up ftt $M = (Q,\Sigma,\Delta,R,Q_d)$, where $Q = \{ q_0,q_1 \}$, $\Sigma_0 = \{a,b\}$, $\Sigma_2 = \{f,g\}$, $\Delta_0 = \{a,b\}$, $\Delta_2 = \{m,n\}$, $Q_d = \{ q_0 \}$ and the rules are 
	\begin{align*}
	a &\to q_0[a], & b &\to q_0[b],\\
	f[q_i[x_1]q_j[x_2]] &\to q_{1-i}[m[x_1x_1]] ,  & g[q_i[x_1]q_j[x_2]] &\to q_{1-i}[n[x_1x_1]] ,\\
	f[q_i[x_1]q_j[x_2]] &\to q_{1-j}[m[x_2x_2]] ,  & g[q_i[x_1]q_j[x_2]] &\to q_{1-j}[n[x_2x_2]] ,
	\end{align*}
for all $i,j \in \{0,1\}$. 
	The transformation realized by $M$ may be described by saying that, given some input tree $t$, $M$ selects some path of even length through $t$, relabels $f$ by $m$ and $g$ by $n$ and then doubles every subtree. For example $f[a[g[ab]]]$ may be transformed into the tree $m[n[bb]n[bb]]$ corresponding to the path $fgb$. The tree $g[ab]$ is not in the domain of $M$.
\end{example}

\begin{exercise}\label{exe.4.19}
	Construct bottom-up tree transducers $M_1$ and $M_2$ such that 
	\begin{enumerate}[label=(\roman*)]
		\item $\{ (\fyield(s),\,\fyield(t)) \mid (s,t) \in T(M_1) \} = \{ (a(cd)^nfe^nb,\,ac^nfd^{2n}b) \mid n \ge 0 \}$;
		\item $M_2$ deletes, given an input tree $t$, all subtrees $t'$ of $t$ such that $\fyield(t') \in a^+b^+$.\qedhere
	\end{enumerate}
\end{exercise}

\begin{exercise}\label{exe.4.20}
	Give a recursive definition of the transformation realized by a bottom-up tree transducer (without using the notion of tree rewriting system).
\end{exercise}

\begin{exercise}\label{exe.4.21}
	Given a bottom-up ftt $M$ with input alphabet $\Sigma$, find a suitable \mbox{$\Sigma$-algebra} such that $M$ may be viewed as an interpretation of $\Sigma$ into this $\Sigma$-algebra (cf. Section~\ref{sec.4.1}).
\end{exercise}

We now define some subclasses of the class of bottom-up tree transformations.

\begin{definition}\label{def.4.22}
	Let $\Sigma$ be a ranked alphabet and $k \ge 0$. A tree $t$ in $T_\Sigma(\X_k)$ is \uem{linear} if each element of $\X_k$ occurs at most once in $t$. The tree $t$ is called \uem{nondeleting with respect to $\X_k$} if each element of $\X_k$ occurs at least once in $t$.
\end{definition}

\begin{definition}\label{def.4.23}
	Let $M = (Q,\Sigma,\Delta,R,Q_d)$ be a bottom-up ftt.\\
	$M$ is called \uem{linear} if the right hand side of each rule in $R$ is linear.\\
	$M$ is called \uem{nondeleting} if the right hand side of each rule in $R$ is nondeleting with respect to $X_k$, where $k$ is the rank of the input symbol in the left hand side.\\
	$M$ is called \uem{one-state} (or \uem{pure}) if $Q$ is a singleton.\\
	$M$ is called (partial) \uem{deterministic} if
	\begin{enumerate}[label=(\roman*)]
		\item for each $a \in \Sigma_0$ there is at most one rule in $R$ with left hand side $a$, and
		\item for each $k \ge 1$, $a \in \Sigma_k$ and $\li[q_i] \in Q$ there is at most one rule in $R$ with left hand side $\listt[a][q_i[x_i]]$.
	\end{enumerate}
	$M$ is called \uem{total deterministic} if (i) and (ii) hold with ``{}at most one" replaced by ``exactly one" and $Q_d = Q$.
\end{definition}

\begin{notation}\label{not.4.24}
	The same terminology will be applied to the transformations realized by such transducers. Thus, for instance, a linear deterministic bottom-up tree transformation is one that can be realized by a linear deterministic bottom-up ftt. 
	
	The classes of tree transformations obtained by putting one or more of the above restrictions on the bottom-up tree transducers will be denoted by adding the symbols \LIN, \class{N}, \class{P}, \D{} and $\class{D}_t$ (standing for linear, nondeleting, pure, deterministic, and total deterministic respectively) to the letter \B. Thus the class of linear deterministic bottom-up tree transformations is denoted by \LDB.
\end{notation}

\begin{example}\label{exa.4.25}
	Let $\Sigma_0 = \{ e \}$, $\Sigma_1 = \{ a, f \}$, $\Delta_0 = \{e\}$, $\Delta_1 = \{a,b\}$ and $\Delta_2 = \{f\}$. Consider the bottom-up ftt $M = (Q,\Sigma,\Delta,R,Q_d)$, where $Q = Q_d = \{\ast\}$ and $R$ consists of the rules 
	\begin{align*}
	e &\to \ast[e],\\
	a [\ast[x_1]] & \to \ast[a[x_1]], &a [\ast[x_1]] \to \ast[b[x_1]],\\
	f[\ast[x_1]] & \to \ast[f[x_1x_1]] \text{.}
	\end{align*}
	Then $M \in \PNB$.
\end{example}

Let us make the following remarks about the concepts defined in Definition \ref{def.4.23}.
\begin{remarks}~\label{rem.4.26}
	\begin{enumerate}[label=(\arabic*)]
		\item Deletion is different from erasing. A rule may be called erasing if its right hand side belongs to $Q[\X]$. Thus, symbols of rank 0 cannot be erased. Symbols of rank~1 can be erased without any deletion, but symbols of rank $k\ge 2$ can only be erased by deleting also $k-1$ subtrees. Thus a nondeleting tree transducer is still able to erase symbols of rank 1.
		\item The one-state bottom-up tree transformations correspond intuitively to the finite substitutions in the string case. 
		\item The total deterministic bottom-up ftt realize tree transformations which are total functions.\qedhere
	\end{enumerate}
\end{remarks}

\begin{exercise}\label{exe.4.27}
	Show that, in the definition of ``(partial) deterministic", we may replace the phrase ``at most one" by ``exactly one" without changing the corresponding class \DB~of deterministic bottom-up tree transformations.
\end{exercise}

In the next theorem we show that all relabelings, finite tree automaton restrictions and tree homomorphisms are realizable by bottom-up ftt.

\begin{theorem}~\label{the.4.28}
	\begin{enumerate}[label=(\arabic*)]
		\item $\REL\subseteq \PNLB$,
		\item $\FTA\subseteq \NLDB$,
		\item $\HOM= \PDTB$ and $\LHOM= \PLDTB$.
	\end{enumerate}
\end{theorem}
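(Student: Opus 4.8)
The plan is to prove each of the three claims by direct construction, in each case exhibiting a bottom-up ftt with the appropriate restrictions and verifying by induction on the input tree that it realizes the given transformation; conversely, for the equalities in (3), we must also show that every pure (linear) total deterministic bottom-up tree transformation is a (linear) tree homomorphism.

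For (1), given a relabeling $r$ with families $\{r_k\}_{k\geq 0}$, I would build the ftt $M=(\{q\},\Sigma,\Delta,R,\{q\})$ with a single state $q$ of rank $1$, and with rules $a\to q[b]$ for every $a\in\Sigma_0$ and $b\in r_0(a)$, and $\listt[a][q[x_i]]\to q[\listt[b][x_i]]$ for every $k\geq 1$, $a\in\Sigma_k$ and $b\in r_k(a)$. Each right-hand side is linear (each $x_i$ occurs exactly once) and nondeleting, and $M$ is pure, so $M$ witnesses membership in $\PNLB$. An easy induction on $t$ shows $t\xRightarrow[M]{\ast}q[s]$ iff $s\in r(t)$. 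For (2), given a recognizable language $L$, take a deterministic bottom-up fta $A=(Q,\Sigma,\delta,s,F)$ with $L(A)=L$ (Definition~\ref{def.3.1}); turn each state of $Q$ into a symbol of rank $1$ and set $M=(Q,\Sigma,\Sigma,R,F)$ with rules $a\to s_a[a]$ for $a\in\Sigma_0$ and $\listt[a][q_i[x_i]]\to \delta_a^k(\li[q_i])[\listt]$ for $k\geq 1$, $a\in\Sigma_k$, $\li[q_i]\in Q$. Every right-hand side copies the input label over the same subtrees, so $M$ is linear, nondeleting (hence not needed, but harmless) and deterministic (determinism of $A$ gives exactly one rule per left-hand side, and it is total); thus $M\in\NLDB$. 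By induction, $t\xRightarrow[M]{\ast}q[s]$ iff $q=\delbu(t)$ and $s=t$; hence $T(M)=\{(t,t)\mid \delbu(t)\in F\}=\{(t,t)\mid t\in L\}$, as required.

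For (3), the inclusion $\HOM\subseteq\PDTB$ (resp.\ $\LHOM\subseteq\PLDTB$): given a tree homomorphism $h=\{h_k\}$, set $M=(\{q\},\Sigma,\Delta,R,\{q\})$ with rules $a\to q[h_0(a)]$ for $a\in\Sigma_0$ and $\listt[a][q[x_i]]\to q[h_k(a)]$ for $a\in\Sigma_k$, $k\geq 1$ (here $h_k(a)\in T_\Delta(\X_k)$ is used verbatim as the right-hand side). This $M$ is pure and has exactly one rule per left-hand side with $Q_d=Q$, so it is total deterministic; if $h$ is linear, each $h_k(a)$ is linear, so $M$ is linear. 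An induction on $t$, using Definition~\ref{def.3.62}(ii) and the substitution mechanism of the tree rewriting system (Definition~\ref{def.4.10}, where each variable ranges over $T_\Delta$), shows that the unique $s$ with $t\xRightarrow[M]{\ast}q[s]$ is $s=h(t)$; here one uses that $\listt\xRightarrow[M]{\ast}\listt[a][q[h(t_i)]]\xRightarrow[M]{}q[h_k(a)\langle\li[x_i\gets h(t_i)]\rangle]=q[h(\listt)]$. Conversely, given a pure total deterministic bottom-up ftt $M=(\{q\},\Sigma,\Delta,R,\{q\})$, each $a\in\Sigma_0$ has exactly one rule $a\to q[t_a]$ and each $a\in\Sigma_k$ exactly one rule $\listt[a][q[x_i]]\to q[t_a]$ with $t_a\in T_\Delta(\X_k)$; define $h_0(a)=t_a$ and $h_k(a)=t_a$. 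Then the same induction shows $T(M)$ is exactly the tree homomorphism $h$, and if $M$ is linear then each $t_a$ is linear, so $h\in\LHOM$.

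The main obstacle is the converse direction of (3): one must check that purity together with total determinism really does force $M$ to behave like a homomorphism — in particular that there is no ``boundary'' mismatch, that the rewriting-system semantics (which allows a rule to be applied anywhere in a configuration, and different occurrences of a variable to be filled by different trees) collapses, in the one-state total deterministic case, to the clean recursive homomorphism formula. This is handled by noting that the unique left-hand sides make the derivation from a complete input tree $t\in T_\Sigma$ entirely forced bottom-up: the only redexes available in a configuration reached from $t$ are exactly the ``frontier'' occurrences $\listt[a][q[s_i]]$, and the single matching rule fires deterministically; since each variable $x_i$ occurs in the left-hand side exactly once, there is no ambiguity about what is substituted. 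The linear case adds nothing here beyond observing linearity is preserved in both translations. I expect the rest to be routine inductions of the kind already carried out in the proofs of Theorems~\ref{the.3.8} and~\ref{the.3.65}.
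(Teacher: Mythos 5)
Your constructions for (1)--(3) and the accompanying inductions coincide with the paper's own proof (which writes $\ast$ for the single state), including the key observation that the converse direction of (3) requires checking that the rewriting-system semantics of a one-state total deterministic transducer collapses to the recursive homomorphism formula. The only slip is notational: in (2) the right-hand side of the $k$-ary rule should be $\delta_a^k(\li[q_i])[\listt[a][x_i]]$, with the variables $x_i$ rather than the subtrees $t_i$.
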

\vspace*{-1cm}
\begin{proof}~
(1) Let $r$ be a relabeling from $T_\Sigma$ into $T_\Delta$. Thus $r$ is determined by a family of mappings $r_k : \Sigma_k \to \mathcal{P}(\Delta_k)$. Obviously the following bottom-up ftt realizes $r$: $M = (\{\ast\}, \Sigma, \Delta, R, \{\ast\})$, where $R$ is constructed as follows:
		\begin{enumerate}[label=(\roman*)]
			\item for $a \in \Sigma_0$, if $b \in r_0(a)$, then $a \to \ast[b]$ is in $R$;
			\item for $k \ge 1$ and $a \in \Sigma_k$, if $b \in r_k(a)$, then\\
			$\listt[a][\ast[x_i]] \to \ast[\listt[b][x_i]]$ is in $R$. 
		\end{enumerate}
		Clearly $M \in \PNLB$.

(2) From the definition of \FTA~and from Part (\ref{sec.3}) it follows that we need only consider a deterministic bottom-up fta $M = (Q,\Sigma,\delta,s,F)$ and show that $T(M) = \{(t,t) \mid t \in L(M)\}$ is realized by a bottom-up ftt. Consider the bottom-up ftt $\widetilde{M} = (Q,\Sigma,\Sigma,R,F)$, where $R$ is constructed as follows: 
		\begin{enumerate}[label=(\roman*)]
			\item for $a \in \Sigma_0$, $a \to q[a]$ is in $R$, where $q = s_a$;
			\item for $k \ge 1$ and $a \in \Sigma_k$, if $\delta_a^k(\li[q_i]) = q$,\\then $\listt[a][q_i[x_i]] \to q[\listt[a][x_i]]$ is in $R$. 
		\end{enumerate}
		Clearly $\widetilde{M}$ realizes $T(M)$ and $\widetilde{M} \in \NLDB$ (the determinism of $\widetilde{M}$ follows from that of~$M$).

(3) We first show that $\HOM\subseteq\PDTB$ (and $\LHOM\subseteq \PLDTB$). An example of this was already given in Example \ref{exa.4.13}(i). Let $h$ be a tree homomorphism from $T_\Sigma$ into $T_\Delta$ determined by the mappings $h_k : \Sigma_k \to T_\Delta(\X_k)$. Consider the bottom-up ftt $M = (\{\ast\},\Sigma,\Delta,R,\{\ast\})$, where $R$ contains the following rules:
		\begin{enumerate}[label=(\roman*)]
			\item for $a \in \Sigma_0$, $a \to \ast[h_0(a)]$ is in $R$;
			\item for $k \ge 1$ and $a \in \Sigma_k$, the rule $\listt[a][\ast[x_i]] \to  \ast[h_k(a)] $ is in $R$. 
		\end{enumerate}			
		Obviously $M$ is in \PDTB~(and linear, if $h$ is linear). Let us prove that $M$ realizes $h$. Thus we have to show that, for $s \in T_\Sigma$ and $t\in T_\Delta$, $h(s) = t$ iff $s \xRightarrow{\ast} \ast[t]$.
		
		The proof is by induction on $s$. The case $s\in \Sigma_0$ is clear. Now let $s = \listt[a][s_i]$. Suppose that $h(s) = t$. Then, by definition of $h$, $t = h_k(a) \langle \li[x_i \gets h(s_i)]\rangle$. By induction, $s_i \xRightarrow{\ast} \ast[h(s_i)]$ for all $i$, $1 \le i \le k$. Hence (but note that formally this needs a proof) $\listt[a][s_i] \xRightarrow{\ast} \listt[a][\ast[h(s_i)]]$. But, by rule (ii) above, $\listt[a][\ast[h(s_i)]] \Rightarrow \ast[h_k(a) \langle \li[x_i \gets h(s_i)]\rangle]$. Consequently $s \xRightarrow{\ast} \ast[t]$.
		
		Now suppose that $s = \listt[a][s_i] \xRightarrow{\ast} \ast[t]$. Then (and again this needs a formal proof) there are trees $\li[t_i]$ such that $s_i \xRightarrow{\ast} \ast[t_i]$ for $1 \le i \le k$ and $\listt[a][s_i] \xRightarrow{\ast} \listt[a][\ast[t_i]] \Rightarrow \ast[h_k(a) \langle \li[x_i \gets t_i]\rangle]=\ast[t]$.		
		By induction, $t_i = h(s_i)$ for all $i$, $1 \le i \le k$. Hence 
$t=h_k(a)\langle x_1\gets h(s_1),\dots,x_k\gets h(s_k)\rangle = h(s)$.
%$t = h_k(a) \langle \li[x_i \gets h(s_i)]\rangle = h(s)$. 
		
		This proves that $(\LIN)\HOM{} \subseteq \class{P}(\LIN)\D_t\B$. To show the converse, consider a one-state total deterministic bottom-up tree transducer $M = (\{\ast\},\Sigma,\Delta,R,\{\ast\})$.
		Define the tree homomorphism $h$ from $T_\Sigma$ into $T_\Delta$ as follows:
		\begin{enumerate}[label=(\roman*)]
			\item for $a \in \Sigma_0$, $h_0(a)$ is the tree $t$ occurring in the (unique) rule $a \to \ast[t]$ in $R$;
			\item for $k\ge 1$ and $a \in \Sigma_k$, $h_k(a)$ is the tree $t$ occurring in the (unique) rule  $\listt[a][\ast[x_i]] \to \ast[t]$ in $R$.
		\end{enumerate}
		Then, obviously, by the same proof as above, $h=T(M)$.\qedhere
\end{proof}

\begin{exercise}\label{exe.4.29}
	Prove that the domain of a bottom-up tree transformation is a recognizable tree language, and vice-versa.
\end{exercise}

Let us now consider the top-down generalization of the generalized sequential machine.

The top-down finite tree transducer goes through the input tree in the same way as the top-down fta, at each step producing a piece of output to which the (unprocessed) rest of the input is concatenated. Note therefore that the transducer not really ``goes through" the input tree in the same way as the bottom-up ftt does, since in the top-down case the rest of the input may be modified (deleted, permuted, copied) during translation, whereas in the bottom-up case the rest of the input is unmodified during translation. The top-down transducer arrives at a node of rank $k$ in a certain state; on that moment the configuration is an element of $T_\Delta(Q[T_\Sigma])$, where $\Sigma$ and $\Delta$ are the input and output alphabet, and $Q$ the set of states. The state and the label at the node determine (nondeterministically) a piece of output containing the variables $\li[x_i]$, and states with which to continue the translation of the subtrees. These states are also specified in the piece of output, which is in fact a tree in $T_\Delta(Q[\X_k])$, where an occurrence of $q[x_i]$ means that the processing of the $i\textsuperscript{th}$ subtree should, at this point, be continued in state $q$. The transducer processes the node by replacing it and its direct subtrees by the piece of output, in which the $k$ subtrees are substituted for the variables $\li[x_i]$. The processing of (all copies of) the subtrees is continued as indicated above.

The transducer starts at the root of the input tree in some initial state. There should be final states and output for each node of rank 0. If the transducer arrives in a final state at each leaf, then it replaces each leaf by the final output, and the resulting tree is the transformation of the input tree. (Cf. the story in Idea~\ref{sta.4.3}). The steps of the transducer, including the final steps, are modelled by the application of rewriting rules to the elements of $T_\Delta(Q[T_\Sigma])$.

We now give the formal definition.

\begin{definition}\label{def.4.30}
	A \uem{top-down (finite) tree transducer} is a structure \\
$M = (Q,\Sigma,\Delta, R, Q_d)$, 
where $Q$, $\Sigma$ and $\Delta$ are as for the bottom-up ftt,\\
	\begin{tabular}{cp{0.8\linewidth}}
		$Q_d$	 & is a subset of $Q$ (the set of \uem{initial states}), and\\
		$R$  & is a finite set of \uem{rules} of one of the forms (i) or (ii): \\
	\end{tabular}
	\begin{enumerate}[leftmargin=1.8cm,label=(\roman*)]
		\item $q[\listt[a][x_i]] \to t$, where $k\ge1$, $a \in \Sigma_k$, $q \in Q$ and $t \in T_\Delta(Q[\X_k])$;
		\item $q[a] \to t$, where $q \in Q$, $a \in \Sigma_0$ and $t \in T_\Delta$.
	\end{enumerate}	
$M$ is viewed as a tree rewriting system over the ranked alphabet $Q \cup \Sigma \cup \Delta$ with $R$ as the set of rules, such that the range of each variable in $\X$ is $T_\Sigma$.
	
	The \uem{tree transformation realized by $M$}, denoted by $T(M)$ or simply $M$, is 
	$\{ (s,t) \in T_\Sigma \times T_\Delta \mid q[s] \xRightarrow[M]{\ast} t$ for some $q$ in $Q_d \}$.
\end{definition}

\begin{definition}\label{def.4.31}
	The class of tree transformations realized by top-down ftt will be denoted by \T. An element of \T{} will be called a \uem{top-down tree transformation}.
\end{definition}

\begin{example}\label{exa.4.32}
	An example of a top-down ftt realizing a homomorphism was given in Example \ref{exa.4.13}(ii) (it had one state $\ast$).	
\end{example}

The next example is a top-down ftt computing the formal derivative of an arithmetic expression.

\begin{example}\label{exa.4.33}
	Consider the top-down ftt $M = (Q,\Sigma,\Delta,R,Q_d)$, where $\Sigma_0 = \{a,b\}$, $\Delta_0 = \{a,b,0,1\}$, $\Sigma_1 = \Delta_1 = \{-,\uem{\sin},\uem{\cos}\}$, $\Sigma_2 = \Delta_2 = \{+,\ast\}$, $Q = \{q,i\}$, $Q_d = \{q\}$, 
	\begin{align*}
	\text{the rules for $q$ are}\\[1mm]
	q[+[x_1x_2]] 		&\to +[q[x_1]q[x_2]],\\
	q[\ast[x_1x_2]] 	&\to +[\ast[q[x_1]i[x_2]]\ast[i[x_1]q[x_2]]],\\
	q[-[x_1]] 			&\to -[q[x_1]],\\
	q[\uem{\sin}[x_1]]	&\to \ast[\uem{\cos}[i[x_1]]q[x_1]],\\
	q[\uem{\cos}[x_1]]	&\to \ast[-[\uem{\sin}[i[x_1]]]q[x_1]],\\
	q[a] 		\to 1	&,~q[b] \to 0,\\[1mm]
    \text{and the rules for $i$ are}\\[1mm]
	i[+[x_1x_2]] 		&\to +[i[x_1]i[x_2]],\\
	i[\ast[x_1x_2]] 	&\to \ast[i[x_1]i[x_2]],\\
	i[-[x_1]] 			&\to -[i[x_1]],\\
	i[\uem{\sin}[x_1]]	&\to \uem{\sin}[i[x_1]],\\
	i[\uem{\cos}[x_1]]	&\to \uem{\cos}[i[x_1]],\\
	i[a] 		\to a	~&\text{and}~i[b] \to b.\\
	\end{align*}
Then $(t,s) \in T(M)$ iff $s$ is the formal derivative of $t$ with respect to $a$. For instance, $q[\ast[+[ab]-[a]]] \xRightarrow{\ast} +[\ast[+[10]-[a]]\ast[+[ab]-[1]]]$. Note that $i[t_1] \xRightarrow{\ast} t_2$ iff $t_1 = t_2$ $(t_1,t_2 \in T_\Sigma)$.
\end{example}

\begin{exercise}\label{exe.4.34}
	Let $\Sigma_0 = \{a,b\}$, $\Sigma_1 =\{\sim\}$ and $\Sigma_2 = \{ \land,\lor \}$. $T_\Sigma$ may be viewed as the set of all boolean expressions over the boolean variables $a$ and $b$, using negation, conjunction and disjunction. Write a top-down tree transducer which transforms every boolean expression into an equivalent one in which $a$ and $b$ are the only subexpressions which may be negated.
\end{exercise}

\begin{exercise}\label{exe.4.35}
(i) Give a recursive definition of the transformation realized by a top-down ftt. 
(ii) Find a suitable $\Sigma$-algebra such that the top-down ftt may be viewed as an interpretation of $\Sigma$ into this $\Sigma$-algebra. 
\end{exercise}

As in the bottom-up case we define some subclasses of \T.

\begin{definition}\label{def.4.36}
	Let $M = (Q,\Sigma,\Delta,R,Q_d)$ be a top-down tree transducer. \\
The definitions of \uem{linear}, \uem{nondeleting} and \uem{one-state} are identical to the bottom-up ones in Definition \ref{def.4.23}. \\
	$M$ is called (partial) \uem{deterministic} if
	\begin{enumerate}[label=(\roman*)]
		\item $Q_d$ is a singleton;
		\item for each $q \in Q$, $k \ge 1$, and $a \in \Sigma_k$, there is at most one rule in $R$ with left hand side $q[\listt[a][x_i]]$;
		\item for each $q \in Q$ and $a \in \Sigma_0$ there is at most one rule in $R$ with left hand side $q[a]$.
	\end{enumerate}
	$M$ is called \uem{total deterministic} if (i), (ii) and (iii) hold with ``{}at most one" replaced by ``exactly one".
\end{definition}

Notation \ref{not.4.24} also applies to the top-down case. Thus, \PLT{} is the class of one-state linear top-down tree transformations.

\begin{example} \label{exa.4.37}
	Let $\Sigma_0=\{e\}$, $\Sigma_1 = \{a,f\}$, $\Delta_0 = \{e\}$, $\Delta_1 = \{a,b\}$ and $\Delta_2 = \{f\}$. Consider the top-down tree transducer $M = (Q,\Sigma,\Delta,R,Q_d)$ with $Q = Q_d = \{\ast\}$ and $R$ consists of the rules 
	\begin{align*}
	&\ast[f[x_1]]  \to f[\ast[x_1]\ast[x_1]],\\
	&\ast[a[x_1]]  \to a[\ast[x_1]], ~~~\ast[a[x_1]] \to b[\ast[x_1]],\\
	&\ast[e] \to e.
	\end{align*}
	Then $M \in $ \PNT.
\end{example}

Remarks \ref{rem.4.26} also apply to the top-down case.

\begin{exercise}\label{exe.4.38}
	Show that, in the definition of ``(partial) deterministic" top-down ftt, we may replace in (ii) the phrase ``{}at most one" by ``exactly one" without changing \DT.
\end{exercise}

The next theorem shows that all relabelings, finite tree automaton restrictions and tree homomorphisms are realizable by top-down tree transducers (cf. Theorem \ref{the.4.28}). Note therefore that these tree transformations are not specifically bottom-up or top-down.

\begin{theorem}\label{the.4.39}~
	\begin{enumerate}[label=(\arabic*)]
		\item $\REL\subseteq\PNLT$,
		\item $\FTA\subseteq\NLT$,
		\item $\HOM=\PDTT$ and $\LHOM=\PLDTT$.
	\end{enumerate}
\end{theorem}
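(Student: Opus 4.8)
The plan is to mirror the proof of Theorem~\ref{the.4.28} step by step, turning each bottom-up construction into its top-down mirror image; as there, every claimed equality of tree transformations is checked by a routine induction on the input tree. For (1), given a relabeling $r$ with defining family $\{r_k\}$, I would take the one-state top-down ftt $M=(\{\ast\},\Sigma,\Delta,R,\{\ast\})$ with rules $\ast[a]\to b$ for $a\in\Sigma_0$, $b\in r_0(a)$, and $\ast[\listt[a][x_i]]\to\listt[b][\ast[x_i]]$ for $k\geq1$, $a\in\Sigma_k$, $b\in r_k(a)$. Every right-hand side contains each variable exactly once, so $M$ is nondeleting and linear, i.e.\ $M\in\PNLT$, and $T(M)=r$ follows by induction on the input tree.

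For (2), since $\FTA$ is defined via recognizable tree languages and every recognizable tree language is recognized by a nondeterministic top-down fta (Theorem~\ref{the.3.17}), it suffices to convert such an automaton $N=(P,\Sigma,\mu,P_0,G)$ into a top-down ftt $\widetilde N$ that merely echoes its input. Take $\widetilde N=(P,\Sigma,\Sigma,R,P_0)$ with rules $p[a]\to a$ whenever $a\in\Sigma_0$ and $p\in G_a$, and $p[\listt[a][x_i]]\to\listt[a][p_i[x_i]]$ whenever $(\li[p_i])\in\mu_a(p)$. These rules are linear and nondeleting, so $\widetilde N\in\NLT$; a straightforward induction gives $p[s]\xRightarrow{*}s$ iff $p\in\deltd[\mu](s)$, and since any derivation $p[s]\xRightarrow{*}t$ ending in a state-free $t$ forces $t=s$, we get $T(\widetilde N)=\{(t,t)\mid t\in L(N)\}$, as required. (Here $\widetilde N$ need not be deterministic, which is why one obtains only $\NLT$, whereas the bottom-up analogue reaches $\NLDB$ because a deterministic bottom-up fta suffices there.)

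For (3), the inclusions $\HOM\subseteq\PDTT$ and $\LHOM\subseteq\PLDTT$ are witnessed by $M=(\{\ast\},\Sigma,\Delta,R,\{\ast\})$ having the single rule $\ast[a]\to h_0(a)$ for each $a\in\Sigma_0$ and the single rule $\ast[\listt[a][x_i]]\to h_k(a)\langle\li[x_i\gets\ast[x_i]]\rangle$ for each $k\geq1$ and $a\in\Sigma_k$; this $M$ is one-state, total and deterministic, and linear exactly when $h$ is, so $M\in\PDTT$ (resp.\ $\PLDTT$). One proves $h(s)=t$ iff $\ast[s]\xRightarrow{*}t$ by induction on $s$, just as in Theorem~\ref{the.4.28}(3). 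For the converse inclusions, given a one-state total deterministic top-down ftt $M=(\{\ast\},\Sigma,\Delta,R,\{\ast\})$, I would set $h_0(a)$ equal to the right-hand side of the unique rule $\ast[a]\to t$, and $h_k(a)$ equal to the right-hand side of the unique rule $\ast[\listt[a][x_i]]\to t$ with every occurrence of the state symbol $\ast$ (which always precedes a variable there) erased; then $h=T(M)$ by the same induction, and $h$ is linear whenever $M$ is.

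The only step not settled by mere inspection of the rules is the congruence property of $\xRightarrow{*}$ used throughout part (3) (and tacitly in (1) and (2)): if $\ast[s_i]\xRightarrow{*}h(s_i)$, then any context built over one or several copies of $\ast[s_i]$ rewrites, copy by copy, to the same context with each $\ast[s_i]$ replaced by $h(s_i)$ --- and, for the converse direction of (3), that the determinism of $M$ forces all copies of a given subtree to yield the same output tree. This is exactly the point the author flags with ``but note that formally this needs a proof'' in Theorem~\ref{the.4.28}(3); it follows directly from the way $\xRightarrow{}$ is defined inside an arbitrary context in Definition~\ref{def.4.10}, so I would establish it once and treat everything else as immediate.
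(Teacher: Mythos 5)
Your proof is correct and is exactly the intended argument: the paper leaves Theorem~\ref{the.4.39} as an exercise, expecting precisely this dualization of Theorem~\ref{the.4.28} (with Example~\ref{exa.4.13}(ii) already exhibiting the one-state transducer for \HOM{}). You also correctly identify the one genuine asymmetry --- that part (2) yields only \NLT{} rather than \NLDB's mirror image, because of Theorem~\ref{the.3.14} --- and the one technical point (the congruence property of $\xRightarrow{*}$ under contexts) that needs a separate induction.
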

\vspace*{-1cm}
\begin{proof}Exercise.
\end{proof}

In what follows we shall need one other type of tree transformation which corresponds to the ordinary sequential machine in the string case (which translates each input symbol into one output symbol). It is a combination of an fta and a relabeling.

\begin{definition}\label{def.4.40}
	A \uem{top-down} (resp. \uem{bottom-up}) \uem{finite state relabeling} is a (tree transformation realized by a) top-down (resp. bottom-up) tree transducer $M = (Q,\Sigma,\Delta,R,Q_d)$ in which all rules are of the form $q[\listt[a][x_i]] \to \listt[b][q_i[x_i]]$ with $q,\li[q_i] \in Q$, $a\in \Sigma_k$ and $b \in \Delta_k$, or of the form $q[a]\to b$ with $q\in Q$, $a\in \Sigma_0$ and $b\in \Delta_0$ (resp. of the form $\listt[a][q_i[x_i]] \to q[\listt[b][x_i]]$ with $q,\li[q_i] \in Q$, $a \in \Sigma_k$ and $b \in \Delta_k$, or of the form $a\to q[b]$ with $q\in Q$, $a\in \Sigma_0$ and $b\in \Delta_0$).
\end{definition}

It is clear that the classes of top-down and bottom-up finite state relabelings coincide. This class will be denoted by \QREL. The classes of deterministic top-down and deterministic bottom-up finite state relabelings obviously do not coincide. They will be denoted by \DTQREL~and \DBQREL~respectively. Note that $\FTA\cup\REL\subseteq\QREL\subseteq\NLB\cap\NLT$.

Apart from the tree transformation realized by a tree transducer we will also be interested in the image of a recognizable tree language under a tree transformation and the yield of that image.

\begin{definition}\label{def.4.41}
	Let \Xclass~be a class of tree transformations. An \uem{\Xclass-surface tree language} is a language $M(L)$ with $M \in \Xclass$ and $L \in \RECOG$. An \uem{\Xclass-target language} is the yield of an \Xclass-surface language. An \uem{\Xclass-translation} is a string relation $\{(\fyield(s),\,\fyield(t)) \mid (s,t) \in M$ and $s \in L\}$ for some $M \in \Xclass$ and $L \in \RECOG$.
	
	The classes of \Xclass-surface and \Xclass-target languages will be denoted by \Xclass-Surface and \Xclass-Target respectively.
\end{definition}

It is clear that, for all classes \Xclass{} discussed so far, since the identity transformation is in~\Xclass, $\RECOG\subseteq\text{\Xclass-Surface}$, and so $\CFL\subseteq\text{\Xclass-Target}$. Moreover it is clear from the proof of Theorem \ref{the.3.64} that if $\HOM\subseteq\Xclass$, then the above inclusions are proper.

\subsection{Comparison of B and T, the nondeterministic case}\label{sec.4.3}
The main differences between the bottom-up and the top-down tree transducer are the following.
\parskip 8pt

\uline{Property (B)}. Nondeterminism followed by copying.\\
A bottom-up ftt has the ability of first processing an input subtree nondeterministically and then copying the resulting output tree.

\uline{Property (T)}. Copying followed by different processing (by nondeterminism or by different states).\\
A top-down ftt has the ability of first copying an input subtree  and then treating the resulting copies differently.

\uline{Property (B$'$)}. Checking followed by deletion.\\
A bottom-up ftt has the ability of first processing an input subtree and then deleting the resulting output subtree. In other words, depending on a (recognizable) property of the input subtree, it can decide whether to delete the output subtree or do something else with it.

It should be intuitively clear that top-down ftt do not possess properties (B) and (B$'$), whereas bottom-up ftt do not have property (T). We now show that these differences also result in differences in the corresponding classes of tree transformations.
\parskip 0pt
\begin{notation}\label{not.4.42}
	For any alphabet $\Sigma$, not containing the brackets [ and ], we define a function $m : \Sigma^+ \to (\Sigma \cup \{[\,,]\})^+$ as follows: for $a \in \Sigma$ and $w \in \Sigma^+$, $m(a) = a$ and $m(aw) = a[m(w)]$. For instance, $m(aab) = a[a[b]]$. 
	
	Note that $m$ is a kind of converse to the mapping $f_\text{td}$ discussed after Definition \ref{def.2.21}. A tree of the form $m(w)$ will also be called a monadic tree.
\end{notation}

\begin{theorem} \label{the.4.43}
	The classes of bottom-up and top-down tree transformations are incomparable. In particular, there are tree transformations in $\PNB-\T$ and $\PNT-\B$.
\end{theorem}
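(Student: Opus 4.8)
The plan is to exhibit explicit witnesses and separate them by a counting argument on the number of outputs produced on one very simple family of inputs. Take $M_{\mathrm b}$ to be the pure nondeleting bottom-up transducer of Example~\ref{exa.4.25} and $M_{\mathrm t}$ the pure nondeleting top-down transducer of Example~\ref{exa.4.37}; both use $\Sigma_0=\{e\}$, $\Sigma_1=\{a,f\}$, $\Delta_0=\{e\}$, $\Delta_1=\{a,b\}$, $\Delta_2=\{f\}$, and the memberships $T(M_{\mathrm b})\in\PNB$ and $T(M_{\mathrm t})\in\PNT$ are already granted there. Write $t_n=f(a^n(e))$. A direct unwinding of the rewriting rules gives $T(M_{\mathrm b})(t_n)=\{f[ww]\mid w\in\{a,b\}^ne\}$ (nondeterministic relabeling of the chain, \emph{then} copying, so the two copies are equal), hence $|T(M_{\mathrm b})(t_n)|=2^n$; while $T(M_{\mathrm t})(t_n)=\{f[w_1w_2]\mid w_1,w_2\in\{a,b\}^ne\}$ (copying \emph{then} two independent relabelings), hence $|T(M_{\mathrm t})(t_n)|=4^n$. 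The whole proof then reduces to showing that a transducer of the ``wrong'' direction can produce only $o(2^n)$, resp.\ $o(4^n)$, outputs on the $t_n$, which is impossible. The two arguments are the formal counterparts of the informal ``nondeterminism-followed-by-copying'' property of bottom-up transducers and the ``copying-followed-by-different-processing'' property of top-down transducers discussed just before the theorem.

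For $\PNB-\T$: suppose a top-down ftt $N$ realized $T(M_{\mathrm b})$ (Definition~\ref{def.4.30}). Because a top-down transducer copies an input subtree \emph{before} processing it, each copy is rewritten by an independent nondeterministic computation. Starting from $q_0[t_n]$ the first rule consumes the root $f$; its right-hand side lies in $T_\Delta(Q[\X_1])$, and since every output $f[ww]$ has its (unique) $f$ at the root and $\Delta$ is monadic apart from $f$, that right-hand side is either constant (finitely many outputs), or a single state $q'[x_1]$, or of the form $f(\xi_1,\xi_2)$ with each $\xi_i$ a monadic $\{a,b\}$-chain ending in a constant or in one leaf $r_i[x_1]$. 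Chasing through the ``single-state'' case (these steps emit no output and consume one $a$ each), one arrives at some rule applied to $q[a^m(e)]$, $1\le m\le n$, whose right-hand side has the $f(\xi_1,\xi_2)$ shape; the two occurrences of $a^{m-1}(e)$ are then translated independently from $r_1$ and from $r_2$. The ``forced equality'' observation finishes it: if every resulting pair $(\mathrm{prefix}_1\cdot u_1,\ \mathrm{prefix}_2\cdot u_2)$, with $u_i$ ranging independently over the translations of $a^{m-1}(e)$ from state $r_i$, must lie on the diagonal $\{(w,w)\}$, then fixing $u_2$ forces $u_1$, so each of those translation sets is a singleton and the configuration contributes exactly one output. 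As $m$ ranges over $\{1,\dots,n\}$ and there are finitely many $f$-producing rules (and initial states), $|T(N)(t_n)|=O(n)$, contradicting $2^n$ for $n$ large.

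For $\PNT-\B$: suppose a bottom-up ftt $N$ realized $T(M_{\mathrm t})$ (Definition~\ref{def.4.14}). A bottom-up transducer processes the subtree $a^n(e)$ \emph{once}, reaching a single output tree $v$ in some state $q$, and then the rule consuming the root $f$ rewrites to $t\langle x_1\gets v\rangle$ for a right-hand side $t\in T_\Delta(\X_1)$; thus \emph{every} output is obtained by plugging the \emph{same} $v$ into every slot of $t$, so both ``$w$-slots'' of $f[w_1w_2]$ are functions of the single $v$. It remains to count the relevant $v$'s: since the output has exactly one $f$ and $\Delta$ is otherwise monadic, a short induction on the build-up of $v$ from an $e$-rule through the $a$-rules shows that the only $v$'s (and, in the degenerate case $t=x_1$, the only penultimate trees $v^{(n-1)}$) that can appear in an output of the required shape are monadic $\{a,b\}$-words over $e$ of length $\le n$, of which there are fewer than $2^{n+1}$. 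With finitely many rules, $|T(N)(t_n)|=O(2^n)$, contradicting $4^n$ for $n$ large. Together with the previous paragraph this proves that $T(M_{\mathrm b})\in\PNB-\T$ and $T(M_{\mathrm t})\in\PNT-\B$, hence the theorem.

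The main obstacle is the bookkeeping in both halves: one must (i) pin down, from the fact that the output is $f[ww]$ (resp.\ $f[w_1w_2]$) and that $\Delta$ is monadic apart from $f$, the normal form ``monadic chain ending in a state-leaf'' for the relevant right-hand sides — this is the $f$-counting case analysis on where and how the single $f$ of the output is produced — and (ii) make the ``copies are processed independently'' statement precise for the top-down transducer, which is cleanest through the recursive/substitution description of $\xRightarrow{\ast}$ rather than the rewriting picture. Everything else (computing $T(M_{\mathrm b})(t_n)$ and $T(M_{\mathrm t})(t_n)$ and the final cardinality comparison) is routine. One could shorten the argument by passing to yields, since $\fyield\bigl(T(M_{\mathrm b})(\{t_n\mid n\ge0\})\bigr)=\{uu\mid u\in\{a,b\}^{*}\}$ is not context-free, but that route needs a ``yield of a surface language is in $\CFL$'' result not yet available at this point, so I keep the self-contained cardinality argument.
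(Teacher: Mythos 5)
Your proof is correct and follows essentially the same route as the paper: the same two witnesses (Examples \ref{exa.4.25} and \ref{exa.4.37}), the same input family $f[m(a^ne)]$, and the same structural analysis of where the unique output $f$ can be produced; part (1) is the paper's argument (your ``forced equality on the diagonal'' is exactly its observation that distinct outputs force distinct first $f$-configurations, of which there are only $(n+1)r$), while in part (2) you finish by counting ($O(n2^n)$ versus $4^n$) where the paper instead shows that for large $n$ every output has two branches sharing a nonempty common suffix, so that e.g.\ $f[m(a^ne)m(b^ne)]$ is missed --- both endgames are sound. One caveat on your closing aside: the proposed yield shortcut fails here because $\Delta_0=\{e\}$, so every output tree has yield $\lambda$; since you explicitly do not use it, this does not affect the proof.
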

\begin{proof}~\\
	(1)~ Consider the bottom-up ftt $M$ of Example \ref{exa.4.25}. $M$ is in \PNB{} and is a typical example of an ftt having property (B). It is intuitively clear that $M$ is not realizable by a top-down ftt. In fact, consider for each $n \ge 1$ the specific input tree $f[m(a^ne)]$. This tree is nondeterministically transformed by $M$ into all trees of the form $f[m(we)m(we)]$, where $w$ is a string over $\{a,b\}$ of length $n$. Suppose that the top-down ftt $N = (Q', \Sigma', \Delta', R', Q'_d)$ could do the same transformation of these input trees. Then, roughly speaking, $N$ would first have to make a copy of $m(a^ne)$ and would then have to relabel the two copies in an arbitrary but identical way, which is clearly impossible. A formal proof goes as follows. If $N$ realizes the same transformation, then, for each $n \ge 1$ and each $w \in \{a,b\}^\ast$ of length~$n$, there is a derivation $q_0[f[m(a^ne)]] \xRightarrow[N]{\ast} f[m(we)m(we)]$ for some $q_0$ in $Q_d'$. 
	
	Let us consider a fixed $n$. Consider, in each of these $2^n$ derivations, the first string of the form $f[t_1t_2]$; that is, consider the moment that $f$ is produced as output. Note that this is not necessarily the second string of the derivation, since the transducer may first erase the input symbol $f$ and some of the $a$'s before producing any output (thus the derivation may look like $q_0[f[m(a^ne)]] \xRightarrow{\ast} q[a[m(a^ke)]] \Rightarrow f[t_1t_2] \xRightarrow{\ast} f[m(we)m(we)]$ for some $q\in Q'$ and some $k$, $0\le k<n$, or even like $q_0[f[m(a^ne)]] \xRightarrow{\ast} q[e] \Rightarrow f[t_1t_2] =
%\xRightarrow{\ast} 
f[m(we)m(we)]$ for some $q\in Q$). Obviously, for different derivations these strings have to be different: if, for $w \ne w'$, both $t_1 \xRightarrow{\ast} m(we)$, $t_2 \xRightarrow{\ast} m(we)$ and $t_1 \xRightarrow{\ast} m(w'e)$, $t_2 \xRightarrow{\ast} m(w'e)$, then also $f[t_1t_2] \xRightarrow{\ast} f[m(we)m(w'e)]$, which is an invalid output. Therefore there are $2^n$ of such strings $f[t_1t_2]$. However it is clear that $f[t_1t_2]$ is of the form $f[\,\mbar{t}_1\mbar{t}_2] \langle x_1 \gets m(a^ke)\rangle$, where $0 \le k \le n$ and $f[\,\mbar{t}_1\mbar{t}_2]$ is the right hand side of a rule in $R'$. Therefore the number of possible $f[t_1t_2]$'s is less than $(n+1)r$, where $r = \#(R')$. For $n$ sufficiently large this is a contradiction.
	
	(2)~ Consider now the top-down ftt $M$ of Example \ref{exa.4.37}. $M$ is in \PNT{} and is a typical example of an ftt having property (T). Suppose that $M$ can be realized by a bottom-up ftt $N = (Q',\Sigma',\Delta',R',Q'_d)$. Consider again for each $n \ge 1$ the specific input tree $f[m(a^ne)]$. This tree should be transformed by $N$ into all trees of the form $f[m(w_1e)m(w_2e)]$ for $w_1,w_2 \in \{a,b\}^\ast$ of length $n$. Let us consider, in each of the derivations realizing this transformation, the first string which contains the output tree. Note that this is not necessarily the last string since $N$ may end its computation by erasing a number of $a$'s and the input $f$. Obviously, this string is obtained from the previous one by application of a rule with right hand side of the form $q[f[\,\mbar{t}_1\mbar{t}_2]]$, where $q \in Q'$, $\mbar{t}_1, \mbar{t}_2 \in T_\Delta(\{x_1\})$ and there are $s_1$ and $s_2$ such that $\mbar{t}_1\langle x_1 \gets s_1\rangle = m(w_1e)$ and $\mbar{t}_2\langle x_1 \gets s_2\rangle = m(w_2e)$. Obviously, if $f[\,\mbar{t}_1\mbar{t}_2]$ contains no $x_1$ or only one $x_1$, then the rule can only be used for exactly one input tree $f[m(a^ne)]$. Thus we may choose $n$ such that in all derivations starting with $f[m(a^ne)]$ the right hand side $q[f[\,\mbar{t}_1\mbar{t}_2]]$ contains two $x_1$'s (it cannot contain more). Thus $q[f[\,\mbar{t}_1\mbar{t}_2]]$ is of the form $q[f[m(v_1x_1)m(v_2x_1)]]$ for certain $v_1,v_2 \in \{a,b\}^\ast$. By choosing $n$ larger than the length of all such $v_1$'s and $v_2$'s occurring in right hand sides of rules in $R'$, we see that the output tree always has two equal subtrees $\ne e$: it has to be of the form $f[m(v_1we)m(v_2we)]$ for some $w \in \{a,b\}^+$. Thus, for such an $n$, not all possible outputs are produced. This is a contradiction.
\end{proof}

An important property of a class \Fclass{} of tree transformations is whether it is closed under composition. If so, then we know that each sequence of transformations from~\Fclass{} can be realized by one tree transducer (corresponding to the class \Fclass). We then also know that the class of \Fclass-surface tree languages is closed under the transformations of~\Fclass. The next theorem shows that unfortunately the classes of top-down and bottom-up tree transformations are not closed under composition. This nonclosure is caused by the failure of property (B) for top-down transformations (property (T) for bottom-up transformations).

\begin{theorem}\label{the.4.44}
	\T~and \B~are not closed under composition. In particular, there are tree transformations in $(\REL\comp\HOM)-\T$ and in $(\HOM\comp\REL) - \B$.
\end{theorem}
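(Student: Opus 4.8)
The plan is to exhibit one explicit transformation in $\REL\comp\HOM$ that fails to be top-down, and one in $\HOM\comp\REL$ that fails to be bottom-up, and to establish non-realizability by quoting verbatim the two arguments already carried out in the proof of Theorem \ref{the.4.43}. Since $\REL\subseteq\PNLT\subseteq\T$ and $\HOM=\PDTT\subseteq\T$ by Theorem \ref{the.4.39}, finding an element of $\REL\comp\HOM$ outside $\T$ shows $\T$ is not closed under composition; dually, since $\HOM=\PDTB\subseteq\B$ and $\REL\subseteq\PNLB\subseteq\B$ by Theorem \ref{the.4.28}, an element of $\HOM\comp\REL$ outside $\B$ shows $\B$ is not closed under composition.

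For the first part I would reuse the transducer $M$ of Example \ref{exa.4.25}, whose behaviour on the inputs $f[m(a^ne)]$ is to produce all trees $f[m(we)m(we)]$ with $w\in\{a,b\}^n$. The point is that this transformation decomposes as $r\comp h$, where $r\in\REL$ relabels each $a$ (of rank $1$) nondeterministically to $a$ or $b$ and leaves $e,f$ fixed --- so $r(f[m(a^ne)])=\{f[m(we)]\mid w\in\{a,b\}^n\}$ --- and $h\in\HOM$ is the (deterministic) homomorphism with $h_0(e)=e$, $h_1(a)=a[x_1]$, $h_1(b)=b[x_1]$, $h_1(f)=f[x_1x_1]$, which doubles the subtree below $f$; thus $(r\comp h)(f[m(a^ne)])$ is exactly $\{f[m(we)m(we)]\mid w\in\{a,b\}^n\}$. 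If some top-down ftt $N$ realized $r\comp h$, then for every $n$ and every $w$ of length $n$ there would be a derivation $q_0[f[m(a^ne)]]\xRightarrow[N]{\ast}f[m(we)m(we)]$ while no ``mixed'' output $f[m(we)m(w'e)]$, $w\neq w'$, is ever produced --- precisely the configuration analysed in part~(1) of the proof of Theorem \ref{the.4.43}, which derives a contradiction by counting the possible first strings $f[t_1t_2]$ of these derivations. Hence $r\comp h\in(\REL\comp\HOM)-\T$.

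For the second part I would dually reuse the transducer of Example \ref{exa.4.37}, which on the inputs $f[m(a^ne)]$ produces all $f[m(w_1e)m(w_2e)]$ with $w_1,w_2\in\{a,b\}^n$ \emph{independently}. This transformation decomposes as $h'\comp r'$, where $h'\in\HOM$ copies first: with $\Omega_0=\{e\}$, $\Omega_1=\{a\}$, $\Omega_2=\{f\}$ set $h'_0(e)=e$, $h'_1(a)=a[x_1]$, $h'_1(f)=f[x_1x_1]$, so that $h'(f[m(a^ne)])=f[m(a^ne)m(a^ne)]$ over $\Omega$; and $r'\in\REL$ then relabels, with $r'_0(e)=\{e\}$, $r'_1(a)=\{a,b\}$, $r'_2(f)=\{f\}$, acting independently at every $a$-node of \emph{both} copies, so that $(h'\comp r')(f[m(a^ne)])=\{f[m(w_1e)m(w_2e)]\mid w_1,w_2\in\{a,b\}^n\}$. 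If a bottom-up ftt $N$ realized $h'\comp r'$, then exactly as in part~(2) of the proof of Theorem \ref{the.4.43} one looks at the first string containing the whole output in each derivation on $f[m(a^ne)]$ --- produced by a rule whose right-hand side is some $q[f[\mbar{t}_1\mbar{t}_2]]$ --- and argues that for large $n$ this right-hand side must contain two occurrences of $x_1$, hence has the form $q[f[m(v_1x_1)m(v_2x_1)]]$ for fixed $v_1,v_2$; choosing $n$ larger than all such $|v_i|$ forces every output to have a common ``tail'' and so cannot produce all the required pairs --- a contradiction. Hence $h'\comp r'\in(\HOM\comp\REL)-\B$.

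The only genuinely new work is the two decompositions, and the one thing to be careful about is the order of the two operations: in $r\comp h$ the nondeterministic relabeling is completed before the homomorphism copies, so the copies are forced identical (this is property~(B): nondeterminism then copying), whereas in $h'\comp r'$ the copying happens first and the relabeling then treats the copies independently (property~(T): copying then different processing). Checking that these compositions reproduce, on the distinguished inputs $f[m(a^ne)]$, exactly the output sets of Examples \ref{exa.4.25} and \ref{exa.4.37} is a routine unfolding of Definitions \ref{def.3.47}, \ref{def.3.62} and \ref{def.4.4}; once that is done, the non-realizability arguments of Theorem \ref{the.4.43} apply unchanged, since they used only the behaviour on those inputs.
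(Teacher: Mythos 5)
Your proposal is correct and follows essentially the same route as the paper: it uses the same two decompositions (Example \ref{exa.4.25} as relabeling-then-homomorphism over the intermediate monadic alphabet $\{e,a,b,f\}$, and Example \ref{exa.4.37} as copying homomorphism followed by a relabeling), and then invokes the two counting arguments from the proof of Theorem \ref{the.4.43} exactly as the paper does. The inclusion remarks via Theorems \ref{the.4.28} and \ref{the.4.39} needed to conclude non-closure are also the intended ones.
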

\begin{proof}~\\
	(1)~ The bottom-up ftt $M$ of Example \ref{exa.4.25} can be realized by the composition of a relabeling and a homomorphism. Let $\Omega_0 = \{e\}$ and $\Omega_1 = \{a,b,f\}$. Let $r$ be the relabeling from $\Sigma$ into $\Omega$ defined by $r_0(e) = \{e\}$, $r_1(a) = \{a,b\}$ and $r_1(f) = \{f\}$. Let $h$ be the tree homomorphism from $\Omega$ into $\Delta$ defined by $h_0(e) = e$, $h_1(a) = a[x_1]$, $h_1(b) = b[x_1]$ and $h_1(f) = f[x_1x_1]$. Then, for all $s \in T_\Sigma$ and $t \in T_\Delta$, $(s,t) \in M$ iff there exists $u$ in $T_\Omega$ such that $u \in r(s)$ and $h(u) = t$. 
	
	Thus, by the first part of the proof of Theorem \ref{the.4.43}, $M$ is in $(\REL\comp\HOM)-\T$.
	
	(2)~ The top-down ftt $M$ of Example \ref{exa.4.37} can be realized by the composition of a homomorphism and a relabeling. Let $\Pi$ be the ranked alphabet with $\Pi_0 = \{e\}$, $\Pi_1 = \{a\}$ and $\Pi_2=\{f\}$. Let $h$ be the tree homomorphism from $\Sigma$ into $\Pi$ defined by $h_0(e) = e$, $h_1(a) = a[x_1]$ and $h_1(f) = f[x_1x_1]$. Let $r$ be the relabeling from $\Pi$ into $\Delta$ defined by $r_0(e) = \{e\}$, $r_1(a) = \{a,b\}$ and $r_2(f) = \{f\}$. Then, for all $s \in T_\Sigma$ and $t \in T_\Delta$, $(s,t) \in M$ iff there exists $u$ in $T_\Pi$ such that $h(s) = u$ and $t \in r(u)$. 
	
	Thus, by the second part of the proof of Theorem \ref{the.4.43}, $M$ is in $(\HOM\comp\REL)-\B$.
\end{proof}

\begin{exercise}\label{exe.4.45}
	Prove the statements in the above proof.
\end{exercise}

One might get the impression that each bottom-up (resp. top-down) tree transformation can be realized by two top-down (resp. bottom-up) tree transducers (i.e., $\B\subseteq\T\comp\T$, resp. $\T\subseteq\B\comp\B$). We shall show later that this is true. 

Let us now consider the linear case. Since properties (B) and (T) are now eliminated, the only remaining difference between linear top-down and bottom-up tree transducers is caused by property (B$'$).

\begin{lemmawithoutqed}\label{lem.4.46}
	There is a tree transformation $M$ that belongs to \LDB, but not to \T. $M$ can be realized by the composition of a deterministic top-down fta with a linear homomorphism.
\end{lemmawithoutqed}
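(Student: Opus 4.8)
The plan is to exhibit a concrete tree transformation that uses property (B$'$) — checking a recognizable property of a subtree and then deleting it — and to prove formally that no top-down tree transducer can realize it. A clean choice: let $\Sigma_0=\{a,b\}$, $\Sigma_2=\{f\}$, $\Delta_0=\{a,b\}$, $\Delta_1=\{g\}$, and define $M$ so that on input $f[t_1t_2]$ it checks whether $t_2$ has, say, yield in some fixed regular set (e.g.\ $\fyield(t_2)\in a^*$), and if so outputs a monadic ``tally'' $g^{|t_1|}[a]$ recording the size of $t_1$ while completely deleting $t_2$; if the check fails the transformation is undefined. Such an $M$ is linear deterministic bottom-up: the bottom-up ftt processes $t_2$ first, reaches an accepting state only when the check holds, and its rule for $f$ is of the form $f[q[x_1]\,q'[x_2]]\to \bar q[\,x_1]$ wrapped appropriately with the $g$-output — one checks it satisfies the linearity and determinism conditions of Definition~\ref{def.4.23}. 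Expressing $M$ as a deterministic top-down fta composed with a linear homomorphism is done by first relabeling/restricting via the fta to recognize the ``good'' inputs (and recording the check in the label at $f$), then applying a linear tree homomorphism $h$ with $h_2(f)=x_1$ (or $h_2(f')$ building the $g$-tower on $x_1$ and discarding $x_2$) — deletion of $x_2$ is allowed for homomorphisms, and linearity holds since no variable is duplicated.

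Next I would prove $M\notin\T$. The key structural fact about top-down transducers is that the translation of a subtree, once that subtree is ``reached'' in a given state, depends only on the subtree and the state — a top-down ftt cannot first inspect a subtree globally and then delete it; if it deletes the subtree it must do so \emph{before} descending into it, hence without having verified the regular property. I would make this precise by a pumping-style / counting argument: suppose $N=(Q',\Sigma',\Delta',R',Q'_d)$ realizes $M$. Consider inputs $f[t_1\,t_2]$ where $t_1$ ranges over many trees of the same size but $t_2$ is chosen on one hand to satisfy the check and on the other to violate it. In any successful derivation $q_0[f[t_1t_2]]\xRightarrow[N]{*}s$ the first rule applied rewrites $q_0[f[x_1x_2]]$ (after possibly erasing; but $f$ has rank $2$, not $1$, so erasing is limited) and thereby commits, independently of $t_2$, to whether and in which states the subtree $t_2$ will be processed. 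If the matched right-hand side contains no state-occurrence $q[x_2]$, then $t_2$ is deleted immediately and $N$ would produce identical output on a ``good'' and a ``bad'' choice of $t_2$ sharing the same $t_1$ — contradiction, since on the bad $t_2$ the transformation is undefined. If the right-hand side does contain some $q[x_2]$, then $t_2$ is \emph{not} deleted and some portion of $t_2$ survives into the output, contradicting the fact that $M$ always deletes $t_2$ entirely (so the output yield never mentions the $a$'s or $b$'s coming from $t_2$). Since there are only finitely many rules, for $t_1$ large enough one of these two contradictions is forced.

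The main obstacle I anticipate is handling the ``erasing before output'' subtlety, exactly as in the proof of Theorem~\ref{the.4.43}: a top-down ftt may apply several erasing rules near the root before it commits to any output or to descending into $t_2$. Because $f$ has rank $2$, Remark~\ref{rem.4.26}(1) says $f$ cannot be erased without deleting one of its two subtrees, so the very first rule that consumes the $f$-node either deletes $t_2$ (case one above) or keeps a handle $q[x_2]$ on it (case two) — but I must argue carefully that no rule consuming a rank-$1$ ancestor symbol can appear above $f$ here, which is immediate since in my $\Sigma$ the root is always $f$. A secondary technical point is verifying the explicit bottom-up ftt for $M$ meets the linearity/determinism clauses and that the claimed decomposition into (det.\ top-down fta)\,$\comp$\,(linear homomorphism) is correct; both are routine given Definitions~\ref{def.3.61}, \ref{def.3.62}, \ref{def.4.14}, \ref{def.4.23} and the closure machinery, so I would state them with a short verification and spend the bulk of the argument on the $M\notin\T$ counting lemma.
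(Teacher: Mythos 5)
Your overall strategy is the right one and matches the paper's in spirit: pick a transformation that checks a recognizable property of one subtree and then deletes it, and argue that a top-down transducer must commit at the root either to deleting $x_2$ (so it cannot distinguish a good $t_2$ from a bad one) or to descending into $x_2$ (which spoils the output). Your case in which the right-hand side contains no occurrence of $x_2$ is exactly the paper's argument. However, there are two genuine gaps. First, your concrete $M$ is not in \LDB{} --- in fact it is not in \B{} at all, nor is it a tree homomorphism. A bottom-up rule $f[q[x_1]q'[x_2]]\to \bar q[t]$ combines the \emph{already computed} output trees of the subtrees by substituting them at leaf positions of the fixed tree $t\in T_\Delta(\X_2)$; with your monadic output alphabet $\Delta_0=\{a,b\}$, $\Delta_1=\{g\}$, the tree $t$ has a single leaf, so at most one variable can occur in it, and more generally the output height is bounded by a constant times the input height. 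A tally $g^{|t_1|}[a]$ of the \emph{size} of a binary $t_1$ would require output height exponential in the input height, so no such transducer exists. The example must be replaced by one where the kept subtree is passed through essentially unchanged --- e.g.\ the paper's $M=\{(a[tc],a[t])\mid t=m(b^nc)\}$, where the input's left subtree is a monadic chain, the right subtree is checked to be exactly $c$ and deleted, and the decomposition is a det.\ top-down fta restriction to $\{a[m(b^nc)\,c]\mid n\ge0\}$ followed by the linear (deleting) homomorphism with $h_2(a)=a[x_1]$, $h_1(b)=b[x_1]$, $h_0(c)=c$.

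Second, your case where the right-hand side does contain some $q[x_2]$ is not closed: it does \emph{not} follow that ``some portion of $t_2$ survives into the output.'' The state $q$ launched on $t_2$ could translate it into a fixed dummy tree independent of $t_2$'s labels --- indeed, processing $t_2$ into a dummy leaf is precisely how a top-down transducer would try to simulate the look-ahead you are forbidding. The contradiction you actually need here (and the one the paper uses) is structural: if the right-hand side contains occurrences of both $x_1$ and $x_2$ then it has at least two leaves and hence must contain a symbol of rank at least $2$, which the (monadic) output alphabet does not provide, so the correct output cannot be derived. This is why the output alphabet must be kept monadic above the kept subtree; with that fix, together with the third case that the right-hand side contains no $x_1$ (so the output could not depend on $t_1$), the three-way case analysis closes cleanly and no pumping or counting over large $t_1$ is needed.
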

\begin{proof}
	Let $\Sigma_0 = \{c\}$, $\Sigma_1 = \{b\}$, $\Sigma_2 = \{a\}$, $\Delta_0 = \{c\}$ and $\Delta_1 = \{a,b\}$. Consider the tree transformation $M = \{ (a[tc], a[t]) \mid t = m(b^nc)$ for some $n\ge 0 \}$. We shall show that $M\notin$ \T. The rest of the proof is left as an \uem{exercise}. Suppose that there is a top-down ftt $N = (Q,\Sigma',\Delta',R,Q_d) $ such that $T(N) = M$. Each successful derivation of $N$ has to start with the application of a rule $q_0[a[x_1x_2]] \to s$, where $q_0 \in Q_d$ and $s \in T_{\Delta'}(\X_2)$. Now, if $s$ contains no $x_1$, then we could change the input $a[tc]$ into $a[t'c]$ without changing the output. If $s$ contains no $x_2$, then we could change $a[tc]$ into $a[tb[c]]$ and still obtain (the same) output. But if $s$ contains both $x_1$ and $x_2$ then it has to contain a symbol of rank~2 and so $a[t]$ cannot be derived.
\end{proof}

Since both deterministic top-down fta and linear homomorphisms belong to \LDT{} we can state the following corollary.

\begin{corollarywithqed}\label{cor.4.47}
	Composition of linear deterministic top-down tree transformations leads out of the class of top-down tree transformations; in a formula:\\ $(\LDT\circ\LDT)-\T\ne\emptyset$.
\end{corollarywithqed}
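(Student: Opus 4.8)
The statement to prove is Corollary~\ref{cor.4.47}: $(\LDT\comp\LDT)-\T\neq\emptyset$. The plan is to exhibit a concrete tree transformation that lies in $\LDT\comp\LDT$ but not in $\T$, reusing the transformation $M$ from Lemma~\ref{lem.4.46} verbatim. Recall that $M=\{(a[tc],a[t])\mid t=m(b^nc)\text{ for some }n\ge0\}$, and that the lemma already establishes two facts: $M\notin\T$, and $M$ can be realized as the composition of a deterministic top-down fta (a finite tree automaton restriction) with a linear tree homomorphism. So the only real work is to observe that each of the two factors is itself a linear deterministic top-down tree transformation, i.e.\ lies in $\LDT$.

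First I would recall, from the discussion following Definition~\ref{def.4.41} and from Theorem~\ref{the.4.39}, that $\FTA\subseteq\NLT$ and $\LHOM=\PLDTT$. More precisely, for the purposes of this corollary I need the factors in $\LDT$ rather than in $\NLT$ or $\PLDTT$; but the deterministic top-down fta restriction used in Lemma~\ref{lem.4.46} is realized by a \emph{deterministic} top-down ftt (one simulates the det.\ top-down fta, copying each input symbol to the output, with a single initial state — this is linear, and deterministic in the sense of Definition~\ref{def.4.36}), and a linear tree homomorphism is by $\LHOM=\PLDTT$ realized by a one-state linear total-deterministic top-down ftt, which is in particular in $\LDT$. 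Hence both factors of the composition witnessing $M$ are in $\LDT$, so $M\in\LDT\comp\LDT$.

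Then I would invoke the other half of Lemma~\ref{lem.4.46}, namely $M\notin\T$, to conclude $M\in(\LDT\comp\LDT)-\T$, which is nonempty, proving the corollary. Concretely the write-up is short:

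\begin{proof}
	Let $M=\{(a[tc],a[t])\mid t=m(b^nc)\text{ for some }n\ge0\}$ be the tree transformation of Lemma~\ref{lem.4.46}. By that lemma, $M$ is the composition of a deterministic top-down fta (more precisely, the finite tree automaton restriction associated with the tree language recognized by such an automaton) with a linear tree homomorphism, and $M\notin\T$. The finite tree automaton restriction associated with a deterministic top-down fta is realized by a linear deterministic top-down ftt: simulate the automaton in its states, and on reading an input symbol $a\in\Sigma_k$ in state $q$ emit the rule $q[\listt[a][x_i]]\to\listt[a][q_i[x_i]]$ where $(q_1,\dots,q_k)$ is dictated by the transition function, and $q[a]\to a$ for $a\in\Sigma_0$ in a final state for $a$; this ftt is linear and deterministic in the sense of Definition~\ref{def.4.36}, hence is in $\LDT$. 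A linear tree homomorphism is in $\LDT$ as well, since $\LHOM=\PLDTT\subseteq\LDT$ by Theorem~\ref{the.4.39}(3). Therefore $M\in\LDT\comp\LDT$, and since $M\notin\T$ we conclude $(\LDT\comp\LDT)-\T\neq\emptyset$.
\end{proof}

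The main obstacle, such as it is, is purely bookkeeping: making sure that the specific restriction from Lemma~\ref{lem.4.46} is presented by a genuinely \emph{deterministic} (not merely nondeterministic) and \emph{linear} top-down ftt, so that it really lands in $\LDT$ rather than only in $\NLT$; this is immediate from the construction that turns a det.\ top-down fta into a ``checking'' transducer, using the same transition structure and copying the input label to the output at each node. All the genuinely nontrivial content — that $M$ itself escapes $\T$ — has already been carried out in the proof of Lemma~\ref{lem.4.46} and is simply cited here.
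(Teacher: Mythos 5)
Your proposal is correct and is essentially the paper's own argument: the corollary follows from Lemma~\ref{lem.4.46} by the one observation (stated in the text immediately before the corollary) that both deterministic top-down fta restrictions and linear tree homomorphisms belong to \LDT. Your extra bookkeeping — checking that the ``checking'' transducer built from a det.\ top-down fta is linear and deterministic in the sense of Definition~\ref{def.4.36}, and citing $\LHOM=\PLDTT$ from Theorem~\ref{the.4.39}(3) — is exactly the verification the paper leaves implicit.
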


We now show that, in some sense, property (B$'$) is the only cause of difference between linear bottom-up and linear top-down tree transformations. Firstly, all linear top-down tree transformations can be realized linear bottom-up. Secondly, in the nondeleting linear case, all differences between top-down and bottom-up are gone (this can be considered as a generalization of Theorem \ref{the.3.17}).

\begin{theorem}~\label{the.4.48}
	\begin{enumerate}[label=(\arabic*)]
		\item $\LT\subsetneq\LB$,
		\item $\NLT=\NLB$.
	\end{enumerate}
\end{theorem}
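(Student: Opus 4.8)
The plan is to establish both inclusions by explicit simulations between the two transducer models, using essentially one construction for each direction; the proper inclusion in~(1) is then a consequence of Lemma~\ref{lem.4.46}. Indeed, for properness it suffices to recall that Lemma~\ref{lem.4.46} produces a tree transformation lying in $\LDB\subseteq\LB$ but not in $\T\supseteq\LT$, so that $\LB\not\subseteq\LT$; it remains to prove $\LT\subseteq\LB$ and $\NLT=\NLB$.

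For $\LT\subseteq\LB$, let $M=(Q,\Sigma,\Delta,R,Q_d)$ be a linear top-down ftt. I would build a linear bottom-up ftt $M'$ that, while climbing through a subtree, \emph{guesses} the state in which $M$ would have started processing that subtree and emits the output $M$ would have produced from that state. Concretely, put $M'=(Q\cup\{\bot\},\Sigma,\Delta,R',Q_d)$ with $\bot$ a fresh state. Translate each leaf rule $q[a]\to t$ of $R$ into $a\to q[t]$. For each rule $q[\listt[a][x_i]]\to t$ of $R$, linearity lets us write $t$ with each variable occurring at most once, say as $q_i[x_i]$ for $i$ in some subset $I\subseteq\{1,\dots,k\}$; put into $R'$ the rule $a[p_1[x_1]\cdots p_k[x_k]]\to q[\bar t]$, where $p_i=q_i$ for $i\in I$, $p_i=\bot$ for $i\notin I$, and $\bar t\in T_\Delta(\X_k)$ is $t$ with each $q_i[x_i]$ replaced by $x_i$ (so $\bar t$ is again linear). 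Finally add the ``absorbing'' rules $a\to\bot[c]$ for all $a\in\Sigma_0$ and $\listt[a][\bot[x_i]]\to\bot[c]$ for all $k\ge1$ and $a\in\Sigma_k$, where $c$ is any fixed element of $\Delta_0$ (if $\Delta_0=\emptyset$ the transformation is empty and there is nothing to prove). Then $M'$ is a linear bottom-up ftt, and one checks by induction on $s\in T_\Sigma$ that $s\xRightarrow[M']{\ast}\bot[c]$ for every $s$, and that $s\xRightarrow[M']{\ast}q[t]$ iff $q[s]\xRightarrow[M]{\ast}t$ for every $q\in Q$; since $\bot\notin Q_d$ this gives $T(M')=T(M)$. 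The role of $\bot$ is exactly to let $M'$ process the subtrees that $M$ discards by deletion.

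For $\NLT=\NLB$ the inclusion $\NLT\subseteq\NLB$ is the same construction: when $M$ is also nondeleting we always have $I=\{1,\dots,k\}$, so $\bot$ is never needed and each new rule $a[q_1[x_1]\cdots q_k[x_k]]\to q[\bar t]$ is not only linear but nondeleting. For the converse $\NLB\subseteq\NLT$, take a nondeleting linear bottom-up ftt $M=(Q,\Sigma,\Delta,R,Q_d)$ and form $N=(Q,\Sigma,\Delta,R',Q_d)$ by turning each rule $\listt[a][q_i[x_i]]\to q[t]$ of $R$ into $q[\listt[a][x_i]]\to t\langle\li[x_i\gets q_i[x_i]]\rangle$, and each leaf rule $a\to q[t]$ into $q[a]\to t$. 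Since $t\in T_\Delta(\X_k)$ is linear and nondeleting with respect to $\X_k$, and each substituted $q_i[x_i]$ contains exactly one occurrence of $x_i$, the right-hand sides of $N$ lie in $T_\Delta(Q[\X_k])$ and are again linear and nondeleting, so $N\in\NLT$. Then one proves, by induction on $s\in T_\Sigma$, that $s\xRightarrow[M]{\ast}q[t]$ iff $q[s]\xRightarrow[N]{\ast}t$ for every $q\in Q$, whence $T(N)=T(M)$. Here linearity of $t$ is what prevents a copied subtree of $M$ from being split into two independently translated copies by $N$, and nondeletion of $t$ is what ensures that a subtree $M$ processes still occurs in the configuration of $N$, so that $M$ and $N$ have the same domain (this generalizes Theorem~\ref{the.3.17}).

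The routine but slightly tedious part of all of this, exactly like the ``but note that formally this needs a proof'' steps in the proof of Theorem~\ref{the.4.28}(3), is the pair of decomposition lemmas underlying the inductions: every bottom-up derivation $\listt[a][s_i]\xRightarrow{\ast}q[t]$ factors into derivations $s_i\xRightarrow{\ast}q_i[u_i]$ ($1\le i\le k$) followed by a single rule application with $t=t'\langle\li[x_i\gets u_i]\rangle$, and dually every top-down derivation from $q[\listt[a][s_i]]$ consists of one first rule followed by independent derivations of the resulting $q_i[s_i]$. These lemmas are where linearity and nondeletion (and, for part~(1), the absorbing state $\bot$) actually do their work; once they are in hand, the claimed equalities of realized transformations follow immediately.
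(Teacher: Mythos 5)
Your proof is correct and follows essentially the same route as the paper's: the rule-by-rule association between linear bottom-up and linear top-down transducers (with the induction on $s$ showing $s\xRightarrow[M]{\ast}q[t]$ iff $q[s]\xRightarrow[N]{\ast}t$), plus a dummy state that lets the bottom-up machine process the subtrees the top-down machine deletes. The only difference is cosmetic: the paper's dummy state is an ``identity state'' $d$ with $s\xRightarrow{\ast}d[s]$ over the extended output alphabet $\Delta\cup\Sigma$, whereas your $\bot$ absorbs every subtree to a fixed constant $c\in\Delta_0$ --- both work equally well.
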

\vspace*{-1cm}
\begin{proof}
	We first show part (2). Let us say that a nondeleting linear bottom-up ftt $M=(Q,\Sigma,\Delta,R,Q_d)$ and a nondeleting linear top-down ftt $N=(Q',\Sigma',\Delta',R',Q'_d)$ are ``{}associated" if $Q = Q'$, $\Sigma=\Sigma'$, $\Delta=\Delta'$, $Q_d=Q'_d$ and
	\begin{enumerate}[label=(\roman*)]
		\item for each $a \in \Sigma_0$, $q\in Q$ and $t\in T_\Delta$, \\[1mm]
              $a \to q[t]$ is in $R$ iff $q[a] \to t$ is in $R'$;		
		\item\label{the.4.48ii} for each $k \ge 1$, $a \in \Sigma_k$, $\li[q_i],q\in Q$ and 
                                $t \in T_\Delta(\X_k)$ linear and nondeleting w.r.t.\ $\X_k$, \\[1mm]
		      $\listt[a][q_i[x_i]] \to q[t]$ is in $R$ iff \\[1mm] 
              $q[\listt[a][x_i]] \to t\langle \li[x_i\gets q_i[x_i]]\rangle$ is in $R'$.			
	\end{enumerate}
Note that each tree $r \in T_\Delta(Q[\X_k])$, which is linear and nondeleting w.r.t.\ $\X_k$, is of the form $t\langle \li[x_i\gets q_i[x_i]]\rangle$, where $t\in T_\Delta(\X_k)$ is linear and nondeleting w.r.t.\ $\X_k$ (in fact, $t$ is the result of replacing $q_i[x_i]$ by $x_i$ in $r$). Therefore it is clear that for each $M \in\NLB$ there exists an associated $N \in\NLT$ and vice versa. Hence it suffices to prove that associated ftt realize the same tree transformation. Let $M$ and $N$ be associated as above. We shall prove, by induction on $s$, that for every $q\in Q$, $s \in T_\Sigma$ and $u\in T_\Delta$, 
\[
\begin{array}{lll}
	s \xRightarrow[M]{\ast} q[u]  &\text{ iff } & q[s] \xRightarrow[N]{\ast} u\text{.} \tag{$\ast$}
\end{array}
\]
%	\begin{align}
%	s \xRightarrow[M]{\ast} q[u] & &\text{iff} & & q[s] \xRightarrow[N]{\ast} u\text{.} \tag{$\ast$}
%	\end{align}
	For $s\in\Sigma_0$, $(\ast)$ is obvious. Suppose now that $s = \listt[a][s_i]$ for some $k \ge 1$, $a \in \Sigma_k$ and $\li[s_i] \in T_\Sigma$. The only-if part of $(\ast)$ is left to the reader (it is similar to the proof of Theorem \ref{the.4.28}(3)). The if-part of $(\ast)$ is proved as follows (it is similar to the proof of Theorem \ref{the.3.65}). Let the first rule applied in the derivation $q[\listt[a][s_i]] \xRightarrow[N]{\ast} u$ be $q[\listt[a][x_i]] \to r$, and let $r = t\langle\li[x_i \gets q_i[x_i]]\rangle$ for certain $t \in T_\Delta(\X_k)$ and $\li[q_i] \in Q$. Thus $q[\listt[a][s_i]] \xRightarrow[N]{} t\langle\li[x_i \gets q_i[s_i]]\rangle \xRightarrow[N]{\ast} u$. Since $t$ is linear and nondeleting, there exist $\li[u_i] \in T_\Delta$ such that $u=t\langle\li[x_i \gets u_i]\rangle$ and $q_i[s_i] \xRightarrow[N]{\ast} u_i$ for all $i$, $1 \le i \le k$. Hence, by induction, $s_i \xRightarrow[M]{\ast} q_i[u_i]$ for all $i$, $1\le i \le k$. Also, by associatedness, the rule $\listt[a][q_i[x_i]] \to q[t]$ is in $R$. Consequently, $\listt[a][s_i] \xRightarrow[M]{\ast}\listt[a][q_i[u_i]]\xRightarrow[M]{} q[t\langle\li[x_i \gets u_i]\rangle] = q[u]$.
	
	We now show part (1). By Lemma \ref{lem.4.46}, it suffices to show that $\LT\subseteq\LB$. In principle we can use the construction used above to show $\NLT\subseteq\NLB$. The only problem is that the top-down transducer $N$ may delete subtrees, whereas a bottom-up transducer is forced to process a subtree before deleting it. The solution is to add an ``identity state" $d$ to the set of states of $M$ which allows $M$ to process any subtree which has to be deleted ($d$ is such that for all $t\in T_\Sigma$, $t \xRightarrow[M]{\ast} d[t]$). The formal construction is as follows. Let $N=(Q,\Sigma,\Delta,R,Q_d)$ be a linear top-down ftt. Construct the linear bottom-up ftt $M = (Q \cup \{d\},\Sigma,\Delta \cup \Sigma, R_M,Q_d)$, where $R_M$ is obtained as follows. 
	\begin{enumerate}[label=(\roman*)]
		\item For each $a \in \Sigma_0$ the rule $a \to d[a]$ is in $R_M$, and for each $k \ge 1$ and $a \in \Sigma_k$ the rule $\listt[a][d[x_i]] \to d[\listt[a][x_i]]$ is in $R_M$.
		
		\item For $q \in Q$, $a \in \Sigma_0 $ and $t \in T_\Delta$, if $q[a] \to t $ is in $R$, then $a \to q[t]$ is in $R_M$.
		
		\item Let $q[\listt[a][x_i]] \to t$ be in $R$, where $q\in Q$, $k\ge1$, $a\in \Sigma_k$ and $t$ is a linear tree in $T_\Delta(Q[\X_k])$. Determine the (unique) states $\li[q_i] \in Q \cup \{d\}$ such that, for $1 \le i \le k$, either $q_i[x_i]$ occurs in $t$ or ($x_i$ does not occur in $t$ and) $q_i = d$. Determine $t' \in T_\Delta(\X_k)$ such that $t'\langle\li[x_i \gets q_i[x_i]]\rangle = t$. Then the rule $\listt[a][q_i[x_i]] \to q[t']$ is in $R_M$. 
	\end{enumerate}
	Again $(\ast)$ can be proved, and since the proof only slightly differs from the previous one, it is left to the reader. 
\end{proof}

\begin{exercise}\label{exe.4.49}
	Find an example of a tree transformation in $\LDTB-\T$.
\end{exercise}

\begin{exercise}\label{exe.4.50}
	Compare the classes \PLT{} and \PLB.
\end{exercise}

\begin{exercise}\label{exe.4.51}
	Let a deterministic top-down ftt be called ``{}simple" if it is not allowed to make different translations of the same input subtree (if $q[\listt[a][x_i]] \to t$ is a rule and $q_1[x_i]$, $q_2[x_i]$ occur in $t$, then $q_1 = q_2$). Prove that the class of simple deterministic top-down tree transformations is included in \B. (This result should be expected from the fact that property (T) is eliminated. Similarly, one can prove that $\NDB\subseteq\T$, because properties (B) and (B$'$) are eliminated.)
\end{exercise}

\subsection{Decomposition and composition of bottom-up tree transformations}
Since bottom-up tree transformations are theoretically easier to handle than top-down tree transformations, we start investigating the former.

We have seen that a bottom-up ftt can copy after nondeterministic processing (property~(B)). The next theorem shows that these two things can in fact be taken apart into different phases of the transformation: each bottom-up ftt can be decomposed into two transducers, the first doing the nondeterminism (linearly) and the second doing the copying (deterministically).
\begin{theorem}\label{the.4.52}
	Each bottom-up tree transformation can be realized by a finite state relabeling followed by a homomorphism. In formula:
	\begin{flalign*}
	&&\B&\subseteq\QREL\comp\HOM.& \hspace{7cm}\\
	&\text{Moreover} & \LB & \subseteq\QREL\comp\LHOM~\text{ and}& \\
 	&&\DB & \subseteq\DBQREL\comp\HOM. &         
	\end{flalign*}
\end{theorem}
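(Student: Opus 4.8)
The idea is to split a bottom-up ftt $M=(Q,\Sigma,\Delta,R,Q_d)$ into two phases: a first transducer that "guesses" which rule of $M$ is to be applied at each node (recording the state reached and the rule used as a new label), and a second transducer that carries out the actual output production by a pure homomorphism, reading off the output template stored in the label. The first phase can be done by a bottom-up finite state relabeling (it only relabels nodes, using states to check that the guessed rules are consistent bottom-up), and the second phase is a homomorphism because, once the rule at each node is fixed, the output is determined by substituting the already-computed output subtrees into the right-hand-side templates — exactly the shape of a tree homomorphism.

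**Construction.** Given $M$, let $\bar R$ be a set of symbols in bijection with $R$, where $\bar r$ gets rank $k$ if the left-hand side of $r$ is $\listt[a][q_i[x_i]]$ (rank $k\ge1$) and rank $0$ if the left-hand side is $a\in\Sigma_0$. First define the bottom-up finite state relabeling $M_1=(Q,\Sigma,\bar R,R_1,Q_d)$: for each rule $r:a\to q[t]$ in $R$ put $a\to q[\bar r]$ in $R_1$, and for each rule $r:\listt[a][q_i[x_i]]\to q[t]$ in $R$ put $\listt[a][q_i[x_i]]\to q[\listt[\bar r][x_i]]$ in $R_1$. By Definition~\ref{def.4.40} this is indeed a bottom-up finite state relabeling, so $T(M_1)\in\QREL$. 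Intuitively $s\xRightarrow[M_1]{*}q[\bar s]$ iff $\bar s$ is obtained from $s$ by decorating each node with a rule of $M$ that $M$ could legally apply there (the states checking left-to-right consistency of the guesses), and $q$ is the corresponding final state. Next define the tree homomorphism $h:T_{\bar R}\to T_\Delta$ by $h_0(\bar r)=t$ whenever $r:a\to q[t]$, and $h_k(\bar r)=t$ whenever $r:\listt[a][q_i[x_i]]\to q[t]$ (here $t\in T_\Delta(\X_k)$, as required by Definition~\ref{def.3.62}). Then I claim $T(M)=T(M_1)\comp h$.

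**Correctness.** The inclusion $\subseteq$ and $\supseteq$ both follow from the statement: for all $s\in T_\Sigma$, $q\in Q$ and $u\in T_\Delta$, $s\xRightarrow[M]{*}q[u]$ iff there is $\bar s\in T_{\bar R}$ with $s\xRightarrow[M_1]{*}q[\bar s]$ and $h(\bar s)=u$. This is proved by induction on $s$ using Principle~\ref{pri.2.12}: for $s=a\in\Sigma_0$ it is immediate from the rule correspondence, and for $s=\listt$ one uses that in a bottom-up derivation each direct subtree is processed independently into some $q_i[u_i]$, so the first applicable rule at the root must have left-hand side $\listt[a][q_i[x_i]]$ and produces $q[t\langle x_1\gets u_1,\dots,x_k\gets u_k\rangle]$; the matching $M_1$-derivation applies $\listt[a][q_i[x_i]]\to q[\listt[\bar r][x_i]]$ at the root, and $h(\listt[\bar r][\bar s_i])=h_k(\bar r)\langle x_i\gets h(\bar s_i)\rangle=t\langle x_i\gets u_i\rangle$. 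The step "each subtree is processed independently" is the usual bottom-up locality fact and is the one place where a genuinely careful (if routine) argument about the rewriting relation $\xRightarrow[M]{}$ is needed — this is the main technical obstacle, entirely analogous to the "needs a formal proof" remarks in the proof of Theorem~\ref{the.4.28}(3). Taking $q\in Q_d$ and ranging over all $u$ gives $T(M)=T(M_1)\comp h$, hence $\B\subseteq\QREL\comp\HOM$.

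**The refinements.** If $M$ is linear, then every right-hand side template $t$ above is linear, so $h$ is a linear tree homomorphism and $T(M_1)\in\QREL$; thus $\LB\subseteq\QREL\comp\LHOM$. If $M$ is deterministic, then for each $a\in\Sigma_0$ there is at most one rule with left-hand side $a$ and for each $(a,q_1,\dots,q_k)$ at most one rule with left-hand side $\listt[a][q_i[x_i]]$; hence $M_1$ has the same property, which is exactly the determinism condition of Definition~\ref{def.4.23} for the bottom-up finite state relabeling $M_1$, so $T(M_1)\in\DBQREL$, giving $\DB\subseteq\DBQREL\comp\HOM$. (Note $h$ itself is always in $\PDTB$ by Theorem~\ref{the.4.28}(3), so there is no determinism issue on the second factor.) This completes all three claims.
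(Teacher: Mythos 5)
Your proposal is correct and follows essentially the same route as the paper: a bottom-up finite state relabeling that annotates each node with the rule applied (the paper indexes the new symbols by right-hand sides $d_t$ rather than by rules $\bar r$, an immaterial difference), followed by the homomorphism that expands each annotation into its output template, with correctness established by the same induction on $s$ proving $s\xRightarrow[M]{*}q[u]$ iff some relabeled tree $\bar s$ satisfies $s\xRightarrow[M_1]{*}q[\bar s]$ and $h(\bar s)=u$. The linear and deterministic refinements are handled exactly as in the paper.
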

\vspace*{-1cm}
\begin{proof}
	Let $M=(Q,\Sigma,\Delta,R,Q_d)$ be a bottom-up ftt. To simulate $M$ in two phases we apply a technique similar to the one use in the proof of Theorem \ref{the.3.58}: a finite state relabeling is used to put information on each node indicating by which piece of the tree the node should be replaced; then a homomorphism is used to actually replace each node by that piece of tree. The formal construction is as follows.
	
	We simultaneously construct a ranked alphabet $\Omega$, the set of rules $R_N$ of a bottom-up ftt $N=(Q,\Sigma,\Omega,R_N,Q_d)$ and a homomorphism $h:T_\Omega\to T_\Delta$ as follows.
	\begin{enumerate}[label=(\roman*)]
		\item If $a\to q[t]$ is a rule in $R$, then $d_t$ is a (new) symbol in $\Omega_0$, $a\to q[d_t]$ is in $R_N$ and $h_0(d_t)=t$.
		\item If $\listt[a][q_i[x_i]]\to q[t]$ is a rule in $R$, then $d_t$ is a (new) symbol in $\Omega_k$, $\listt[a][q_i[x_i]]\to q[\listt[d_t][x_i]]$ is in $R_N$ and $h_k(d_t)=t$.
	\end{enumerate}
	
\noindent
The only requirement on the symbols of $\Omega$ is that if $t_1\neq t_2$ then $d_{t_1}\neq d_{t_2}$.
	
	Obviously $N$ is a (bottom-up) finite state relabeling. Also, if $M$ is linear then $h$ is linear, and if $M$ is deterministic then so is $N$.
	
	It can easily be shown (by induction on $s$) that, for $s\in T_\Sigma$, $q\in Q$ and $t\in T_\Delta$,
\[
\begin{array}{lll}
s\xRightarrow[M]{*}q[t] & \text{ iff } & \exists u\in T_\Omega:s\xRightarrow[N]{*}q[u] \text{ and } h(u)=t. 
\end{array}
\]
From this it follows that $M=N\comp h$, which proves the theorem.
\end{proof}
\begin{example}\label{exa.4.53}
	Consider the bottom-up ftt $M$ of Example \ref{exa.4.18}. It can be decomposed as follows. Firstly, $\Omega_0=\{a,b\}$ and $\Omega_2=\{m_1,m_2,n_1,n_2\}$, where $d_a=a$, $d_b=b$, $d_{m[x_1x_1]}=m_1$, $d_{m[x_2x_2]}=m_2$, $d_{n[x_1x_1]}=n_1$ and $d_{n[x_2x_2]}=n_2$. Secondly, $N=(Q,\Sigma,\Omega,R_N,Q_d)$, where $R_N$ consists of the rules 
	\begin{align*}
	a & \to q_0[a], & b & \to q_0[b] \\
	f[q_i[x_1]q_j[x_2]]  & \to q_{1-i}[m_1[x_1x_2]], &  g[q_i[x_1]q_j[x_2]]  & \to q_{1-i}[n_1[x_1x_2]], \\
	f[q_i[x_1]q_j[x_2]] &\to q_{1-j}[m_2[x_1x_2]],&
	g[q_i[x_1]q_j[x_2]] &\to q_{1-j}[n_2[x_1x_2]]
	\end{align*}
	for all $i,j\in\{0,1\}$.
	Finally, $h$ is defined by $h_0(a)=a$, $h_0(b)=b$, $h_2(m_1)=m[x_1x_1]$, $h_2(m_2)=m[x_2x_2]$, $h_2(n_1)=n[x_1x_1]$ and $h_2(n_2)=n[x_2x_2]$.
	For example, $ f[a[g[ab]]] \xRightarrow[N]{*} q_0[m_2[a[n_2[ab]]]] $ and $h( m_2[a[n_2[ab]]] )= m[n[bb]n[bb]] $.
\end{example}
Note that Theorem \ref{the.4.52} means (among other things) that each bottom-up tree transformation can be realized by the composition of two top-down tree transformations (cf. Theorem \ref{the.4.43}).

We now show that, in the nondeterministic case, the finite state relabeling can still be decomposed further into a relabeling followed by a finite tree automaton restriction.
\begin{theorem}\label{the.4.54}
	\begin{align*}
	\B  & \subseteq\REL\comp\FTA\comp\HOM~~\text{ and} \hspace{7cm}\\
	\LB & \subseteq\REL\comp\FTA\comp\LHOM.           
	\end{align*}
\end{theorem}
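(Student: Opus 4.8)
The plan is to refine the two‑phase decomposition $\B\subseteq\QREL\comp\HOM$ of Theorem~\ref{the.4.52} by further splitting the relabeling phase into an ordinary relabeling followed by a finite tree automaton restriction. Concretely, starting from a bottom‑up ftt $M=(Q,\Sigma,\Delta,R,Q_d)$, I would use as intermediate ranked alphabet $\Gamma$ the set $\{\mbar\rho\mid\rho\in R\}$ of names of rules, where $\mbar\rho$ is given rank $0$ if the left‑hand side of $\rho$ is some $a\in\Sigma_0$, and rank $k$ if it is $\listt[a][q_i[x_i]]$. The three transducers are then: (i) the relabeling $r:T_\Sigma\to\mathcal{P}(T_\Gamma)$ for which $r_k(a)$ is the set of all $\mbar\rho$ such that $\rho\in R$ has left‑hand side symbol $a$ of rank $k$ — so that $r$ at each node independently guesses a rule of $M$; (ii) the deterministic bottom‑up fta $A$ over $\Gamma$ with state set $Q\cup\{\bot\}$ ($\bot\notin Q$) and final states $Q_d$, whose transitions are arranged so that $\delbu_A(v)=q\in Q$ exactly when $v$ encodes a consistent ``run'' of rules of $M$ whose topmost rule produces state $q$, and $\delbu_A(v)=\bot$ otherwise; the finite tree automaton restriction being $T(A)=\{(v,v)\mid v\in L(A)\}$; and (iii) the tree homomorphism $h:T_\Gamma\to T_\Delta$ with $h_k(\mbar\rho)=t$, where $t\in T_\Delta(\X_k)$ (or $t\in T_\Delta$ when $k=0$) is the right‑hand side of $\rho$. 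By construction $r\in\REL$, $T(A)\in\FTA$ and $h\in\HOM$; and when $M$ is linear every such $t$ is linear, so $h\in\LHOM$.

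The heart of the argument is to check that $M=r\comp T(A)\comp h$, i.e.\ that $(s,u)\in M$ iff there is a $v\in r(s)\cap L(A)$ with $h(v)=u$. This follows at once from the claim, proved by a routine induction on $s$ (using Definitions~\ref{def.4.14} and~\ref{def.4.10}, the recursive shape of bottom‑up derivations, and the definition of $h$), that for all $s\in T_\Sigma$, $q\in Q$ and $u\in T_\Delta$ one has $s\xRightarrow[M]{\ast}q[u]$ if and only if there is $v\in r(s)$ with $\delbu_A(v)=q$ and $h(v)=u$: for the ``if'' direction one reads the rule labels off $v$ to rebuild an $M$‑derivation, and for the ``only if'' direction one labels each node of $s$ with the rule applied there. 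Letting $q$ range over $Q_d$ and recalling $L(A)=\{v\mid\delbu_A(v)\in Q_d\}$ then yields both inclusions $\B\subseteq\REL\comp\FTA\comp\HOM$ and $\LB\subseteq\REL\comp\FTA\comp\LHOM$.

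The point requiring care — and the reason one cannot simply quote the finite state relabeling constructed in the proof of Theorem~\ref{the.4.52} — is the choice of $\Gamma$. There the intermediate symbols $d_t$ remember only the output piece $t$, hence neither the input symbol nor the states involved, so two inputs $s\ne s'$ might be relabeled to the very same tree $v$ even though they have different images; no subsequent finite tree automaton restriction could then tell them apart. Recording the whole rule at each node fixes this while keeping $r$ genuinely symbol‑local (so that it is a relabeling in the sense of Definition~\ref{def.3.47}) and keeping $h$ a homomorphism (a homomorphism may send $\mbar\rho$ to an arbitrary tree, in particular to $t$, regardless of what else $\mbar\rho$ encodes). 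The only mildly delicate step in the induction is that $r$ by itself does not force the states of sibling subtrees to agree with the rule chosen at their parent; this is precisely what the transitions of $A$ are designed to test, so that $L(A)$ singles out exactly those $\Gamma$‑trees that correspond to genuine $M$‑derivations.
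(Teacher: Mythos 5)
Your proof is correct, but it takes a somewhat different route from the paper's. The paper first invokes Theorem~\ref{the.4.52} ($\B\subseteq\QREL\comp\HOM$, $\LB\subseteq\QREL\comp\LHOM$) together with the closure of \HOM{} and \LHOM{} under composition (Exercise~\ref{exe.4.9}), so that it only remains to prove $\QREL\subseteq\REL\comp\FTA\comp\LHOM$; for a finite state relabeling it then uses exactly your guess-and-check idea (label each node with a guessed rule, let a nondeterministic bottom-up fta verify state consistency, and finish with a projection). You instead perform the whole decomposition in one step directly on the given bottom-up ftt, taking full rule names as the intermediate alphabet so that the final homomorphism can emit the entire right-hand side $t$ rather than a single output symbol. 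What the paper's route buys is modularity: it reuses Theorem~\ref{the.4.52} and only has to analyse the much simpler \QREL{} machines, with the two homomorphisms (the one from Theorem~\ref{the.4.52} and the final projection) merged by Exercise~\ref{exe.4.9}. What your route buys is a self-contained argument with a single clean induction, and it makes transparent why the intermediate symbols must record the whole rule (input symbol, states, and output piece) rather than just the output piece $d_t$ of Theorem~\ref{the.4.52} --- a point the paper never has to confront because it never tries to refine that particular relabeling. One small remark: your fta $A$ with the sink state $\bot$ is deterministic, which is slightly stronger than needed (the paper uses a nondeterministic fta with empty transition sets in place of $\bot$); either is fine since \FTA{} is defined via recognizable languages.
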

\vspace*{-1cm}
\begin{proof}
	By the previous theorem and the fact that \HOM{} and \LHOM{} are closed under composition (see Exercise \ref{exe.4.9}) it clearly suffices to show that $\QREL\subseteq\REL\comp\FTA\comp\LHOM$. Let $M=(Q,\Sigma,\Delta,R,Q_d)$ be a bottom-up finite state relabeling. We shall actually show that $M$ can be simulated by a relabeling, followed by an fta, followed by a projection (which is in \LHOM). The relabeling guesses which rule is applied by $M$ at each node (and puts that rule as a label on the node), the bottom-up fta checks whether this guess is in accordance with the possible state transitions of $M$, and finally the projection labels the node with the right label.
	
	Formally we construct a ranked alphabet $\Omega$, a relabeling $r$ from $T_\Sigma$ into $T_\Omega$, a (nondeterministic) bottom-up fta $N=(Q,\Omega,\delta,S,Q_d)$ and a projection $p$ from $T_\Omega$ into $T_\Delta$ as follows.
	\begin{enumerate}[label=(\roman*)]
		\item If rule $m$ in $R$ is of the form $a\to q[b]$, then $d_m$ is a (new) symbol in $\Omega_0$, $d_m\in r_0(a)$, $q\in S_{d_m}$ and $p_0(d_m)=b$.
		\item If rule $m$ in $R$ is of the form $\listt[a][q_i[x_i]]\to q[\listt[b][x_i]]$, then $d_m$ is a (new) symbol in $\Omega_k$, $d_m\in r_k(a)$, $q\in\delta_{d_m}^k(\li[q_i])$ and $p_k(d_m)=b$.
	\end{enumerate}

\noindent
We require that if $m$ and $n$ are different rules, then $d_m\neq d_n$.
	
	It is left to the reader to show that, for $s\in T_\Sigma$, $q\in Q$ and $t\in T_\Delta$,
\[
\begin{array}{lll}
s\xRightarrow[M]{*}q[t] & \text{ iff } & \exists u\in T_\Omega:u\in r(s), u\in L(N) \text{ and } t=p(u).
\end{array}
\]
This proves the theorem.
\end{proof}
These decomposition results are often very helpful when proving something about bottom-up tree transformations: the proof can often be split up into proofs about \REL{}, \FTA{} and \HOM{} only. As an example, we immediately have the following result from Theorem \ref{the.4.54} and Theorems \ref{the.3.32}, \ref{the.3.48} and \ref{the.3.65} (note that a class of tree languages is closed under fta restrictions if and only if it is closed under intersection with recognizable tree languages!).
\begin{corollarywithqed}\label{cor.4.55}
	\RECOG{} is closed under linear bottom-up tree transformations.
\end{corollarywithqed}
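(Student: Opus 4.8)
The plan is to reduce the statement to closure properties of \RECOG{} under the three elementary operations appearing in the decomposition of linear bottom-up tree transformations. By Theorem~\ref{the.4.54} we have $\LB\subseteq\REL\comp\FTA\comp\LHOM$, so every linear bottom-up tree transformation $M$ factors as $M = r\comp R\comp h$ with $r$ a relabeling, $R$ a finite tree automaton restriction, and $h$ a linear tree homomorphism. Unwinding Definitions~\ref{def.4.4} and~\ref{def.4.6}, for any tree language $L$ this gives $M(L) = h\bigl(R\bigl(r(L)\bigr)\bigr)$. Hence it suffices to verify, one factor at a time, that \RECOG{} is closed under each of $r$, $R$ and $h$.

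\textbf{Key steps.} First, if $L\in\RECOG$ then $r(L)\in\RECOG$ by Theorem~\ref{the.3.48} (closure of \RECOG{} under relabelings). Second, a finite tree automaton restriction $R$ has, by Definition~\ref{def.4.8}, the form $R=\{(t,t)\mid t\in L'\}$ for some $L'\in\RECOG$, so $R(r(L)) = r(L)\cap L'$; since $r(L)$ and $L'$ are both recognizable, this intersection is recognizable by Theorem~\ref{the.3.32}. Third, $h$ is a linear tree homomorphism, so $h\bigl(R(r(L))\bigr)\in\RECOG$ by Theorem~\ref{the.3.65}. Chaining these three facts shows $M(L)\in\RECOG$, i.e.\ \RECOG{} is closed under linear bottom-up tree transformations.

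\textbf{Anticipated obstacle.} There is essentially no real difficulty here, since all the hard work has already been done: the decomposition $\LB\subseteq\REL\comp\FTA\comp\LHOM$ is Theorem~\ref{the.4.54}, and the three closure lemmas are Theorems~\ref{the.3.48}, \ref{the.3.32} and~\ref{the.3.65}. The only point worth stating carefully is the translation between "image under a composition'' and "iterated image'', together with the observation (parenthetically noted in the excerpt) that closure under \FTA{} restrictions is exactly closure under intersection with recognizable tree languages; once that is in place the proof is a one-line chain of applications. One should also make sure the ranked alphabets are matched up appropriately along the composition, but this is purely bookkeeping.
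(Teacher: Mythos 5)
Your proof is correct and follows exactly the route the paper intends: Corollary~\ref{cor.4.55} is stated as an immediate consequence of the decomposition $\LB\subseteq\REL\comp\FTA\comp\LHOM$ (Theorem~\ref{the.4.54}) together with Theorems~\ref{the.3.48}, \ref{the.3.32} and~\ref{the.3.65}, including the observation that closure under \FTA{} restrictions is the same as closure under intersection with recognizable tree languages. Nothing to add.
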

(This expresses that the image of a recognizable tree language under a linear bottom-up tree transformation is again recognizable. In other words, $\text{\LB-Surface}=\RECOG$.)
\begin{exercise}\label{exe.4.56}
	Prove, using Theorem \ref{the.4.54}, that $\text{\B-Surface}=\text{\HOM-Surface}$. Prove that, in fact, each \B-Surface tree language is the homomorphic image of a rule tree language.
\end{exercise}
We now prove that under certain circumstances the composition of two elements in \B{} is again in \B. Recall from Section \ref{sec.4.3} that the non-closure of \B{} under composition was caused by the failure of property (T) for \B{}: in general, in \B{}, we can't compose a copying transducer with a nondeterministic one. We now show that if either the first transducer is noncopying or the second one is deterministic, then their composition is again in \B{}. Thus, when eliminating (the failure of) property (T), closure results are obtained.
\begin{theorem}\label{the.4.57}
\[
\begin{array}{rrlllrll}
(1) & \LB\comp\B & \!\!\!\subseteq & \!\!\!\!\B & \text{and} & \LB\comp\LB & \!\!\!\subseteq & \!\!\!\!\LB.  \hspace{6cm}\\[2mm]
(2) & \B\comp\DB & \!\!\!\subseteq & \!\!\!\!\B & \text{and} & \DB\comp\DB & \!\!\!\subseteq & \!\!\!\!\DB.
\end{array}
\]	
%	\begin{align*}
%	& (1) & \LB\comp\B & \subseteq\B &   & \text{and} & \LB\comp\LB & \subseteq\LB. \\
%	& (2) & \B\comp\DB & \subseteq\B &   & \text{and} & \DB\comp\DB & \subseteq\DB.
%	\end{align*}
\end{theorem}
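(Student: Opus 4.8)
The plan is to prove all four inclusions at once by a single product construction. Let $M_1=(Q_1,\Sigma,\Delta,R_1,Q_{1d})$ and $M_2=(Q_2,\Delta,\Omega,R_2,Q_{2d})$ be bottom-up ftt; I shall build a bottom-up ftt $M=(Q_1\times Q_2,\Sigma,\Omega,R,Q_{1d}\times Q_{2d})$ with $T(M)=T(M_1)\comp T(M_2)$, and show that $M$ inherits linearity from $M_2$ (which settles part (1), since there $M_1$ is already linear) and determinism from $M_1$ and $M_2$ jointly (which settles part (2)). Intuitively $M$ runs $M_1$ while simulating $M_2$ ``on the fly'' on the tree $M_1$ is producing: a state $(p,q)$ of $M$ records the $M_1$-state $p$ reached on the current subtree together with the $M_2$-state $q$ reached by $M_2$ on the $M_1$-output of that subtree, and the output of $M$ at that subtree is the corresponding $M_2$-output. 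First I note that we may assume $M_2$ is defined on every input tree: complete $M_2$ towards a fresh ``dead'' state $d\notin Q_{2d}$ (adding transitions to $d$ only where none were defined). This changes neither $T(M_2)$ nor the linearity or determinism of $M_2$, and it is needed so that the simulation of $M_2$ does not get stuck on a subtree that $M_1$ deletes.

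The rules of $M$ are built as follows. For every rule $a[p_1[x_1]\cdots p_k[x_k]]\to p[t]$ of $R_1$ (with $k\ge 0$, $t\in T_\Delta(\X_k)$; the case $k=0$ is a rule $a\to p[t]$ with $t\in T_\Delta$) and every choice $q_1,\dots,q_k\in Q_2$, consider all $M_2$-derivations, in the rewriting-system sense, of the form
\[
t\langle x_1\gets q_1[x_1],\dots,x_k\gets q_k[x_k]\rangle\ \xRightarrow[M_2]{*}\ q[u],
\]
where each $x_i$ on the right of ``$\gets$'' is treated as a new output symbol of rank $0$, so that $u\in T_\Omega(\X_k)$; for each such derivation put the rule $a[(p_1,q_1)[x_1]\cdots(p_k,q_k)[x_k]]\to(p,q)[u]$ into $R$. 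There are only finitely many, since $t$ is a fixed finite tree whose nodes $M_2$ processes once each, and $u$ does lie in $T_\Omega(\X_k)$ because a bottom-up transducer only pushes the leaf outputs $x_i$ upward through right-hand sides of its rules. If $M_1$ is linear, each $x_i$ occurs at most once in $t$, hence at most once in $u$, so $M$ is linear as soon as $M_2$ is. If $M_2$ is deterministic and (as arranged) total, the displayed derivation on $t$ is unique once the $q_i$ are fixed, so when $M_1$ is also deterministic there is at most one rule of $M$ per left-hand side; thus $M$ is deterministic whenever $M_1$ and $M_2$ are.

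The correctness of the construction rests on the lemma, proved by induction on $s\in T_\Sigma$:
\[
s\ \xRightarrow[M]{*}\ (p,q)[u]\quad\Longleftrightarrow\quad \exists\,v\in T_\Delta:\ s\ \xRightarrow[M_1]{*}\ p[v]\ \text{ and }\ v\ \xRightarrow[M_2]{*}\ q[u].
\]
Taking $p\in Q_{1d}$, $q\in Q_{2d}$ this yields $T(M)=T(M_1)\comp T(M_2)$, and then the inheritance statements above give the four inclusions. The base case $s\in\Sigma_0$ is immediate. For $s=a[s_1\cdots s_k]$ one first invokes the standard fact that a complete bottom-up computation factors through the direct subtrees (the ``this needs a proof'' observations already used in the proofs of Theorems \ref{the.4.28} and \ref{the.4.48}): a computation of $M$, of $M_1$, or of $M_2$ on a tree $t\langle x_i\gets v_i\rangle$ amounts to computations on the pieces followed by one rule at the root. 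Matching an $M$-rule with the $M_1$-rule and $M_2$-derivation it was assembled from, and applying the induction hypothesis to the $s_i$, both directions reduce to the commutation property: if $t\langle x_i\gets q_i[x_i]\rangle\xRightarrow[M_2]{*}q[u]$ and $v_i\xRightarrow[M_2]{*}q_i[u_i]$ for each $i$, then $t\langle x_i\gets v_i\rangle\xRightarrow[M_2]{*}q[\,u\langle x_i\gets u_i\rangle\,]$ (process the $v_i$ bottom-up, then replay the recorded derivation on $t$ with $u_i$ substituted for $x_i$).

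The main obstacle is exactly this commutation step when some variable $x_i$ occurs several times, and it is there that the two sets of hypotheses enter. In the ``$\Longleftarrow$'' direction one must use the \emph{same} pair $(q_i,u_i)$ at every occurrence of $x_i$: this is automatic if $M_1$ is linear (at most one occurrence), and otherwise it is precisely what determinism of $M_2$ buys us (its computation on $v_i$ is unique, so all copies return the same $(q_i,u_i)$, and $u\langle x_i\gets u_i\rangle$ then correctly fills all the copied positions). Dually, since $M_1$ may delete a subtree (possible even in the linear case), $M_1\comp M_2$ never actually runs $M_2$ on that $v_i$ whereas the simulation insists on it — which is why $M_2$ was made total, so this phantom run always exists, ends in the dead state, and its output is deleted. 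Once the factoring lemma and the commutation lemma are established, the induction goes through, and the linearity/determinism bookkeeping of the construction completes the proof of parts (1) and (2).
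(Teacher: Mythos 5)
Your construction is correct, but it takes a genuinely different route from the paper's. The paper first decomposes an arbitrary bottom-up ftt into a finite state relabeling (respectively a relabeling followed by an fta restriction) followed by a homomorphism (Theorems \ref{the.4.52} and \ref{the.4.54}), proves three small lemmas about composing a bottom-up ftt on the right with an element of \HOM, of \FTA{} and of \DBQREL{} (resp.\ \REL), and then obtains all four inclusions by chaining these lemmas along the decomposition. Your single product construction on $Q_1\times Q_2$ --- precomputing, for each right-hand side $t$ of a rule of $M_1$ and each assignment of $M_2$-states to its variables, the finitely many possible outcomes of running $M_2$ over $t$ --- subsumes all three of those lemmas at once, and you put your finger on exactly the right pressure point: in the direction $T(M_1)\comp T(M_2)\subseteq T(M)$ all copies of a duplicated output subtree of $M_1$ must be translated identically by $M_2$, which is vacuous when $M_1$ is linear and forced when $M_2$ is deterministic; and deletion by $M_1$ is handled by totalizing $M_2$ with a dead state (the paper never faces this, because the fta restrictions and relabelings it composes with are effectively total). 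What the paper's modular route buys is that each lemma is nearly trivial and the decomposition theorems are reusable elsewhere; what your route buys is a self-contained proof, independent of Theorems \ref{the.4.52} and \ref{the.4.54}, in which the precise roles of linearity and determinism are visible in a single commutation step. Two small points if you write it out in full: the dead-state rules need some fixed right-hand side in $T_\Omega$ (available unless $\Omega_0=\emptyset$, in which case the composed transformation is empty and there is nothing to prove), and the state set of $M$ should formally be $Q_1\times(Q_2\cup\{d\})$ after the completion.
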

\vspace*{-1cm}
\begin{proof}
	Because of our decomposition of \B{} we only need to look at special cases. These are treated in three lemmas, concerning composition with homomorphisms, fta restrictions and relabelings respectively. Detailed induction proofs of these lemmas are left to the reader.
	\begin{lemma*}
		$\B\comp\HOM\subseteq\B$, \;$\LB\comp\LHOM\subseteq\LB$ and $\DB\comp\HOM\subseteq\DB$.
	\end{lemma*}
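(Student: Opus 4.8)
The plan is to prove all three inclusions by one uniform construction. Let $M_1=(Q,\Sigma,\Delta,R_1,Q_d)$ be a bottom-up ftt and let $h:T_\Delta\to T_\Omega$ be a tree homomorphism given by maps $h_k:\Delta_k\to T_\Omega(\X_k)$. First I would extend $h$ to a family of maps $\mbar h:T_\Delta(\X_k)\to T_\Omega(\X_k)$ (one for each $k$) by setting $\mbar h(x_i)=x_i$, exactly as $h$ was extended to nonterminals in the proof of Theorem~\ref{the.3.65}. Then I build a bottom-up ftt $M=(Q,\Sigma,\Omega,R,Q_d)$ that simulates $M_1$ but carries at each node the $h$-image of the output tree $M_1$ would produce there: for every rule $a\to q[t]$ in $R_1$ with $a\in\Sigma_0$ put $a\to q[h(t)]$ in $R$, and for every rule $\listt[a][q_i[x_i]]\to q[t]$ in $R_1$ with $t\in T_\Delta(\X_k)$ put $\listt[a][q_i[x_i]]\to q[\mbar h(t)]$ in $R$. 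Since $h(t)\in T_\Omega$ and $\mbar h(t)\in T_\Omega(\X_k)$, this $M$ is a legitimate bottom-up ftt, and the goal is to show $T(M)=T(M_1)\comp h=M_1\comp h$.

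The analytic core is the commutation of $\mbar h$ with tree concatenation: for every $t\in T_\Delta(\X_k)$ and $s_1,\dots,s_k\in T_\Delta$,
\[
h\bigl(t\langle\li[x_i\gets s_i]\rangle\bigr)=\mbar h(t)\langle\li[x_i\gets h(s_i)]\rangle .
\]
This is a routine induction on $t$ — the same computation that underlies the proof of Theorem~\ref{the.4.28}(3). Using it, I would prove by induction on $s\in T_\Sigma$ the equivalence
\[
s\xRightarrow[M]{\ast}q[u]\quad\Longleftrightarrow\quad \text{there is }t\in T_\Delta\text{ with }s\xRightarrow[M_1]{\ast}q[t]\text{ and }h(t)=u .
\]
The base case $s\in\Sigma_0$ is immediate from the definition of $R$. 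For $s=\listt[a][s_i]$ one uses the ``horizontal'' decomposition of a bottom-up derivation (the one repeatedly flagged as ``needing a formal proof'' in the proof of Theorem~\ref{the.4.28}(3)): $\listt[a][s_i]\xRightarrow[M_1]{\ast}q[t]$ holds iff there are $q_i\in Q$, $t_i\in T_\Delta$ and a rule $\listt[a][q_i[x_i]]\to q[t']$ in $R_1$ with $s_i\xRightarrow[M_1]{\ast}q_i[t_i]$ for all $i$ and $t=t'\langle\li[x_i\gets t_i]\rangle$, and the analogous statement holds for $M$ with $\mbar h(t')$ in place of $t'$. Feeding the induction hypothesis for each $s_i$ into the commutation identity then matches the two sides. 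Restricting to $q\in Q_d$ gives $T(M)=M_1\comp h$.

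Finally, the three refinements drop out of the construction with no extra work. If $M_1$ is linear and $h$ is linear, then $\mbar h(t)$ is linear whenever $t$ is (each $x_i$ occurs at most once in $t$ and each variable occurs at most once in each $h_k(a)$), so every right hand side of $M$ is linear and $M\in\LB$; this yields $\LB\comp\LHOM\subseteq\LB$. If $M_1$ is deterministic, then $M$ has exactly the same left hand sides as $M_1$ — rules are altered only on their right hand sides, and rule$\mapsto$rule is a bijection onto $R$ — so $M$ inherits the ``at most one rule per left hand side'' property and $M\in\DB$; this yields $\DB\comp\HOM\subseteq\DB$. I expect the only genuine work to be the precise ``horizontal'' decomposition of bottom-up derivations used in the induction (making rigorous that processing $\listt[a][s_i]$ is the same as processing the $s_i$ independently and then applying one rule at the root); the commutation lemma for $\mbar h$ and the property-preservation claims are straightforward.
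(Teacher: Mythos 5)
Your construction is exactly the one the paper uses (extend $h$ by $h_0(x_i)=x_i$ and apply it to the right-hand sides of the rules, leaving the left-hand sides untouched), and your correctness statement, the commutation of $h$ with tree concatenation, and the preservation arguments for linearity and determinism all match the paper's proof, which likewise leaves the detailed induction to the reader. The proposal is correct and takes essentially the same route.
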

	\begin{proof}
		Let $M=(Q,\Sigma,\Delta,R,Q_d)$ be a bottom-up ftt and $h$ a tree homomorphism from $T_\Delta$ into $T_\Omega$. We have to show that $M\comp h$ can be realized by a bottom-up ftt $N$. The idea is the same as that of Theorem \ref{the.3.65}: $N$ simulates $M$ but outputs at each step the homomorphic image of the output of $M$. Note that, contrary to the proof of Theorem~\ref{the.3.65} (which was concerned with regular tree grammars, a top-down device), we need not require linearity of $h$.
		
		The construction is as follows. Extend $h$ to trees in $T_\Delta(\X)$ by defining $h_0(x_i)=x_i$ for all $x_i$ in $\X$. Thus $h$ is now a homomorphism from $T_\Delta(\X)$ into $T_\Omega(\X)$. Define $N=(Q,\Sigma,\Omega,R_N,Q_d)$ such that
		\begin{enumerate}[label=(\roman*)]
			\item if $a\to q[t]$ is in $R$, then $a\to q[h(t)]$ is in $R_N$;
			\item if $\listt[a][q_i[x_i]]\to q[t]$ is in $R$,
			then $\listt[a][q_i[x_i]]\to q[h(t)]$ is in $R_N$.
		\end{enumerate}
		Obviously, if $M$ and $h$ are linear then so is $N$ (the linear homomorphism transforms a linear tree in $T_\Delta(\X)$ into a linear tree in $T_\Omega(\X)$), and if $M$ is deterministic then so is~$N$.
	\end{proof}
	\begin{lemma*}
		$\B\comp\FTA\subseteq\B$, \;$\LB\comp\FTA\subseteq\LB$ and $\DB\comp\FTA\subseteq\DB$.
	\end{lemma*}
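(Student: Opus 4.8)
The plan is to realize $M\comp R'$ for $M\in\B$ and $R'\in\FTA$ by a bottom-up ftt that simulates $M$ and, in a second state component, tracks the state of a deterministic bottom-up fta running on the output tree produced so far. Fix $M=(Q,\Sigma,\Delta,R,Q_d)$ and a finite tree automaton restriction $R'=\{(t,t)\mid t\in L\}$ with $L\in\RECOG$; by Definition \ref{def.3.2} and Theorem \ref{the.3.8} there is a deterministic bottom-up fta $A$ with transition family $\mu$, initial-state family $(\sigma_b)_{b\in\Delta_0}$ and final-state set $F$ such that $L(A)=L$. Then $M\comp R'=\{(s,t)\mid s\xRightarrow[M]{*}q[t]\text{ for some }q\in Q_d,\ \delbu[\mu](t)\in F\}$, so it suffices to build a bottom-up ftt $N$ realizing this set and to check that $N$ can be kept linear if $M$ is, and deterministic if $M$ is.

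The one delicate point is that the output tree attached to a state of $M$ at an internal node lies in $T_\Delta(\X_k)$, i.e.\ it still contains variables, so $A$ cannot be evaluated on it directly. I would therefore first introduce a generalized transition map of $A$: for $k\ge 0$, $t\in T_\Delta(\X_k)$ and $\bar p=(p_1,\dots,p_k)\in P^k$, define $\mu^*(t,\bar p)\in P$ by $\mu^*(x_i,\bar p)=p_i$, $\mu^*(b,\bar p)=\sigma_b$ for $b\in\Delta_0$, and $\mu^*(b[t_1\cdots t_m],\bar p)=\mu_b(\mu^*(t_1,\bar p),\dots,\mu^*(t_m,\bar p))$. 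An easy induction on $t$ — using that $A$ is deterministic, so every occurrence of a given $u_i$ is assigned the same state $\delbu[\mu](u_i)$ — then gives the substitution identity $\delbu[\mu]\bigl(t\langle x_1\gets u_1,\dots,x_k\gets u_k\rangle\bigr)=\mu^*\bigl(t,(\delbu[\mu](u_1),\dots,\delbu[\mu](u_k))\bigr)$ for all $u_1,\dots,u_k\in T_\Delta$.

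Next I would set $N=(Q\times P,\Sigma,\Delta,R_N,Q_d\times F)$, where $R_N$ is defined from $R$ as follows: for each rule $a\to q[t]$ of $R$ with $a\in\Sigma_0$ put $a\to(q,\delbu[\mu](t))[t]$ into $R_N$; and for each rule $\listt[a][q_i[x_i]]\to q[t]$ of $R$ and each $\bar p=(p_1,\dots,p_k)\in P^k$ put $\listt[a][(q_i,p_i)[x_i]]\to(q,\mu^*(t,\bar p))[t]$ into $R_N$. Since the right hand sides of $N$ are exactly those of $M$, $N$ is linear whenever $M$ is; and since for a fixed left hand side of an $N$-rule the underlying $R$-rule is uniquely determined (by determinism of $M$) and the tuple $\bar p$ is read off the left hand side, $N$ is deterministic whenever $M$ is.

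Finally the correctness statement is proved by induction on $s\in T_\Sigma$: for all $q\in Q$, $p\in P$ and $u\in T_\Delta$, one has $s\xRightarrow[N]{*}(q,p)[u]$ iff $s\xRightarrow[M]{*}q[u]$ and $\delbu[\mu](u)=p$. The base case $s\in\Sigma_0$ is read off the rules; the step for $s=\listt$ uses the standard fact that a bottom-up computation $\listt\xRightarrow[M]{*}q[u]$ decomposes as $t_i\xRightarrow[M]{*}q_i[u_i]$ for $1\le i\le k$ followed by one application of a rule $\listt[a][q_i[x_i]]\to q[t]$ with $u=t\langle x_i\gets u_i\rangle$, together with the substitution identity for $\mu^*$. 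Taking $q\in Q_d$ and $p\in F$ (equivalently $u\in L$) yields $T(N)=M\comp R'$, and all three inclusions $\B\comp\FTA\subseteq\B$, $\LB\comp\FTA\subseteq\LB$ and $\DB\comp\FTA\subseteq\DB$ follow at once. I expect the only real care to be needed in the bookkeeping around $\mu^*$ and its interaction with the determinism of $A$; the two inductions themselves are routine.
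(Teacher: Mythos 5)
Your proof is correct and is essentially the paper's own construction: a product automaton on $Q\times P$ that simulates $M$ while tracking the state of the deterministic bottom-up fta on the output, with the second component of each rule computed by evaluating the fta on the right-hand side $t$ with prescribed states at the variables (your $\mu^*(t,\bar p)$ is exactly the paper's ``$t\langle x_i\gets q'_i[x_i]\rangle\xRightarrow{*}q'[t]$'' after extending the fta's alphabet by the variables). The preservation of linearity and determinism is argued the same way in both.
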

	\begin{proof}
		The idea of the proof is similar to the one used to solve Exercise \ref{exe.3.68}. Let $M=(Q,\Sigma,\Delta,R,Q_d)$ be a bottom-up ftt and let $\widetilde{N}=(Q_N,\Delta,\Delta,R_N,Q_{dN})$ be a deterministic bottom-up ftt corresponding to a deterministic bottom-up fta as in the proof of Theorem \ref{the.4.28}(2). We have to show that $M\comp\widetilde{N}$ can be realized by a bottom-up ftt $K$. $K$ will have $Q\times Q_N$ as its set of states and it will simultaneously simulate $M$ and keep track of the state of $\widetilde{N}$ at the computed output of $M$.
		
		Extend $\widetilde{N}$ by expanding its alphabet to $\Delta\cup\X$ (or, better, to $\Delta\cup\X_m$ where $m$ is the highest subscript of a variable occurring in the rules of $M$) and by allowing the variables in its rules to range over $T_\Delta(\X)$. Thus the computation of the finite tree automaton $\widetilde{N}$ may now be started with an element of $T_\Delta(Q_N[\X])$, which means that, at certain places in the tree, $\widetilde{N}$ has to start in prescribed start states.
		
		Construct $K=(Q\times Q_N,\Sigma,\Delta,R_K,Q_d\times Q_{dN})$ such that
		\begin{enumerate}[label=(\roman*)]
			\item if $a\to q[t]$ is in $R$ and if $t\xRightarrow[\widetilde{N}]{*}q'[t]$, \\
                  then $a\to(q,q')[t]$ is in $R_K$;
			\item if $\listt[a][q_i[x_i]]\to q[t]$ is in $R$, and 
			      if $t\langle\li[x_i\gets q'_i[x_i]]\rangle\xRightarrow[\widetilde{N}]{*}q'[t]$, \\
                  then the rule $\listt[a][(q_i,q'_i)[x_i]]\to(q,q')[t]$ is in $R_K$.
		\end{enumerate}
		Note that if $M$ is linear, then so is $K$. Moreover, since $\widetilde{N}$ is deterministic, if $M$ is deterministic then so is $K$.
	\end{proof}
	\begin{lemma*}
		$\B\comp\DBQREL\subseteq\B$, \;$\DB\comp\DBQREL\subseteq\DB$ and $\LB\comp\REL\subseteq\LB$.
	\end{lemma*}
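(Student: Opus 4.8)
The plan is to follow the pattern of the two preceding lemmas: for each of the three inclusions I would write down a direct construction of a bottom-up ftt realizing the composition, prove its correctness by induction on the input tree, and read the extra structural properties (linearity, determinism) off the shapes of the constructed rules.

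For $\B\comp\DBQREL$ and $\DB\comp\DBQREL$, let $M=(Q,\Sigma,\Delta,R,Q_d)$ be a bottom-up ftt and $N=(P,\Delta,\Omega,R_N,P_d)$ a deterministic bottom-up finite state relabeling, whose rules therefore have the forms $a\to p[b]$ ($a\in\Delta_0$, $b\in\Omega_0$) and $\listt[a][p_i[x_i]]\to p[\listt[b][x_i]]$ ($a\in\Delta_k$, $b\in\Omega_k$). As in the $\FTA$-lemma, I would first extend $N$ so that it may be started at a node carrying a prescribed state, i.e.\ so that it computes on trees in $T_\Delta(P[\X])$; because $N$ only relabels, a computation of $N$ on $t\langle\li[x_i\gets p_i[x_i]]\rangle$ yields a uniquely determined pair $(p,t')$, where $t'\in T_\Omega(\X_k)$ is $t$ with every $\Delta$-symbol relabeled (the variables left in place), or it fails. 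Then I would build $K=(Q\times P,\Sigma,\Omega,R_K,Q_d\times P_d)$ by putting $a\to(q,p)[t']$ in $R_K$ for each rule $a\to q[t]$ of $R$ with $t\xRightarrow[N]{\ast}p[t']$, and $\listt[a][(q_i,p_i)[x_i]]\to(q,p)[t']$ in $R_K$ for each rule $\listt[a][q_i[x_i]]\to q[t]$ of $R$ and each $\li[p_i]\in P^k$ with $t\langle\li[x_i\gets p_i[x_i]]\rangle\xRightarrow[N]{\ast}p[t']$. The correctness claim, proved by induction on $s$, is that for $s\in T_\Sigma$, $q\in Q$, $p\in P$ and $t'\in T_\Omega$, $s\xRightarrow[K]{\ast}(q,p)[t']$ iff $s\xRightarrow[M]{\ast}q[t]$ and $t\xRightarrow[N]{\ast}p[t']$ for some $t\in T_\Delta$; hence $T(K)=M\comp N$. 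Since relabeling preserves the shape of a tree, $t'$ is linear whenever $t$ is, so $K$ is linear if $M$ is; and since $N$ is deterministic, the pair $(p,t')$ attached to a given $M$-rule and tuple $\li[p_i]$ is unique, so $K$ is deterministic if $M$ is — which gives the two inclusions.

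For $\LB\comp\REL$, let $M=(Q,\Sigma,\Delta,R,Q_d)$ be a \emph{linear} bottom-up ftt and $r=\{r_k\}_{k\ge0}$ a relabeling from $T_\Delta$ into $T_\Omega$. Since $r$ carries no state, no product construction is needed: extending $r$ to $T_\Delta(\X)$ by $r(x_i)=\{x_i\}$, I would take $K=(Q,\Sigma,\Omega,R_K,Q_d)$ where $a\to q[t']\in R_K$ whenever $a\to q[t]\in R$ and $t'\in r(t)$, and $\listt[a][q_i[x_i]]\to q[t']\in R_K$ whenever $\listt[a][q_i[x_i]]\to q[t]\in R$ and $t'\in r(t)$; again $K$ is linear because $r$ preserves shape. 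The correctness statement is that $s\xRightarrow[K]{\ast}q[t']$ iff $s\xRightarrow[M]{\ast}q[t]$ and $t'\in r(t)$ for some $t$, so that $T(K)=M\comp r$.

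The step I expect to need the most care — and the one that forces linearity of $M$ here — is the inductive step for $s=\listt$ in the last construction: one must verify that $r\bigl(\tau\langle\li[x_i\gets u_i]\rangle\bigr)=\{\tau'\langle\li[x_i\gets u_i']\rangle\mid \tau'\in r(\tau),\ u_i'\in r(u_i)\}$, where $\tau$ is the right-hand side of the applied $M$-rule. This holds exactly because $\tau$ is linear, so the $u_i$ occupy disjoint positions and a relabeling, being an independent choice of label at each node, factors through the substitution; if $\tau$ contained some $x_i$ twice, the two copies of $u_i$ could be relabeled differently in $M\comp r$ but never in $K$, which is precisely why $\B\comp\REL$ (copying followed by nondeterministic relabeling, i.e.\ property (T)) is not claimed. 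Apart from this observation, the three proofs are routine generalizations of the $\FTA$- and $\HOM$-lemmas, and one also checks in passing that each $R_K$ is finite (bounded by the $M$-rules, the finitely many state tuples, and the finitely many relabeling choices).
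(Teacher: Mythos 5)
Your proposal is correct and follows essentially the same route as the paper: the first two inclusions are obtained by the same product-state construction $Q\times P$ that the paper uses for the preceding $\FTA$-lemma (now with the right-hand sides actually relabeled rather than merely checked), and the third by applying the relabeling to the right-hand sides of the linear transducer's rules, with linearity justifying the factorization of $r$ through substitution. The paper merely states that the first two proofs are "easy generalizations of the previous lemma" and leaves the third as an exercise, so your write-up supplies exactly the intended details.
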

	\begin{proof}
		The proofs of the first two statements are easy generalizations of the proof of the previous lemma. The proof of the third statement is left as an easy exercise.
	\end{proof}
	We now finish the proof of Theorem \ref{the.4.57}.\\
	Firstly
	\begin{align*}
	\LB\comp\B & \subseteq\LB\comp\REL\comp\FTA\comp\HOM &   & \text{(Thm. \ref{the.4.54})}    \\
	& \subseteq\LB\comp\FTA\comp\HOM          &   & \text{(3\textsuperscript{d} lemma)}  \\
	& \subseteq\LB\comp\HOM                   &   & \text{(2\textsuperscript{d} lemma)}  \\
	& \subseteq\B                             &   & \text{(1\textsuperscript{st} lemma)} 
	\end{align*}
	and similarly for $\LB\comp\LB\subseteq\LB$.\\
	Secondly
	\begin{align*}
	\B\comp\DB & \subseteq\B\comp\DBQREL\comp\HOM &   & \text{(Thm. \ref{the.4.52})}      \\
	& \subseteq\B\comp\HOM             &   & \text{(3\textsuperscript{d} lemma)}  \\
	& \subseteq\B                      &   & \text{(1\textsuperscript{st} lemma)} 
	\end{align*}
	and similarly for $\DB\comp\DB\subseteq\DB$.
\end{proof}
Note that the ``right hand side" of Theorem \ref{the.4.57} states that \LB{} and \DB{} are closed under composition.
\begin{exercise}\label{exe.4.58}
	Is \LDB{} closed under composition?
\end{exercise}
\begin{exercise}\label{exe.4.59}
	Prove that \B-Surface is closed under intersection with recognizable tree languages.
\end{exercise}
A consequence of Theorem \ref{the.4.57} (or in fact its second lemma) is the following useful fact.
\begin{corollary}\label{cor.4.60}
	\RECOG{} is closed under inverse bottom-up tree transformations (in particular under inverse tree homomorphisms, cf. Exercise \ref{exe.3.68}).
\end{corollary}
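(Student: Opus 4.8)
The plan is to reduce the statement to two facts already at hand: the near‑closure $\B\comp\FTA\subseteq\B$ (the second lemma in the proof of Theorem~\ref{the.4.57}), and the fact that the domain of a bottom‑up tree transformation is recognizable (Exercise~\ref{exe.4.29}). So first I would fix a bottom‑up tree transformation $M\subseteq T_\Sigma\times T_\Delta$ and a recognizable tree language $L\subseteq T_\Delta$, and pass to the finite tree automaton restriction $R_L=\{(t,t)\mid t\in L\}$, which lies in $\FTA$ by Definition~\ref{def.4.8}. The key (routine) observation is the set‑theoretic identity
\[
M^{-1}(L)=\dom(M\comp R_L).
\]
Indeed, $(s,t)\in M\comp R_L$ holds iff there is $u\in T_\Delta$ with $(s,u)\in M$ and $(u,t)\in R_L$, i.e.\ iff $u=t\in L$ and $(s,t)\in M$; hence $s\in\dom(M\comp R_L)$ iff $(s,t)\in M$ for some $t\in L$, i.e.\ iff $s\in M^{-1}(L)$.

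Next I would invoke the second lemma in the proof of Theorem~\ref{the.4.57}, namely $\B\comp\FTA\subseteq\B$, to conclude that $M\comp R_L$ is again a bottom‑up tree transformation. (Alternatively one can cite Theorem~\ref{the.4.57}(2) together with $\FTA\subseteq\NLDB\subseteq\DB$ from Theorem~\ref{the.4.28}(2).) Then Exercise~\ref{exe.4.29} tells us that the domain of any bottom‑up tree transformation is recognizable, so $M^{-1}(L)=\dom(M\comp R_L)\in\RECOG$, as required. The parenthetical remark about inverse tree homomorphisms is then the special case $M\in\HOM$, using $\HOM\subseteq\PDTB\subseteq\B$ from Theorem~\ref{the.4.28}(3) (cf.\ Exercise~\ref{exe.3.68}).

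There is no real obstacle here: essentially all the work has been done in Theorem~\ref{the.4.57} and Exercise~\ref{exe.4.29}, and the only thing needing a line of checking is the identity $M^{-1}(L)=\dom(M\comp R_L)$. If one preferred not to rely on Exercise~\ref{exe.4.29}, one could instead give a direct argument that $\dom(N)$ is recognizable for a bottom‑up ftt $N$ (project each rule onto its left‑hand side to obtain a nondeterministic bottom‑up fta accepting exactly $\dom(N)$), but that merely reproves that exercise and is best avoided.
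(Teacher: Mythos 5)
Your proposal is correct and follows essentially the same route as the paper's own proof: both reduce $M^{-1}(L)$ to $\dom(M\comp R_L)$ with $R_L$ the fta restriction of $L$, invoke the closure of $\B$ under composition with $\FTA$ from Theorem~\ref{the.4.57}, and conclude via Exercise~\ref{exe.4.29}. The identity $M^{-1}(L)=\dom(M\comp R_L)$, which you verify explicitly, is exactly the step the paper labels ``obviously''.
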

\begin{proof}
	Let $L\in\RECOG$ and $M\in B$. We have to show that $M^{-1}(L)$ is in the class \RECOG{}. Obviously, if $R$ is the finite tree automaton restriction $\{(t,t)\mid t\in L\}$, then $M^{-1}(L)=\dom(M\comp R)$. By Theorem \ref{the.4.57}, $M\comp R$ is in $B$ and so, by Exercise \ref{exe.4.29}, its domain is in \RECOG{}.
\end{proof}
\begin{remark}\label{rem.4.61}
	Note that, since, for arbitrary tree transformations $M_1$ and $M_2$, $(M_1\comp M_2)^{-1}=M_2^{-1}\comp M_1^{-1}$, Corollary \ref{cor.4.60} implies that if $M$ is the composition of any finite number of bottom-up tree transformations (in particular, elements of \REL{}, \FTA{} and \HOM{}), then \RECOG{} is closed under $M^{-1}$; moreover, since $\dom(M)=M^{-1}(T_\Delta)$, the domain of any such tree transformation $M$ is recognizable.
\end{remark}
We finally note that, because of the composition results of Theorem \ref{the.4.57}, the inclusion signs in Theorems \ref{the.4.52} and \ref{the.4.54} may be replaced by equality signs. It is also easy to prove equations like $\B=\LB\comp\HOM$, $\B=\LB\comp\DB$ or $\B=\REL\comp\DB$ (cf. property~(B)). The first equation has some importance of its own, since it characterizes \B{} without referring to the notion ``bottom-up"; this is so because \LB{} has such a characterization as shown next (we first need a definition).
\begin{definition}\label{def.4.62}
	Let $F$ be a class of tree transformations containing all identity transformations. For each $n\geq1$ we define $F^n$ inductively by $F^1=F$ and $F^{n+1}=F^n\comp F$. Moreover, the \uem{closure of $F$ under composition}, denoted by $F^*$, is defined to be $\bigcup\limits_{n\geq1}F^n$.
	
	Thus $F^*$ consists of all tree transformations of the form $\li*[M_i][n][\comp][\cdots]$, where $n\geq1$ and $M_i\in F$ for all $i$, $1\leq i\leq n$.
\end{definition}
\begin{corollary}\label{cor.4.63}
\[
\begin{array}{rrll}
(1) & \LB & \!\!\!= & \!\!\!(\REL\cup\FTA\cup\LHOM)^*.  \hspace{7.5cm}\\[2mm]
(2) & \B & \!\!\!= & \!\!\!\LB\comp\HOM.
\end{array}
\]
%	\begin{align*}
%	& (1) & \LB & =(\REL\cup\FTA\cup\LHOM)^*. \\
%	& (2) & \B  & =\LB\comp\HOM.
%	\end{align*}
\end{corollary}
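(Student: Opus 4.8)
The plan is to assemble both equations from the decomposition theorems (Theorems \ref{the.4.52} and \ref{the.4.54}), the realizability facts of Theorem \ref{the.4.28}, and the composition-closure results of Theorem \ref{the.4.57}. Essentially nothing new has to be invented; the work is in citing the right inclusions in the right order.

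For part (1), I would first handle $\LB\subseteq(\REL\cup\FTA\cup\LHOM)^{*}$: by Theorem \ref{the.4.54}, $\LB\subseteq\REL\comp\FTA\comp\LHOM$, and a composition of three transformations each drawn from $\REL\cup\FTA\cup\LHOM$ certainly lies in $(\REL\cup\FTA\cup\LHOM)^{*}$. (One should note in passing that this class contains all identity transformations — the identity on $T_\Sigma$ is the fta restriction $\{(t,t)\mid t\in T_\Sigma\}$, and also a linear tree homomorphism — so that the notation $F^{*}$ of Definition \ref{def.4.62} is legitimate here.) For the reverse inclusion I would use Theorem \ref{the.4.28} to record that $\REL\subseteq\PNLB\subseteq\LB$, $\FTA\subseteq\NLDB\subseteq\LB$ and $\LHOM=\PLDTB\subseteq\LB$, hence $\REL\cup\FTA\cup\LHOM\subseteq\LB$; then by Theorem \ref{the.4.57}(1) we have $\LB\comp\LB\subseteq\LB$, i.e.\ $\LB$ is closed under composition, so $\LB^{*}=\LB$ (a trivial induction on the number of factors). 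Therefore $(\REL\cup\FTA\cup\LHOM)^{*}\subseteq\LB^{*}=\LB$, and combining the two inclusions gives (1).

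For part (2), the inclusion $\B\subseteq\LB\comp\HOM$ comes from Theorem \ref{the.4.52} ($\B\subseteq\QREL\comp\HOM$) together with $\QREL\subseteq\NLB\subseteq\LB$, which was recorded just before Section \ref{sec.4.3}. For the reverse inclusion $\LB\comp\HOM\subseteq\B$, I would observe that $\LB\subseteq\B$ and invoke the first lemma in the proof of Theorem \ref{the.4.57}, namely $\B\comp\HOM\subseteq\B$; then $\LB\comp\HOM\subseteq\B\comp\HOM\subseteq\B$, which finishes (2).

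I do not expect a genuine obstacle: the whole argument is a bookkeeping assembly of previously proved statements. The only points needing a moment's care are (a) checking that $\LB$ is closed under composition so that $\LB^{*}=\LB$ — this is exactly Theorem \ref{the.4.57}(1) and ultimately rests on its three lemmas; (b) the absorption $\B\comp\HOM\subseteq\B$, again from the first lemma of that theorem; and (c) the chain $\QREL\subseteq\NLB\subseteq\LB$, which is available from the text. Once these are cited correctly, both equalities follow immediately.
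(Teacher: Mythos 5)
Your proof is correct and follows essentially the same route as the paper, which simply cites Theorems \ref{the.4.52} and \ref{the.4.54} for the inclusions $\subseteq$ and Theorem \ref{the.4.57} for the inclusions $\supseteq$. You have merely made explicit the auxiliary facts the paper leaves implicit (Theorem \ref{the.4.28} for $\REL\cup\FTA\cup\LHOM\subseteq\LB$, the remark $\QREL\subseteq\NLB$, and the first lemma of Theorem \ref{the.4.57} for $\B\comp\HOM\subseteq\B$), all of which are correctly cited.
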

\begin{proof}
	The inclusions $\subseteq$ follow from Theorems \ref{the.4.52} and \ref{the.4.54}; the inclusions $\supseteq$ follow from Theorem \ref{the.4.57}.
\end{proof}

\subsection{Decomposition of top-down tree transformations}
We now show that, analogously to the bottom-up case, we can decompose each top-down ftt into two transducers, the first doing the copying and the second doing the rest of the work (cf. property (T)).
\begin{theorem}\label{the.4.64}
	$\T\subseteq\HOM\comp\LT$ and $\DT\subseteq\HOM\comp\LDT$.
\end{theorem}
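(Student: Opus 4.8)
The plan is to mirror the proof of Theorem \ref{the.4.52}, but with the two phases in the opposite order. Since the feature that a top-down transducer \emph{cannot} express in one pass is property (T) — copying an input subtree and then treating the copies differently — the idea is to first apply a (copying) \emph{homomorphism} that duplicates every subtree often enough, and then apply a \emph{linear} top-down transducer that carries out the actual translation, directing each ``copy request'' of the original transducer to one of the pre-fabricated copies. The key point that makes this work is that the homomorphism is applied to the whole input tree before the linear transducer starts, so all copies are in place in advance, which is exactly what a top-down transducer cannot arrange by itself.

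Concretely, let $M=(Q,\Sigma,\Delta,R,Q_d)$ be a top-down ftt and let $r\ge 1$ be an integer at least as large as the number of occurrences of any single variable $x_i$ in the right-hand side of any rule of $M$. For each $a\in\Sigma_k$ introduce a new symbol $\widehat a$ of rank $kr$, and let $\Omega$ be the ranked alphabet of all these $\widehat a$'s. Define the tree homomorphism $h:T_\Sigma\to T_\Omega$ by $h_0(a)=\widehat a$ and, for $k\ge 1$, $h_k(a)=\widehat a[\,\underbrace{x_1\cdots x_1}_{r}\,\underbrace{x_2\cdots x_2}_{r}\cdots\underbrace{x_k\cdots x_k}_{r}\,]$, so that $h$ replaces each input node by a node with $r$ copies of (the $h$-image of) each direct subtree. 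Now construct the linear top-down ftt $N=(Q,\Omega,\Delta,R_N,Q_d)$: for each rule $q[a[x_1\cdots x_k]]\to t$ of $R$ with $k\ge 1$, put into $R_N$ the rule $q[\widehat a[x_1\cdots x_{kr}]]\to t'$, where $t'$ is obtained from $t$ by replacing, for each $i$, the $\ell$-th occurrence (left to right) of a subtree $q'[x_i]$ by $q'[x_{(i-1)r+\ell}]$; for each rule $q[a]\to t$ of $R$ with $a\in\Sigma_0$, put $q[\widehat a]\to t$ into $R_N$. Since every right-hand side of $M$ contains at most $r$ occurrences of each $x_i$, every variable of $\widehat a$ occurs at most once in $t'$, so $N$ is linear; the variables $x_{(i-1)r+\ell}$ with $\ell$ beyond the actual number of occurrences simply do not appear in $t'$, which for a top-down transducer causes no harm.

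Correctness reduces to proving, by induction on $\xi\in T_\Sigma$, that for all $q\in Q$ and all $\zeta\in T_\Delta$ one has $q[\xi]\xRightarrow[M]{*}\zeta$ iff $q[h(\xi)]\xRightarrow[N]{*}\zeta$; since $N$ has the same initial states $Q_d$, this yields $T(M)=h\comp T(N)\in\HOM\comp\LT$. The base case $\xi\in\Sigma_0$ is immediate from the definition of $R_N$. For $\xi=a[\xi_1\cdots\xi_k]$ one inspects the first step: $M$ applies a rule $q[a[x_1\cdots x_k]]\to t$ while $N$ applies the corresponding rule $q[\widehat a[x_1\cdots x_{kr}]]\to t'$; after substituting the $\xi_i$ (respectively the $r$ identical copies of $h(\xi_i)$) for the variables, the two configurations have the same $\Delta$-skeleton, and each subconfiguration $q'[\xi_i]$ of $M$ corresponds to exactly one subconfiguration $q'[h(\xi_i)]$ of $N$ — the copy allotted to that occurrence. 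As these $Q$-labelled parts evolve independently, the induction hypothesis applied to each $\xi_i$ (a proper subtree of $\xi$) closes the step, and the converse direction is symmetric. For the determinism statement, observe that if $M$ is deterministic then for every $q$ and $\widehat a$ there is at most one rule of $N$, and $Q_d$ remains a singleton, while $h$ is a function in any case; hence $N\in\LDT$ and $\DT\subseteq\HOM\comp\LDT$.

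I expect the main obstacle to be the bookkeeping in the induction step: one must make precise that the (in general exponentially many) copies that $M$ produces of a fixed input node are matched injectively with the copies that $h$ manufactures, and that the states are preserved under this matching. This is of the same flavour as the ``(prove this!)''-type substitution arguments in the proofs of Theorems \ref{the.3.65} and \ref{the.4.28}(3). The one genuinely new point to keep in mind — compared with the bottom-up decomposition — is that a top-down transducer rewrites the still-unprocessed part of the input (copying, deleting, permuting it) as it runs, so one has to be certain that every copy $N$ will ever ask for is already present in $h(\xi)$; this holds precisely because $h$, being a homomorphism, has replicated \emph{every} subtree by the factor $r$ at \emph{every} level, and $r$ bounds the amount of replication $M$ can do in a single step.
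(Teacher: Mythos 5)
Your proposal is correct and follows essentially the same route as the paper's proof: compute the maximal per-variable copying degree, apply a homomorphism that blows up each symbol of rank $k$ to rank $kr$ with $r$ copies of each subtree, and then simulate $M$ by a linear transducer that routes each occurrence of $q'[x_i]$ to a distinct pre-made copy (your $x_{(i-1)r+\ell}$ is the paper's $x_{i,\ell}$, and your fresh symbols $\widehat a$ versus the paper's re-ranked $a$ is immaterial). The determinism argument also matches, so nothing further is needed.
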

\begin{proof}
	Let $M=(Q,\Sigma,\Delta,R,Q_d)$ be a top-down ftt. While processing an input tree, $M$ generally makes a lot of copies of input subtrees in order to get different translations of these subtrees. To simulate $M$ we can first use a homomorphism which simply makes as many copies of subtrees as are needed by $M$, and then we can simulate $M$ linearly (since all copies are already there). The formal construction is as follows.
	
	We first determine the ``degree of copying" of $M$, that is the maximal number of copies needed by $M$ in any step of its computation. Thus, for $x\in\X$ and $r\in R$, let $r_x$ be the number of occurrences of $x$ in the right hand side of $r$. Let $n=\max\{r_x\mid x\in\X,\,r\in R\}$. We now let $\Omega$ be the ranked alphabet obtained from $\Sigma$ by multiplying the ranks of all symbols by $n$ (so that a node may be connected to $n$ copies of each of its subtrees). Thus $\Omega_{kn}=\Sigma_k$ for all $k\geq0$. The ``copying" homomorphism $h$ from $T_\Sigma$ into $T_\Omega$ is now defined by
	\begin{enumerate}[label=(\roman*)]
		\item for $a\in\Sigma_0$, $h_0(a)=a$
		\item for $k\geq1$ and $a\in\Sigma_k$, $h_k(a)=\listt*[a][x_i^n]$.
	\end{enumerate}
	(For example, if $k=2$ and $n=3$, then $h_2(a)= a[x_1x_1x_1x_2x_2x_2] $). \\
	Finally the top-down ftt $N=(Q,\Omega,\Delta,R_N,Q_d)$ is defined as follows.
	\begin{enumerate}[label=(\roman*)]
		\item If $q[a]\to t$ is a rule in $R$, then it is also in $R_N$.
		\item Suppose that $q[\listt[a][x_i]]\to t$ is a rule in $R$. Let us denote the variables $\li*[x_i][kn]$ by $x_{1,1},x_{1,2},\dots,x_{1,n},x_{2,1},\dots,x_{2,n},\dots,x_{k,1},\dots,x_{k,n}$ respectively. Then the rule $q[a[x_{1,1}\cdots x_{1,n}\cdots x_{k,1}\cdots x_{k,n}]]\to t'$ is in $R_N$, where $t'$ is taken such that it is linear and such that $t'\langle x_{i,j}\gets x_i\rangle_{\substack{1\leq i\leq k\\1\leq j\leq n}}=t$
	\end{enumerate}
	($t'$ can be obtained by putting different second subscripts on different occurrences of the same variable in $t$. For instance, if $ q[a[x_1x_2]] \to b[q_1[x_2]cd[q_2[x_1]q_3[x_2]]] $ is in $R$ and $n=3$, then we can put the rule $q[a[x_{1,1}x_{1,2}x_{1,3}x_{2,1}x_{2,2}x_{2,3}]]\to b[q_1[x_{2,1}]cd[q_2[x_{1,1}]q_3[x_{2,2}]]]$ in $R_N$.)
	
	Obviously, if $M$ is deterministic, then so is $N$. A formal proof of the fact that $M=h\comp N$ is left to the reader.
\end{proof}
\begin{example}\label{exa.4.65}
	Consider the top-down ftt $M$ of Example \ref{exa.4.33} and let us consider its decomposition according to the above proof. Clearly $n=2$ and therefore the definition of $h$ for, for example, $+$, $*$, $-$, $a$ and $b$ is $h_2(+)= +[x_1x_1x_2x_2] $, $h_2(*)= *[x_1x_1x_2x_2] $, $h_1(-)= -[x_1x_1] $, $h_0(a)=a$ and $h_0(b)=b$. Thus, for example, $h( *[+[ab]-[a]] )= *[+[aabb]+[aabb]-[aa]-[aa]] $. For instance the first three rules of $M$ turn into the following three rules for $N$:
	\begin{align*}
	& q[+[x_{1,1}x_{1,2}x_{2,1}x_{2,2}]]\to +[q[x_{1,1}]q[x_{2,1}]],                           \\
	& q[*[x_{1,1}x_{1,2}x_{2,1}x_{2,2}]]\to +[*[q[x_{1,1}]i[x_{2,1}]]*[i[x_{1,2}]q[x_{2,2}]]], \\
	& q[-[x_{1,1}x_{1,2}]]\to -[q[x_{1,1}]].                                                   
	\end{align*}
	It is left to the reader to see how $N$ processes $h( *[+[ab]-[a]] )$.
\end{example}
Since $\LT\subseteq\LB$ (Theorem \ref{the.4.48}), we know already how to decompose \LT{}. This gives us the following result.
\begin{corollarywithqed}\label{cor.4.66}
	$\T\subseteq\HOM\comp\LB=\HOM\comp\REL\comp\FTA\comp\LHOM$.
\end{corollarywithqed}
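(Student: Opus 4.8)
The statement is a direct chaining of three results already established, so the plan is purely to assemble them in the right order. First, Theorem~\ref{the.4.64} gives $\T\subseteq\HOM\comp\LT$, which reduces the problem to analysing $\HOM\comp\LT$. I would then invoke Theorem~\ref{the.4.48}(1), which yields $\LT\subseteq\LB$; since composition of classes of tree transformations is monotone in its second argument (immediately from Definition~\ref{def.4.4}: if $F_1\subseteq F_2$ then $G\comp F_1\subseteq G\comp F_2$), this gives $\HOM\comp\LT\subseteq\HOM\comp\LB$, and hence $\T\subseteq\HOM\comp\LB$.

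For the second half of the displayed line I would use Theorem~\ref{the.4.54}, which asserts $\LB\subseteq\REL\comp\FTA\comp\LHOM$; composing on the left with $\HOM$ and using monotonicity again gives $\HOM\comp\LB\subseteq\HOM\comp\REL\comp\FTA\comp\LHOM$. Thus the inclusion $\T\subseteq\HOM\comp\REL\comp\FTA\comp\LHOM$ is established. To justify the equality $\HOM\comp\LB=\HOM\comp\REL\comp\FTA\comp\LHOM$ rather than just $\subseteq$, I would supply the reverse inclusion $\REL\comp\FTA\comp\LHOM\subseteq\LB$: by Theorem~\ref{the.4.28}(1)--(3) we have $\REL\subseteq\PNLB$, $\FTA\subseteq\NLDB$ and $\LHOM=\PLDTB$, all of which lie in $\LB$, and by Theorem~\ref{the.4.57}(1) $\LB$ is closed under composition, so $\REL\comp\FTA\comp\LHOM\subseteq\LB$ and therefore $\HOM\comp\REL\comp\FTA\comp\LHOM\subseteq\HOM\comp\LB$. (Equivalently one may simply cite the remark following Theorem~\ref{the.4.57}, where it is noted that the inclusion in Theorem~\ref{the.4.54} may be replaced by equality.)

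There is no real obstacle in this argument; the only point requiring a moment of care is that $\comp$ on classes is monotone and that the inclusions are composed in the correct order, so that no hypothesis of the form ``$F$ closed under composition'' is needed for the basic chain $\T\subseteq\HOM\comp\LT\subseteq\HOM\comp\LB\subseteq\HOM\comp\REL\comp\FTA\comp\LHOM$. Everything else is bookkeeping with the previously proved theorems.
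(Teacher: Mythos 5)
Your proposal is correct and follows essentially the same route as the paper: the corollary is obtained by chaining $\T\subseteq\HOM\comp\LT$ (Theorem~\ref{the.4.64}) with $\LT\subseteq\LB$ (Theorem~\ref{the.4.48}), and the equality $\LB=\REL\comp\FTA\comp\LHOM$ comes from Theorem~\ref{the.4.54} together with the closure results of Theorem~\ref{the.4.57}, exactly as you argue. Your explicit justification of the reverse inclusion via Theorem~\ref{the.4.28} and closure of \LB{} under composition is a correct spelling-out of what the paper leaves to the remark following Theorem~\ref{the.4.57}.
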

Notice that the inclusion is proper by Lemma \ref{lem.4.46}.
From this corollary we see that each top-down tree transformation can be realized by the composition of two bottom-up tree transformations (cf. Theorem \ref{the.4.43}). Another way of expressing our decomposition results concerning \B{} and \T{} is by the equation $\B^{\,*}=\T^{\,*}=(\REL\cup\FTA\cup\HOM)^*$.

By the above corollary and Remark \ref{rem.4.61} we obtain that \RECOG{} is closed under inverse top-down tree transformations, and in particular
\begin{corollarywithqed}\label{cor.4.67}
	The domain of a top-down tree transformation is recognizable.
\end{corollarywithqed}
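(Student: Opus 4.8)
The plan is to read the statement off the decomposition theorem for top-down transducers together with the already-established closure of $\RECOG$ under inverse bottom-up tree transformations. First I would invoke Corollary \ref{cor.4.66}: every top-down tree transformation $M$ factors as $M = h \comp M'$ with $h \in \HOM$ (the ``copying'' homomorphism of Theorem \ref{the.4.64}) and $M' \in \LB$. The key observation is that both factors live in the class $\B$ of bottom-up tree transformations, since $\HOM \subseteq \B$ by Theorem \ref{the.4.28}(3) and $\LB \subseteq \B$ trivially; hence $M$ is a composition of finitely many bottom-up tree transformations.

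Next I would unwind the domain. From Definition \ref{def.4.6}, writing $M' : T_\Omega \to T_\Delta$ for the second factor, one checks $\dom(M) = h^{-1}(\dom(M'))$: a tree $s$ lies in $\dom(M)$ iff some $t$ satisfies $(s,t)\in M$, iff $h(s) \in \dom(M')$ (here one uses that $h$ is a total function, so $h(s)$ is the unique intermediate tree). By Exercise \ref{exe.4.29}, $\dom(M')$ is recognizable because $M'$ is a bottom-up tree transformation. Since $h \in \HOM \subseteq \B$, Corollary \ref{cor.4.60} — closure of $\RECOG$ under inverse bottom-up tree transformations, in particular under inverse tree homomorphisms — shows that $h^{-1}(\dom(M')) = \dom(M)$ is recognizable.

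Equivalently, and more in the spirit of the remark just preceding the statement, I would appeal directly to Remark \ref{rem.4.61}: since $M$ is a composition of finitely many bottom-up tree transformations, $\RECOG$ is closed under $M^{-1}$, and $\dom(M) = M^{-1}(T_\Delta)$ is therefore recognizable. Either route is essentially a one-line consequence of the earlier machinery, so I do not expect a genuine obstacle; the only point deserving a moment's care is the bookkeeping observation that the two factors produced by Corollary \ref{cor.4.66} are both (contained in) $\B$, which is exactly what licenses the inverse-image closure results for their composite.
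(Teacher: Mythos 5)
Your proposal is correct and follows exactly the paper's route: the paper derives this corollary from Corollary \ref{cor.4.66} (which exhibits every top-down tree transformation as a composition of bottom-up ones) together with Remark \ref{rem.4.61}. Your more explicit unwinding via $\dom(M)=h^{-1}(\dom(M'))$, Exercise \ref{exe.4.29} and Corollary \ref{cor.4.60} is just a spelled-out version of the same argument.
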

The next theorem says, analogously to Theorem \ref{the.4.52}, that each deterministic element of \LT{} can be decomposed into two simpler (deterministic) ones.
\begin{theorem}\label{the.4.68}
	$\LDT\subseteq\DTQREL\comp\LHOM$.
\end{theorem}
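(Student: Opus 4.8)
The plan is to imitate the two--phase decomposition used for Theorem~\ref{the.4.52} (and its deterministic variant $\DB\subseteq\DBQREL\comp\HOM$), adapted to the top--down setting. Let $M=(Q,\Sigma,\Delta,R,Q_d)$ be a linear deterministic top--down ftt; since $M$ is deterministic, $Q_d=\{q_0\}$ and for every pair $(q,a)$ there is at most one rule of $R$ with left hand side $q[\listt[a][x_i]]$ (or $q[a]$). I would first build a \emph{deterministic} top--down finite state relabeling $N$ which, at each node, relabels the node by a symbol naming the rule $M$ applies there and sends down to the children exactly the states dictated by that rule; then a \emph{linear} tree homomorphism $h$ replaces each such rule--symbol by the right hand side of the rule with the state annotations $q[x_i]$ stripped to $x_i$. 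The composition $N\comp h$ will realize $M$.

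For the construction, introduce a fresh ranked alphabet $\Omega$ containing one symbol $d_r\in\Omega_k$ for every rule $r\in R$ whose left hand side is $q[\listt[a][x_i]]$ with $a\in\Sigma_k$, and one symbol $d_r\in\Omega_0$ for every rule $r$ with left hand side $q[a]$, $a\in\Sigma_0$. For a rule $r=(q[\listt[a][x_i]]\to t_0)$ with $t_0\in T_\Delta(Q[\X_k])$ linear, let $N$ have the rule $q[\listt[a][x_i]]\to\listt[d_r][p_i[x_i]]$, where $p_i$ is the unique state with $p_i[x_i]$ occurring in $t_0$ when $x_i$ occurs in $t_0$, and $p_i$ is a fixed extra state $d$ (introduced below) when $x_i$ does not occur in $t_0$; and put $h_k(d_r)=t_0'$, the tree obtained from $t_0$ by replacing each $p_i[x_i]$ by $x_i$. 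For $r=(q[a]\to t_0)$ with $t_0\in T_\Delta$, let $N$ have the rule $q[a]\to d_r$ and put $h_0(d_r)=t_0$. The initial state set of $N$ is $\{q_0\}$ and $N$ has at most one rule per pair $(q,a)$, so $N\in\DTQREL$; and $t_0'$ is linear because $t_0$ is, so $h\in\LHOM$.

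The one genuine obstacle is \emph{deletion}: a linear top--down rule may omit some $x_j$ from its right hand side, in which case $M$ never visits the $j$--th subtree, whereas a finite state relabeling must relabel the entire input tree. The standard remedy (compare the ``identity state'' used in the proof of Theorem~\ref{the.4.48}(1)) is to adjoin one extra state $d$ to $N$ whose rules relabel a $k$--ary symbol $a$ to a fresh $k$--ary garbage symbol $\widehat{a}\in\Omega_k$ and send $d$ to all $k$ children, and relabel each $b\in\Sigma_0$ to a fresh $\widehat{b}\in\Omega_0$; in the $N$--rule for $r$ above we then send state $d$ down to every child $j$ with $x_j\notin t_0$. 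One extends $h$ arbitrarily on the garbage symbols (if $T_\Delta=\emptyset$ then $M=\emptyset$ and there is nothing to prove, so assume $T_\Delta\neq\emptyset$); these values are irrelevant since $x_j\notin t_0'=h_k(d_r)$, so the $j$--th subtree gets deleted by $h$. One checks that $N\in\DTQREL$ and $h\in\LHOM$ still hold.

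It remains to verify $M=N\comp h$, which I would do by proving, by induction on the input tree $s$, that for all $q\in Q$, $s\in T_\Sigma$ and $t\in T_\Delta$,
\[
q[s]\xRightarrow[M]{\ast}t\ \iff\ \text{there is } u\in T_\Omega \text{ with } q[s]\xRightarrow[N]{\ast}u \text{ and } h(u)=t ,
\]
together with the auxiliary fact that $d[s]\xRightarrow[N]{\ast}\bar{s}$, where $\bar{s}$ is the unique garbage relabeling of $s$ that $N$ produces in state $d$. In the inductive step $s=\listt[a][s_i]$ the first step of either derivation is forced by determinism (of $M$, resp.\ of $N$); linearity of $M$ lets one split the remaining derivation over the children subtrees, and the induction hypothesis glues everything back together using the identity $h(\listt[d_r][u_i])=h_k(d_r)\langle x_1\gets h(u_1),\dots,x_k\gets h(u_k)\rangle$, where the substitutions $x_i\gets h(u_i)$ with $x_i\notin h_k(d_r)$ play no role. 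Taking $q=q_0$ yields $T(M)=T(N)\comp h\in\DTQREL\comp\LHOM$, whence $\LDT\subseteq\DTQREL\comp\LHOM$. I expect this induction to be the only laborious part, but it is entirely routine given the analogous arguments already carried out for Theorems~\ref{the.4.28}(3), \ref{the.4.52} and \ref{the.4.57}; all the conceptual content lies in the dummy--state treatment of deletion and in the fact that the construction preserves both determinism of the relabeling and linearity of the homomorphism.
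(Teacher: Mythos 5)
Your proposal is correct and follows essentially the same route as the paper, whose proof of this theorem is simply a pointer to the construction of Theorem~\ref{the.4.52} (relabel each node by the rule applied, then apply a homomorphism substituting the right hand sides). You have moreover correctly identified and handled the one genuine adaptation needed in the top-down case, namely that a finite state relabeling must send a state to \emph{every} child while a linear top-down rule may delete some $x_j$; your dummy state $d$ (with the garbage relabeling subsequently erased by the linear, hence possibly deleting, homomorphism) resolves this exactly as the identity state does in the proof of Theorem~\ref{the.4.48}(1).
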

\begin{proof}
	See the proof of Theorem \ref{the.4.52}.
\end{proof}
We now note that we cannot obtain very nice results about closure under composition analogously to those of Theorem \ref{the.4.57}, since for instance \LT{} and \DT{} are not closed under composition (see Corollary \ref{cor.4.47}). The reason is essentially the failure of property~(B$'$) for top-down ftt. One could get closure results by eliminating both properties (B) and~(B$'$). For example, one can show that $\D_tT\comp\T\subseteq\T$, $\T\comp\NLT\subseteq\T$, etc. However, after having compared \DB{} with \DT{} in the next section, we prefer to extend the top-down ftt in such a way that it has the capability of checking before deletion. It will turn out that the so extended top-down transducer has all the nice properties ``dual" to those of \B{}.

The following is an easy exercise in composition.
\begin{exercise}\label{exe.4.69}
	Prove that every \T-surface tree language is in fact the image of a rule tree language under a top-down tree transformation.
\end{exercise}

\subsection{Comparison of B and T, the deterministic case}
Although, by determinism, some differences between \B{} and \T{} are eliminated, \DB{} and \DT{} are still incomparable for a number of reasons. We first discuss the question why \DB{} contains elements not in \DT{}, and then the reverse question.

Firstly, we have seen in Lemma \ref{lem.4.46} that \DB~contains elements not in \T{}. This was caused by property (B$'$).

Secondly, we have seen in Theorem \ref{the.3.14} (together with Theorem \ref{the.3.8}) that there are deterministic bottom-up recognizable tree languages which cannot be recognized deterministically top-down. It is easy to see that the corresponding fta restrictions cannot be realized by a deterministic top-down transducer. In fact the following can be shown.
\begin{exercise}\label{exe.4.70}
	Prove that the domain of a deterministic top-down ftt can be recognized by a deterministic top-down fta.
\end{exercise}
Thirdly, there is a trivial reason that \DB{} is stronger than \DT{}: a bottom-up ftt can, for example, recognize the ``lowest" occurrence of some symbol in a tree (since it is the first occurrence), whereas a deterministic top-down ftt cannot (since for him it is the last occurrence).
\begin{lemmawithoutqed}\label{lem.4.71}
	There is a tree transformation in \DBQREL{} which is not in \DT{}.
\end{lemmawithoutqed}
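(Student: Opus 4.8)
The plan is to realize the required transformation as one that marks the \emph{lowest} (leaf‑closest) occurrence of a distinguished letter along a path. A bottom‑up relabeling can do this because that occurrence is the \emph{first} one it meets, whereas a deterministic top‑down transducer would be forced to decide whether to mark an occurrence before it can possibly know whether a still lower one exists.

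Concretely, I would take $\Sigma_0=\Delta_0=\{e\}$, $\Sigma_1=\{a,b\}$, $\Delta_1=\{a,b,\bar a\}$, so that the elements of $T_\Sigma$ are essentially strings over $\{a,b\}$ via the map $m$ of Notation~\ref{not.4.42}. Let $M$ be the transformation that replaces the occurrence of $a$ closest to the leaf by $\bar a$ and leaves everything else unchanged, with domain the set of trees containing at least one $a$. I would check that $M$ is realized by the (total) deterministic bottom‑up finite state relabeling with states $\{q_0,q_1\}$, final states $\{q_1\}$, and rules $e\to q_0[e]$, $a[q_0[x_1]]\to q_1[\bar a[x_1]]$, $a[q_1[x_1]]\to q_1[a[x_1]]$, $b[q_0[x_1]]\to q_0[b[x_1]]$ and $b[q_1[x_1]]\to q_1[b[x_1]]$, where $q_1$ records that an $a$ has already been passed below. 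Hence $M$ belongs to \DBQREL.

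For $M\notin\DT$, I would assume a deterministic top‑down ftt $N=(P,\Sigma,\Delta',R,\{p_0\})$ realizes $M$ and test it on the two input families $w_k=m(ab^ke)$ (whose only, hence lowest, $a$ is at the top) and $v_k=m(ab^kae)$ (whose lowest $a$ is the second one). The first step is to show $N$ behaves ``linearly'' on these monadic inputs: a rule $p[\sigma[x_1]]\to t$ used in a successful derivation can have neither two occurrences of $x_1$ in $t$ (the output would then contain a branching node, impossible since every output of $M$ is monadic) nor zero occurrences (the translation would not depend on the discarded subtree, contradicting $M(w_0)\ne M(w_1)$); checking this inductively along the path shows that, reading $\sigma$ in state $p$, $N$ emits a fixed output string $c(p,\sigma)$ over $\Delta'_1$ and moves to a fixed state $\delta(p,\sigma)$, and at the leaf in state $p$ it emits a fixed string $d(p)$. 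Putting $q_0=p_0$, $q_1=\delta(q_0,a)$, $q_{i+1}=\delta(q_i,b)$, the output of $N$ on $w_k$ is the concatenation $P_k\cdot d(q_{k+1})$ and on $v_k$ is $P_k\cdot c(q_{k+1},a)\cdot d(\delta(q_{k+1},a))$, where $P_k=c(q_0,a)c(q_1,b)\cdots c(q_k,b)$ is common to both. Since $M(w_k)=m(\bar ab^ke)$ begins with $\bar a$ while $M(v_k)=m(ab^k\bar ae)$ begins with $a$, the string $P_k$ is a prefix of two words disagreeing in their first letter, so $P_k=\lambda$ for every $k$; then $d(q_{k+1})=m(\bar ab^ke)$ for all $k\ge0$, and as $P$ is finite two of the states $q_{k+1}$ coincide, forcing $m(\bar ab^ke)=m(\bar ab^{k'}e)$ with $k\ne k'$ — a contradiction.

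The main obstacle is the ``linearity'' step: one must carefully rule out the copying and erasing rules and verify that the state reached and the output emitted at each node of the path are determined by the incoming state alone, so that the total output genuinely factors as $c(q_0,\cdot)c(q_1,\cdot)\cdots d(\cdot)$ along the path; after that the contradiction is a one‑line pigeonhole/first‑letter argument. It is also worth recording why the second letter $b$ is essential: over the one‑letter alphabet $\Sigma_1=\{a\}$ the analogous transformation \emph{is} top‑down realizable (the transducer can defer the topmost $a$ by one step and emit $\bar a$ only upon reaching the leaf), whereas a $b$ cannot be ``held back'', which is exactly what breaks $N$ above.
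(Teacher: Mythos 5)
Your proposal is correct and follows essentially the same route as the paper: both exploit the fact that the \emph{lowest} occurrence of a symbol is the first one a bottom-up relabeling meets but the last one a deterministic top-down transducer meets (the paper bars all $a$'s below the lowest $f$, you bar the lowest $a$ itself — an inessential variation). Your detailed impossibility argument (monadic outputs force non-copying, non-deleting behaviour along the path, then a common-prefix and pigeonhole contradiction) correctly fills in what the paper dismisses as ``obvious''.
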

\begin{proof}
	Let $\Sigma_0=\{b\}$, $\Sigma_1=\{a,f\}$, $\Delta_0=\{b\}$ and $\Delta_1=\{a,\mbar{a},f\}$. Consider the bottom-up ftt $M=(Q,\Sigma,\Delta,R,Q_d)$ where $Q=Q_d=\{q_1,q_2\}$ and $R$ consists of the rules
	\begin{align*}
	b&\to  q_1[b] ,\\
	a[q_1[x]]  & \to q_1[\mbar{a}[x]], &  f[q_1[x]]  & \to q_2[f[x]] , \\
	a[q_2[x]]  & \to q_2[a[x]] ,        &  f[q_2[x]]  & \to q_2[f[x]] , 
	\end{align*}
	where $x$ denotes $x_1$.
	
	Thus $M$ is a deterministic bottom-up finite state relabeling which bars all $a$'s below the lowest $f$. Obviously a deterministic top-down ftt cannot give the right translation for both $m(a^nb)$ and $m(a^nfb)$.
\end{proof}
Let us now consider \DT{}.

Firstly, we note that property (T) has not been eliminated: a deterministic top-down ftt still has the ability to copy an input subtree and continue the translation of these copies in different states. Consider for example the tree transformation $M=\{(m(ab^nc),a[m(p^nc)m(q^nc)])\mid n\geq0\}$.
Obviously $M$ is in \DT{}. It can be shown, similarly to the proof of Theorem \ref{the.4.43}(2), that $M$ is not in \B{} (see also Exercise \ref{exe.4.51}).

Secondly, deterministic bottom-up ftt cannot distinguish between left and right (because they start at the bottom!), whereas deterministic top-down ftt can. Consider for example the tree transformation
$M=\{(a[m(b^nc)m(b^kc)],a[m(b^nd)m(b^ke)])\mid n,k\geq0\}$.
This is obviously an element of \DT{} (even \DTQREL{}) and can be shown not to be in \DB{} (of course it \uem{is} in \uem{\B}: a nondeterministic bottom-up ftt can guess whether it is left or right and check its guess when arriving at the top). Thus we have the following lemma.
\begin{lemma}\label{lem.4.72}
	There is a tree transformation in \DTQREL{} which is not in \DB{}.
\end{lemma}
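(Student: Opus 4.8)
The plan is to work with the concrete transformation already displayed, namely
$M=\{(a[m(b^nc)m(b^kc)],\,a[m(b^nd)m(b^ke)])\mid n,k\ge 0\}$, where $a$ has rank $2$, $b$ has rank $1$, and $c,d,e$ have rank $0$. The membership $M\in\DTQREL$ is checked by exhibiting the deterministic top-down finite state relabeling $M_0=(Q,\Sigma,\Delta,R,Q_d)$ with $Q=\{q_0,q_L,q_R\}$, $Q_d=\{q_0\}$ and rules $q_0[a[x_1x_2]]\to a[q_L[x_1]q_R[x_2]]$, $q_L[b[x_1]]\to b[q_L[x_1]]$, $q_R[b[x_1]]\to b[q_R[x_1]]$, $q_L[c]\to d$, $q_R[c]\to e$. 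Every rule has one of the shapes allowed in Definition~\ref{def.4.40}, there is exactly one rule for each reachable (state, symbol) pair, and $Q_d$ is a singleton, so $M_0$ is a deterministic top-down finite state relabeling; a trivial induction gives $q_L[m(b^nc)]\xRightarrow{*}m(b^nd)$ and $q_R[m(b^kc)]\xRightarrow{*}m(b^ke)$, hence $T(M_0)=M$.

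The real work is showing $M\notin\DB$. Suppose, towards a contradiction, that some deterministic bottom-up ftt $N=(Q,\Sigma',\Delta',R,Q_d)$ realizes $M$. Since $\dom(N)=\dom(M)$ and the domain of a bottom-up transducer is closed under taking subtrees, each monadic tree $m(b^nc)$ (a subtree of $a[m(b^nc)\,c]\in\dom(M)$) lies in $\dom(N)$; by determinism there is a unique pair $(q_n,u_n)$ with $m(b^nc)\xRightarrow[N]{*}q_n[u_n]$. The unique rule with left-hand side $b[q_n[x]]$ determines both $q_{n+1}$ and $u_{n+1}$ from $(q_n,u_n)$, and in particular $q_{n+1}$ depends only on $q_n$, so $(q_n)_{n\ge 0}$ is eventually periodic. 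Fix a state $q^{*}$ and an infinite set $N_0$ with $q_n=q^{*}$ for all $n\in N_0$.

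Next I would examine the last step of an accepting computation. For $n,m\in N_0$ the input $a[m(b^nc)m(b^mc)]$ reduces to $a[q^{*}[u_n]q^{*}[u_m]]$, and the final step must apply the (unique) rule with left-hand side $a[q^{*}[x_1]q^{*}[x_2]]$, say $a[q^{*}[x_1]q^{*}[x_2]]\to q[t]$ with $q\in Q_d$ and $t\in T_{\Delta'}(\X_2)$ — the same $q,t$ for all such $n,m$. This forces
\[
t\langle x_1\gets u_n,\,x_2\gets u_m\rangle=a[m(b^nd)m(b^me)]
\qquad\text{and}\qquad
t\langle x_1\gets u_m,\,x_2\gets u_n\rangle=a[m(b^md)m(b^ne)].
\]
From the first identity $t$ cannot be a single variable (its right-hand side depends on $m$, while $u_n$ does not), so $t=a[t_1t_2]$ with $t_1\langle x_1\gets u_n,x_2\gets u_m\rangle=m(b^nd)$ and $t_2\langle x_1\gets u_n,x_2\gets u_m\rangle=m(b^me)$. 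Because $m(b^nd)$ is a $b$-labelled path ending in a rank-$0$ leaf, $t_1$ must itself be a $b$-path $m(b^hz)$ with $z$ a rank-$0$ symbol of $\Delta'$ or a variable; the possibilities $z\in\Delta'_0$ and $z=x_2$ each would force some $u_i$ to equal infinitely many distinct trees, so $t_1=m(b^hx_1)$ and $u_n=m(b^{\,n-h}d)$ for all sufficiently large $n\in N_0$. The symmetric analysis of $t_2$ against $m(b^me)$ gives $t_2=m(b^{h'}x_2)$ and $u_m=m(b^{\,m-h'}e)$ for all sufficiently large $m\in N_0$. Choosing one $n\in N_0$ larger than both $h$ and $h'$ yields $u_n=m(b^{\,n-h}d)=m(b^{\,n-h'}e)$, which is absurd since $d\ne e$. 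Hence no such $N$ exists, so $M\notin\DB$.

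The main obstacle is the case analysis in the final paragraph: one must argue carefully about the admissible shapes of $t$, $t_1$, $t_2$ — that after the substitution the only way to produce the monadic trees $m(b^nd)$ and $m(b^me)$ is for $t_1,t_2$ to be pure $b$-paths terminating in a single leaf or variable — and then rule out every shape except $t_1=m(b^hx_1)$, $t_2=m(b^{h'}x_2)$ by noting that any other shape would make $u_n$ or $u_m$ depend on the ``wrong'' index and hence take infinitely many values. The surrounding steps (the \DTQREL{} construction, eventual periodicity of the state sequence, and collapsing all the root steps to one rule via determinism and the pigeonhole principle) are routine.
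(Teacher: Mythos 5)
Your proposal is correct and takes the route the paper intends: the text merely exhibits the same transformation $M=\{(a[m(b^nc)m(b^kc)],a[m(b^nd)m(b^ke)])\mid n,k\geq 0\}$ and asserts without detail that it is in \DTQREL{} but not in \DB{} (``bottom-up ftt cannot distinguish left from right''), and you supply exactly the missing pigeonhole/shape analysis. One small caveat: the claim that ``the domain of a bottom-up transducer is closed under taking subtrees'' is false as stated (a subtree of a tree in $\dom(N)$ need only be rewritable to $q[u]$ for \emph{some} state $q$, not a final one), but that weaker fact is all your argument actually uses, so nothing breaks.
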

Thirdly, there is a proof of this lemma which is analogous to the one of Lemma~\ref{lem.4.71}: there is a deterministic top-down finite state relabeling that bars all $a$'s above the highest~$f$, and this cannot be done by an element of \DB{}.

\subsection{Top-down finite tree transducers with regular look-ahead}
One way to take away the advantages of \DB{} over \DT{} is to allow the top-down tree transducer to have a look-ahead: that is, the ability to inspect an input subtree and, depending on the result of that inspection, decide which rule to apply next. Moreover it seems to be sufficient (and natural) that this look-ahead ability should consist of inspecting whether the input subtree belongs to a certain recognizable tree language or not (in other words, checking whether it has a certain ``recognizable property"). As a result of this capability the top-down tree transducer would first of all have property~(B$'$): it can check a recognizable property of a subtree and decide whether to delete it or not. Secondly, the domain of a deterministic top-down tree transducer would be arbitrary recognizable (it just starts by checking whether the whole input tree belongs to the recognizable tree language). And, thirdly, a deterministic top-down tree transducer would for instance be able to see the ``lowest" occurrence of some symbol in a tree (it just checks whether the subtree beneath the symbol contains another occurrence of the same symbol, and that is a recognizable property).

We now formally define the top-down transducer with regular (= recognizable) look-ahead. It turns out that the look-ahead feature can be expressed easily in a tree rewriting system (see Definition \ref{def.4.12}): for each rule we specify the ranges of the variables in the rule to be certain recognizable tree languages (such a rule is then applicable only if the corresponding input subtrees belong to these recognizable tree languages).
\begin{definition}\label{def.4.73}
	A \uem{top-down (finite) tree transducer with regular look-ahead} is a structure $M=(Q,\Sigma,\Delta,R,Q_d)$, where $Q$, $\Sigma$, $\Delta$ and $Q_d$ are as for the ordinary top-down ftt and $R$ is a finite set of rules of the form $(t_1\to t_2,\,D)$, where $t_1\to t_2$ is an ordinary top-down ftt rule and $D$ is a mapping from $\X_k$ into $\mathcal{P}(T_\Sigma)$ (where $k$ is the number of variables in $t_1$) such that, for $1\leq i\leq k$, $D(x_i)\in\RECOG$. (Whenever $D$ is understood or will be specified later we write $t_1\to t_2$ rather than $(t_1\to t_2,\,D)$. We call $t_1$ and $t_2$ the left hand side and right hand side of the rule respectively).
	
	$M$ is viewed as a tree rewriting system in the obvious way, 
$(t_1\to t_2,\,D)$ being a ``rule scheme" $(t_1,t_2,D)$.
	
	The \uem{tree transformation realized by $M$}, denoted by $T(M)$ or $M$, is $\{(s,t)\in T_\Sigma\times T_\Delta\mid  q[s] \xRightarrow[M]{*}t\text{ for some }q\in Q_d\}$.
\end{definition}
Thus a top-down ftt with regular look-ahead works in exactly the same way as an ordinary one, except that the application of each of its rules is restricted: the (input sub-)trees substituted in the rule should belong to prespecified recognizable tree languages. Note that for rules of the form $ q[a] \to t$ the mapping $D$ need not be specified.
\begin{notation}\label{not.4.74}
	The phrase ``with regular look-ahead" will be indicated by a prime. Thus the class of top-down tree transformations with regular look-ahead will be denoted by \TR{}. An element of \TR{} is also called a top-down$'$ tree transformation.
\end{notation}
\begin{example}\label{exa.4.75}
	Consider the tree transformation $M$ in the proof of Lemma \ref{lem.4.46}. It can be realized by the top-down$'$ ftt $N=(Q,\Sigma,\Delta,R,Q_d)$ where $Q=\{q_0,q\}$, $Q_d=\{q_0\}$ and $R$ consists of the following rules:
	\begin{align*}
	q_0[a[x_1x_2]]  & \to a[q[x_1]] \text{ with ranges }D(x_1)=\{m(b^nc)\mid n\geq0\}\text{ and }D(x_2)=\{c\}, \\
	q[b[x_1]]       & \to b[q[x_1]] \text{ with }D(x_1)=T_\Sigma\text{, and }                                  \\
	q[c]            & \to c~.
	\end{align*}
	
	Note that, in the first rule, $D(x_1)$ could as well be $T_\Sigma$ since it is checked later by $N$ that the left subtree contains no $a$'s. The essential use of regular look-ahead in this example is the restriction of the right subtree to $\{c\}$.
\end{example}
We now define some subclasses of \TR.
\begin{definition}\label{def.4.76}
	Let $M=(Q,\Sigma,\Delta,R,Q_d)$ be a top-down$'$ ftt.
	
	The definitions of \uem{linear}, \uem{nondeleting} and \uem{one-state} are identical to the bottom-up and top-down ones (see Definition \ref{def.4.23}).
	
	$M$ is called (partial) \uem{deterministic} if the following holds.
	\begin{enumerate}[label=(\roman*)]
		\item $Q_d$ is a singleton.
		\item If $(s\to t_1,D_1)$ and $(s\to t_2,D_2)$ are different rules in $R$ (with the same left hand side), then $D_1(x_i)\cap D_2(x_i)=\emptyset$ for some $i$, $1\leq i\leq k$ (where $k$ is the number of variables in $s$).\qedhere
	\end{enumerate}
\end{definition}
Since the ranges of the variables are recognizable, it can effectively be determined whether a top-down$'$ ftt is deterministic (if, of course, these ranges are effectively specified, which we always assume).

Notation \ref{not.4.24} also applies to \TR{}. Thus \LDTR{} is the class of linear deterministic top-down tree transformations with regular look-ahead.

Observe that, obviously, $\T\subseteq\TR$ since each top-down ftt can be transformed trivially 
into a top-down$'$ ftt by specifying all ranges of all variables in all rules to be the (recognizable) tree language $T_\Sigma$ (where $\Sigma$ is the input alphabet). Moreover, if \Z{} is a modifier, then $\ZT\subseteq\ZTR$.
\\

It can easily be seen that Theorems \ref{the.4.43} and \ref{the.4.44} still hold with \T{} replaced by \TR{}. In fact, the proofs of the theorems are true without further change.

In the next theorem we show that the regular look-ahead can be ``taken out" of a top-down$'$ ftt.
\begin{theorem}\label{the.4.77}
	$\TR\subseteq\DBQREL\comp\T$ and $\ZTR\subseteq\DBQREL\comp\ZT$ for $\Z\in\{\LIN,\D,\LD\}$.
\end{theorem}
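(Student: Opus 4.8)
The plan is to build, from a given top-down$'$ ftt $M=(Q,\Sigma,\Delta,R,Q_d)$, a deterministic bottom-up finite state relabeling $N$ that annotates each node of the input tree with enough information to resolve all the look-ahead tests, followed by an ordinary top-down ftt $M'$ that simulates $M$ using this annotation. The key observation is that, since every range $D(x_i)$ occurring in a rule of $R$ belongs to \RECOG, there are finitely many such recognizable tree languages; by \RECOG-closure under intersection and complementation (Theorem \ref{the.3.32}) we may take the (finite) Boolean algebra they generate, and a single deterministic bottom-up fta $A=(P,\Sigma,\delta,s,\cdot)$ whose state $\delbu(t)$ at a subtree $t$ determines, for every range $D$ appearing in $R$, whether $t\in D$.

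First I would let $N=(P,\Sigma,\delta,S,P)$ be the deterministic bottom-up finite state relabeling with state set $P$, transitions those of $A$, all states final, and output the ranked alphabet $\Omega$ with $\Omega_k=\Sigma_k\times P$; concretely, for $a\in\Sigma_0$ the rule $a\to s_a[(a,s_a)]$ is in $N$, and for $a\in\Sigma_k$ the rule $\listt[a][p_i[x_i]]\to \delta_a(p_1,\dots,p_k)[\listt[(a,\delta_a(p_1,\dots,p_k))][x_i]]$ is in $N$. Then $N$ is a deterministic bottom-up finite state relabeling, so $N\in\DBQREL$, and $N$ relabels an input tree $s$ by pairing each node label $a$ with the state $A$ reaches at that node's subtree. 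Crucially, this extra component is exactly what is needed to evaluate a look-ahead test $D(x_i)$ on the $i$-th subtree of a node: from the state recorded at the root of that subtree one reads off membership in $D(x_i)$.

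Next I would define the ordinary top-down ftt $M'=(Q,\Omega,\Delta,R',Q_d)$ by: for each rule $(q[a]\to t,D)$ with $a\in\Sigma_0$ put $q[(a,p)]\to t$ in $R'$ for every $p\in P$ such that $s_a=p$ (i.e.\ just $p=s_a$); and for each rule $(q[\listt[a][x_i]]\to t,\,D)$ with $a\in\Sigma_k$, put $q[\listt[(a,p)][x_i]]\to t$ in $R'$ for every $p\in P$ such that there exist states $p_1,\dots,p_k$ with $\delta_a(p_1,\dots,p_k)=p$ and, for all $1\le i\le k$, the state $p_i$ witnesses membership in $D(x_i)$ — here the point is that the state $p$ at the node together with the fact that $A$ is deterministic bottom-up does \emph{not} by itself pin down the $p_i$, so instead I should refine: have $\Omega_k=\Sigma_k\times P^k\times P$ recording at each (non-leaf) node the states $A$ assigns to all its direct subtrees as well as to the subtree itself, which $N$ can still produce deterministically bottom-up since on processing the node it has the subtree-states in hand. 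Then the applicability condition for a relabeled rule refers only to the $P^k$ component, which is visible at the node top-down. A routine induction on the input tree $s$ (using that $N(s)$ is a single tree $u$, and that a derivation of $M'$ from $q[u]$ mirrors step-for-step a derivation of $M$ from $q[s]$, the range checks of $M$ corresponding exactly to the membership conditions built into $R'$) shows $T(M)=N\comp T(M')$, giving $\TR\subseteq\DBQREL\comp\T$. For the modifiers, one checks that $N$ is always linear, deterministic, and one-state-free in the relevant sense is irrelevant, while $M'$ inherits linearity, nondeletion and determinism from $M$: linearity and nondeletion of right hand sides are untouched by the construction, and if $M$ is deterministic then distinct $M$-rules with the same left hand side have disjoint ranges on some $x_i$, whence the corresponding $M'$-rules have distinct $P^k$-components and so distinct left hand sides — thus $\ZTR\subseteq\DBQREL\comp\ZT$ for $\Z\in\{\LIN,\D,\LD\}$.

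The main obstacle is the bookkeeping in the second step: making sure that the information recorded by the deterministic bottom-up relabeling $N$ is simultaneously (a) computable deterministically bottom-up and (b) sufficient, when read top-down by $M'$, to decide every look-ahead test. Recording only the state at each node is computable but insufficient top-down (a top-down transducer at a node does not know the states below it); recording the tuple of subtree-states fixes this, and is still bottom-up computable because when $N$ fires the rule at a node it has just received those subtree-states as the states attached to its children. Once the alphabet $\Omega$ is set up correctly, everything else is the kind of straightforward simulation argument already used in Theorems \ref{the.4.52} and \ref{the.4.54}, and the induction establishing $T(M)=N\comp T(M')$ is routine.
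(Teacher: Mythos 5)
Your proposal is correct and follows essentially the same route as the paper: a total deterministic bottom-up finite state relabeling annotates each node with look-ahead information about its \emph{direct subtrees} (the paper stores a membership bit-vector for each of the finitely many range languages where you store the state of a product deterministic bottom-up fta, which is interchangeable), after which an ordinary top-down ftt simulates $M$ by reading the tests off the labels. Your self-correction — that annotating a node with information about its own subtree is useless top-down, so one must record the tuple of data for the $k$ children at the parent — is exactly the point of the paper's choice $\Omega_k=\Sigma_k\times U^k$, and the preservation of the modifiers is argued the same way (the paper likewise has to prune inconsistent annotations to get syntactic determinism of the second transducer).
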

\begin{proof}
	Let $M=(Q,\Sigma,\Delta,R,Q_d)$ be in \TR{}. Consider all recognizable properties which $M$ needs for its look-ahead (that is, all recognizable tree languages $D(x_i)$ occurring in the rules of $M$). We can use a total deterministic bottom-up finite state relabeling to check, for a given input tree $t$, whether the subtrees of $t$ have these properties or not, and to put at each node a (finite) amount of information telling us whether the direct subtrees of that node have the properties or not. After this relabeling we can use an ordinary top-down ftt to simulate $M$, because the look-ahead information is now contained in the label of each node.
	
	The formal construction might look as follows. Let $\li[L_i][n]$ be all the recognizable tree languages occurring as ranges of variables in the rules of $M$. Let $U$ denote the set $\{0,1\}^n$, that is, the set of all sequences of 0's and 1's of length $n$. The $j$\textsuperscript{th} element of $u\in U$ will be denoted by $u^j$. An element $u$ of $U$ will be used to indicate whether a tree belongs to $\li[L_i][n]$ or not ($u^j=1$ iff the tree is in $L_j$).
	
	We now introduce a new ranked alphabet $\Omega$ such that $\Omega_0=\Sigma_0$ and, for $k\geq1$, $\Omega_k=\Sigma_k\times U^k$. Thus an element of $\Omega_k$ is of the form $(a,(\li[u_i]))$ with $a\in\Sigma_k$ and $\li[u_i]\in U$. If a node is labeled by such a symbol, it will mean that $u_i$ contains all the information about the $i$\textsuperscript{th} subtree of the node.
	
	Next we define the mapping $f:T_\Sigma\to T_\Omega$ as follows:
	\begin{enumerate}[label=(\roman*)]
		\item for $a\in\Sigma_0$, $f(a)=a$;
		\item for $k\geq1$, $a\in\Sigma_k$ and $\li\in T_\Sigma$,\\
		$f(\listt)=\listt[b][f(t_i)]$, where $b=(a,(\li[u_i]))$ and, for $1\leq i\leq k$ and $1\leq j\leq n$, $u_i^j=1$ iff $t_i\in L_j$.
	\end{enumerate}
	It is left as an \uem{exercise} to show that $f$ can be realized by a (total) deterministic bottom-up finite state relabeling (given the deterministic bottom-up fta's recognizing $\li[L_i][n]$).
	
	We now define a top-down ftt $N=(Q,\Omega,\Delta,R_N,Q_d)$ such that
	\begin{enumerate}[label=(\roman*)]
		\item if $ q[a] \to t$ is in $R$, then it is in $R_N$;
		\item if $( q[\listt[a][x_i]] \to t,D)$ is in $R$, then each rule of the form $q[\listt[(a,\mbar{u})][x_i]]\to t$ is in $R_N$, where $\mbar{u}=(\li[u_i])\in U^k$ and $\mbar{u}$ satisfies the condition: if $D(x_i)=L_j$ then $u_i^j=1$ (for all $i$ and $j$, $1\leq i\leq k$, $1\leq j\leq n$).
	\end{enumerate}
	This completes the construction. It is obvious from this construction that $M=f\comp N$. Moreover, if $M$ is linear, then so is $N$. It is also clear that, in the construction above, the set $U$ may be replaced by the set $\{u\in U\mid$ for all $j_1\text{ and }j_2$, if $L_{j_1}\cap L_{j_2}=\emptyset$, then $u^{j_1}\cdot u^{j_2}\neq1 \}$ (note that this influences $R_N$: rules containing elements not in this set are removed). One can now easily see that if $M$ is deterministic, then so is $N$.
\end{proof}
An immediate consequence of this theorem and previous decomposition results is that each element of \TR{} is decomposable into elements of \REL{}, \FTA{} and \HOM{}.
\begin{corollary}\label{cor.4.78}
	The domain of a top-down ftt with regular look-ahead is recognizable.
\end{corollary}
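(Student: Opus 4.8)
The plan is to read this off Theorem~\ref{the.4.77} together with the facts already established about domains of top-down and of bottom-up tree transformations. First I would take an arbitrary $M\in\TR$ and apply Theorem~\ref{the.4.77} to write $M=f\comp N$ with $f\in\DBQREL$ and $N\in\T$. Then, using the elementary identity $\dom(M_1\comp M_2)=M_1^{-1}(\dom(M_2))$ --- which is immediate from Definitions~\ref{def.4.4}, \ref{def.4.5} and~\ref{def.4.6}, since $s\in\dom(M_1\comp M_2)$ iff there are $u,t$ with $(s,u)\in M_1$ and $(u,t)\in M_2$, i.e.\ iff $s\in M_1^{-1}(\dom(M_2))$ --- I obtain
\[
\dom(M)=f^{-1}(\dom(N)).
\]

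Next I would note that $\dom(N)$ is recognizable by Corollary~\ref{cor.4.67}. Since a deterministic bottom-up finite state relabeling is in particular a bottom-up tree transducer (recall $\DBQREL\subseteq\QREL\subseteq\NLB\subseteq\B$), we have $f\in\B$, and therefore, by Corollary~\ref{cor.4.60} (\RECOG{} is closed under inverse bottom-up tree transformations), $f^{-1}(\dom(N))=\dom(M)$ is recognizable. This finishes the argument.

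An even shorter route avoids the explicit split: by Theorem~\ref{the.4.77} combined with Corollary~\ref{cor.4.66}, every element of \TR{} is the composition of a \DBQREL{}-transformation with a composition of elements of \HOM, \REL, \FTA{} and \LHOM, hence a composition of finitely many bottom-up tree transformations; its domain is then recognizable directly by Remark~\ref{rem.4.61}. Either way, there is essentially no obstacle here --- the statement is a genuine corollary. The only points requiring a little care are the bookkeeping with $\dom$, $\comp$ and $(\cdot)^{-1}$, and the (harmless) observation that $\DBQREL\subseteq\B$ so that Corollary~\ref{cor.4.60} applies.
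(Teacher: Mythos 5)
Your proposal is correct and follows essentially the same route as the paper: the paper's proof is precisely your ``shorter route'' (decompose via Theorem~\ref{the.4.77} and the earlier decomposition results into bottom-up transformations, then invoke Remark~\ref{rem.4.61}). Your first, more explicit argument via $\dom(M)=f^{-1}(\dom(N))$, Corollary~\ref{cor.4.67} and Corollary~\ref{cor.4.60} is just a careful unwinding of the same ingredients and is equally valid.
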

\begin{proof}
	For instance by Remark \ref{rem.4.61}.
\end{proof}
Another consequence of Theorem \ref{the.4.77} is that the addition of regular look-ahead has no influence on the surface tree languages.
\begin{corollary}\label{cor.4.79}
	$\text{\TR-Surface}=\text{\T-Surface}$ and $\text{\DTR-Surface}=\text{\DT-Surface}$.
\end{corollary}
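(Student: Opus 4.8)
The plan is to deduce both equalities directly from Theorem~\ref{the.4.77} together with the closure of \RECOG{} under linear bottom-up tree transformations (Corollary~\ref{cor.4.55}).

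First I would dispose of the easy inclusions $\text{\T-Surface}\subseteq\text{\TR-Surface}$ and $\text{\DT-Surface}\subseteq\text{\DTR-Surface}$. These are immediate from $\T\subseteq\TR$ and $\DT\subseteq\DTR$ (noted just before the corollary): a surface language $M(L)$ with $M\in\T$ and $L\in\RECOG$ is in particular a \TR-surface language, and likewise in the deterministic case.

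For the reverse inclusions I would take an arbitrary $M\in\TR$ and $L\in\RECOG$. By Theorem~\ref{the.4.77}, $M=M_1\comp M_2$ with $M_1\in\DBQREL$ and $M_2\in\T$, and hence $M(L)=M_2(M_1(L))$ by the definitions of composition and image of a tree language. Now a (deterministic bottom-up) finite state relabeling is a linear bottom-up tree transformation, since its rules have the form $\listt[a][q_i[x_i]]\to q[\listt[b][x_i]]$ (or $a\to q[b]$), which are linear; that is, $\DBQREL\subseteq\NLB\subseteq\LB$ (cf.\ the note $\QREL\subseteq\NLB\cap\NLT$). Therefore, by Corollary~\ref{cor.4.55}, $M_1(L)\in\RECOG$, and so $M(L)=M_2(M_1(L))$ is a \T-surface language; this gives $\text{\TR-Surface}\subseteq\text{\T-Surface}$. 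The deterministic statement is proved by the same argument using the second part of Theorem~\ref{the.4.77}, namely $\DTR\subseteq\DBQREL\comp\DT$: writing $M=M_1\comp M_2$ with $M_1\in\DBQREL$ and $M_2\in\DT$, again $M_1(L)\in\RECOG$ by Corollary~\ref{cor.4.55}, whence $M(L)=M_2(M_1(L))\in\text{\DT-Surface}$.

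There is no real obstacle here; the only point that deserves to be stated with care is that the first component produced by Theorem~\ref{the.4.77}, being a finite state relabeling, is linear, so that Corollary~\ref{cor.4.55} applies and the intermediate tree language $M_1(L)$ remains recognizable. Everything else is routine bookkeeping with the definitions of composition and image.
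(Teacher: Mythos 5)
Your proof is correct and follows essentially the same route as the paper: decompose $M\in(\D)\TR$ as $R\comp N$ with $R\in\DBQREL$ and $N\in(\D)\T$ via Theorem~\ref{the.4.77}, then use Corollary~\ref{cor.4.55} (since a finite state relabeling is a linear bottom-up transformation) to keep the intermediate language recognizable. Your only addition is spelling out the trivial inclusions $\T\text{-Surface}\subseteq\TR\text{-Surface}$, which the paper leaves implicit.
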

\begin{proof}
	Let $L$ be a (\D)\TR{}-surface tree language, so $L=M(L_1)$ for some $M\in (\D)\TR$ and $L_1\in\RECOG$.
	Now, by Theorem \ref{the.4.77}, $M=R\comp N$ for some $R\in\DBQREL$ and $N\in(\D)\T$. Hence $L=N(R(L_1))$. Since \RECOG{} is closed under linear bottom-up tree transformations (Corollary \ref{cor.4.55}), $N(R(L_1))$ is a (\D)\T{}-surface tree language.
\end{proof}
We now show that, in the linear case, there is no difference between bottom-up and top-down$'$ tree transformations (all properties (B), (T) and (B$'$) are ``eliminated"), cf. Theorem \ref{the.4.48}.
\begin{theorem}\label{the.4.80}
	$\LTR=\LB$.
\end{theorem}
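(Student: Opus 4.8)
The plan is to prove the two inclusions $\LTR\subseteq\LB$ and $\LB\subseteq\LTR$ separately. The first inclusion is essentially a corollary of results already at hand. By Theorem~\ref{the.4.77} (taking the modifier $\Z=\LIN$) we have $\LTR\subseteq\DBQREL\comp\LT$. A bottom-up finite state relabeling is a linear (indeed nondeleting) bottom-up ftt, so $\DBQREL\subseteq\LB$; and $\LT\subseteq\LB$ by Theorem~\ref{the.4.48}(1). Since $\LB$ is closed under composition (Theorem~\ref{the.4.57}(1)), we conclude $\LTR\subseteq\DBQREL\comp\LT\subseteq\LB\comp\LB\subseteq\LB$.

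For the reverse inclusion $\LB\subseteq\LTR$ I would give a direct construction. Let $M=(Q,\Sigma,\Delta,R,Q_d)$ be a linear bottom-up ftt. For each $q\in Q$ set $L_q=\{s\in T_\Sigma\mid s\xRightarrow[M]{*}q[u]\text{ for some }u\in T_\Delta\}$; this is a recognizable tree language, being recognized by the nondeterministic bottom-up fta underlying $M$ when $\{q\}$ is taken as the set of final states (cf. Exercise~\ref{exe.4.29}). Define the top-down$'$ ftt $N=(Q,\Sigma,\Delta,R_N,Q_d)$ as follows: for each rule $a\to q[t]$ of $R$ with $a\in\Sigma_0$ and $t\in T_\Delta$, put $q[a]\to t$ in $R_N$; and for each rule $\listt[a][q_i[x_i]]\to q[t]$ of $R$ (so $t\in T_\Delta(\X_k)$ is linear), put the rule $(q[\listt[a][x_i]]\to r,\,D)$ in $R_N$, where $r$ is obtained from $t$ by replacing each occurring $x_i$ by $q_i[x_i]$ and $D(x_i)=L_{q_i}$ for $1\le i\le k$. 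Since $t$ is linear in $\X_k$, so is $r$, whence $N\in\LTR$; $M$ and $N$ share $Q_d$ as their final/initial states. (Observe moreover that when $M$ is deterministic the sets $L_q$ are pairwise disjoint, so $N$ is deterministic too.)

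The heart of the argument is the claim, proved by induction on $s$, that for all $q\in Q$, $s\in T_\Sigma$ and $u\in T_\Delta$, $\;s\xRightarrow[M]{*}q[u]$ iff $q[s]\xRightarrow[N]{*}u$; restricting $q$ to $Q_d$ then yields $T(M)=T(N)$. The base case $s\in\Sigma_0$ is immediate from the first group of rules. For $s=\listt[a][s_i]$: in the forward direction a successful $M$-derivation first drives each $s_i$ to some $q_i[u_i]$ and then applies a rule $\listt[a][q_i[x_i]]\to q[t]$ with $u=t\langle\li[x_i\gets u_i]\rangle$; the induction hypothesis gives $q_i[s_i]\xRightarrow[N]{*}u_i$, and each $s_i\in L_{q_i}$, so the corresponding $N$-rule is applicable, and linearity of $t$ lets the copies $q_i[s_i]$ be rewritten independently to $u_i$, producing $u$. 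Conversely, the first $N$-step on $q[\listt[a][s_i]]$ must use a rule $(q[\listt[a][x_i]]\to r,\,D)$ — applicable precisely because $s_i\in L_{q_i}$ for \emph{every} $i$ — arising from a rule $\listt[a][q_i[x_i]]\to q[t]$ of $R$; linearity of $t$ splits the remainder so that each occurring $q_i[s_i]$ yields a single $u_i$ with $q_i[s_i]\xRightarrow[N]{*}u_i$ and $u=t\langle\li[x_i\gets u_i]\rangle$, the induction hypothesis gives $s_i\xRightarrow[M]{*}q_i[u_i]$ for those $i$, and for a deleted $x_i$ the look-ahead condition $s_i\in L_{q_i}$ directly supplies a $u_i$ with $s_i\xRightarrow[M]{*}q_i[u_i]$; hence $\listt[a][s_i]\xRightarrow[M]{*}\listt[a][q_i[u_i]]\xRightarrow[M]{}q[t\langle\li[x_i\gets u_i]\rangle]=q[u]$.

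The main obstacle is exactly the handling of deletion in the backward direction: when $x_i$ is absent from $t$, the bottom-up machine $M$ nevertheless had to process $s_i$ all the way to state $q_i$, and this obligation has to be discharged by the regular look-ahead $D(x_i)=L_{q_i}$ rather than by any continued top-down processing — this is property~(B$'$), whose absence from ordinary top-down ftt is what forced Lemma~\ref{lem.4.46}. Checking that $L_q$ is recognizable and that membership in $L_{q_i}$ faithfully mirrors what $N$ started in state $q_i$ can do (by the same induction) is the real content; everything else is the routine bookkeeping about tree concatenation and linear substitution already carried out in the proofs of Theorems~\ref{the.4.28} and~\ref{the.4.48}.
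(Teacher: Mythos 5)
Your proposal is correct and follows essentially the same route as the paper: the inclusion $\LTR\subseteq\LB$ via Theorems \ref{the.4.77}, \ref{the.4.48} and \ref{the.4.57}, and the converse by converting each bottom-up rule into a top-down$'$ rule whose look-ahead on deleted subtrees is $\dom(M_{q_i})$ (your $L_{q_i}$), with the same induction establishing $s\xRightarrow[M]{*}q[u]$ iff $q[s]\xRightarrow[N]{*}u$. The only (harmless) deviation is that you impose the look-ahead $L_{q_i}$ on \emph{all} variables rather than only the deleted ones, which does not change the realized transformation.
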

\begin{proof}
	First of all we have
	\begin{align*}
	\LTR & \subseteq\DBQREL\comp\LT &   & \text{(Theorem \ref{the.4.77})} \\
	& \subseteq\DBQREL\comp\LB &   & \text{(Theorem \ref{the.4.48})} \\ 
	& \subseteq\LB             &   & \text{(Theorem \ref{the.4.57})}.   
	\end{align*}
	
	Let us now prove that $\LB\subseteq\LTR$. The construction is the same as in the proof of Theorem \ref{the.4.48}(2), but now we use look-ahead in case the bottom-up ftt is deleting. Let $M=(Q,\Sigma,\Delta,R,Q_d)$ be a linear bottom-up ftt. Define, for each $q$ in~$Q$, $M_q$ to be the bottom-up ftt $(Q,\Sigma,\Delta,R,\{q\})$. Construct the linear top-down$'$ ftt $N=(Q,\Sigma,\Delta,R_N,Q_d)$ such that
	\begin{enumerate}[label=(\roman*)]
		\item if $a\to q[t] $ is in $R$, then $ q[a] \to t$ is in $R_N$;
		\item if $\listt[a][q_i[x_i]]\to q[t] $ is in $R$, then the rule\\ 
$ q[\listt[a][x_i]] \to t\langle\li[x_i\gets q_i[x_i]]\rangle$ is in $R_N$, where, for $1\leq i\leq k$, \\
if $x_i$ does not occur in $t$, then $D(x_i)=\dom(M_{q_i})$, and $D(x_i)=T_\Sigma$ otherwise.
	\end{enumerate}
Note that $\dom(M_{q_i})$ is recognizable by Exercise \ref{exe.4.29}.
	
	It should be clear that $T(N)=T(M)$.
\end{proof}
From the proof of Theorem \ref{the.4.80} it follows that each element of \LTR{} can be realized by a linear top-down$'$ ftt with the property that look-ahead is only used on subtrees which are deleted. This property corresponds precisely to property (B$'$).

Let us now consider composition of top-down$'$ ftt. Analogously to the bottom-up case, we can now expect the results in the next theorem from property (B).
\begin{theorem}\label{the.4.81}
\[
\begin{array}{rrlllrll}
(1) & \TR\comp\LTR & \!\!\!\subseteq & \!\!\!\!\TR. & & & & \hspace{5cm}  \\[2mm]
(2) & \DTR\comp\TR & \!\!\!\subseteq & \!\!\!\!\TR & \text{and} & \DTR\comp\DTR & \!\!\!\subseteq & \!\!\!\!\DTR.
\end{array}
\]
%	\begin{align*}
%	& (1) &\TR\comp\LTR  & \subseteq\TR\\
%	& (2) & \DTR\comp\TR & \subseteq\TR && \text{and} & \DTR\comp\DTR\subseteq\DTR 
%	\end{align*}
\end{theorem}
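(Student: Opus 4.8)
The plan is to prove these three inclusions as the top-down counterparts of Theorem~\ref{the.4.57}, by the same strategy of reducing composition to a handful of ``building-block'' cases via the decomposition theorems and handling those by the usual absorb/product/interleave constructions. The conceptual point is that the look-ahead feature gives \TR{} property~(B$'$), while either linearity of the second factor (in part~(1)) or determinism of the first factor (in part~(2)) removes the ``nondeterminism-followed-by-copying'' obstruction (the failure of property~(B)) that keeps \T{} from being closed under composition. The one extra ingredient needed throughout is that \RECOG{} is closed under inverse bottom-up tree transformations (Remark~\ref{rem.4.61}): since, by Theorem~\ref{the.4.77} and Corollary~\ref{cor.4.66}, every \TR{} transformation is a finite composition of bottom-up tree transformations, the pre-image of a recognizable set under a top-down$'$ transformation — and hence under any ``state component'' of one — is again recognizable, which is exactly what is needed to turn the look-ahead ranges of one transducer into look-ahead ranges of the composite.

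For part~(1), $\TR\comp\LTR\subseteq\TR$, I would decompose the \emph{second} factor. By Theorem~\ref{the.4.80}, $\LTR=\LB$, and by Corollary~\ref{cor.4.63}(1), $\LB=(\REL\cup\FTA\cup\LHOM)^*$; hence, by induction on the length of the composition, it suffices to prove the three closure lemmas $\TR\comp\REL\subseteq\TR$, $\TR\comp\FTA\subseteq\TR$ and $\TR\comp\LHOM\subseteq\TR$, together with the corresponding statements in which \TR{} is replaced by any of \LTR, \DTR, \LDTR{} (and, in the \D-variant of the third, the second factor may be an arbitrary, possibly copying, homomorphism). Each lemma is a short direct construction on the rules of the given $M\in\TR$: for \REL{} and \LHOM{} one replaces every right-hand side $t\in T_\Delta(Q[\X_k])$ by its image under the relabeling (resp.\ linear homomorphism), applied to the $\Delta$-symbols only and leaving the subtrees $q[x_i]$ untouched — linearity is what makes this sound, and it is superfluous when $M$ is deterministic; for \FTA{}, given $M$ and $\{(u,u)\mid u\in L\}$ with $L$ recognized by a deterministic bottom-up fta $A$ with state set $P$, one forms the product transducer with state set $Q\times P$, where $(q,p)$ means ``$M$ is in state $q$ and the output subtree produced from here is guessed to have $A$-state $p$'': each rule of $M$ is lifted by choosing a $\delta_A$-consistent assignment of $P$-states to the occurrences of $Q[\X_k]$-subtrees in its right-hand side, and each base rule $q[a]\to t$ is kept only for the state $(q,p)$ with $p$ the actual $A$-state of the fixed tree $t$. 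Guessing affects neither linearity nor nondeletion, and the look-ahead ranges are simply copied from $M$, so all the variants go through.

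Part~(2) cannot be done by decomposing the right factor, since the second factor still copies; instead I would decompose the \emph{first} factor. By Theorem~\ref{the.4.77}, $\DTR\subseteq\DBQREL\comp\DT$; by Theorem~\ref{the.4.64}, $\DT\subseteq\HOM\comp\LDT$; and by Theorem~\ref{the.4.68}, $\LDT\subseteq\DTQREL\comp\LHOM$. Hence $\DTR\comp\TR\subseteq\DBQREL\comp\HOM\comp\DTQREL\comp\LHOM\comp\TR$, and, since $\LHOM\subseteq\HOM$, it suffices to prove the ``left-composition'' lemmas $\DTQREL\comp\TR\subseteq\TR$, $\HOM\comp\TR\subseteq\TR$ and $\DBQREL\comp\TR\subseteq\TR$ (and the corresponding statements with the second \TR{} replaced by \DTR). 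Here $\DTQREL\comp\TR\subseteq\TR$ is immediate — a top-down finite-state relabeling can be run top-down in lockstep with the next transducer, with no look-ahead — while $\HOM\comp\TR$ and $\DBQREL\comp\TR$ require the genuinely top-down ``composition of transducers'' construction: the composite $N$ reads the input top-down, uses the first (deterministic) transducer to generate fragments of the intermediate tree on demand and the second to translate them; $N$'s states record the finite amount by which the first transducer has ``run ahead'' of the second, and whenever the second transducer would test an intermediate subtree against one of its ranges $D_2(x_i)$, $N$ instead carries the look-ahead range ``the input subtree from which the first transducer, in its current state, produces that intermediate subtree'', recognizable by Remark~\ref{rem.4.61}. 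Determinism of the first factor is exactly what makes this correct: an intermediate subtree copied by the second transducer, being re-derived from the same input subtree, comes out identical, so the second factor's copying (and nondeterminism) does no harm. Running the same argument with every factor deterministic — the ranges attached to distinct rules of a deterministic $M_2$ with a common left-hand side are pairwise disjoint, and pull back along the \emph{functions} computed by the deterministic first factor to disjoint ranges — yields $\DTR\comp\DTR\subseteq\DTR$.

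The main obstacle is the ``composition of transducers'' construction for $\HOM\comp\TR$ and $\DBQREL\comp\TR$: coordinating the two simulations so that $N$'s state set stays finite, and verifying $T(N)=M_1\comp M_2$ by a double induction, are the technical heart — precisely analogous to the two-bottom-up-transducer construction underlying Theorem~\ref{the.4.57}, but now with the look-ahead ranges carried along. Everything else (the decompositions, the closure lemmas of part~(1), the bookkeeping of the modifiers) is routine.
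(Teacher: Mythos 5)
Your proposal is correct in outline, and for part (2) it takes a genuinely different route from the paper. Part (1) is essentially the paper's argument: the paper likewise decomposes the second factor and absorbs it from the right, writing $\TR\comp\LTR=\TR\comp\LB\subseteq\TR\comp\QREL\comp\LHOM$ (Theorems \ref{the.4.80} and \ref{the.4.52}) and proving just the two right-composition lemmas $\TR\comp\QREL\subseteq\TR$ and $\TR\comp\LHOM\subseteq\TR$; your finer decomposition $\LB=(\REL\cup\FTA\cup\LHOM)^*$ only adds an induction and a separate guess-and-verify \FTA-lemma (which, as stated, does not preserve determinism, but the \D-variants are not needed there). For part (2) the paper again decomposes the \emph{second} factor, $\TR\subseteq\DBQREL\comp\HOM\comp\LT$ and $\LDT\subseteq\DTQREL\comp\LHOM$, and absorbs each piece into the leading \DTR{} from the right; its key lemmas are $\DTR\comp\HOM\subseteq\DTR$ (apply $h$ to right-hand sides, adding look-ahead $\dom(M_q)$ for deleted state calls) and $\DTR\comp\DBQREL\subseteq\DTR$ (look-ahead ranges $\dom(M_{q_j}\comp N_{p_j})$ predict the relabeling's states on the output pieces), the nondeterministic case then falling back on part (1) via $\DTR\comp\LT\subseteq\TR$. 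You instead decompose the \emph{first} factor and absorb from the left, so your technical heart is $\HOM\comp\TR\subseteq\TR$ and $\DBQREL\comp\TR\subseteq\TR$ — in effect the easy half of Theorem \ref{the.4.85} and the inclusion used in Corollary \ref{cor.4.86}, which the paper derives \emph{from} Theorem \ref{the.4.81}; proving them directly is legitimate and non-circular. The price is that determinism must be re-verified inside each left-composition lemma: a look-ahead test of the second transducer on an intermediate subtree $v\langle x_i\gets h(t_i)\rangle$ must be decomposed into a finite union of products of recognizable conditions on the individual $h(t_i)$, and one must check that the pairwise disjointness required by Definition \ref{def.4.76} survives this decomposition and the pull-back through $h$ — your sketch passes over exactly this point, which is where the real work sits.
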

\vspace*{-1cm}
\begin{proof}
	As in the bottom-up case, we only consider a number of special cases.
	\begin{lemma*}
		$\TR\comp\LHOM\subseteq\TR$ and $\DTR\comp\HOM\subseteq\DTR$.
	\end{lemma*}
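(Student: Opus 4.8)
The plan is to prove both inclusions by extending the corresponding constructions from the bottom-up case (the first lemma in the proof of Theorem~\ref{the.4.57}) so that they carry the regular look-ahead along. Concretely, suppose $M=(Q,\Sigma,\Delta,R,Q_d)$ is a top-down$'$ ftt and $h$ is a (linear) tree homomorphism from $T_\Delta$ into $T_\Omega$. As in the proof of the first lemma of Theorem~\ref{the.4.57}, I would extend $h$ to $T_\Delta(\X)$ by setting $h_0(x_i)=x_i$, so $h$ becomes a homomorphism from $T_\Delta(Q[\X])$ into $T_\Omega(Q[\X])$ (noting that $h$ leaves the state-labelled subtrees $q[x_i]$ alone, since $q$ has rank $1$ and $x_i$ is a variable). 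Then I would define $N=(Q,\Sigma,\Omega,R_N,Q_d)$ by the rules: if $(q[a]\to t)\in R$ then $(q[a]\to h(t))\in R_N$ (no look-ahead map needed for rank-$0$ left-hand sides), and if $(q[\listt[a][x_i]]\to t,\,D)\in R$ then $(q[\listt[a][x_i]]\to h(t),\,D)\in R_N$ --- i.e.\ copy the look-ahead map $D$ verbatim. The crux is that $h$ is applied only to the \emph{output} side, which in the top-down case is where the next-state calls $q_i[x_i]$ live, so the structure of those calls (and hence which input subtrees are fed into which states, and with which look-ahead restrictions) is preserved.

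The main step is then to prove, by induction on the input tree $s\in T_\Sigma$, that for all $q\in Q$ and $t\in T_\Delta$,
\[
q[s]\xRightarrow[M]{*}t\quad\text{ iff }\quad q[s]\xRightarrow[N]{*}h(t),
\]
more precisely that $q[s]\xRightarrow[M]{*}t$ implies $q[s]\xRightarrow[N]{*}h(t)$, and conversely every $N$-derivation $q[s]\xRightarrow[N]{*}t'$ has $t'=h(t)$ for a unique $t$ with $q[s]\xRightarrow[M]{*}t$. The induction mirrors the homomorphism-handling argument already given in the proof of Theorem~\ref{the.4.28}(3) and in the first lemma of Theorem~\ref{the.4.57}: if the first rule applied in an $M$-derivation is $q[\listt[a][x_i]]\to t$ (applicable because $s=\listt$ with $s_i\in D(x_i)$), then the corresponding $N$-rule $q[\listt[a][x_i]]\to h(t)$ is applicable to $\listt$ for \emph{exactly} the same reason --- the look-ahead condition $s_i\in D(x_i)$ is literally the same --- and the remaining subderivations on the copies of the $s_i$ are handled by the induction hypothesis together with the fact that $h$ commutes with tree concatenation (so $h(t\langle x_i\gets h'(t_i)\rangle)=h(t)\langle x_i\gets h(h'(t_i))\rangle$ when the substituted trees are output trees). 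From this equivalence we get $T(N)=T(M)\comp h$, and hence $\TR\comp\HOM\subseteq\TR$.

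For the determinism and linearity bookkeeping: $N$ has the same states, the same $Q_d$ (a singleton if $M$ is deterministic), and the same look-ahead maps $D$ on rules with the same left-hand sides, so $N$ is deterministic whenever $M$ is --- this gives $\DTR\comp\HOM\subseteq\DTR$. For $\TR\comp\LHOM\subseteq\TR$, one additionally observes that a \emph{linear} homomorphism sends a right-hand side $t\in T_\Delta(Q[\X_k])$ to a tree $h(t)\in T_\Omega(Q[\X_k])$ in which each variable $x_i$ still occurs at most as often as in $t$ (so linearity of $M$'s rules is preserved), though note that the plain statement $\TR\comp\HOM\subseteq\TR$ does \emph{not} need linearity of $h$, in contrast to Theorem~\ref{the.3.65}, exactly as in the bottom-up case. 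I expect the only genuinely delicate point to be the ``formal proof'' hand-waved in the earlier lemmas, namely justifying that an $N$-derivation can always be reorganised so that the top rule is applied first and the subderivations on the $k$ copies of each $s_i$ can be separated out (with the look-ahead conditions satisfied at the moment of application); this is the standard commutation-of-rewriting argument for tree rewriting systems, and I would cite it as ``left to the reader'' in the same spirit as the surrounding proofs.
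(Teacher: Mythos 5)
Your construction (apply $h$ to the right-hand sides, keep the states, copy the look-ahead maps verbatim) is the paper's starting point, but keeping $D$ unchanged is exactly where the argument breaks. A linear homomorphism may still \emph{delete} variables, so $h(t)$ may lose an occurrence of some $q[x_i]$ that was present in $t$; your claim that ``the structure of those calls \dots is preserved'' is therefore false. In $M\comp h$ the pair $(s,h(u))$ is only produced if $M$ actually completes a translation $u$ of $s$, including a successful subderivation $q[s_i]\xRightarrow[M]{*}\cdots$ for the part that $h$ later throws away; your $N$ never visits $s_i$ in state $q$ at all, so it also accepts inputs for which $q[s_i]$ has no successful derivation, and $T(N)\supsetneq T(M)\comp h$ in general. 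This is precisely property (B$'$) (checking followed by deletion) and is the reason the lemma lives in \TR{} rather than \T{}: the paper repairs it by replacing $D(x_i)$ with the intersection $\mbar{D}(x_i)$ of $D(x_i)$ and all languages $\dom(M_q)$ such that $q[x_i]$ occurs in $t$ but not in $h(t)$, a legitimate look-ahead because $\dom(M_q)$ is recognizable by Corollary~\ref{cor.4.78}. Without this, the ``conversely, every $N$-derivation\dots'' half of your induction cannot be completed, since the deleted subderivations of $M$ cannot be reconstructed.

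Separately, your parenthetical claim that $\TR\comp\HOM\subseteq\TR$ holds without linearity of $h$, ``exactly as in the bottom-up case,'' is false. If $M$ is nondeterministic and $h$ copies, then $h(t)$ contains several occurrences of the same $q[x_i]$, and a top-down derivation of $N$ processes these copies independently, possibly producing different translations of $s_i$, whereas $M\comp h$ requires identical copies of one nondeterministically chosen translation (property (B)). The paper notes that Theorem~\ref{the.4.44} survives with \T{} replaced by \TR{}, so $(\REL\comp\HOM)-\TR\neq\emptyset$ and hence $\TR\comp\HOM\not\subseteq\TR$. The hypotheses of the lemma --- $h$ linear, or $M$ deterministic --- are exactly what excludes this situation, so your determinism/linearity bookkeeping is doing essential work, not routine work.
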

	\begin{proof}
		Let $M=(Q,\Sigma,\Delta,R,Q_d)$ be a top-down$'$ ftt and $h$ a tree homomorphism from $T_\Delta$ into $T_\Omega$. We construct a new top-down$'$ ftt using the old idea of applying the homomorphism to the right hand sides of the rules of $M$. Therefore we extend $h$ to trees in $T_\Delta(Q[\X])$ by defining $h_0(x)=x$ for $x$ in $\X$ and $h_1(q)= q[x_1] $ for $q\in Q$. Let, for $q\in Q$, $M_q=(Q,\Sigma,\Delta,R,\{q\})$. Note that, by Corollary \ref{cor.4.78}, $\dom(M_q)$ is recognizable.
		
		Construct now $N=(Q,\Sigma,\Omega,R_N,Q_d)$ such that
		\begin{enumerate}[label=(\roman*)]
			\item if $ q_0[a] \to t$ is in $R$, then $ q_0[a] \to h(t)$ is in $R_N$;
			\item if $( q_0[\listt[a][x_i]] \to t,D)$ is in $R$, then $( q_0[\listt[a][x_i]] \to h(t),\mbar{D})$ is in $R_N$, where, for $1\leq i\leq k$, $\mbar{D}(x_i)$ is the intersection of $D(x_i)$ and all tree languages $\dom(M_q)$ such that $q[x_i]$ occurs in $t$ but not in $h(t)$.
		\end{enumerate}
		
		Thus $N$ simultaneously simulates $M$ and applies $h$ to the output of $M$. But, whenever $M$ starts making a translation of a subtree $t$ starting in a state $q$, and this translation is deleted by $h$, $N$ checks that $t$ is translatable by $M_q$.
		
		If $h$ is linear or $M$ is deterministic, then $N=M\comp h$. Obviously, if $M$ is deterministic, then so is $N$.
	\end{proof}
	\begin{lemma*}
		\begin{align*}
		\TR\comp\QREL    & \subseteq\TR,   \hspace{9cm}         \\
		\DTR\comp\DTQREL & \subseteq\DTR~~\text{and} \\
		\DTR\comp\DBQREL & \subseteq\DTR.           
		\end{align*}
	\end{lemma*}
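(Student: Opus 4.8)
The three statements to be proved are
\[
\TR\comp\QREL\subseteq\TR,\qquad
\DTR\comp\DTQREL\subseteq\DTR,\qquad
\DTR\comp\DBQREL\subseteq\DTR .
\]
The plan is to reduce all three to the composition technique already used in the previous lemma ($\TR\comp\LHOM\subseteq\TR$), by observing that a finite state relabeling is, up to bookkeeping, a top-down$'$ ftt whose output at each node is just one relabeled symbol applied to the (copy-free) variables. Recall from Theorem~\ref{the.4.39}(1) and Definition~\ref{def.4.40} that every (top-down or bottom-up) finite state relabeling is linear and nondeleting, and that $\QREL\subseteq\NLT$. So the idea for the first inclusion is: given $M\in\TR$ with output alphabet $\Delta$ and a finite state relabeling $\nu$ from $T_\Delta$ into $T_\Omega$ presented as a top-down finite state relabeling $N_\nu=(P,\Delta,\Omega,R_\nu,P_d)$, construct a top-down$'$ ftt $K$ with state set $Q\times P$ that runs $M$ on the input and simultaneously ``threads'' a state of $N_\nu$ through the symbols of $M$'s output as they are produced.

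First I would set up $K=(Q\times P,\Sigma,\Omega,R_K,Q_d\times P_d)$ as follows. For a rule $(q[\listt[a][x_i]]\to t,\,D)$ of $M$ with $t\in T_\Delta(Q[\X_k])$, and for each state $p\in P$, I need to ``apply'' the relabeling $N_\nu$ to $t$ starting in state $p$: since $t$ may use a $\Delta$-node as its root, $N_\nu$ acting from state $p$ deterministically — wait, $N_\nu$ need not be deterministic, but that is fine — rewrites $t$ into some $t'\in T_\Omega((Q\times P)[\X_k])$, where each occurrence $q_i[x_i]$ in $t$ gets replaced by $(q_i,p_i)[x_i]$ for whatever state $p_i$ the relabeling reaches ``above'' that hole. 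Formally, one lets $N_\nu$ operate on the tree obtained from $t$ by replacing each $q_i[x_i]$ by a fresh rank-0 placeholder; the resulting $\Omega$-tree over those placeholders, with $(q_i,p_i)[x_i]$ substituted back, is $t'$. Then $((q,p)[\listt[a][x_i]]\to t',\,D)$ goes into $R_K$, the look-ahead $D$ being copied verbatim from $M$ (this is legitimate — $D(x_i)\in\RECOG$). Rules $q[a]\to t$ with $t\in T_\Delta$ are handled the same way with $N_\nu$ run to completion on the closed tree $t$, yielding $(q,p)[a]\to t'$ with $t'\in T_\Omega$. The correctness claim, proved by a straightforward induction on the input tree $s$ (left to the reader, in the style of this section), is: $q[s]\xRightarrow[M]{*}t$ and $p[t]\xRightarrow[N_\nu]{*}t'$ iff $(q,p)[s]\xRightarrow[K]{*}t'$; whence $T(K)=T(M)\comp T(N_\nu)$, i.e. $K\in\TR$ realizes the composition.

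For the determinism refinements: if $M\in\DTR$ then $Q_d$ is a singleton; if moreover $N_\nu$ is a deterministic top-down relabeling ($\DTQREL$), then $P_d$ is a singleton, so $Q_d\times P_d$ is a singleton, and for each $(q,p)$ and each left-hand side there is at most one $M$-rule and (since $N_\nu$ is deterministic top-down) exactly one choice of $t'$, so the look-aheads $D$ inherited from $M$ still make $K$ deterministic — this gives $\DTR\comp\DTQREL\subseteq\DTR$. The third statement $\DTR\comp\DBQREL\subseteq\DTR$ is the genuinely new point: a deterministic \emph{bottom-up} relabeling is \emph{not} a deterministic top-down device, so the construction above would make $K$ nondeterministic. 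The fix is to first eliminate the bottom-up relabeling by pushing its state information into a look-ahead: since $\DBQREL\subseteq\QREL\subseteq\NLT$ and, crucially, the bottom-up state of a total deterministic bottom-up relabeling at a node is a recognizable property of the subtree at that node (cf.\ the use of $\dom(M_{q})$ in Theorem~\ref{the.4.80} and the construction in Theorem~\ref{the.4.77}), one can absorb a $\DBQREL$-transformation that is \emph{post}-composed onto a $\DTR$-transducer by letting the top-down$'$ transducer, at the moment it emits an output node, additionally restrict via look-ahead — but this is more delicate because the relabeling reads $M$'s \emph{output}, not $M$'s input. The clean route is instead to use $\DBQREL^{-1}$: I would argue that composing with a $\DBQREL$ on the right is handled by the same $Q\times P$ product construction as above, but now $P$ is the (deterministic) bottom-up state set and the product rule $(q,p)[\listt[a][x_i]]\to t'$ is formed by computing the bottom-up run of $N_\nu$ on $t$ with prescribed start states $p_i$ at the holes $q_i[x_i]$ — exactly as in the second lemma of Theorem~\ref{the.4.57} — which yields at the root a unique state $p'$; setting the new state to $(q,p')$ and substituting $(q_i,p_i)[x_i]$ for $q_i[x_i]$ gives $t'$, and determinism of $N_\nu$ (bottom-up) together with determinism of $M$ gives determinism of $K$. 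The main obstacle, and the step I would spend the most care on, is precisely this last one: making the bottom-up state threading compatible with the top-down rewriting order so that the induction ``$q[s]\xRightarrow{*}t$, $t\xRightarrow[N_\nu]{*}t'$'' goes through — the subtlety is that $M$ may delete or copy subtrees, so a single input subtree can contribute several disjoint pieces to $t$, each of which $N_\nu$ processes bottom-up independently; one must check that prescribing the start state $p_i$ at each hole is exactly the information needed, which is where linearity-and-nondeletion of relabelings and the determinism hypothesis are used. All the inductions are routine enough that, consistent with the style of this subsection, I would state the correctness equations and leave their verification to the reader.
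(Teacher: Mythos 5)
Your treatment of the first two inclusions ($\TR\comp\QREL\subseteq\TR$ and $\DTR\comp\DTQREL\subseteq\DTR$) is essentially the paper's: a product-state transducer that simultaneously runs $M$ and threads the top-down relabeling through the emitted output pieces, as in the second lemma of Theorem \ref{the.4.57}. The gap is in the third inclusion, $\DTR\comp\DBQREL\subseteq\DTR$, which you yourself identify as the genuinely new point. Your product construction puts the pair $(q,p')$ on a node, where $p'$ is the state the \emph{bottom-up} relabeling $N$ reaches at the root of the output produced from that node, and the rule is built by choosing start states $p_1,\dots,p_m$ at the holes and computing $p'$ from them. But in a top-down device the state of a subtree is assigned \emph{before} that subtree is read, so the parent must already know each $p_i$; and for a fixed left-hand side $(q,p')[a[\cdots]]$ there are in general several tuples $(p_1,\dots,p_m)$ yielding the same $p'$, hence several rules with identical left-hand sides. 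Your assertion that ``determinism of $N$ together with determinism of $M$ gives determinism of $K$'' does not close this: determinism of $N$ only says $p'$ is a function of the (not yet produced) output tree. Without further input there is also no deterministic choice of initial state, since $Q_{dN}$ need not be a singleton.

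The missing ingredient is exactly the look-ahead route you considered and set aside as ``more delicate because the relabeling reads $M$'s output, not $M$'s input.'' The resolution is to compose before taking domains: for each $q_j$ and $p_j$, the set $\dom(M_{q_j}\comp N_{p_j})$ is a set of \emph{input} trees, it is recognizable (Corollary \ref{cor.4.78} and Remark \ref{rem.4.61}), and — because $M$ is deterministic and $N$ is a deterministic bottom-up relabeling — for fixed $q_j$ these sets are pairwise disjoint over $p_j$. So the new transducer can keep the state set $Q$ unchanged and use as look-ahead for $x_u$ the intersection of $M$'s original $D(x_u)$ with all $\dom(M_{q_j}\comp N_{p_j})$ for the holes sent to $x_u$; this deterministically predicts, from the input subtree alone, the state in which $N$ will arrive on the $M_{q_j}$-translation of that subtree, and disjointness of these look-aheads is what makes $K$ satisfy Definition \ref{def.4.76}(ii). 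Until you supply these look-ahead sets (or an equivalent mechanism), the construction for the third inclusion only establishes $\DTR\comp\DBQREL\subseteq\TR$, not membership in $\DTR$.
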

	\begin{proof}
		It is not difficult to see that, given $M\in\TR$ and a top-down finite state relabeling $N$, the composition of these two can be realized by one top-down$'$ ftt $K$, which at each step simultaneously simulates $M$ and $N$ by transforming the output of $M$ according to $N$. The construction is basically the same as that in the second lemma of Theorem \ref{the.4.57}. It is also clear that if $M$ and $N$ are both deterministic, then so is $K$. This gives us the first two statements of the lemma. The third one is more difficult since the finite state relabeling is now bottom-up. However the same kind of construction can be applied again, using the look-ahead facility to make the resulting top-down$'$ ftt deterministic.
	
		Let $M=(Q,\Sigma,\Delta,R,Q_d)$ with $Q_d=\{q_d\}$ be in \DTR{} and 
let $N=(Q_N,\Delta,\Omega,R_N,Q_{dN})$ be in \DBQREL{}. 
Without loss of generality we assume that $q_d$ does not occur in the right hand sides of the rules in $R$. 
Let, as usual, for $q\in Q$, $M_q=(Q,\Sigma,\Delta,R,\{q\})$, and, for $p\in Q_N$, $N_p=(Q_N,\Delta,\Omega,R_N,\{p\})$. We shall realize $M\comp N$ by a deterministic top-down$'$  ftt $K=(Q,\Sigma,\Omega,R_K,Q_d)$, where the set $R_K$ of rules is determined as follows.

Let $q\in Q_M$ and $p\in Q_N$, such that if $q=q_d$ then $p\in Q_{dN}$. 
		\begin{enumerate}[label=(\roman*)]
			\item If $ q[a] \to t$ is in $R$ and $t\xRightarrow[N]{*}p[t']$, 
then $q[a] \to t'$ is in $R_K$.
			\item Let $( q[\listt[a][x_i]] \to t,\,D)$ be a rule in $R$. Then $t$ can be written as $t=s\langle x_1\gets q_1[x_{i_1}],\dots,x_m\gets q_m[x_{i_m}]\rangle$ for certain $m\geq0$, $s\in T_\Delta(\X_m)$, $\li[q_i][m]\in Q$ and $x_{i_1},\dots,x_{i_m}\in\X_k$, 
such that $s$ is nondeleting w.r.t.\ $\X_m$. 
			
			Let $\li[p_i][m]$ be a sequence of $m$ states of $N$ such that $s\langle\li[x_i\gets p_i[x_i]][m]\rangle\xRightarrow[N]{*}p[s']$, where $s'\in T_\Omega(\X_m)$. (Of course $N$ was first extended in the usual way).
			
			Then the rule $q[\listt[a][x_i]] \to s'\langle x_1\gets q_1[x_{i_1}],\dots,x_m\gets q_m[x_{i_m}]\rangle$ is in $R_K$, where \\[0.5mm]
the ranges of the variables are specified by $\mbar{D}$ as follows. For $1\leq u\leq k$, $\mbar{D}(x_u)$ is the intersection of $D(x_u)$ and all tree languages $\dom(M_{q_j}\comp N_{p_j})$ such that $x_{i_j}=x_u$.
		\end{enumerate}
		This ends the construction. 
Intuitively, when $K$ arrives at the root of an input subtree $a[t_1\cdots t_k]$ (in the same state as $M$), 
it first uses its regular look-ahead to determine the rule applied by $M$ and to determine,
for every $1\leq j\leq m$, the (unique) state $p_j$ in which $N$ will arrive after translation of the $M_{q_j}$-translation of $t_{i_j}$. It then runs $N$ on the piece of output of $M$, starting in the states $p_1,\ldots,p_m$, and produces the output of $N$ as its piece of output.
%Intuitively, when $K$ arrives at a node in state $(q_0,p_0)$, it computes the piece of output of $M$ and then runs $N$ on this output in the reverse (namely top-down) direction, starting in state $p_0$. Obviously, reversal of $N$ gives a nondeterministic translation of this piece of output, however by regular look-ahead we can pick out exactly one translation. In fact, the regular look-ahead determines, for each subtree $t_i$, the (unique) state in which $N$ will arrive after translation of the \mbox{$M_q$-translation} of $t_i$ (for several $q$'s). 
It is straightforward to check formally that $K$ is a deterministic top-down$'$ ftt (using the determinism of both $M$ and $N$).
	\end{proof}
	
	We now complete the proof of Theorem \ref{the.4.81}.\\
	Firstly
	\begin{align*}
	\TR\comp\LTR & =\TR\comp\LB                    &   & \text{(Theorem \ref{the.4.80})} \\
	& \subseteq\TR\comp\QREL\comp\LHOM &   & \text{(Theorem \ref{the.4.52})}  \\
	& \subseteq\TR\comp\LHOM          &   & \text{(second lemma)}               \\
	& \subseteq\TR                    &   & \text{(first lemma)}.                
	\end{align*}
	Secondly
	\begin{align*}
	\DTR\comp(\D)\TR & \subseteq\DTR\comp\DBQREL\comp(\D)\T  &   & \text{(Theorem \ref{the.4.77})} \\
	& \subseteq\DTR\comp(\D)\T              &   & \text{(second lemma)}                 \\
	& \subseteq\DTR\comp\HOM\comp\LIN(\D)\T &   & \text{(Theorem \ref{the.4.64})}   \\
	& \subseteq\DTR\comp\LIN(\D)\T          &   & \text{(first lemma)}.                  
	\end{align*}
	Now $\DTR\comp\LT\subseteq\TR$ (by (1) of this theorem) and
	\begin{align*}
	\DTR\comp\LDT & \subseteq\DTR\comp\DTQREL\comp\LHOM &   & \text{(Theorem \ref{the.4.68})} \\
	& \subseteq\DTR\comp\LHOM             &   & \text{(second lemma)}               \\
	& \subseteq\DTR                       &   & \text{(first lemma)}.                
	\end{align*}
	This proves Theorem \ref{the.4.81}.
\end{proof}
Note that the ``right hand side" of Theorem \ref{the.4.81}(2) states that \DTR{} is closed under composition. Clearly we know already that \LTR{} is closed under composition (since $\LTR=\LB$). It can also easily be checked from the proof of Theorem \ref{the.4.81} that \LDTR{} is closed under composition.

We can now show that indeed regular look-ahead has made \DT{} stronger than \DB{}.
\begin{corollary}\label{cor.4.82}
	$\DB\subsetneq\DTR$.
\end{corollary}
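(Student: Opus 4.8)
The plan is to prove the inclusion $\DB\subseteq\DTR$ and then invoke an already-proved separation to make it strict. For the inclusion, I would argue: $\DB\subseteq\DBQREL\comp\HOM$ by Theorem \ref{the.4.52} (its deterministic clause). Now $\DBQREL$ is a deterministic bottom-up finite state relabeling, which is in particular a deterministic bottom-up tree transformation, and it can also be seen as a top-down$'$ transformation — indeed $\DBQREL\subseteq\DTR$ because a deterministic bottom-up finite state relabeling is, by Theorem \ref{the.4.77} applied in the trivial way (or directly), realizable with regular look-ahead top-down; alternatively note $\DBQREL\comp\DTR\subseteq\DTR$ is the third statement of the second lemma in the proof of Theorem \ref{the.4.81}. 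Combining, $\DB\subseteq\DBQREL\comp\HOM\subseteq\DBQREL\comp\DTR\subseteq\DTR$, where the middle step uses $\HOM\subseteq\PDTT\subseteq\DT\subseteq\DTR$ (Theorem \ref{the.4.39}(3) together with the observation after Definition \ref{def.4.76} that $\ZT\subseteq\ZTR$), and the last step uses the just-mentioned lemma from Theorem \ref{the.4.81}.

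For strictness I would point to Lemma \ref{lem.4.46}: there is a transformation $M$ in $\LDB$ (hence in $\DB$ and realizable by the composition of a deterministic top-down fta with a linear homomorphism) that is not in $\T$. Since $\T\subseteq\TR$ would give $M\notin\T$ only, I need $M\notin\DTR$ — but in fact the proof of Lemma \ref{lem.4.46} shows $M\notin\T$, and the remark after Notation \ref{not.4.74} says Theorems \ref{the.4.43} and \ref{the.4.44} (and, by the same token, the argument of Lemma \ref{lem.4.46}) go through with \T{} replaced by \TR{} without change; so $M\notin\TR\supseteq\DTR$. Hence $\DB\ne\DTR$, and together with the inclusion we get $\DB\subsetneq\DTR$.

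The main obstacle is making sure the separating example actually survives the addition of regular look-ahead. The copying/deletion obstruction in Lemma \ref{lem.4.46} (a top-down device cannot, after copying the left subtree and keeping the right subtree constrained to $\{c\}$, produce $a[t]$ from $a[tc]$ because keeping both $x_1$ and $x_2$ forces a rank-$2$ symbol in the right-hand side) does not rely on the variable ranges being $T_\Sigma$, so the argument is insensitive to look-ahead; I would state this explicitly rather than just cite the remark, to be safe. Everything else is a routine chain of inclusions citing Theorems \ref{the.4.52}, \ref{the.4.39}, \ref{the.4.77}, \ref{the.4.81} and Lemma \ref{lem.4.46}.
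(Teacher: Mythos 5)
Your inclusion argument is essentially the paper's: decompose $\DB\subseteq\DBQREL\comp\HOM$ by Theorem \ref{the.4.52} and absorb both factors into \DTR{} via the lemmas in the proof of Theorem \ref{the.4.81}. One small slip: the third statement of the second lemma there is $\DTR\comp\DBQREL\subseteq\DTR$, i.e.\ the relabeling comes \emph{after} the top-down$'$ transducer, not before, so it is not literally the fact $\DBQREL\comp\DTR\subseteq\DTR$ you cite it for. To get $\DBQREL\subseteq\DTR$ you should, as the paper does, prefix the identity transformation (which lies in \DTR) and then apply that lemma; after that $\DBQREL\comp\HOM\subseteq\DTR\comp\HOM\subseteq\DTR$ by the first lemma. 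This is repairable and does not change the approach.

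The strictness argument, however, has a genuine gap, in two respects. First, the logical direction is wrong: having established $\DB\subseteq\DTR$, properness requires an element of $\DTR-\DB$; an element of $\DB-\DTR$, which is what you set out to extract from Lemma \ref{lem.4.46}, would \emph{contradict} the inclusion you just proved. Second, the claim that the transformation $M$ of Lemma \ref{lem.4.46} is not in \TR{} is false: Example \ref{exa.4.75} exhibits a top-down$'$ ftt realizing exactly this $M$ (indeed $M\in\LDTR$). The obstruction in Lemma \ref{lem.4.46} depends essentially on the variable ranges being $T_\Sigma$: an ordinary top-down transducer must keep $x_2$ in the right-hand side in order to verify that the second subtree is $c$, whereas with look-ahead the rule $q_0[a[x_1x_2]]\to a[q[x_1]]$ with $D(x_2)=\{c\}$ checks and deletes $x_2$ simultaneously --- recovering property (B$'$) is precisely the purpose of regular look-ahead. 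The remark after Notation \ref{not.4.74} covers Theorems \ref{the.4.43} and \ref{the.4.44}, not Lemma \ref{lem.4.46}. The correct witness for properness is Lemma \ref{lem.4.72}: a transformation in $\DTQREL\subseteq\DT\subseteq\DTR$ that is not in \DB{} (a deterministic top-down relabeling can distinguish left from right subtrees; a deterministic bottom-up one cannot).
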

\begin{proof}
	By Theorem \ref{the.4.52}, $\DB\subseteq\DBQREL\comp\HOM$. Hence, since the identity tree transformation is in \DTR{}, we trivially have $\DB\subseteq\DTR\comp\DBQREL\comp\HOM$. But, by the second and the first lemma in the proof of Theorem \ref{the.4.81}, $\DTR\comp\DBQREL\comp\HOM\subseteq\DTR$. Hence $\DB\subseteq\DTR$. Proper inclusion follows from Lemma \ref{lem.4.72}.
\end{proof}
\begin{exercise}\label{exe.4.83}
	Show that each \TR{}-surface tree language is the range of some element of \TR{}.
\end{exercise}
\begin{exercise}\label{exe.4.84}
	Prove that \T{}-Surface is closed under linear tree transformations (recall Corollary \ref{cor.4.79}).
	
	Prove that \DT{}-Surface is closed under deterministic top-down and bottom-up tree transformations.
\end{exercise}
It follows from Theorem \ref{the.4.81} that the inclusion signs in Theorem \ref{the.4.77} may be replaced by equality signs. Hence, for example, $\DTR=\DBQREL~\comp~\DT=\DB~\comp~\DT$. We finally show a result ``dual" to Corollary \ref{cor.4.63}(2) (recall that $\LTR=\LB$).
\begin{theorem}\label{the.4.85}
	$\TR=\HOM\comp\LTR$.
\end{theorem}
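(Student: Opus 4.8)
The plan is to prove the two inclusions separately. The inclusion $\HOM\comp\LTR\subseteq\TR$ is immediate, and $\TR\subseteq\HOM\comp\LTR$ is a look\-ahead\-aware reworking of the decomposition in the proof of Theorem~\ref{the.4.64}.

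For $\HOM\comp\LTR\subseteq\TR$: by Theorem~\ref{the.4.39}(3) every tree homomorphism is a top\-down tree transformation, so $\HOM\subseteq\T\subseteq\TR$; hence $\HOM\comp\LTR\subseteq\TR\comp\LTR$, and $\TR\comp\LTR\subseteq\TR$ by Theorem~\ref{the.4.81}(1). That settles this direction.

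For $\TR\subseteq\HOM\comp\LTR$, let $M=(Q,\Sigma,\Delta,R,Q_d)$ be a top\-down$'$ ftt. I would run the construction of the proof of Theorem~\ref{the.4.64} essentially verbatim, merely dragging the look\-ahead along. Thus: let $n$ be the largest number of occurrences of a single variable in a right hand side of a rule of $R$; let $\Omega$ be $\Sigma$ with all ranks multiplied by $n$, so $\Omega_{kn}=\Sigma_k$; let $h\colon T_\Sigma\to T_\Omega$ be the copying homomorphism with $h_0(a)=a$ and $h_k(a)=\listt*[a][x_i^n]$; and let $N=(Q,\Omega,\Delta,R_N,Q_d)$ be the linear top\-down$'$ ftt whose rules are: every rule $q[a]\to t$ of $R$ (where $a\in\Sigma_0=\Omega_0$), and, for every rule $(q[\listt[a][x_i]]\to t,\,D)$ of $R$, the rule
\[
q[a[x_{1,1}\cdots x_{1,n}\cdots x_{k,1}\cdots x_{k,n}]]\to t',
\]
where $t'$ is linear with $t'\langle x_{i,j}\gets x_i\rangle=t$ (different occurrences of $x_i$ in $t$ getting different second subscripts), equipped with the look\-ahead ranges $D_N(x_{i,j})=\widehat{D(x_i)}$ for all $i,j$, where $\widehat{D(x_i)}$ is the recognizable tree language supplied by the auxiliary construction below. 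One then checks, exactly as in Theorem~\ref{the.4.64} with one added clause, that $M=h\comp N$: along a derivation of $N$ on $h(s)$ the tree substituted for $x_{i,j}$ is $h(s_i)$, where $s_i$ is the $i$\textsuperscript{th} direct subtree at the node of $s$ currently processed, so the look\-ahead condition $h(s_i)\in D_N(x_{i,j})$ holds precisely when $s_i\in D(x_i)$, i.e.\ precisely when the matching rule of $M$ applies. Since $N$ is linear and has regular look\-ahead, $N\in\LTR$, whence $T(M)=h\comp N\in\HOM\comp\LTR$.

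The one part needing genuine work is to produce, for each recognizable $D\subseteq T_\Sigma$, a recognizable $\widehat D\subseteq T_\Omega$ with $h(t)\in\widehat D\iff t\in D$ for all $t\in T_\Sigma$ — this is all $N$ ever needs, since in $h\comp N$ the transducer $N$ only ever processes trees of the form $h(s)$. (One cannot simply take $\widehat D=h(D)$: already $h(T_\Sigma)$ is not recognizable, since a finite\-state device cannot check that the $n$ sibling copies produced by $h$ are equal; cf.\ Theorem~\ref{the.3.64} for the failure of $\RECOG$ under general homomorphic images.) To build $\widehat D$, take a deterministic bottom\-up fta $A=(P,\Sigma,\delta,s,F)$ with $L(A)=D$ and let $B=(P,\Omega,\mu,s,F)$ be the deterministic bottom\-up fta with the same initial states on $\Omega_0=\Sigma_0$ and with
\[
\mu_a^{kn}(p_1,\dots,p_{kn})=\delta_a^k(p_1,\,p_{n+1},\,p_{2n+1},\dots,p_{(k-1)n+1}),
\]
i.e.\ $B$ reads off the state of one argument from each block of $n$ consecutive coordinates and discards the rest. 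A straightforward induction on $t\in T_\Sigma$ — using that $h(\listt)$ is the tree with root $a$ and direct subtrees $h(t_1),\dots,h(t_1),h(t_2),\dots,h(t_k),\dots,h(t_k)$, each $h(t_i)$ occurring $n$ times — gives $\delbu[\mu](h(t))=\delbu(t)$, so $h(t)\in L(B)$ iff $t\in D$; hence $\widehat D:=L(B)$ works. I expect this recognizability construction to be the only real obstacle; the remaining verification of $M=h\comp N$ is the argument already carried out (in part) in the proof of Theorem~\ref{the.4.64}, with the trivial extra observation above about when the look\-ahead ranges are met.
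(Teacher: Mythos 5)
Your proof is correct and follows essentially the same route as the paper's: the first inclusion is handled identically via Theorem \ref{the.4.81}, and for the second your explicit automaton $B$ for $\widehat D$ is exactly an automaton for $g^{-1}(D)$, where $g\colon T_\Omega\to T_\Sigma$ is the ``decoding'' homomorphism with $g(h(t))=t$ that the paper uses (together with closure of \RECOG{} under inverse tree homomorphisms) to make the look-ahead languages recognizable over $\Omega$. No gaps.
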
\begin{proof}
The inclusion $\HOM\comp\LTR\subseteq\TR$ is immediate from Theorem \ref{the.4.81}. The inclusion $\TR\subseteq\HOM\comp\LTR$ can be shown in exactly the same way as in the proof of $\T\subseteq\HOM\comp\LT$ (Theorem \ref{the.4.64}). The only problem is the regular look-ahead: the image of a recognizable tree language under the homomorphism $h$ need not be recognizable (we use the notation of the proof of Theorem \ref{the.4.64}). The solution is to consider a homomorphism $g$ from $T_\Omega$ into $T_\Sigma$ such that, for all $t$ in $T_\Sigma$, $g(h(t))=t$; $g$ is easy to find. Now, whenever, in a rule of $M$, we have a recognizable tree language $L$ as look-ahead (for some variable), we can use $g^{-1}(L)$ as the look-ahead in the corresponding rule of $N$. The details are left to the reader.
\end{proof}

\subsection{Surface and target languages}
In this section we shall consider a few properties of the tree (and string) languages which are obtained from the recognizable tree languages by application of a finite number of tree transducers. In other words we shall consider the classes

$(\REL\cup\FTA\cup\HOM)^*\Surface*$, briefly denoted by \uem{\Surface}, and

$(\REL\cup\FTA\cup\HOM)^*\Target*$, briefly denoted by \uem{\Target}.

\noindent
Note that $\Target = \fyield(\Surface)$. Note also that, by various composition results, $(\REL\cup\FTA\cup\HOM)^*=\T^{\,*}=(\TR)^*=\B^{\,*}=(\TR\cup\B)^*=\dots$ etc.

Let us first consider some classes of tree languages obtained by restricting the number of transducers applied. In particular, let us consider, for each $k\geq1$, the classes $\T^{\,k}$\Surface*, $(\TR)^k$\Surface* and $\B^{\,k}$\Surface*. Obviously, by the above remark, $\text{Surface} =\bigcup\limits_{k\geq1} \text{$\T^{\,k}$\Surface*}=
\bigcup\limits_{k\geq1} \text{$(\TR)^k$\Surface*}=
\bigcup\limits_{k\geq1} \text{$\B^{\,k}$\Surface*}$.

As a corollary to previous results we can show that regular look-ahead has no influence on the class of surface languages (cf. Corollary \ref{cor.4.79}).

\begin{corollary}\label{cor.4.86}
	For all $k\geq1$,
	\begin{enumerate}[label=(\roman*)]
		\item $(\TR)^k=\DBQREL\comp\T^{\,k}$,
		\item $(\TR)^k$\Surface*${}=\T^{\,k}$\Surface*, and
		\item $\T^{\,k}$\Surface* is closed under linear tree transformations.
	\end{enumerate}
\end{corollary}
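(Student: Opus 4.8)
The plan is to prove the three statements of Corollary~\ref{cor.4.86} together, by induction on $k$, using the composition and decomposition machinery already established. The key engine is Theorem~\ref{the.4.77}, which gives $\TR\subseteq\DBQREL\comp\T$ (and the analogous statements for the modifiers), together with Theorem~\ref{the.4.81}, which gives the closure of $\TR$ under postcomposition with $\LTR$ (and $\DTR$ under composition). First I would prove part (i). The inclusion $\DBQREL\comp\T^{\,k}\subseteq(\TR)^k$ is trivial since $\DBQREL\subseteq\DB\subseteq\DTR\subseteq\TR$ (the first by Theorem~\ref{the.4.28}(2) and the definition of $\DBQREL$, the middle by Corollary~\ref{cor.4.82}) and $\T\subseteq\TR$, so $\DBQREL\comp\T^{\,k}\subseteq(\TR)^{k+1}=(\TR)^{k}$ — wait, more carefully: $\DBQREL\comp\T^{\,k}\subseteq\TR\comp\T^{\,k}\subseteq(\TR)^{k+1}$, and by Theorem~\ref{the.4.81}(1) applied $k$ times (absorbing each $\T\subseteq\TR$ factor, or rather using $\TR\comp\TR$: note $\TR\comp\T\subseteq\TR\comp\TR$, and one shows $(\TR)^m=\TR$ is \emph{false} in general — so instead I should argue directly). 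Let me restructure: the clean statement is $(\TR)^k=\DBQREL\comp\T^{\,k}$, proved by induction. For $k=1$ this is exactly Theorem~\ref{the.4.77} (the inclusion $\supseteq$ being trivial). For the inductive step, $(\TR)^{k+1}=(\TR)^k\comp\TR=\DBQREL\comp\T^{\,k}\comp\TR$; now apply Theorem~\ref{the.4.77} to the last factor to get $\subseteq\DBQREL\comp\T^{\,k}\comp\DBQREL\comp\T$. The obstacle — and the one genuinely nontrivial point — is commuting the middle $\DBQREL$ leftward past $\T^{\,k}$. For this I would use that $\T\comp\DBQREL\subseteq\T$ is \emph{not} literally in the excerpt, but $\TR\comp\DBQREL\subseteq\TR$ is essentially the content of the second lemma in the proof of Theorem~\ref{the.4.81} ($\TR\comp\QREL\subseteq\TR$, and $\DBQREL\subseteq\QREL$). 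Actually cleaner: use $\T\comp\DBQREL\subseteq\TR$ — hmm, $\DBQREL\subseteq\DB$, and there is no $\T\comp\DB\subseteq\T$. The right tool is: $\T^{\,k}\comp\DBQREL\subseteq(\TR)^k\comp\DBQREL$, and then repeatedly push $\DBQREL$ through each $\TR$ from the right using $\TR\comp\QREL\subseteq\TR$ (second lemma of Theorem~\ref{the.4.81}), landing back in $(\TR)^k$; combined with $(\TR)^k=\DBQREL\comp\T^{\,k}$ by the induction hypothesis, we get $\DBQREL\comp\T^{\,k}\comp\T=\DBQREL\comp\T^{\,k+1}$. Thus (i) follows.

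Part (ii) is then immediate: a $(\TR)^k$-surface language has the form $M(L_1)$ with $M\in(\TR)^k$ and $L_1\in\RECOG$. By (i), $M=R\comp N$ with $R\in\DBQREL$ and $N\in\T^{\,k}$, so $M(L_1)=N(R(L_1))$. Since $R\in\DBQREL\subseteq\LB$ and $\RECOG$ is closed under linear bottom-up tree transformations (Corollary~\ref{cor.4.55}), $R(L_1)\in\RECOG$, whence $M(L_1)=N(R(L_1))\in\T^{\,k}\text{\Surface*}$. The reverse inclusion $\T^{\,k}\text{\Surface*}\subseteq(\TR)^k\text{\Surface*}$ is trivial since $\T\subseteq\TR$. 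This is exactly the pattern of the proof of Corollary~\ref{cor.4.79}, just iterated $k$ times.

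Part (iii) says $\T^{\,k}\text{\Surface*}$ is closed under linear tree transformations. Let $L=M(L_1)$ with $M\in\T^{\,k}$, $L_1\in\RECOG$, and let $P$ be a linear tree transformation; I want $P(L)=P(M(L_1))=(M\comp P)(L_1)$ to be a $\T^{\,k}$-surface language. A linear tree transformation is linear top-down or linear bottom-up (the statement should be read as: linear in one of our senses — and by Theorem~\ref{the.4.48}(1), $\LT\subseteq\LB$, so it suffices to treat $P\in\LB$, i.e.\ $P\in\LTR$ by Theorem~\ref{the.4.80}). So I must show $M\comp P\in(\TR)^k$ whenever $M\in\T^{\,k}$ and $P\in\LTR$. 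Here is where I would invoke Theorem~\ref{the.4.81}(1): $\TR\comp\LTR\subseteq\TR$. Concretely, $M\comp P\in\T^{\,k}\comp\LTR\subseteq(\TR)^{k-1}\comp\TR\comp\LTR\subseteq(\TR)^{k-1}\comp\TR=(\TR)^k$, using associativity of composition and $\T\subseteq\TR$. Then by part (ii), $(M\comp P)(L_1)$, being a $(\TR)^k$-surface language, is a $\T^{\,k}$-surface language. The one point deserving care is the case $P\in\LB$ that does not obviously sit inside $\LTR$ — but $\LB=\LTR$ is precisely Theorem~\ref{the.4.80}, so this is handled. The main obstacle throughout is bookkeeping: making sure each time we move a $\DBQREL$ or $\DB$ factor around, we cite the correct one of the three lemmas inside the proofs of Theorems~\ref{the.4.57} and~\ref{the.4.81} ($\TR\comp\QREL\subseteq\TR$, $\B\comp\FTA\subseteq\B$, etc.) rather than a nonexistent closure property; everything else is routine once the inductive statement (i) is set up correctly.
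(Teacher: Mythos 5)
Your proposal is correct and follows essentially the same route as the paper: Theorem~\ref{the.4.77} to pull the look-ahead out as a \DBQREL{} factor, the lemmas of Theorem~\ref{the.4.81} to absorb relabelings back into \TR{}, Corollary~\ref{cor.4.55} for part (ii), and Theorem~\ref{the.4.81}(1) with Theorem~\ref{the.4.80} for part (iii). Two bookkeeping points in part (i) deserve repair, though. First, the base-case inclusion $\DBQREL\comp\T\subseteq\TR$ is not ``trivial'': it is exactly $\DBQREL\comp\T\subseteq\DTR\comp\TR\subseteq\TR$ via Corollary~\ref{cor.4.82} and Theorem~\ref{the.4.81}(2), which is how the paper obtains $\TR=\DBQREL\comp\T$. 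Second, your inductive step as written ends with an extra factor: after bounding $\T^{\,k}\comp\DBQREL\subseteq(\TR)^k$ and then applying the induction hypothesis to $(\TR)^k$, the whole expression becomes $\DBQREL\comp\DBQREL\comp\T^{\,k+1}$, and collapsing $\DBQREL\comp\DBQREL$ to $\DBQREL$ is not a closure property stated in the paper. The fix is to reorder the absorptions: from $(\TR)^{k+1}=(\TR)^k\comp\TR\subseteq(\TR)^k\comp\DBQREL\comp\T$ first absorb the $\DBQREL$ into the last \TR{} of $(\TR)^k$ (by $\TR\comp\QREL\subseteq\TR$), obtaining $(\TR)^k\comp\T$, and only \emph{then} apply the induction hypothesis to get $\DBQREL\comp\T^{\,k}\comp\T=\DBQREL\comp\T^{\,k+1}$. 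The paper sidesteps this entirely by first proving the single identity $\TR\comp\TR=\TR\comp\T$ (via $\TR\comp\TR=\TR\comp\DBQREL\comp\T\subseteq\TR\comp\T$) and then iterating it to get $(\TR)^{k+1}=\TR\comp\T^{\,k}=\DBQREL\comp\T^{\,k+1}$; that formulation is slightly cleaner than the direct induction, but your argument is the same in substance once the order of absorptions is corrected. Parts (ii) and (iii) are fine as you state them.
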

\begin{proof}~

	(i)~ By Theorem \ref{the.4.77}, $\TR\subseteq\DBQREL\comp\T$. Also, by Corollary \ref{cor.4.82} and Theorem \ref{the.4.81}(2), $\DBQREL\comp\T\subseteq\TR$. Hence $\TR=\DBQREL\comp\T$.
	
	We now show that $\TR\comp\TR=\TR\comp\T$. Trivially, $\TR\comp\T\subseteq\TR\comp\TR$. Also $\TR\comp\TR=\TR\comp\DBQREL\comp\T$ and, by Theorem \ref{the.4.81}(1), $\TR\comp\DBQREL\subseteq\TR$. Hence $\TR\comp\TR\subseteq\TR\comp\T$.
	
	From this it is straightforward to see that $(\TR)^{k+1}=\TR\comp\T^{\,k}=\DBQREL\comp\T\comp\T^{\,k}=$ $\DBQREL\comp\T^{\,k+1}$. 
	
	(ii)~ This is an immediate consequence of (i) and the fact that \RECOG{} is closed under linear tree transformations.
	
	(iii)~ This follows easily from (ii) and the fact that $(\TR)^k$ is closed under composition with linear tree transformations (Theorem \ref{the.4.81}(1), recall also Theorem \ref{the.4.80}).
\end{proof}

We mention here that it can also be shown that $\T^{\,k}\Surface*$ is closed under union, tree concatenation and tree concatenation closure. From that a lot of closure properties of $\T^{\,k}\Target*$ and \Target{} follow.

The relation between the top-down and the bottom-up surface tree languages is easy.

\begin{corollary}\label{cor.4.87}
	For all $k\geq1$,
	\begin{enumerate}[label=(\roman*)]
		\item $\T^{\,k}\Surface*=(\B^{\,k}\comp\LB)\Surface*$ and \\[1mm]
		      $\B^{\,k+1}\Surface*=(\T^{\,k}\comp\HOM)\Surface*$;
		\item $\B^{\,k}\Surface*\subseteq\T^{\,k}\Surface*\subseteq\B^{\,k+1}\Surface*$.
	\end{enumerate}
\end{corollary}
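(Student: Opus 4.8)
The plan is to deduce the whole corollary from the first identity in (i), namely $\T^{\,k}\Surface* = (\B^{\,k}\comp\LB)\Surface*$. Granting that identity, the second identity in (i) follows almost formally: by Corollary~\ref{cor.4.63}(2), $\B=\LB\comp\HOM$, so $\B^{\,k+1}=\B^{\,k}\comp\B=(\B^{\,k}\comp\LB)\comp\HOM$; passing to surface languages (for a fixed homomorphism the inner factor ranges exactly over $(\B^{\,k}\comp\LB)\Surface*$) and using the first identity of (i), we get $\B^{\,k+1}\Surface* = \HOM\bigl(\T^{\,k}\Surface*\bigr) = (\T^{\,k}\comp\HOM)\Surface*$. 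Part (ii) is then immediate: the identity transformation lies in \LB{} and in \HOM{}, so $\B^{\,k}\subseteq\B^{\,k}\comp\LB$ and $\T^{\,k}\subseteq\T^{\,k}\comp\HOM$ as classes, and hence $\B^{\,k}\Surface*\subseteq(\B^{\,k}\comp\LB)\Surface*=\T^{\,k}\Surface*$ and $\T^{\,k}\Surface*\subseteq(\T^{\,k}\comp\HOM)\Surface*=\B^{\,k+1}\Surface*$.

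It remains to prove $\T^{\,k}\Surface* = (\B^{\,k}\comp\LB)\Surface*$. For the inclusion ``$\subseteq$'' I would show $\T^{\,k}\subseteq\B^{\,k}\comp\LB$ as classes and then restrict to recognizable inputs. By Corollary~\ref{cor.4.66}, $\T\subseteq\HOM\comp\LB$, and by Theorem~\ref{the.4.28}(3), $\HOM=\PDTB\subseteq\B$; hence $\T\subseteq\B\comp\LB$. Iterating this and repeatedly using $\LB\comp\B\subseteq\B$ (Theorem~\ref{the.4.57}(1)) collapses $(\B\comp\LB)^{k}$ to $\B^{\,k}\comp\LB$, so $\T^{\,k}\subseteq(\B\comp\LB)^{k}\subseteq\B^{\,k}\comp\LB$.

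For the inclusion ``$\supseteq$'' I would again write $\B=\LB\comp\HOM$, so that $\B^{\,k}\comp\LB=(\LB\comp\HOM)^{k}\comp\LB=\LB\comp(\HOM\comp\LB)^{k}$. Since \RECOG{} is closed under linear bottom-up tree transformations (Corollary~\ref{cor.4.55}) and trivially every recognizable tree language is the \LB-image of itself, a leading \LB{} does not alter the surface languages, so $(\B^{\,k}\comp\LB)\Surface* = \bigl((\HOM\comp\LB)^{k}\bigr)\Surface*$. Now $\HOM\comp\LB\subseteq\TR$: indeed $\HOM\subseteq\T\subseteq\TR$, $\LB=\LTR$ by Theorem~\ref{the.4.80}, and $\TR\comp\LTR\subseteq\TR$ by Theorem~\ref{the.4.81}(1). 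Therefore $(\HOM\comp\LB)^{k}\subseteq(\TR)^{k}$, and so $\bigl((\HOM\comp\LB)^{k}\bigr)\Surface*\subseteq(\TR)^{k}\Surface* = \T^{\,k}\Surface*$, the last equality being Corollary~\ref{cor.4.86}(ii).

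The main obstacle is this last inclusion, and the delicacy is essentially bookkeeping together with the two genuinely nontrivial inputs it relies on: that linear bottom-up and linear top-down with regular look-ahead coincide (Theorem~\ref{the.4.80}) and that regular look-ahead does not affect surface tree languages (Corollary~\ref{cor.4.86}(ii)). Without those, the copying built into \HOM{} followed by the deleting/checking power of \LB{} could not be repackaged into a $k$-fold composition of ordinary top-down transducers applied to a recognizable tree language. One should also take care that the manipulations ``$\B=\LB\comp\HOM\Rightarrow\B^{\,k}\comp\LB=\LB\comp(\HOM\comp\LB)^{k}$'' and the absorption of a leading \LB{} at the surface level are carried out for classes of transformations, using associativity of composition and monotonicity of $\comp$ under inclusion.
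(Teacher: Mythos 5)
Your proof is correct and follows essentially the same route as the paper, which derives (i) from $\B=\LB\comp\HOM$ (Corollary~\ref{cor.4.63}(2)) and $\TR=\HOM\comp\LB$ (Theorem~\ref{the.4.85} with Theorem~\ref{the.4.80}) and gets (ii) as an immediate consequence; you have merely filled in the bookkeeping (absorbing a leading \LB{} via Corollary~\ref{cor.4.55}, discharging look-ahead via Corollary~\ref{cor.4.86}(ii)) that the paper leaves implicit. The only cosmetic difference is that for $\T^{\,k}\subseteq\B^{\,k}\comp\LB$ you go through Corollary~\ref{cor.4.66} and $\LB\comp\B\subseteq\B$ rather than using the equality $\TR=\HOM\comp\LB$ directly, which is an equivalent use of the same decomposition machinery.
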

\begin{proof}
	(i) follows from the fact that $\B=\LB\comp\HOM$ (Corollary \ref{cor.4.63}(2)) and that $\TR=\HOM\comp\LB$ (Theorem \ref{the.4.85}).
	
	(ii) is an obvious consequence of (i).
	
	Note that $\B\Surface*=\HOM\Surface*$ (Exercise \ref{exe.4.56}).
\end{proof}

It is not known, but conjectured, that for all $k$ the inclusions in Corollary \ref{cor.4.87}(ii) are proper.

Note that, by taking yields, Corollary \ref{cor.4.87} also holds for the corresponding target languages. Again it is not known whether the inclusions are proper.\\

In the rest of this section we show that the emptiness-, the membership- and the finiteness-problem are solvable for \Surface{} and \Target.

\begin{theorem}\label{the.4.88}
	The emptiness- and membership-problem are solvable for \Surface.
\end{theorem}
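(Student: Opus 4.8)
The plan is to reduce both problems to the corresponding problems for \RECOG{}, which are already known to be decidable (Theorems \ref{the.3.74} and \ref{the.3.75}, and the trivial membership algorithm noted at the start of Section~3.3). Recall that $\Surface = (\REL\cup\FTA\cup\HOM)^*\Surface*$, i.e.\ an arbitrary surface tree language has the form $L = (M_1\comp\cdots\comp M_n)(R)$ with each $M_i$ in $\REL\cup\FTA\cup\HOM$ and $R\in\RECOG$. Equivalently, writing $M = M_1\comp\cdots\comp M_n$, we have $L = M(R)$ for some $M\in\B^{\,*}$ (by the composition remarks following Corollary \ref{cor.4.66}), although for the decidability arguments it is cleanest to keep $M$ presented as an explicit finite sequence of relabelings, fta restrictions and tree homomorphisms.

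First I would handle emptiness. Observe that $L = M(R)$ is nonempty iff $\dom(M\comp R_L)\neq\emptyset$ where $R_L$ is the fta restriction $\{(t,t)\mid t\in R\}$ — actually more directly, $M(R)\neq\emptyset$ iff $M^{-1}(T_\Delta)\cap$ \dots; cleaner still: $M(R)=\emptyset$ iff the relation $R_R\comp M$ (restrict the input of $M$ to $R$, then apply $M$) has empty range, and range is empty iff domain is empty. So I would argue: $R_R\comp M$ is a composition of an fta restriction with finitely many elements of $\REL\cup\FTA\cup\HOM$, hence by Remark \ref{rem.4.61} its domain is recognizable and, crucially, one can \emph{effectively construct} a det.\ bottom-up fta for that domain (the decomposition/composition constructions in Theorems \ref{the.4.52}, \ref{the.4.54}, \ref{the.4.57} and Exercise \ref{exe.4.29} are all effective). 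Then $\dom(R_R\comp M)=\emptyset$ iff $M(R)=\emptyset$, and the former is decidable by Theorem \ref{the.3.74}. The key point to check is just that every construction invoked is algorithmic, which it is since all the grammars/automata involved are finite objects built explicitly.

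Next, membership: given a tree $t\in T_\Delta$ and a presentation of $L=M(R)$, decide whether $t\in L$. Here the idea is that $t\in M(R)$ iff $M^{-1}(\{t\})\cap R\neq\emptyset$, and $\{t\}$ is a (finite, hence) recognizable tree language, so $M^{-1}(\{t\})$ is recognizable by Remark \ref{rem.4.61} — again with an effective construction of a bottom-up fta for it, since a det.\ bottom-up fta for the singleton $\{t\}$ is trivial to write down and the inverse-image construction (via Corollary \ref{cor.4.60} / the second lemma of Theorem \ref{the.4.57}) is effective. Then $M^{-1}(\{t\})\cap R$ is recognizable (Theorem \ref{the.3.32}, effectively, using the det.\ bottom-up fta for $R$ obtained from the given presentation), so its emptiness is decidable by Theorem \ref{the.3.74}. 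Thus $t\in L$ is decidable.

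I expect the only real obstacle to be bookkeeping rather than mathematics: one must be careful that a ``presentation'' of a surface language really does give us what we need — namely a concrete finite sequence of transducers (relabelings, fta restrictions, homomorphisms) together with a regular tree grammar or det.\ bottom-up fta for the base language $R$ — and that each of the prior theorems we lean on (\ref{the.4.52}, \ref{the.4.54}, \ref{the.4.57}, \ref{the.4.64}, \ref{the.4.77}, and Remark \ref{rem.4.61}) is cited for its constructive content. Given that, the proof is a short chain: reduce $M(R)\neq\emptyset$ (resp.\ $t\in M(R)$) to emptiness of an effectively constructible recognizable tree language, then apply Theorem \ref{the.3.74}. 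A final remark worth including: by taking yields (and using Exercise \ref{exe.2.30}), the same conclusions transfer to \Target{}, since a string $w$ lies in $\fyield(L)$ iff $L\cap\{t\mid\fyield(t)=w\}\neq\emptyset$ and $\{t\mid\fyield(t)=w\}$ is recognizable by Theorem \ref{the.3.30}; but for the present theorem it suffices to treat \Surface{}.
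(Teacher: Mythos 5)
Your proposal is correct and follows essentially the same route as the paper: emptiness is reduced to emptiness of $R\cap\dom(M)$ (recognizable by Remark \ref{rem.4.61}, decidable by Theorems \ref{the.3.32} and \ref{the.3.74}), and membership is reduced to emptiness via the singleton $\{t\}$ — your version pulls $\{t\}$ back through $M^{-1}$ while the paper intersects $M(L)$ with $\{t\}$ on the output side, but since $M^{-1}(\{t\})=\dom(M\comp\widehat{\{t\}})$ these amount to the same computation. Your added emphasis on the effectiveness of the cited constructions is a sensible point the paper leaves implicit.
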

\begin{proof}
	Let $M\in(\REL\cup\FTA\cup\HOM)^*$ and $L\in\RECOG$. Consider the tree language $M(L)\in \Surface$. Obviously, $M(L)=\emptyset$ iff $L\cap\dom(M)=\emptyset$. But, by Remark \ref{rem.4.61}, $\dom(M)$ is recognizable. Hence, by Theorem \ref{the.3.32} and Theorem \ref{the.3.74}, it is decidable whether $L\cap\dom(M)=\emptyset$.
	
	To show solvability of the membership-problem note first that \Surface{} is closed under intersection with a recognizable tree language (if $R\in\RECOG$ and $\widehat{R}$ is the fta restriction such that $\dom(\widehat{R})=R$, then $M(L)\cap R=(M\comp \widehat{R})(L)$). Now, for any tree $t$, $t\in M(L)$ iff $M(L)\cap\{t\}\neq\emptyset$. Since $\{t\}$ is recognizable, $M(L)\cap\{t\}\in\Surface$, and we just showed that it is decidable whether a surface tree language is empty or not.
\end{proof}

To show decidability of the finiteness-problem we shall use the following result.
\begin{lemmawithoutqed}\label{lem.4.89}
	Each monadic tree language in \Surface{} is recognizable (and hence regular as a set of strings).
\end{lemmawithoutqed}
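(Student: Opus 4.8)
The goal is to show that if a tree language $L \in \Surface$ is monadic (consists of monadic trees, i.e.\ trees over a monadic ranked alphabet, or more precisely: $L \subseteq T_\Sigma$ where $\Sigma_0 = \{e\}$ and $\Sigma_k = \emptyset$ for $k \ge 2$), then $L$ is recognizable. Since $\Surface = (\REL\cup\FTA\cup\HOM)^*\Surface*$, we have $L = M(L_0)$ for some $L_0 \in \RECOG$ and some $M$ that is a finite composition of relabelings, fta restrictions and tree homomorphisms. By Theorem~\ref{the.3.17} and the various decomposition results, without loss of generality we may take $M$ to be a single tree homomorphism composed with a linear bottom-up transformation, or — more conveniently here — simply write $M(L_0) = h(U)$ where $U$ is a rule tree language and $h$ is a tree homomorphism (this is the content of Exercise~\ref{exe.4.56}: each \B-Surface tree language, and hence each \HOM-Surface tree language, is the homomorphic image of a rule tree language; and by Corollary~\ref{cor.4.66}, $\Surface = \B^{\,*}\Surface* = \dots$, so every $\Surface$ tree language is of the form $M'(L')$ with $M' \in \B$, $L' \in \RECOG$). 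So the plan reduces to: given $N \in \B$ (or even just $N \in \HOM$ after the reductions above), $L' \in \RECOG$, and the hypothesis that $N(L') \subseteq T_\Sigma$ with $\Sigma$ monadic, conclude $N(L') \in \RECOG$.

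\textbf{Key steps.} First, reduce to the homomorphism case: by Theorem~\ref{the.4.52} and Corollary~\ref{cor.4.55} (\RECOG{} is closed under linear bottom-up tree transformations, and relabelings and fta restrictions are linear bottom-up), we may absorb the \REL{} and \FTA{} components into the recognizable input language, so it suffices to handle $h(U)$ with $h \in \HOM$ and $U \in \RECOG$. Second — the key point — analyze what it means for $h(U)$ to be monadic. Take a regular tree grammar $G$ for $U$ in normal form, and for each symbol $a$ of rank $k$ in the input alphabet, $h_k(a) \in T_\Delta(\X_k)$. Since every tree actually produced lies in $T_\Sigma$ with $\Sigma$ monadic, each $h_k(a)$, when the variables are substituted by monadic trees over $\Sigma$, must again be monadic. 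This forces a strong structural restriction on the $h_k(a)$ that actually get used: each such $h_k(a)$ is itself a monadic ``context'' — a monadic tree over $\Delta_1 \cup \{e\}$ possibly with a single variable $x_j$ at the bottom (it cannot contain a symbol of rank $\ge 2$, nor can it contain two occurrences of a variable or occurrences of two different variables, since substituting monadic trees in would then produce a non-monadic tree; and it cannot contain $x_i$ for $i \ne j$ where $j$ is that one variable, unless it contains none). Crucially, $h$ restricted to the part of $U$ that matters is effectively linear and nondeleting-in-a-controlled-way when viewed on monadic outputs. Third, once we see that the relevant behaviour of $h$ is linear, we invoke Theorem~\ref{the.3.65} (\RECOG{} is closed under linear tree homomorphisms): after pruning $G$ so it generates only trees whose $h$-image is monadic (which we can do since we can intersect with the recognizable — indeed context-free, but here we just need recognizable — set of input trees whose image is monadic; membership of an input tree in this set is governed by a finite-state condition, or alternatively we note $h^{-1}(T_\Sigma)$ is recognizable by Corollary~\ref{cor.4.60}), the homomorphism $h$ acts linearly, so $h(U \cap h^{-1}(T_\Sigma))$ is recognizable, and this equals $h(U) = L$.

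\textbf{Main obstacle.} The delicate step is the structural analysis in the second key step: making rigorous the claim that a tree homomorphism $h$ whose image (on a recognizable domain) lands entirely inside a monadic $T_\Sigma$ must act, on that domain, like a \emph{linear} homomorphism. The subtlety is that $h_k(a)$ could in principle contain two occurrences of $x_j$ — which looks like copying and would normally break recognizability à la Theorem~\ref{the.3.64} — but such an $a$ simply can never appear in any tree of $U$ that survives into the (monadic) output, because substituting any monadic tree of height $\ge 1$ for $x_j$ would create a node of rank $2$. So the argument is: replace $G$ by the grammar $G'$ generating $U \cap h^{-1}(T_\Sigma)$ (recognizable by the inverse-homomorphism closure, Corollary~\ref{cor.4.60}, together with Theorem~\ref{the.3.32}), verify that on this restricted input every \emph{reachable} use of $h$ is linear (one needs a short induction: in a derivation of a surviving tree, each rule applied involves a symbol $a$ whose $h_k(a)$ is monadic with at most one variable, occurring once), and then apply Theorem~\ref{the.3.65}. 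The bookkeeping of ``reachable/surviving'' versus ``superfluous'' rules is exactly the kind of care already exercised in the proof of Theorem~\ref{the.3.65}, so the pattern is available; the only genuinely new ingredient is the observation that monadicity of the output is a recognizable property of the input and forces linearity of the effective homomorphism. Once that is in hand the conclusion is immediate.
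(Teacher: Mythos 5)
Your key structural observation --- that a homomorphism (or transducer) whose output lands in a monadic $T_\Sigma$ cannot use any right-hand side containing a symbol of rank $\ge 2$, and hence acts linearly --- is exactly the heart of the matter. But your reduction to that situation has a genuine gap. You write that since $\Surface=(\REL\cup\FTA\cup\HOM)^*\Surface*=\B^{\,*}\Surface*$, every language in \Surface{} is of the form $M'(L')$ with $M'\in\B$ and $L'\in\RECOG$; this conflates $\B^{\,*}$ with $\B$. The class $\B$ is not closed under composition (Theorem \ref{the.4.44}), and whether $\B^{\,k}\Surface*=\B^{\,k+1}\Surface*$ is precisely the hierarchy problem stated as open (and conjectured to fail for every $k$) after Corollary \ref{cor.4.87}. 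So you may not assume $L=h(U)$ with $h\in\HOM$ and $U\in\RECOG$; as written, your argument proves the lemma only for $\B\Surface*$ ($=\HOM\Surface*$), not for all of \Surface. Note also that the monadicity hypothesis directly constrains only the \emph{last} transducer in a composition $M_1\comp\cdots\comp M_k$ --- the intermediate languages $M_1(R)$, $M_2(M_1(R))$, $\dots$ need not be monadic --- so there is no way to push the constraint down to a single homomorphism in one step.

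The repair is an induction on the number of transducers, which is how the paper proceeds: writing $L=(M_1\comp\cdots\comp M_k)(R)$ with $M_i\in\TR$ and $R\in\RECOG$, delete from $M_k$ every rule whose right-hand side contains a symbol of rank $\ge2$ (such rules can never contribute to a monadic output), observe that the surviving right-hand sides are monadic and hence $M_k$ is linear, and then use $\TR\comp\LTR\subseteq\TR$ (Theorem \ref{the.4.81}(1)) to absorb $M_k$ into $M_{k-1}$, reducing $k$ by one; the base case $k=1$ is handled by $\LTR=\LB$ (Theorem \ref{the.4.80}) together with Corollary \ref{cor.4.55}. Your linearity analysis is the right ingredient; it just has to be applied once per level of the composition rather than after an invalid collapse to a single homomorphism.
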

\begin{proof}
	Let $L$ be a monadic tree language. Obviously it suffices to show that, for each $k\geq1$, if $L\in(\TR)^k$\Surface*, then $L\in(\TR)^{k-1}$\Surface* (where, by definition, $(\TR)^0\Surface*=\RECOG$). Suppose therefore that $L\in(\TR)^k$\Surface*, so $L=(M_1\comp\cdots\comp M_{k-1}\comp M_k)(R)$ for certain $M_i\in\TR$ and $R\in\RECOG$. Consider all right hand sides of rules in $M_k$. Obviously, since $L$ is monadic, these right hand sides do not contain elements of rank${}\geq2$, that is, they are monadic in the sense of Notation~\ref{not.4.42} (rules which have nonmonadic right hand sides may be removed). But from this it follows that $M_k$ is linear. It now follows from Theorem \ref{the.4.81}(1) (and Corollary \ref{cor.4.55} and Theorem \ref{the.4.80} in the case $k=1$), that $L\in(\TR)^{k-1}$\Surface*.
\end{proof}

\begin{theorem}\label{the.4.90}
	The finiteness-problem is solvable for \Surface.
\end{theorem}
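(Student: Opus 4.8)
The plan is to reduce the finiteness problem for \Surface{} to the finiteness problem for recognizable tree languages, which we already know is decidable by Theorem~\ref{the.3.75}. Let $M\in(\REL\cup\FTA\cup\HOM)^*$ and $L\in\RECOG$, so the surface tree language in question is $M(L)$. The key observation is that a tree language is infinite if and only if it contains trees of arbitrarily large height, which (since a finite ranked alphabet bounds the branching) is equivalent to containing ``arbitrarily long paths''. So the first step is to relate the height of output trees of $M$ to the height of monadic trees extractable from $M(L)$.

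The heart of the argument will be a ``path-taking'' construction: from $M$ I would build a transducer (or composition of transducers) $M'$ of the same type together with a recognizable tree language, such that $M'$ maps $L$ onto a \emph{monadic} tree language whose set of (string) lengths is cofinal in the multiset of heights occurring in $M(L)$. Concretely, one can compose $M$ with a top-down tree transducer that nondeterministically walks down a single path of its input, outputting a monadic tree recording that path; such a path transducer is in \T{} (in fact it is linear), so by the composition results $M\comp(\text{path transducer})$ is again in $(\REL\cup\FTA\cup\HOM)^*$. Call the resulting monadic surface tree language $P$. Then $M(L)$ is finite iff $M(L)$ has bounded height iff $P$ is finite. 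Now invoke Lemma~\ref{lem.4.89}: every monadic tree language in \Surface{} is recognizable, so $P\in\RECOG$, and by Theorem~\ref{the.3.75} it is decidable whether $P$ is finite. This yields decidability of finiteness for $M(L)$.

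A cleaner variant, which I would actually prefer to write out, avoids building an explicit path transducer: instead, fix a fresh monadic ranked alphabet $\Gamma$ with $\Gamma_0=\{e\}$ and $\Gamma_1=\{\$\}$, and take the linear tree homomorphism (or a relabeling composed with a homomorphism) $\pi$ that collapses each node of rank $k\ge 1$ to the unary symbol $\$$ applied to (a nondeterministic choice of) one of its children, and each rank-$0$ symbol to $e$. This $\pi$ can be realized as an element of $\T$ (it is essentially a relabeling together with a projection-style homomorphism, and nondeterminism picks the branch), hence $\pi\comp$ is harmless for membership in $(\REL\cup\FTA\cup\HOM)^*$. Then for a tree language $K$, $\pi(K)=\{\$^n e \mid \text{some tree in }K\text{ has height }n\}$ up to the obvious encoding, so $K$ is finite iff $\pi(K)$ is finite. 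Applying this to $K=M(L)$ and using that $\pi\comp M$ composed appropriately sends $L$ into a monadic surface tree language, Lemma~\ref{lem.4.89} and Theorem~\ref{the.3.75} finish the job.

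The main obstacle is making the ``path/monadic-collapse'' step precise: one must exhibit the collapsing map as a genuine composition of relabelings, fta restrictions and (not necessarily linear) homomorphisms (or as a top-down transducer, then invoke $\T^{\,*}=(\REL\cup\FTA\cup\HOM)^*$), and one must verify that it really does turn ``unbounded height'' into ``unbounded length'', which uses that $\Sigma$ has only finitely many symbols and that each has finitely many ranks, so bounded height implies finitely many trees. Once that bookkeeping is done, everything else is immediate from Lemma~\ref{lem.4.89} and Theorem~\ref{the.3.75}. I would also remark, as the paper does for earlier results, that the corresponding decidability for \Target{} follows by taking yields, since $\Target=\fyield(\Surface)$ and the yield map does not affect finiteness once we are in a setting where heights and string lengths are polynomially related (or simply: $M(L)$ finite implies $\fyield(M(L))$ finite, and conversely one composes with a suitable tree homomorphism as in Theorem~\ref{the.3.67} before applying the surface result).
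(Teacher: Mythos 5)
Your proposal is correct and follows essentially the same route as the paper: the paper also constructs a one-state nondeterministic top-down ``path'' transducer $M_p$ that walks down a single branch and emits a monadic tree, observes that $L$ is finite iff $M_p(L)$ is finite, and then applies Lemma~\ref{lem.4.89} and Theorem~\ref{the.3.75} to the monadic surface language $M_p(L)$. The only cosmetic difference is that the paper keeps the node labels on the monadic output rather than collapsing them to a single unary symbol.
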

\begin{proof}
	Intuitively, a tree language is finite if and only if the set of paths through this tree language is finite. For $L\subseteq T_\Sigma$ we define $\fpath(L)\subseteq\Sigma^*$ recursively as follows
	\begin{enumerate}[label=(\roman*)]
		\item for $a\in\Sigma_0$, $\fpath(a)=\{a\}$;
		\item for $k\geq1$, $a\in\Sigma_k$ and $\li\in T_\Sigma$,\\
		      $\fpath(\listt)=\{a\}\cdot(\li[\fpath(t_i)][k][\cup])$;
		\item for $L\subseteq T_\Sigma$, $\fpath(L)=\bigcup\limits_{t\in L}\fpath(t)$.
	\end{enumerate}
	Thus, $\fpath(t)$ consists of all strings which can be obtained by following a path through $t$ (for instance, if $t=a[bb[cc]]$, then $\fpath(t)=\{ab,abc\}$). We remark here that any other similar definition of ``path" would also satisfy our purposes.
	It is left to the reader to show that, for any tree language $L$, $L$ is finite iff $\fpath(L)$ is finite.
	
	We now show that, given $L\subseteq T_\Sigma$, the set $\fpath(L)$ can be obtained by feeding $L$ into a top-down tree transducer $M_p$: $M_p(L)$ will be equal to $\fpath(L)$, modulo the correspondence between strings and monadic trees (see Definition \ref{def.2.21}). In fact, $M_p=(Q,\Sigma,\Delta,R,Q_d)$, where $Q=Q_d=\{p\}$, $\Delta_0=\{e\}$, $\Delta_1=\Sigma$ and $R$ consists of the following rules:
	\begin{enumerate}[label=(\roman*)]
		\item for each $a\in\Sigma_0$, $p[a]\to a[e]$ is in $R$;
		\item for every $k\geq1$, $a\in\Sigma_k$ and $1\leq i\leq k$, 
              $p[\listt[a][x_i]]\to a[p[x_i]]$ is in $R$.
	\end{enumerate}
	Consequently, $L$ is finite iff $M_p(L)$ is finite.
	Now, let $L\in\Surface$. Then, obviously, $M_p(L)\in \Surface$. Moreover $M_p(L)$ is monadic. Hence, by Lemma \ref{lem.4.89}, $M_p(L)$ is recognizable. Thus, by Theorem \ref{the.3.75}, it is decidable whether $M_p(L)$ is finite.
\end{proof}

To show that the above mentioned problems are solvable for \Target{}, we need the following lemma.
\begin{lemmawithoutqed}\label{lem.4.91}
	Each language in \Target{} is (effectively) of the form $\fyield(L)$ or $\fyield(L)\cup\{\lambda\}$, where $L\subseteq T_\Sigma$ for some $\Sigma$ such that $\Sigma_1=\emptyset$ and $e\notin\Sigma$, $L\in\Surface$.
\end{lemmawithoutqed}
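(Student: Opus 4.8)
The plan is to take a language $X\in\Target$, write it as $X=\fyield(U)$ with $U\in\Surface$, say $U\subseteq T_\Omega$ for some ranked alphabet $\Omega$, and then feed $U$ through a single linear bottom-up tree transducer $N$ that ``cleans up'' each tree --- erasing every rank-$1$ symbol and deleting every subtree whose yield is $\lambda$ --- while preserving the yield. Since $N\in\LB$ and $\LB\subseteq\REL\comp\FTA\comp\LHOM$ (Theorem~\ref{the.4.54}), and $\Surface$ is by definition $(\REL\cup\FTA\cup\HOM)^*$ applied to $\RECOG$, the tree language $L'=N(U)$ is again a surface tree language; this is where the decomposition/composition machinery does the work. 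The resulting $L'$ will live over an alphabet $\Sigma$ with $\Sigma_1=\emptyset$ and $e\notin\Sigma$, and will satisfy $\fyield(L')=\fyield(U)\setminus\{\lambda\}=X\setminus\{\lambda\}$, so that the two claimed forms drop out according to whether $\lambda\in X$.

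Concretely, $N$ uses two states $q_\lambda$ and $q_+$ recording whether the yield of the processed input subtree is $\lambda$ or nonempty --- a recognizable property, hence maintainable by a bottom-up device --- with $q_+$ the unique final state, so that inputs of yield $\lambda$ simply fall out of $\dom(N)$. Put $\Sigma_0=\Omega_0\setminus\{e\}$, $\Sigma_1=\emptyset$, and $\Sigma_2=\{*\}$ with $*$ a fresh binary symbol (and let $N$ emit a fixed dummy rank-$0$ tree whenever it is in state $q_\lambda$; this dummy, never occurring under $q_+$, will not appear in any tree of $L'$, so regarding $L'$ as a language over $\Sigma$ is harmless). The rules realise, bottom-up, the following yield-preserving surgery, generalizing the spine construction of Theorem~\ref{the.3.67}: an $e$-leaf goes to $q_\lambda$, every other leaf $b$ goes to $q_+[b]$; a rank-$1$ node is passed through with its symbol erased; and at a rank-$k$ node ($k\ge1$) the transducer deletes the direct subtrees that carry state $q_\lambda$ --- legitimate precisely because a bottom-up transducer can check before it deletes, i.e.\ ``property~(B$'$)'' --- and then passes the remaining subtree through if only one $q_+$-subtree is left, or strings the $q_+$-outputs along a right comb of $*$'s if several remain; if every direct subtree carries $q_\lambda$, the node itself receives $q_\lambda$. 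An induction on the input tree $t$ then shows that $N(t)$ is defined iff $\fyield(t)\neq\lambda$, that in that case $N(t)$ is a tree over $\Sigma$ (no rank-$1$ symbols, no $e$), and that $\fyield(N(t))=\fyield(t)$, since deleting $\lambda$-yield subtrees, erasing unary symbols, and re-associating concatenations along the comb all leave the yield unchanged. This induction, elementary but bookkeeping-heavy, is the technical core, and the step most likely to hide slips.

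To finish, set $L'=N(U)$: then $L'\in\Surface$ as above, $L'\subseteq T_\Sigma$ with $\Sigma_1=\emptyset$ and $e\notin\Sigma$, and $\fyield(L')=X\setminus\{\lambda\}$. If $\lambda\notin X$ this gives $X=\fyield(L')$; if $\lambda\in X$ it gives $X=\fyield(L')\cup\{\lambda\}$. All constructions are effective, and the dichotomy itself is decidable: $\lambda\in X$ iff $U\cap\{t\mid\fyield(t)=\lambda\}\neq\emptyset$, and since $\{t\mid\fyield(t)=\lambda\}$ is recognizable (Theorem~\ref{the.3.30}) this intersection is again a surface tree language whose emptiness is decidable by Theorem~\ref{the.4.88}. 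I expect the only genuine obstacle to be the careful verification that $N$ really is a linear bottom-up transducer whose output meets the alphabet restriction while exactly preserving nonempty yields; every other step is an appeal to a result already established.
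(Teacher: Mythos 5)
Your proposal is correct and follows essentially the same route as the paper, which constructs a bottom-up tree transducer eliminating rank-$1$ symbols and $\lambda$-yield subtrees while preserving the nonempty yield, and separately decides whether $\lambda$ belongs to the target language; the paper merely leaves both steps as exercises, which you have filled in with a concrete (linear) transducer and a valid decidability argument via Theorems \ref{the.3.30} and \ref{the.4.88}.
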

\begin{proof}
	It is left as an \uem{exercise} to show that, for any $L'\subseteq T_{\Sigma'}$, there exists a bottom-up tree transducer $M$ such that $M(L')\subseteq T_\Sigma$ for some $\Sigma$ satisfying the requirements, and such that $\fyield(M(L'))=\fyield(L')-\{\lambda\}$.
	
	It is also left as an \uem{exercise} to show that it is decidable whether $\lambda\in\fyield(L')$. From these two facts the lemma follows.
\end{proof}

\begin{theorem}\label{the.4.92}
	The emptiness-, membership- and finiteness-problem are solvable for \Target.
\end{theorem}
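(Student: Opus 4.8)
The plan is to reduce each of the three problems for \Target{} to the corresponding problem for \Surface, which was already settled in Theorems \ref{the.4.88} and \ref{the.4.90}. The bridge is Lemma \ref{lem.4.91}: every language $K$ in \Target{} is effectively of the form $\fyield(L)$ or $\fyield(L)\cup\{\lambda\}$, where $L\in\Surface$ is a tree language over some ranked alphabet $\Sigma$ with $\Sigma_1=\emptyset$ and $e\notin\Sigma$. The crucial point of that normal form is that, by the ``yield-theorem'' (Example \ref{exa.2.17}, which requires exactly $e\notin\Sigma_0$ and $\Sigma_1=\emptyset$), we then have $\fheight(t)<|\fyield(t)|$ for every $t\in L$; in particular $\fyield$ does not collapse infinite tree languages to finite string languages, nor nonempty ones to empty ones.

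First, emptiness: $K=\emptyset$ iff $\fyield(L)=\emptyset$ (the extra $\{\lambda\}$, if present, makes $K$ nonempty anyway, so we may also just check directly whether $K$ is the language $\{\lambda\}$ or larger — but more simply, $\fyield(L)=\emptyset$ iff $L=\emptyset$), and emptiness of a \Surface{} tree language is decidable by Theorem \ref{the.4.88}. Second, membership: given a string $z$, if $z=\lambda$ then $z\in K$ iff $K=\fyield(L)\cup\{\lambda\}$ (decidable from the effective normal form) or $\lambda\in\fyield(L)$, which cannot happen under the normal form unless $L$ contains a height-zero tree whose single leaf carries label $e$ — but $e\notin\Sigma$, so in fact $\lambda\notin\fyield(L)$; hence $\lambda\in K$ iff $K$ is of the second form. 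If $z\neq\lambda$, then since $\fheight(t)<|\fyield(t)|=|z|$ for any $t\in L$ with $\fyield(t)=z$, only finitely many trees $t\in T_\Sigma$ (those of height $<|z|$) can possibly have yield $z$; one lists them, and $z\in K$ iff $L\cap F\neq\emptyset$ where $F$ is this finite set of candidate trees with yield $z$. Since finite tree languages are recognizable (Exercise \ref{exe.3.26}) and \Surface{} is closed under intersection with recognizable tree languages (as noted in the proof of Theorem \ref{the.4.88}), $L\cap F\in\Surface$, and its emptiness is decidable by Theorem \ref{the.4.88}.

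Third, finiteness: here the normal form really pays off. Because $\fheight(t)<|\fyield(t)|$ on $L$, the string language $\fyield(L)$ is finite iff the tree language $L$ is finite: if $L$ is finite, clearly $\fyield(L)$ is; conversely, if $\fyield(L)$ is finite, say all its strings have length $\le N$, then every $t\in L$ has $\fheight(t)<N$, and since $\Sigma$ is finite there are only finitely many trees of bounded height, so $L$ is finite. (The added $\{\lambda\}$ changes nothing.) Hence $K$ is finite iff $L$ is finite, and finiteness of a \Surface{} tree language is decidable by Theorem \ref{the.4.90}.

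The only real work is invoking Lemma \ref{lem.4.91} correctly and being careful that the reduction is \emph{effective} — i.e. that from a description of $K$ (as $M(R)$ for a composition $M$ of relabelings, fta restrictions and homomorphisms and a recognizable $R$) one can actually construct the $L\in\Surface$, together with the bit telling whether to adjoin $\lambda$; but this is precisely what Lemma \ref{lem.4.91} asserts. So I expect no obstacle beyond bookkeeping: the proof is essentially ``apply Lemma \ref{lem.4.91}, then for each of the three problems translate it across the yield using the height-vs-length inequality and quote Theorems \ref{the.4.88} and \ref{the.4.90}.'' A brief remark can be added that the membership and emptiness parts do not even need the full strength of the normal form, only that $\fyield$ is finite-to-one on the relevant alphabet, while the finiteness part is where $\Sigma_1=\emptyset$ is genuinely used.
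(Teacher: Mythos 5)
Your proposal is correct and follows essentially the same route as the paper: reduce to the normal form of Lemma \ref{lem.4.91}, use the height-versus-yield-length inequality of Example \ref{exa.2.17} to make $\fyield$ finite-to-one (and hence finiteness- and emptiness-preserving), and then quote Theorems \ref{the.4.88} and \ref{the.4.90}. You merely spell out the membership step (intersecting $L$ with the finite recognizable set of candidate trees of height $<|z|$) and the $\lambda$-bookkeeping more explicitly than the paper does.
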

\begin{proof}
	It is obvious from the previous lemma that we may restrict ourselves to target-languages $\fyield(L)$, where $L\in\Surface$ and $L\subseteq T_\Sigma$ for some $\Sigma$ such that $\Sigma_1=\emptyset$ and $e\notin\Sigma_0$.
	
	Obviously, $\fyield(L)=\emptyset$ iff $L=\emptyset$. 
Hence the emptiness-problem is solvable by Theorem~\ref{the.4.88}.
	
	Note that, by Example \ref{exa.2.17}, for a given $w\in\Sigma^*_0$, there are only a finite number of trees such that $\fyield(t)=w$. From this and Theorem \ref{the.4.88} the decidability of the membership-problem follows. Moreover it follows that $\fyield(L)$ is finite iff $L$ is finite. Hence, by Theorem \ref{the.4.90}, the finiteness-problem is solvable.
\end{proof}

We note that it can be shown that \Target{} is (properly, by Theorem \ref{the.4.92}) contained in the class of context-sensitive languages. Thus, \Target{} lies properly between the context-free and the context-sensitive languages.

We finally note that, in a certain sense, $\Surface\subseteq\Target$ (cf. the similar situation for \RECOG{} and \CFL). In fact, let $L\subseteq T_\Sigma$ be in \Surface. Let $\llbracket$ and $\rrbracket$ be two new symbols (``standing for $\lbrack$ and $\rbrack$").
Let $\Delta$ be the ranked alphabet such that $\Delta_0=\Sigma\cup\{\llbracket,\rrbracket\}$, $\Delta_1=\Delta_2=\Delta_3=\emptyset$ and $\Delta_{k+3}=\Sigma_k$ for $k\geq1$.
Let $M=(Q,\Sigma,\Delta,R,Q_d)$ be the (deterministic) top-down tree transducer such that $Q=Q_d=\{f\}$ and the rules are the following
\begin{enumerate}[label=(\roman*)]
	\item for $k\geq1$ and $a\in\Sigma_k$,
	      $f[\listt[a][x_i]]\to a[a\llbracket f[x_1]\cdots f[x_k]\rrbracket]$ is in $R$,
	\item for $a\in\Sigma_0$, $f[a]\to a$ is in $R$.
\end{enumerate}
(Note that $M$ is in fact a homomorphism). It is left to the reader to show that $\fyield(M(L))=L\langle\lbrack\,\gets\llbracket,\,\rbrack\gets\,\rrbracket\rangle$ (as string languages).

\section{Notes on the literature}\label{sec.5}
In the text there are some references to
%make renewcommand local
\begingroup
%removes References headline
\renewcommand{\section}[2]{}
%if one digit number at most 9 reference, and if two digit number at most 99 references

\endgroup
An informal survey of the theory of tree automata and tree transducers (uptil 1970) is given by \cite{thatcher73}.
\subsection*{On Section \ref{sec.3}}
Bottom-up finite tree automata were invented around 1965 independently by [Doner, 1965, 1970] and \cite{thatcherwright68} (and somewhat later by \cite{pairquere68}). The original aim of the theory of tree automata was to apply it to the decision problems of second-order logical theories concerning strings. The connection with context-free languages was established in \cite{mezeiwright67} and \cite{thatcher67}, and the idea to give ``tree-oriented" proofs for results on strings is expressed in \cite{thatcher73} and \cite{rounds70a}. Independently, results concerning parenthesis languages and structural equivalence were obtained by \cite{mcnaugthon67}, \cite{knuth67}, \cite{ginsburgharrison67} and \cite{paullunger68}. Top-down finite tree automata were introduced by \cite{rabin69} and \cite{magidormoran69}, and regular tree grammars by \cite{brainerd69}. The notion of rule tree languages occurs in one form or another in several places in the literature.
\\

Most of the results of Section \ref{sec.3} can be found in the above mentioned papers.

Other work on finite tree automata and recognizable tree languages is written down in the following papers:
\begin{enumerate}[label=]
	\item \cite{arbibgiveon68}, automata on acyclic graphs, category theory;
	\item \cite{brainerd68}, state minimalization of finite tree automata;
	\item \cite{costich72}, closure properties of \RECOG{};
	\item \cite{eilenbergwright67}, category theoretic formulation of fta;
	\item \cite{itoando74}, axioms for regular tree expressions;
	\item \cite[1974]{maibaum72}, tree automata on ``many-sorted" trees;
	\item \cite{ricci73}, decomposition of fta;
	\item \cite[1973]{takahashi72}, several results;
	\item \cite{yeh71}, generalization of ``{}semigroup of fa" to fta.
\end{enumerate}

Remark: The notation of substitution (tree concatenation) can be formalized algebraically as in \cite{eilenbergwright67}, \cite{goguenthatcher74}, \cite{thatcher70} and \cite{yeh71}.
\\

Generalizations of finite automata are the following:
\begin{itemize}
	\item finite automata on derivation graphs of type 0 grammars, \\
\cite{benson70}, \cite{buttelmann71}, \cite{hart74};
	\item probabilistic tree automata, \\
\cite{ellis70}, \cite{magidormoran70};
	\item automata and grammars for infinite trees, \\
\cite{rabin69}, \cite{goguenthatcher74}, \cite{engelfriet72};
	\item recognition of subsets of an arbitrary algebra (rather than the algebra of trees), \cite{mezeiwright67}, \cite{shephard69}.
\end{itemize}

There are no real open problems in finite tree automata theory. It is only a question of (i) how far one wants to go in generalizing finite automata theory to trees (for instance, decomposition theory, theory of incomplete sequential machines, noncounting regular languages, etc), and (ii) which results on context-free languages one can prove via trees (for instance, Greibach normal form, Parikh's theorem, etc.).

\subsection*{On Section \ref{sec.4}}
For the literature on syntax-directed translation, see \cite{AU}. The notion of ``generalized syntax-directed translation" is defined in \cite{ahoullman71}.

The top-down tree transducer was introduced in \cite{rounds68}, \cite{thatcher70} and \cite{rounds70b} as a model for syntax-directed translation and transformational grammars. The bottom-up tree transducer was introduced in \cite{thatcher73}. The notion of a tree rewriting system occurs in one form or another in \cite{brainerd69}, \cite[1973]{rosen71}, \cite{engelfriet71} and \cite{maibaum74}.

Most results of Section \ref{sec.4} can be found in \cite{rounds70b}, \cite{thatcher70}, \cite{rosen71}, \cite{engelfriet71}, \cite{baker73} and \cite{ogdenrounds72}. Other papers on tree transformations are \cite{alagic73}, \cite{benson71}, \cite{bertsch73}, \cite{kosaraju73}, \cite{levyjoshi73}, \cite{martinvere70} and \cite{rounds73}.
\\

We mention the following problems concerning tree transformations.
\begin{itemize}
	\item ``Statements such as ``similar models have been studied by [x,y,z]'' are symptomatic of the disarray in the development of the theory of translation and semantics'' (free after \cite{thatcher73}).
	      
	      Establish the precise relationships between various models of syntax-directed translation, semantics of context-free languages and tree transducers (see \cite{goguenthatcher74}):
	      \begin{enumerate}[label=(\roman*)]
	      	\item Compare existing models of syntax-directed translation with top-down tree transducers, in particular with respect to the classes of translations they define (see Definition \ref{def.4.41}). See \cite{AU}, \cite{ahoullman71}, \cite{thatcher73} and \cite{martinvere70}.
	      	\item Define a tree transducer corresponding to the semantics definition method of \cite{knuth68}.
	      \end{enumerate}
	\item Develop a general theory of operations on tree languages (tree AFL theory). Relate these operations to the yield languages, as illustrated in Section \ref{sec.3}. This work was started by \cite{baker73}.
	      
	      It would perhaps be convenient to consider tree transducers $(Q,\Sigma,\Delta,R,Q_d)$ such that $R$ consists of rules $t_1\to t_2$ with $t_1\in Q[T_\Sigma(\X)]$, $t_2\in T_\Delta[Q(\X)]$ in the top-down case, or $t_1\in T_\Sigma(Q[\X])$, $t_2\in Q[T_\Delta(\X)]$ in the bottom-up case.
	\item Consider surface tree languages and target languages more carefully. Prove that the class \T-Target is incomparable with the class of indexed languages. Prove that the classes $\T^{\,k}$-Surface form a proper hierarchy (see \cite{ogdenrounds72} and \cite{baker73}).
	      Prove that the classes $\T^{\,k}$-Target form a proper hierarchy (then you also proved the previous one). Prove that $\DT\text{-Target}\subsetneq\T\text{-Target}$. Is it possible to obtain each target language by a sequence of nondeleting (and nonerasing) tree transducers? etc.
	\item Consider the complexity of target languages and translations (see \cite{ahoullman71}, \cite{baker73} and \cite{rounds73}).
	\item What is the practical use of tree transducers? (see \cite{AU}, \cite{deremer74}).
\end{itemize}

\subsection*{Other subjects}
We mention finally the following subjects in tree theory.
\begin{itemize}
	\item Context-free tree grammars, \cite{fischer68}, \cite[1970a, 1970b]{rounds69}, \cite{downey74}, \cite{maibaum74}.
	      
	      Let us explain briefly the notion of a context-free tree grammar. Consider a system $G=(N,\Sigma,R,S)$, where $N$ is a ranked alphabet of nonterminals, $\Sigma$ is a ranked alphabet of terminals, $V:=N\cup\Sigma$, $S\in N_0$ is the initial nonterminal, and $R$ is a finite set of rules of one of the forms
	      \begin{enumerate}[label=]
	      	\item $A\to t$ with $A\in N_0$ and $t\in T_V$, or
	      	\item $\listt[A][x_i]\to t$ with $A\in N_k$ and $t\in T_V(\X_k)$.
	      \end{enumerate}
	      $G$ is considered as a tree rewriting system on $T_V$.
	      If we let all variables in the rules range over $T_V$ then $G$ is called a context-free tree grammar. If we let all variables range over $T_\Sigma$, then $G$ is called a bottom-up (or inside-out) context-free tree grammar. The language generated by $G$ is $L(G)=\{t\in T_\Sigma\mid S\xRightarrow{*}t\}$. In general the languages generated by $G$ under the above two interpretations differ (consider for instance the grammar $S\to F[A] $, $ F[x_1] \to f[x_1x_1] $, $A\to a$, $A\to b$).
	      Thus restriction to bottom-up ( $\approx$ right-most in the string case) generation gives another language. On the other hand, restriction to top-down ( $\approx$ left-most) generation can be done without changing the language. The yield of the class of context-free tree languages is equal to the class of indexed languages. The yield of the class of bottom-up context-free tree languages is called \IO{}. These two classes are incomparable (\cite{fischer68}).
	      
	      How do these classes compare with the target languages? Is it possible to iterate the $\CFL\to\RECOG$ procedure and obtain results about (bottom-up) context-free tree languages from regular tree languages (of a ``higher order") (see \cite{maibaum74}). Is there any sense in considering pushdown tree automata?
	\item General computability on trees: \cite{rus67}, \cite{mahn69}, the Vienna method.
	\item Tree walking automata (at each moment of time the finite automaton is at one node of the tree; depending on its state and the label of the node it goes to the father node or to one of the son nodes): \cite{ahoullman71}, \cite{martinvere70}.
	\item Tree adjunct grammars (another, linguistically motivated, way of generating tree languages): \cite{joshilevytakahashi73}.
	\item Lindenmayer tree systems (parallel rewriting): \\
\cite{culikii74}, \cite{culikiimaibaum74}, \cite{engelfriet74}, \cite{szilard74}.
\end{itemize}

\makeatletter
%removes labels
\renewcommand\@biblabel[1]{}
\newcommand{\margin}{\hspace*{1cm}}
\makeatother

	\end{sloppypar}
\end{document}